 \theoremstyle{plain}
\newtheorem{thm}{Theorem}[section]
 \newtheorem{prop}[thm]{Proposition}
 \newtheorem{lemma}[thm]{Lemma}
\newtheorem{cor}[thm]{Corollary}
\theoremstyle{definition}
 \newtheorem{definition}[thm]{Definition}
 \theoremstyle{remark}
 \newtheorem{remark}[thm]{Remark}
\newtheorem{eg}[thm]{Example}
\newcommand{\be}{\begin{equation*}}
\newcommand{\ee}{\end{equation*}}
\newcommand{\ben}{\begin{equation}}
\newcommand{\een}{\end{equation}}
\newcommand{\beqa}{\begin{eqnarray*}}
\newcommand{\eeqa}{\end{eqnarray*}}
\newcommand{\beqan}{\begin{eqnarray}}
\newcommand{\eeqan}{\end{eqnarray}}
\newcommand{\nn}{\nonumber}
\def\Z{\mathbb{Z}}
\def\R{\mathbb{R}}
\def\Hess{\mathrm{Hess}}
\def\Crit{\mathrm{Crit}}
\def\pd{\partial}
\def\reg{\mathrm{reg}}
\def\sing{\mathrm{sing}}
\def\crit{\mathrm{crit}}
\def\noncrit{\mathrm{noncrit}}
\def\Int{\mathrm{Int}}
\def\fM{\mathfrak{M}}
\def\fX{\mathfrak{X}}
\def\dd{\mathrm{d}}
\def\bepsilon{\boldsymbol{\varepsilon}}
\def\veta{\boldsymbol{\eta}}
\def\hveta{{\hat \veta}}
\newcommand{\id}{\mathrm{id}}
\newcommand{\Tr}{\mathrm{Tr}}
\newcommand{\sign}{\mathrm{sign}}
\def\cB{\mathcal{B}}
\def\cC{\mathcal{C}}
\def\cE{\mathcal{E}}
\def\cF{\mathcal{F}}
\def\cG{\mathcal{G}}
\def\cH{\mathcal{H}}
\def\cL{\mathcal{L}}
\def\cM{\mathcal{M}}
\def\cN{\mathcal{N}}
\def\cX{\mathcal{X}}
\def\cU{\mathcal{U}}
\def\Hom{\mathrm{Hom}}
\def\heta{{\hat \eta}}
\def\hetap{\heta^\parallel}
\def\homega{\hat{\omega}}
\def\hOmega{\Omega}
\def\hepsilon{{\hat \bepsilon}}
\def\fM{\mathfrak{M}}
\def\fS{\mathfrak{S}}
\def\fs{\mathfrak{s}}
\def\f{{\bf f}}
\def\kin{\mathrm{kin}}
\def\pot{\mathrm{pot}}
\def\v{\mathrm{v}}
\def\bvarphi{\boldsymbol{\varphi}}
\def\bpsi{\boldsymbol{\psi}}
\newcommand{\eqdef}{\stackrel{{\rm def.}}{=}}
\def\red{\mathrm{red}}
\def\grad{\mathrm{grad}}
\def\f{\mathfrak{f}}
\def\hc{{\hat c}}
\def\cZ{\mathcal{Z}}
\def\constant{\mathrm{constant}}
\def\vol{\mathrm{vol}}
\def\rk{\mathrm{rk}}
\def\ry{\mathrm{y}}
\def\rzeta{\upzeta}
\begin{document}


\title[Natural coordinates and horizontal approximations]{Natural coordinates and horizontal approximations in two-field cosmological models}

\author{Calin Iuliu Lazaroiu}

\address{Departamento de Matematicas, Universidad UNED - Madrid\\
 Calle de Juan del Rosal 10, 28040, Madrid, Spain\\
clazaroiu@mat.uned.es\\
and\\
Horia Hulubei National
  Institute of Physics and Nuclear Engineering,\\
  Reactorului 30, Bucharest-Magurele, 077125, Romania\\
lcalin@theory.nipne.ro}

\maketitle

\begin{abstract}
We construct natural local coordinate systems on the phase space of
two-field cosmological models with orientable target space, which
allow for a description of cosmological flows through quantities of
direct physical interest. Such coordinates are induced by the {\em
  fundamental observables} of the model, which we formulate geometrically
using the tautological bundle of the tangent bundle of the scalar
manifold.  We also describe a large class of geometric dynamical
approximations induced by the choice of an Ehresmann connection in the
tangent bundle of the scalar manifold. Such approximations take a
conceptually simple form in natural coordinates and we illustrate one
of them as an application.
\end{abstract}



\section*{Introduction}

Cosmological models with more than one scalar field and arbitrary
target space geometry provide natural generalizations of classical
one-field inflationary cosmology and are expected to play a crucial
role in cosmological applications of fundamental theories of gravity
and matter such as superstring theory and its low energy limit (which
to leading order is described by supergravity). Such {\em multifield
  models} arise naturally in supersymmetric string compactifications
(which typically produce a large number of moduli fields) and might be
preferred or even required in fundamental theories which allow for a
consistent quantization of gravity \cite{GK,OPSV,AP,BPR}. This led to
renewed interest in such models, which traditionally were much
less studied than their one-field counterparts. Multifield models allow for
rich behavior which is relevant to various questions in
phenomenological cosmology \cite{PSZ,FRPRW,Lilia1,ASSV,Lilia2, Sch1,
  Sch2, Sch3}.

Despite their crucial conceptual and theoretical importance,
multifield cosmological models are poorly understood --- mainly
because their dynamics (even in the absence of cosmological
perturbations) can be extremely nontrivial already in the two-field
case, as illustrated for example in \cite{genalpha,
  elem,modular}. Even when resorting to numerical methods, the
identification of solutions with various desirable features is
unlikely to be feasible in generic multifield models without
sophisticated mathematical tools\footnote{ A notable exception is
provided by models admitting ``hidden symmetries'' \cite{Noether1,
  Noether2, Hesse}, for which exact solutions can sometimes be found
.} pertaining to the geometric theory of dynamical systems
\cite{Palis,Katok}. It is precisely for this reason that such models
are of interest in mathematical physics and require the development of
a systematic theory. A few steps towards this were were made in
\cite{ren,grad}.

In the absence of cosmological perturbations, a multifield
cosmological model is defined by a nonlinear geometric second order
ODE associated to a {\em scalar triple} $(\cM,\cG,\Phi)$, where
$(\cM,\cG)$ is a connected (but generally non-compact) smooth and
complete Riemannian manifold without boundary (called the
{\em scalar manifold} of the model) and $\Phi$ is a smooth real-valued
function defined on $\cM$, which we assume to be non-constant and
strictly positive. To such data one associates the {\em cosmological
  equation} of the scalar triple:
\be
\nabla_t\dot{\varphi}(t)+ \frac{1}{M_0}\left[||\dot{\varphi}(t)||^2+2\Phi(\varphi(t))\right]^{1/2}\dot{\varphi}(t)+ (\grad \Phi)(\varphi(t))=0~~,
\ee
where $\varphi:I\rightarrow \cM$ is a smooth curve in $\cM$ defined on
a nondegenerate interval $I$ and the dot stands for derivation with
respect to $t\in I$. Here $||~||:T\cM\rightarrow \R_{\geq 0}$ and
$\grad:\cC^\infty(\cM)\rightarrow \cX(\cM)$ are the norm function and
gradient operator defined by $\cG$, $\nabla$ is the Levi-Civita
connection of $(\cM,\cG)$ and $\nabla_t\eqdef
\nabla_{\dot{\varphi}(t)}$. The {\em rescaled Planck mass} $M_0>0$ is
a parameter which can be absorbed by redefining $\cG$ --though it is
more convenient for various arguments to treat it as a free
parameter. The problem of mathematical and physical interest is to
study the solutions of this geometric ODE for general scalar triples
$(\cM,\cG,\Phi)$ and to develop geometric methods for extracting the
qualitative features of solutions and for approximating them {\em
  efficiently}.  Among other questions, one is interested in the
asymptotic behavior of solutions which (depending on the asymptotics of
$\Phi$) might ``escape towards infinity'' at the Freudenthal ends of
$\cM$.  We refer the reader to \cite{genalpha, ren, grad} for a
conceptual discussion of certain aspects of multifield cosmological
models with topologically non-trivial target spaces $\cM$.

The first nontrivial case of interest arises for two-field
cosmological models, which comprise the case $\dim\cM=2$ -- i.e. when
$(\cM,\cG)$ is a (generally non-compact) Riemann surface. Such models
can already display intricate behavior which depends strongly on the
topology of $\cM$ and on the asymptotic behavior of $\Phi$ and $\cG$
towards its Freudenthal ends (see
\cite{genalpha,elem,modular}). Despite being the focus of much
recent interest, little is known about the behavior of solutions of the
cosmological equation for such models, except for special examples
which are generally studied using naive approximations or by numerical
means. In particular, no systematic theory exists which provides
geometric approximation schemes that allow for efficient control of
the error terms. Moreover, no global geometric description of
quantities of physical interest was developed in the physics
literature.

In the present paper, we address some of these issues for two-field
models with orientable target space $\cM$ by developing the geometric
theory of natural cosmological observables and associated notions
while introducing a wide class of approximation schemes (which we call
``horizontal approximations'') for the cosmological equation, which
are defined by the choice of an Ehresmann connection on an open subset
of $T\cM$. As applications, we write the first order form of the
cosmological equation in certain ``universal'' local coordinate
systems on $T\cM$ which have direct physical interpretation and
describe the horizontal approximation and some of its attending
objects defined by one of of these. This turns out to correspond to
a proposal made previously in the physics literature, some
aspects of which were partially clarified in \cite{cons}. Given space
limitations, we postpone the analysis of two-field cosmological
dynamics in the natural coordinates identified in this
paper (as well as an analysis of the error of horizontal
approximations) for further work.

The paper is organized as follows. In Section \ref{sec:mf}, we
describe the concept of horizontal approximation of a second order
geometric ODE which is induced by a choice of an Ehresmann connection
in the tangent bundle of the underlying manifold. We also discuss
fiberwise and ``special'' coordinate systems on the tangent bundle,
which naturally define such approximations and will arise in latter
sections. In Section \ref{sec:models}, we summarize the mathematical
theory of two-field cosmological models in the absence of
perturbations and give a geometric treatment of local cosmological
observables and their dynamical reduction. Section \ref{sec:adapted}
studies a certain local frame of $T\cM$ which is naturally defined by a
scalar triple. Section \ref{sec:fobs} gives a geometric formulation of
the {\em fundamental observables} of a scalar triple, which are
dynamical reductions of certain natural observables used in the
physics literature. Section \ref{sec:fibcoords} discusses natural
fiberwise coordinates on $T\cM$ which are induced by the fundamental
observables. In Section \ref{sec:phase}, we discuss local coordinate
systems of $T\cM$ induced by natural fiberwise coordinates and compute
the semispray of the cosmological equation in those coordinate
systems. Section \ref{sec:SR} discusses constant roll (and in
particular slow roll and ultra slow roll) manifolds while Section
\ref{sec:cons} discusses the quasi-conservative and strongly
dissipative regimes, giving some insight into the meaning of the
conservative function introduced in Section \ref{sec:fobs}. In Section
\ref{sec:horadapted}, we illustrate the general ideas of Section
\ref{sec:mf} by discussing the horizontal approximation in adapted
coordinates. Appendix \ref{app:signed} recalls some geometric notions
associated to curves in oriented Riemann surfaces and introduces the
universal Frenet frame which plays a crucial role in Section
\ref{sec:fobs}. Appendix \ref{app:param} gives the relation between
various quantities used in this paper and certain quantities
considered in the physics literature.

\subsection{Notations and conventions}
\label{subsec:notations}
\noindent All manifolds considered in this paper are smooth and
paracompact. For any manifold $\cM$, we let ${\dot T}\cM$ denote the
slit tangent bundle of $\cM$, which is defined as the complement of
the image of the zero section in $T\cM$. We denote by $\R_\cM$ the
trivial real line bundle on $\cM$. For any smooth vector bundle $E$
over $\cM$ and any $k\in \Z_{\geq 0}\sqcup \{\infty\}$, we denote by
$\cC^k(E)$ the space of those sections of $E$ over $\cM$ which are of
class $\cC^k$. For ease of notation, we let $\Gamma(E)\eqdef
\cC^\infty(E)$ denote the space of smooth sections of $E$. We denote
by $\fX(\cM)\eqdef \Gamma(T\cM)$ the space of smooth vector fields
defined on $\cM$.  We denote by $\pi:T\cM\rightarrow \cM$ the bundle
projection of $T\cM$ and by $F\eqdef \pi^\ast(T\cM)$ the tautological
vector bundle over $T\cM$ (sometimes called the ``Finsler bundle'' of
$\cM$, see \cite{SLK}).

Let:
\be
C(\cM)\eqdef \sqcup_{I\in \Int}\cC^\infty(I,\cM)
\ee
be the set of all smooth curves in $\cM$, where $\Int$ is the set of all
nondegenerate intervals on the real axis and let $C(T\cM)$ be the set of
all smooth curves in $T\cM$. The {\em canonical lift} of curves from
$\cM$ to $T\cM$ is the map $c:C(\cM)\rightarrow C(T\cM)$ defined
through:
\be
c(\varphi)\eqdef \dot{\varphi}~~\forall \varphi\in C(\cM)~~,
\ee
while the {\em canonical projection} of curves from $T\cM$ to $\cM$
is the map $p:C(T\cM)\rightarrow C(\cM)$ defined through:
\be
p(\gamma)\eqdef \pi\circ \gamma~~\forall \gamma\in C(T\cM)~~.
\ee
We have $p\circ c=\id_{C(\cM)}$, which means that $c$ is a section of
$p$. We denote by $C_0(T\cM)$ the image of the map $c$. The set
$C_0(T\cM)$ consists of those curves $\gamma:I\rightarrow T\cM$ which
have the property that $\dot{\gamma}(t)\in TT\cM$ is a second order tangent
vector for all $t\in I$ (i.e. an element of $TT\cM$ which is invariant
under the flip involution of $TT\cM$). Then the maps $p$ and $c$ restrict to
mutually-inverse bijections between $C(\cM)$ and $C_0(T\cM)$.

For any Riemannian manifold $(\cM,\cG)$, we denote
$\cN:T\cM\rightarrow \R_{\geq 0}$ the norm function induced by
$\cG$ on $T\cM$:
\ben
\label{Ndef}
\cN(u)\eqdef ||u||~~\forall u\in T\cM~~.
\een

\begin{definition}
\label{def:pointed}
A curve $\varphi:I\rightarrow N$ (where $N$ is any manifold) is called
pointed if $0\in \mathring{I}$, where $\mathring{I}$ is the interior of the
interval $I$. 
\end{definition}

\section{Horizontal approximations for second order geometric ODEs}
\label{sec:mf}

\subsection{Approximations of a geometric dynamical system induced by complementary distributions}

\noindent Let $N$ be a manifold and $S\in \cX(N)$ a vector field
defined on $N$. We view $(N,S)$ as an autonomous dynamical system
whose flow curves $\gamma:I\rightarrow N$ are the solutions of the
integral curve equation of $S$:
\be
\dot{\gamma}(t)=S(\gamma(t))~~\forall t\in I~~.
\ee
Two Frobenius distributions $V,W\in TN$ are called {\em complementary}
if $TN=V\oplus W$. We think of $V$ as being ``vertical'' and of $W$ as
being ``horizontal'' and use notation justified by this for reasons
which will become clear in the next subsection. Given a pair of
complementary distributions $(V,W)$ on $N$, any vector field $X\in
\cX(N)$ decomposes uniquely as $X=X_V\oplus X_H$ with $X_V\in
\Gamma(V)$ and $X_H\in \Gamma(W)$. This defines projectors $P_V\in
\Hom(TN,V)$ and $P_H\in \Hom(TN,W)$ such that:
\be
P_VX=X_V~~,~~P_HX=X_H~~.
\ee
We will use terminology which doesn't refer to $V$ explicitly, for
reasons which will become clear in our discussion of second order
geometric ODEs below.

\begin{definition}
The $W$-flow curves of $S$ are the integral curves of the vector field
$S_H$, i.e. the solutions $\lambda:I\rightarrow N$ of the equation:
\be
\dot{\lambda}(t)=S_H(\lambda(t))~~\forall t\in I~~.
\ee
\end{definition}

\noindent In the situation above, the {\em $W$-approximation} of the
dynamical system $(N,S)$ consists of replacing it with the dynamical
system $(N,S_H)$, which we call the {\em $W$-approximant} of $(N,S)$.
Accordingly, one approximates the integral curves $\gamma$ of $S$ by
the integral curves $\lambda$ of $S_H$. The approximation can be
accurate only for those integral curves $\gamma$ of $S$ which have the
property that $S_V(\gamma(t))$ is ``small'' for all $t\in I$ with
respect to some appropriate notion of smallness for sections of $W$;
this requires the choice of an appropriate topology on $\Gamma(W)$
(for example, the Whitney topology or the distance topology induced by
a metric on $TN$). Intuitively, the approximation is good if $\gamma$
``stays close'' to the zero locus of $S_V$ in $N$. One can of course
also consider the ``complementary'' approximation in which one
replaces $S$ by $S_V$.

\begin{definition}
The {\em $W$-mean field locus} $\cZ_W$ is the zero set of $S_V$:
\be
\cZ_W\eqdef \{n\in N~\vert~S_V(n)=0\}~~.
\ee
\end{definition}

Notice that $\cZ_W$ is a manifold if $S_V$ is transverse
to the zero section of $V$. Let $n_0$ be a point of $Z_W$ and
$\gamma:I\rightarrow N$ be a pointed\footnote{See Definition
\ref{def:pointed}.}  integral curve of $S$ such that
$\gamma(0)=n_0$. Further restricting $I$ if necessary, let
$\lambda:I\rightarrow N$ be an integral curve of $S_H$ such that
$\lambda(0)=n_0$; of course, $\lambda$ and $\gamma$ are uniquely
determined by $n_0$ and the interval $I$ (which need not be the
maximal interval of definition of any of these two integral
curves). Then the approximation $\gamma(t)\approx \lambda(t)$ is exact
at $t=0$ and remains accurate for $t\in (-\epsilon,\epsilon)\subset I$
for some sufficiently small $\epsilon>0$.

An important special case arises if the distribution $W$ is Frobenius
integrable.  In this case, we have $W=T\cF$, where $\cF$ is the
foliation of $N$ which integrates $W$. Since $S_H\in \Gamma(W)$ is
tangent to the leaves of $\cF$, each integral curve $\lambda$ of $S_H$
lies on a leaf $\cL$ of this foliation. Let $\dim N=d$ and $\dim
\cL=\rk W=r$.  If $(U,(x^i,y^j)_{i=1,\ldots r, j=1,\ldots d-r})$ is a
foliated chart of $\cF$ with $x^i$ and $y^j$ corresponding
respectively to the leaf and transverse directions, then the plaques
of $\cF$ in this chart are locally described the condition
$y^i=\constant$ and the integral curve equation of $S_H$ amounts to
the system of ODEs:
\be
\dot{x}^i(t)=S_H^i(x(t), y(t))~~,~~\dot{y}^j(t)=0~~(i=1,\ldots,r~,~j=1,\ldots d-r)~~,
\ee
where $S_H=S_H^i\frac{\pd}{\pd x^i}$. In particular, the ``vertical
speeds'' $\dot{y}^j$ are neglected in the $W$-approximation. This
parallels the separation into ``slow'' and ``fast variables'' (where
the time variation of the ``slow variables'' is
neglected) which is common in various physics approximations such as
the Born-Oppenheimer approximation for the multiparticle Schrodinger
equation. 

\subsection{Formulation for second order geometric ODEs}

\noindent Let $\cM$ be a connected manifold and $S\in \cX(T\cM)$ be a
second order vector field \cite{SLK} (a.k.a. ``semispray'') defined on
the total space of the tangent bundle $\pi:T\cM\rightarrow \cM$; this
means that $S$ is invariant under the ``flip'' involution of the
double tangent bundle $TT\cM$.  It is well-known (see loc. cit.) that
such a vector field describes the first order formulation of an
autonomous {\em geometric} second order ODE defined on $\cM$ whose
second order term separates. In any local coordinate system $(x^i)$ on
$\cM$, this corresponds to a system of $n$ second order ODEs of the
form:
\be
\ddot{x}^i=f^i(x,\dot{x})~~.
\ee
The adjective ``geometric'' means that this second order system is
invariant under coordinate transformations on $\cM$; equivalently, it
means that it can be formulated as the condition
$j^2(\varphi)=\fs(j^1(\varphi))$ for curves $\varphi:I\rightarrow
\cM$, where $j^1(\varphi)=\dot{\varphi}:I\rightarrow J^1(\cM)=T\cM$
and $j^2(\varphi):I\rightarrow J^2(\cM)$ are the first and second jet
prolongations of $\varphi$ and $\fs$ is a section of the jet bundle
fibration $\pi_{21}:J^2(\cM)\rightarrow J^1(\cM)$. Here
$J^k(\cM)\rightarrow \cM$ denotes the $k$-th jet bundle of curves in
$\cM$ (see Subsection \ref{subsec:obs}).

By definition, a {\em flow curve} of $S$ is an integral curve
$\gamma:I\rightarrow T\cM$ of $S$ while a {\em solution curve} of $S$
is a solution $\varphi:I\rightarrow \cM$ of the corresponding second
order geometric ODE defined on $\cM$. Let $C_s(\cM)\subset C(\cM)$ be
the set of solution curves and $C_f(T\cM)\subset C(T\cM)$ be the set
of flow curves, where $C(N)$ denotes the set of smooth curves in a
manifold $N$.  Since $S$ is a semispray, the maps $c:C(\cM)\rightarrow
C(T\cM)$ and $p:C(T\cM)\rightarrow C(\cM)$ defined through
$c(\varphi):=\dot{\varphi}$ and $p(\gamma):=\pi\circ \gamma$ (see
Subsection \ref{subsec:notations}) restrict to mutually-inverse
bijections between $C_s(\cM)$ and $C_f(T\cM)$.

Let $\pi:T\cM\rightarrow \cM$ be the projection of $T\cM$, $V:=\ker
(\dd\pi)\subset TT\cM$ be the vertical distribution of $T\cM$ and $W$
be a complete Ehresmann connection on $T\cM$, i.e. a distribution
$W\subset TT\cM$ on $T\cM$ which is complementary to $V$ in $TT\cM$:
\be
TT\cM=V\oplus W
\ee
and which defines a horizontal lift of any curve in $\cM$ through any
point of $T\cM$ lying above that curve. We denote by $X_V, X_H$ the
vertical and horizontal projections of any vector field field $X\in
\Gamma(TT\cM)$ defined on $T\cM$. For any pointed curve
$\varphi:I\rightarrow \cM$ and any $u\in T\cM$ with
$\pi(u)=\varphi(0)$, let $\bvarphi_u:I\rightarrow T\cM$ be the
horizontal lift of $\varphi$ through $u$ at
$t=0$, which is uniquely determined by the conditions:
\be
\dot{\bvarphi}_u(t)\in W_{\bvarphi_u(t)}~~\forall t\in I~~,~~\bvarphi_u(0)=u~~\mathrm{and}~~\pi\circ \bvarphi_u=\varphi~~.
\ee

\subsection{The mean field locus defined by $W$}

\noindent The {\em mean field locus} $\cZ_W$ of $S$ relative to the
Ehresmann connection $W$ is the $W$-mean field locus of $S$ defined by
the complementary distributions $V$ and $W$ on $T\cM$, i.e.  the zero
set of $S_V$ in $T\cM$:
\be
\cZ_W\eqdef \{u\in T\cM~\vert~S_V(u)=0\}\subset T\cM~~.
\ee

\begin{definition}
We say that $S$ is {\em regular with respect to $W$} if $S_V$ is
transverse to the zero section of the vertical bundle $V\rightarrow
T\cM$.
\end{definition}

\noindent When $S$ is regular with respect to $W$, the mean field
locus $\cZ_W$ is a submanifold of $T\cM$ of dimension equal to that of
$\cM$. In this case, $\cZ_W$ is called the {\em mean field manifold}
of $S$ relative to $W$. In this case, the implicit function theorem
implies that $\cZ_W$ is locally a multisection of the fiber bundle
$\pi:T\cM\rightarrow \cM$ (because the restriction of $S_V$ to each
fiber $F$ of $T\cM$ is a vector field tangent to $F$ which is
transverse to the zero section of $TF$ and hence its vanishing locus
is a discrete set of points). In particular, $T\cZ_W$ is complementary
to the distribution $V\vert_{\cZ_W}$ inside $TT\cM\vert_{\cZ_W}$.  We
have $S(u)=S_H(u)\in W$ for all $u\in \cZ_W$. From now on, we assume
that $S$ is regular relative to $W$.

\begin{remark}
Notice that $S\vert_{\cZ_W}=S_H\vert_{\cZ_W}$ need not be tangent to
$\cZ_W$ and hence the integral curves of $S_H$ starting from a point 
of the mean field manifold need not be contained in that manifold.
\end{remark}

\subsection{The horizontal approximation defined by $W$}

\begin{definition}
A curve $\lambda:I\rightarrow T\cM$ is called a {\em phase space
  $W$-curve} of $S$ if it is an integral curve of the vector field
$S_H$, i.e. if it satisfies:
\ben
\label{Weq}
\dot{\lambda}(t)=S_H(\lambda(t))~~\forall t\in I~~.
\een
\end{definition}

\noindent Notice that any phase space $W$-curve $\lambda:I\rightarrow
T\cM$ is horizontal. Let $C_H(T\cM)$ be the set of curves in $T\cM$
which are horizontal with respect to $W$ and $C_W(T\cM)\subset
C_H(T\cM)$ be the subset of phase space $W$-curves. Recall that $C_f(T\cM)$
denotes the set of integral curves of $S$. 

\begin{definition}
A curve $\psi:I\rightarrow \cM$ is called a {\em $W$-curve} of $S$ if
$\psi=\pi\circ \lambda$ for some phase space $W$-curve
$\lambda:I\rightarrow T\cM$ of $S$, i.e. if there exists a horizontal lift
of $\psi$ which is a phase space $W$-curve. 
\end{definition}

\noindent Let $C_W(\cM)$ be the set of $W$-curves of $S$ and recall
that $C_s(\cM)$ denotes the set of solution curves of the second order
geometric ODE defined by $S$. Since a horizontal lift $\lambda$ of
$\psi$ is a horizontal curve which satisfies $\pi\circ \lambda=\psi$,
we have $C_W(\cM)=p(C_W(T\cM))$, where $p\eqdef \pi \circ
(-):C(T\cM)\rightarrow C(\cM)$. Since $\pi\circ \lambda=\psi$, we have
$(\dd\pi)\circ \dot{\lambda}=\dot{\psi}$. When $\lambda$ is a phase
space $W$-curve, this gives $(\dd\pi)\circ S\circ \lambda=\dot{\psi}$,
where we used the relation $\pi_\ast(S_H)=\pi_\ast(S)$.

Since the horizontal lift a pointed curve $\psi$ at $t=0$ is uniquely determined
by its value $u\in T_{\psi(0)}$ at zero time, this gives a first order
ODE for $\psi$ parameterized by $u$ which characterizes
$W$-curves. More precisely, a pointed curve $\psi:I\rightarrow \cM$
with $\psi(0)=m\in \cM$ is a $W$-curve iff there exists $u\in T_m\cM$
such that the horizontal lift $\bpsi_u:I\rightarrow T\cM$ of $\psi$
through $u$ at $t=0$ is a phase space $W$-curve. Equation \eqref{Weq}
for $\psi_u$ takes the form:
\ben
\label{Weq2}
\dot{\bpsi}_u(t)=S_H(\bpsi_u(t))~~\forall t\in I
\een
and can be viewed as a first order geometric ODE for $\psi$ with
initial condition $\psi(0)=m$ and parameter $u\in T_m\cM$. This ODE
has a unique maximal solution for each parameter $u\in T_{\psi(0)}\cM$
and $\psi$ can be recovered by projecting that solution to $\cM$.  Let
$I_u$ be the interval of definition of the maximal solution of
\eqref{Weq2} at a fixed value of $u$. Then the maximal solution of
\eqref{Weq2} for parameter $u$ projects to a $W$-curve
$\psi^{(u)}:I_u\rightarrow \cM$ which satisfies $\psi^{(u)}(0)=m$.
This gives a family of $W$-curves passing through each point $m$ of
$\cM$ and parameterized by the vectors tangent to $\cM$ at that
point. Notice that this parameterization need not be single-valued. A
specific $W$-curve $\psi^{(u)}$ which passes through $m$ at $t=0$ can
be specified by selecting a tangent vector $u\in T_m\cM$. This
determines the speed of $\psi^{(u)}$ at $t=0$ through the relation:
\be
\dot{\psi}^{(u)}(0)=(\dd\pi)(S(u))~~.
\ee

\begin{remark}
Two phase space $W$-curves $\lambda_1,\lambda_2:I\rightarrow T\cM$
determine the same $W$-curve $\psi:I\rightarrow \cM$ iff both of them
project to $\psi$, which requires $(\dd\pi)\circ S\circ
\lambda_1=(\dd\pi)\circ S\circ \lambda_2=\dot{\psi}$.
\end{remark}

The {\em horizontal approximation} defined by the Ehresmann connection
$W$ consists of approximating integral curves of $S$ by phase space
$W$-curves of $S$, i.e. approximating the flow of $S$ on $T\cM$ by the
flow of $S_H$. As explained below, this amounts to approximating
solution curves $\varphi$ of the second order geometric ODE defined by
$S$ on $\cM$ by $W$-curves of $S$ whose speed approximates the
projection of the acceleration of $\varphi$. The approximation is accurate for those
integral curves $\gamma:I\rightarrow T\cM$ of $S$ which are almost
horizontal in the sense that $S_V\circ \gamma$ is close to zero (where
``closeness to zero'' is measured, for example, by the fiberwise norm
induced on $TT\cM$ by a metric on $T\cM$).

Let $\varphi:I\rightarrow \cM$ be a pointed solution curve such that
the vector $u:=\dot{\varphi}(0)$ lies on the mean field manifold
$\cZ_W$. Then $S(u)\in W$ and the acceleration of $\varphi$ at
$t=0$ is horizontal:
\be
\ddot{\varphi}(0)=S(\dot{\varphi}(0))=S(u)\in W~~\mathrm{in}~~TT\cM~~.
\ee
After passing to a subinterval of $I$ if necessary, let $\psi=\pi\circ
\lambda:I\rightarrow \cM$ be the $W$-curve which is the projection of
the unique phase space $W$-curve $\lambda:I\rightarrow T\cM$ which
satisfies $\lambda(0)=\dot{\varphi}(0)=u$. The curve $\lambda$ and the
canonical lift $\gamma=\dot{\varphi}$ of $\varphi$ satisfy:
\be
\dot{\lambda}(t)=S_H(\lambda(t))~~,~~\dot{\gamma}(t)=S(\gamma(t))~~\forall t\in I
\ee
with the initial conditions:
\be
\gamma(0)=\lambda(0)=u~~.
\ee
Thus:
\be
\psi(0)=\pi(u)=\varphi(0)~~\mathrm{and}~~\dot{\psi}(0)=(\dd\pi)(S(u))=(\dd\pi)(\ddot{\varphi}(0))
\ee
and we have:
\be
\dot{\psi}(t)=(\dd\pi)(\dot{\lambda}(t))=(\dd\pi)(S(\lambda(t)))~~\forall t\in I~~.
\ee
The approximation $\gamma(t)\approx \lambda(t)$ implies
$S(\lambda(t))\approx S(\gamma(t))=\dot{\gamma}(t)=\ddot{\varphi}(t)$
and is equivalent with the joint approximation $\varphi(t)\approx
\psi(t)~\&~\dot{\varphi}(t)\approx \dot{\lambda}(t)$, which implies
$\varphi(t)\approx \psi(t)~~\&~(\dd\pi)(\ddot{\varphi}(t))\approx
\dot{\psi}(t)$. Notice that the ``mean field vector''
$u=\dot{\varphi}(0)\in \cZ_W$ determines the $W$-curve $\psi$ which
passes through the point $\varphi(0)$ at $t=0$ and is used to
approximate the solution curve $\varphi$. Intuitively, the
approximation neglects the vertical component of the acceleration
$\ddot{\varphi}$ of $\varphi$. This approximation is exact at $t=0$
and remains accurate on a small enough neighborhood of the origin
inside the interval $I$. In particular, the phase space
$W$-approximation of flow curves is exact on the mean field manifold
for all flow curves which intersect $\cZ_W$.

\begin{definition}
A {\em mean field} for $S$ relative to $W$ is a locally defined vector
field $X\in \Gamma(TU_0)$ such that:
\ben
\label{mf}
X(m)\in \cZ_W~~\forall m\in U_0~~,
\een
where $U_0$ is an open subset of $\cM$. 
\end{definition}

\noindent If $U$ is connected, then the image $X(U)$ of a mean field
$X\in \Gamma(U,T\cM)$ is a subset of a ``branch'' of the mean field
manifold $\cZ_W$ (when the latter is viewed as a multisection of
$\pi$).  This allows one to give local parameterizations of that
branch of the mean field manifold by choosing local coordinates
defined on open subsets $U\subset \cM$. The name ``mean field'' was
chosen by analogy with similar objects which appear in condensed
matter physics.

\subsection{The mean field approximation defined by $W$}

\noindent Let $S_W\in \Gamma(TW)$ be the projection of
$S\vert_{\cZ_W}$ to the integrable distribution $T\cZ_W$, where the
projection is taken parallel to $V$.

\begin{definition}
A curve $\rho:I\rightarrow \cZ_W$ is called a {\em phase mean field
  curve} of $S$ relative to $W$ if it is an integral curve of the
vector field $S_W\in \Gamma(TW)$, i.e. if it satisfies:
\ben
\label{mfeq}
\dot{\rho}(t)=S_W(\rho(t))~~\forall t\in I~~.
\een
\end{definition}

\noindent Notice that any phase mean field curve lies on $\cZ_W$

The {\em mean field approximation} consists of approximating solution
curves of the geometric ODE defined by $S$ by phase mean field curves
of $S$ relative to $W$. This is a singular approximation in the sense
that it restricts the phase space of the dynamical system from $T\cM$
to $T\cZ_W$. The approximation is accurate for those solution curves
$\varphi$ whose flow curve $\gamma=\dot{\varphi}$ lies very close to
the mean field manifold $\cZ_W$, {\em provided that} $S\vert_{\cZ_W}$
is well-approximated by $S_W$.

Let $\varphi:I\rightarrow \cM$ be a pointed solution curve of $S$ such that
the vector $u:=\dot{\varphi}(0)$ lies on the mean field manifold
$\cZ_W$. Let $\gamma:=\dot{\varphi}:I\rightarrow T\cM$ be the canonical
lift of $\varphi$. Then:
\be
\dot{\gamma}(0)=S(\gamma(0))=S(u)\in W_{u}~~.
\ee
After passing to a subinterval of $I$ if necessary, let
$\rho:I\rightarrow \cZ_W$ be the unique phase mean field curve which
satisfies:
\be
\rho(0)=u~~.
\ee
Then:
\be
\dot{\rho}(t)=S_W(\rho(t))~~\forall t\in I
\ee
and we have:
\be
\dot{\rho}(0)=S_W(u)~~.
\ee Let $\mu\eqdef \pi\circ \rho:I\rightarrow \cM$. The approximation
$\gamma(t)\approx \rho(t)$ implies $\varphi(t)\approx \mu(t)$ and
$(\dd\pi)(S_W(\dot{\varphi}(t)))\approx \dot{\mu}(t)$ and is accurate
for $t$ close to zero only if $\dot{\gamma}(0)\approx \dot{\rho}(0)$,
i.e. only if the horizontal vector $S(u)$ is almost tangent to $\cZ_W$
and hence is well approximated by $S_W(u)$. This approximation is much
coarser than the horizontal approximation and is not guaranteed to be
accurate for any flow curve or non-degenerate time interval.

\begin{remark}
The discussion above can be adapted easily to the restriction of $S$
and $W$ to a non-empty open subset of $T\cM$. We leave the details of this
generalization to the reader.
\end{remark}

\subsection{Fiberwise coordinates and special coordinate systems on $\dot{T}\cM$}
\label{subsec:fiberwise}

\noindent Suppose for simplicity that $\dim \cM=2$, which is the case
relevant for the remainder of the paper. Let $U$ be a non-empty open
subset of $T\cM$. Then $U_0\eqdef \pi(U)$ is an open subset of $\cM$
since $\pi$ is a submersion and hence an open map.

\begin{definition}
\label{def:fiberwise}
A {\em system of fiberwise coordinates} defined on $U$ is a (generally
non-surjective) submersion $y=(y^1,y^2):U\rightarrow \R^2$ whose
restriction to every non-empty intersection of $U$ with a fiber of
$T\cM$ is a system of coordinates on that fiber. That is, the map:
\be
y\vert_{\pi^{-1}(m)\cap U}:\pi^{-1}(m)\cap U\rightarrow \R^2
\ee
is a diffeomorphism to its image $y(\pi^{-1}(m)\cap U)$ for all $m\in U_0$~~.
\end{definition}

\noindent A particular case arises when $y(\pi^{-1}(m)\cap U)=y(U)$ for all $m\in U$.

\begin{prop}
Let $y:U\rightarrow \R^2$ be a smooth map. Then the
following statements are equivalent:
\begin{enumerate}[(a)]
\item $y$ is a fiberwise coordinate system such that
  $y(\pi^{-1}(m)\cap U)=y(U)$ for all $m\in U_0\eqdef \pi(U)$.
\item The restriction $\pi_U\eqdef \pi\vert_U:U\rightarrow U_0$ is a
  topologically trivial fiber bundle with typical fiber $y(U)$ and the
  map $(\pi_U,y):U\rightarrow U_0\times y(U)$ is a trivialization of
  this fiber bundle, where $U_0\times y(U)$ is viewed as trivial fiber
  bundle over $U_0$ using projection on the first factor.
\end{enumerate}
In this case, $\pi_U$ is fiber sub-bundle of the vector bundle $T\cU_0$.
\end{prop}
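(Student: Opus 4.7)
The plan is to prove the two implications separately, then conclude with the sub-bundle statement. Throughout, the dimension count is that $\dim U=\dim T\cM=4$, that $U_0=\pi(U)$ is open in $\cM$ (hence $\dim U_0=2$), and that $y(U)$ is open in $\R^2$ since a submersion is an open map.

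For (a)$\Rightarrow$(b), I would introduce the smooth map $\Phi\eqdef(\pi_U,y):U\to U_0\times y(U)$ and show it is a diffeomorphism. Injectivity is immediate: if $\Phi(u_1)=\Phi(u_2)$ then $\pi(u_1)=\pi(u_2)=m$ puts both vectors in $\pi^{-1}(m)\cap U$, and $y(u_1)=y(u_2)$ forces $u_1=u_2$ by the diffeomorphism property of $y\vert_{\pi^{-1}(m)\cap U}$. Surjectivity uses the full strength of the standing hypothesis: for $(m,v)\in U_0\times y(U)$, the assumption $y(\pi^{-1}(m)\cap U)=y(U)$ produces $u\in\pi^{-1}(m)\cap U$ with $y(u)=v$, whence $\Phi(u)=(m,v)$. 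To show $\Phi$ is a local diffeomorphism at an arbitrary point $u\in U$, I would check that $\dd\Phi_u:T_uU\to T_{\pi(u)}U_0\oplus \R^2$ is an isomorphism by dimension count: $\ker\dd\pi_u\cap T_uU$ is the vertical tangent space to the fiber $\pi^{-1}(\pi(u))\cap U$, on which $\dd y_u$ is an isomorphism to $\R^2$ (because $y$ restricted to this fiber is a diffeomorphism to the open set $y(U)\subset\R^2$); meanwhile $\dd\pi_u$ maps $T_uU$ onto $T_{\pi(u)}U_0$ since $\pi$ is a submersion. Combined, $\dd\Phi_u$ is a linear surjection between vector spaces of the same dimension four, hence an isomorphism. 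A smooth bijection which is everywhere a local diffeomorphism is a diffeomorphism, so $\Phi$ is the desired trivialization.

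For (b)$\Rightarrow$(a), assume $\Phi=(\pi_U,y):U\to U_0\times y(U)$ is a diffeomorphism realizing $\pi_U$ as a trivial bundle with fiber $y(U)$. Then $y=\mathrm{pr}_2\circ\Phi$ is a composition of a diffeomorphism with a submersion, hence a submersion. Moreover, $\Phi$ sends $\pi^{-1}(m)\cap U$ diffeomorphically onto $\{m\}\times y(U)$, so $y\vert_{\pi^{-1}(m)\cap U}$ is a diffeomorphism onto $y(U)$; this simultaneously verifies the fiberwise coordinate property of Definition \ref{def:fiberwise} and the identity $y(\pi^{-1}(m)\cap U)=y(U)$ for every $m\in U_0$.

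For the closing assertion, I would simply note that since $U\subset T\cM$ and $\pi_U=\pi\vert_U$, each fiber $\pi_U^{-1}(m)=\pi^{-1}(m)\cap U$ is an open subset of the vector-space fiber $T_m\cM$ of $TU_0$; together with the local triviality just established, this exhibits $\pi_U$ as an open fiber sub-bundle of $TU_0$ in the sense intended. I do not expect any serious obstacle; the only place requiring care is the verification that $\dd\Phi_u$ is an isomorphism, where one must combine the submersion hypothesis on $\pi$ with the fiberwise diffeomorphism hypothesis on $y$ in the correct order, and it is there that the non-triviality of the statement actually resides.
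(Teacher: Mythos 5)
Your proof is correct and follows essentially the same route as the paper's (much terser) argument: both establish that $(\pi_U,y)$ is a global trivialization, with the converse read off from the definition of a trivialization. You merely supply the details the paper omits — the bijectivity check and the rank computation for $\dd(\pi_U,y)_u$ — and these are carried out correctly.
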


\begin{proof}
To see that $(a)$ implies $(b)$, notice that all fibers of the surjective
submersion $\pi_U$ are diffeomorphic with $y(U)$ and $(\pi_U,y)$
is an isomorphism of fiber bundles to the trivial $y(U)$-bundle over
$U_0$. The converse implication follows from the definition of
trivializations of fiber bundles.
\end{proof}

\noindent An interesting class of examples arises from local frames of $T\cM$.

\begin{definition}
Let $U_0$ be a non-empty open subset of $\cM$. A surjective system of fiberwise
coordinates $y=(y^1,y^2):\pi^{-1}(U_0)=TU_0\rightarrow \R^2$ is called
{\em linear} if the restriction of $y$ to every fiber of $TU_0$ is a
linear isomorphism.
\end{definition}

\begin{prop}
The following statements are equivalent for a map $y:TU_0\rightarrow \R^2$.
\begin{enumerate}[(a)]
\item $y$ is a linear system of fiberwise coordinates.
\item There exists a frame $(e_1,e_2)$ of $TU_0$ such that:
\be
u=y^1(u)e_1(\pi(u))+y^2(u)e_2(\pi(u))~~\forall u\in TU_0
\ee
\end{enumerate}
\end{prop}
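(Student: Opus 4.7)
The plan is to prove the two implications $(b) \Rightarrow (a)$ and $(a) \Rightarrow (b)$ separately, with the frame in (b) serving as the ``dual basis'' to the fiberwise coordinates in (a).

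For $(b) \Rightarrow (a)$, I would argue directly from the expansion formula. Fixing $m \in U_0$, the map $T_m\cM \to \R^2$ sending $u \mapsto (y^1(u), y^2(u))$ is the inverse of the linear isomorphism $\R^2 \to T_m\cM$, $(a^1,a^2) \mapsto a^i e_i(m)$, and hence is a linear isomorphism. Smoothness of $y$ follows from smoothness of the frame $(e_1,e_2)$ together with the fact that in any local trivialization of $TU_0$ the coordinates $y^i$ are smooth functions of the trivialization's fiber coordinates. Since each fiberwise restriction is a linear isomorphism onto $\R^2$, $y$ is in particular a submersion and surjective.

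For the harder direction $(a) \Rightarrow (b)$, I would define the candidate frame by the formula
\[
e_i(m) \eqdef \bigl(y\vert_{T_m\cM}\bigr)^{-1}(\bepsilon_i) \qquad (i=1,2),
\]
where $\bepsilon_1=(1,0)$, $\bepsilon_2=(0,1)$ is the standard basis of $\R^2$. Since $y\vert_{T_m\cM}$ is a linear isomorphism by hypothesis, $e_i(m)$ is well-defined and the pair $(e_1(m),e_2(m))$ is a basis of $T_m\cM$. The expansion formula then follows immediately: for $u \in T_m\cM$, the vector $y^1(u)e_1(m)+y^2(u)e_2(m)$ has the same image under the isomorphism $y\vert_{T_m\cM}$ as $u$ itself, namely $(y^1(u),y^2(u))$.

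The main obstacle is showing that $m \mapsto e_i(m)$ is smooth. I would handle this by choosing, around any $m_0 \in U_0$, a local smooth frame $(f_1,f_2)$ of $TU_0$ on a neighborhood $V \subset U_0$. In the induced trivialization $TV \cong V \times \R^2$ the map $y$ is expressed as $y^i(m,u^1,u^2) = A^i_j(m) u^j$ for a smooth matrix-valued function $A\colon V \to \mathrm{Mat}_{2\times 2}(\R)$; linearity on fibers forces this form and fiberwise invertibility forces $A(m) \in \GL(2,\R)$ for all $m \in V$. Then $e_i(m) = (A(m)^{-1})^j{}_i f_j(m)$, which is smooth because matrix inversion is smooth on $\GL(2,\R)$. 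This furnishes a smooth frame on $V$, and uniqueness of $e_i$ (from the defining property) ensures that these local frames agree on overlaps, yielding a global smooth frame on $U_0$. This concludes the proof.
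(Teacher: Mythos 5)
Your proof is correct and follows essentially the same route as the paper's: both directions hinge on defining $e_i(m)$ as the $y\vert_{T_m\cM}$-preimage of the standard basis vectors of $\R^2$, and conversely reading off $y^i$ as the frame components. The only difference is that you supply the smoothness argument for $m\mapsto e_i(m)$ (via a local trivialization and smoothness of matrix inversion on $\GL(2,\R)$), a point the paper's proof asserts without detail; this is a welcome addition rather than a deviation.
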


\noindent Notice that the frame $(e_1,e_2)$ need not be holonomic
(i.e. it need not be induced by a local coordinate system on $\cM$).

\begin{proof}
Suppose that $(a)$ holds. For any $m\in U_0$, the restrictions $y^i_m:
T_m\cM\rightarrow \R$ give linear coordinates on $T_m\cM$. If $e_1(m),
e_2(m)\in T_m\cM$ are the unique vectors which satisfy
$y(e_1(m))=(1,0)$ and $y(e_2(m))=(0,1)$ then any $u\in T_m\cM$ can be
written as: \be u=y^1_m(u) e_1(m)+y^2_m(u) e_2(m)~~ \ee and hence
$(e_1(m),e_2(m))$ forms a basis of the fiber $T_m\cM$. This gives a
frame $(e_1,e_2)$ of $TU_0$ and hence $(b)$ holds. Now suppose that
$(b)$ holds. Then we have $y^i_m(u)=u^i$ for $m\in U_0$ and $u\in
T_mU_0$, where $u=u^1 e_1(m)+u^2 e_2(m)$ is the frame expansion of
$u$. It is clear that $y=(y^1,y^2)$ is a linear system of fiberwise
coordinates.
\end{proof}

\begin{definition}
\label{def:special}
A coordinate system $x=(y^1,y^2,\xi^1,\xi^2):U\rightarrow \R^4$
defined on an open subset $U\subset T\cM$ is called {\em special} if
it satisfies the following two conditions:
\begin{enumerate}[1.]
\item $(y^1,y^2):U\rightarrow \R^2$ is a system of fiberwise
  coordinates defined on $U$.
\item There exists a system of coordinates
  $x=(x^1,x^2):U_0\eqdef\pi(U)\rightarrow \R^2$ such that
  $\xi^1=(x^1)^\v\eqdef x^1\circ \pi_U$ and $\xi^2=(x^2)^\v\eqdef
  x^2\circ \pi_U$, where $\pi_U\eqdef \pi\vert_U$.
\end{enumerate}
\end{definition}

\noindent Given any system of fiberwise coordinates $(y^1,y^2)$ on
$U\subset T\cM$ and any system of coordinates $(x^1,x^2)$ on
$U_0=\pi(U)$, it is clear that $(y^1,y^2,\xi^1\eqdef
(x^1)^\v,\xi^2\eqdef (x^2)^\v)$ is a special coordinate system on
$U$.

\subsection{The horizontal approximation defined by a special system of coordinates}

\noindent Let $(y^1,y^2,\xi^1=(x^1)^\v,\xi^\v=(x^2)^\v):U\rightarrow
\R^4$ be a special system of coordinates defined on the open subset
$U\subset T\cM$ and let $U_0\eqdef \pi(U)$. The rank two integrable
distribution $W$ spanned by the vector fields $\frac{\pd}{\pd \xi^1}$
and $\frac{\pd}{\pd \xi^2}$ on $U$ is complementary to the vertical
distribution $V$ of the submersion $\pi_U:U\rightarrow U_0$ and
defines a complete {\em flat} Ehresmann connection on $U$.
As in the previous subsection, this gives a decomposition $S=S_V+S_H$ of any
semispray $S$ defined on $U$. Write:
\be
S=S^{y^i}\frac{\pd}{\pd y^i}+S^{\xi^i}\frac{\pd}{\pd \xi^i}~~,
\ee
where we use implicit summation over repeated indices and $i\in
\{1,2\}$.  Since the vector fields $\frac{\pd}{\pd y^i},\frac{\pd}{\pd
  \xi^i}\in \cX(U)$ are respectively vertical and horizontal, we
have:
\be
S_V=S^{y^i}\frac{\pd}{\pd y^i}~\mathrm{and}~~S_H=S^{\xi^i}\frac{\pd}{\pd \xi^i}~~.
\ee
To identify the horizontal part of $S$, let
$\zeta=(\zeta^1,\zeta^2):TU_0\rightarrow \R^2$ be the linear system of
fiberwise coordinates defined by the frame $\left(\frac{\pd}{\pd x^1},
\frac{\pd}{\pd x^2}\right)$ of $TU_0$. This is known as the {\em
  tangent chart} of the chart $(x^1,x^2):U_0\rightarrow \R^2$ of
$\cM$. The change of coordinates from $(\zeta,\xi)$ to $(y,\xi)$ and
its inverse have the forms:
\ben
\label{coordchange}
y^i=\ry^i(\zeta,\xi)~,~~\xi^i=\xi^i~~\mathrm{and}~~\zeta^i=\rzeta^i(y,\xi)~~,~~\xi^i=\xi^i~,
\een
where $\ry^i$ and $\rzeta^i$ are functions. In particular, we have $\frac{\pd \xi^i}{\pd y^j}=0$ and hence:
\be
\frac{\pd}{\pd y^i}=\frac{\pd \rzeta^j}{\pd y^i}\frac{\pd}{\pd \zeta^j}~~.
\ee
Since $S$ is a semispray, its horizontal part is given by the
following formula in the tangent chart $(\zeta,\xi)$ (see \cite{SLK}):
\be
S_H=\zeta^i\frac{\pd}{\pd \xi^i}~~.
\ee
Thus:
\be
S^{\xi^i}=\zeta^i~~.
\ee

For any curve $\gamma:I\rightarrow U$, we
have:
\be
\dot{\gamma}(t)=\dot{y}^i(t)\frac{\pd}{\pd y^i}\Big{\vert}_{\gamma(t)}+\dot{\xi}^i(t)\frac{\pd}{\pd \xi^i}\Big{\vert}_{\gamma(t)}~~,
\ee
where $y^i(t)\eqdef y^i(\gamma(t))$ and $\xi^i(t)\eqdef
\xi^i(\gamma(t))$. Hence:
\be
\dot{\gamma}(t)_V=\dot{y}^j(t)\frac{\pd}{\pd y^i}\Big{\vert}_{\gamma(t)}~~,~~\dot{\gamma}(t)_H=\dot{\xi}^i(t)\frac{\pd}{\pd \xi^i}\Big{\vert}_{\gamma(t)}
\ee
and the integral curve equation of $S$ amounts to
the following system of ODEs:
\ben
\label{yxisys}
\dot{y}^i(t)=S^{y^i}(y(t),\xi(t))~~,~~\dot{\xi}^i(t)=S^{\xi^i}(y(t),\xi(t))=\rzeta^i(y(t),\xi(t))~~.
\een

The mean field locus $\cZ_W$ of $S$ relative to $W$ has equations:
\be
S^{y_1}(y,\xi)=S^{y_2}(y,\xi)=0~~.
\ee
As before, we assume that $S$ is regular relative to $W$, i.e. $S_V$
is transverse to the zero section of $V$. This amounts to the
condition that the map $(S^{y^1},S^{y^2}):U\rightarrow \R^2$ restricts
to a submersion on a tubular neighborhood of $\cZ_W$ in $U$. Writing a vector
field $X\in \cX(U_0)$ as $X=X^i\frac{\pd}{\pd x^i}$ with $X^i\in
\cC^\infty(U_0)$, we have $\xi^i\circ X=x^i$ and $\zeta^i\circ X=X^i$.
Hence the mean field equation \eqref{mf} reads:
\be
S^{y^i}(\ry^1(X^1(x),X^2(x),x),\ry^2(X^1(x),X^2(x),x),x)=0 ~~(i=1,2)~~,
\ee
where $\ry^i$ are the functions appearing in the coordinate
transformations \eqref{coordchange}. This system of functional
equations locally determines $X^i$ as functions of $x$, though the
solution need not be globally unique.

Let $\lambda:I\rightarrow U$ be a curve in $U$. Setting $y^i(t)\eqdef
y^i(\lambda(t))$ and $\xi^i(t)\eqdef \xi^i(\lambda(t))$, the phase space
$W$-curve equation \eqref{Weq} amounts to the system:
\be
\dot{y}^i(t)=0~~\mathrm{and}~\dot{\xi}^i(t)=S^{\xi^i}(y(t),\xi(t))=\rzeta^i(y(t),\xi(t))~~\forall t\in I~~.
\ee
Intuitively, the horizontal approximation defined by the special
coordinates $(y,\xi)$ amounts to treating the variables $y^1$ and
$y^2$ as `slow' in the sense that $\dot{y}^1$ and $\dot{y}^2$
are neglected. This corresponds to neglecting the following combinations:
\be
\frac{\partial
  \ry^i}{\partial \zeta^j}(\dot{x}(t),x(t))\ddot{x}^j(t)+\frac{\partial
  \ry^i}{\partial \xi^j}(\dot{x}(t),x(t))\dot{x}^j(t)~~(i=1,2)
\ee
of the components of the speed and acceleration of a solution curve
$\varphi$ of the geometric second order ODE
$\ddot{\varphi}(t)=S(\dot{\varphi})(t)$ defined by $S$ on $U_0$. Here
$x^i(t)\eqdef x^i(\varphi(t))$ and we noticed that
$\xi^i(\dot{\varphi}(t))=x^i(t)$ and
$\zeta^i(\dot{\varphi}(t))=\dot{x}^i$.

\section{Two-field cosmological models}
\label{sec:models}

\noindent By a {\em two-field cosmological model} we mean a
classical cosmological model with two real scalar fields derived from the
following Lagrangian on a spacetime with topology $\R^4$:
\ben
\label{cL}
\cL[g,\varphi]=\left[\frac{M^2}{2} \mathrm{R}(g)-\frac{1}{2}\Tr_g \varphi^\ast(\cG)-\Phi\circ \varphi\right]\vol_g~~.
\een
Here $M$ is the reduced Planck mass, $g$ is the spacetime metric on
$\R^4$ (taken to be of ``mostly plus'' signature) and $\vol_g$ and
$\mathrm{R}(g)$ are the volume form and scalar curvature of $g$. The
scalar fields are described by a smooth map $\varphi:\R^4\rightarrow
\cM$, where $\cM$ is a (generally non-compact) connected, smooth and
paracompact surface without boundary which we assume to be oriented
and endowed with a smooth Riemannian metric $\cG$, while
$\Phi:\cM\rightarrow \R$ is a smooth function which plays the role of
potential for the scalar fields. One usually requires that $\cG$ is
complete to ensure conservation of energy. For simplicity, we also
assume that $\Phi$ is strictly positive and non-constant on $\cM$. The
triplet $(\cM,\cG,\Phi)$ is called the {\em scalar triple} of the
model, while $(\cM,\cG)$ is called the {\em scalar manifold}.  The
model is parameterized by the quadruplet $\fM\eqdef
(M_0,\cM,\cG,\Phi)$, where:
\be
M_0\eqdef M\sqrt{\frac{2}{3}}
\ee
is the {\em rescaled Planck mass}. We denote by:
\be
\Crit\Phi\eqdef \{m\in \cM~\vert~(\dd \Phi)(m)=0\}~~\mathrm{and}~~\cM_0\eqdef \cM\setminus \Crit\Phi
\ee
the {\em critical and noncritical subsets} of $\cM$. Notice that
$\cM_0$ is open in $\cM$. It is also non-empty since we assume that
$\Phi$ is not constant.

The two-field model parameterized by the quadruplet $\fM$ is obtained
by assuming that $g$ is an FLRW metric with flat spatial section:
\ben
\label{FLRW}
\dd s^2_g=-\dd t^2+a(t)^2\sum_{i=1}^3 \dd x_i^2
\een
(where $a$ is a smooth and positive function of $t$) and that
$\varphi$ depends only on the {\em cosmological time} $t\eqdef
x^0$. Here $(x^0,x^1,x^2,x^3)$ are the Cartesian coordinates on
$\R^4$.  Define the {\em Hubble parameter} through:
\be
H(t)\eqdef \frac{\dot{a}(t)}{a(t)}~~,
\ee
where the dot indicates derivation with respect to $t$.

\subsection{The cosmological equation}

\noindent Let $\pi:T\cM\rightarrow \cM$ be bundle projection. Let
$\cH:T\cM\rightarrow \R_{>0}$ be the {\em rescaled Hubble function} of the scalar triple
$(\cM,\cG,\Phi)$, which is defined through:
\ben
\label{cHdef}
\cH(u)\eqdef \frac{1}{M_0}\left[||u||^2+2\Phi(\pi(u))\right]^{1/2}~~\forall u\in T\cM~~
\een
and let $\nabla$ be the Levi-Civita connection of $(\cM,\cG)$. When $H>0$
(which we assume for the remainder of the paper), the Euler-Lagrange
equations of \eqref{cL} reduce (see \cite{genalpha, Noether1, Noether2})
to the {\em cosmological equation}:
\ben
\label{eomsingle}
\nabla_t \dot{\varphi}(t)+ \cH(\dot{\varphi}(t))\dot{\varphi}(t)+ (\grad \Phi)(\varphi(t))=0~~,
\een
(where $\nabla_t\eqdef \nabla_{\dot{\varphi}(t)}$) together with the condition:
\ben
\label{Hvarphi}
H(t)= \frac{1}{3}\cH(\dot{\varphi}(t))~~.
\een
The smooth solutions $\varphi:I\rightarrow \cM$ of \eqref{eomsingle}
(where $I$ is a non-degenerate interval) are called {\em cosmological
  curves}, while their images in $\cM$ are called {\em cosmological
  orbits}.  Given a cosmological curve $\varphi$, relation
\eqref{Hvarphi} determines the function $a$ up to a multiplicative
constant.

A cosmological curve $\varphi:I\rightarrow \cM$ need not be an
immersion.  Accordingly, we define the {\em singular and regular
  parameter sets} of $\varphi$ through:
\beqa
I_\sing\eqdef \{t\in I~\vert~\dot{\varphi}(t)=0\}~~,~~I_\reg\eqdef I\setminus I_\sing= \{t\in I~\vert~\dot{\varphi}(t)\neq 0\}~~.
\eeqa
When $\varphi$ is not constant, it can be shown that $I_\sing$ is a discrete subset of $I$.
The sets of {\em critical and noncritical times} of $\varphi$ are defined through:
\beqa
I_\crit\eqdef \{t\in I~\vert~\varphi(t)\in \Crit\Phi \}~~,~~I_\noncrit \eqdef I\setminus I_\crit \eqdef \{t\in I~\vert~\varphi(t)\not\in \Crit\Phi\}~~.
\eeqa
The cosmological curve $\varphi$ is called {\em nondegenerate} if
$I_\sing=\emptyset$ and {\em noncritical} if
$I_\crit=\emptyset$. Noncriticality means that the orbit of $\varphi$
does not meet the critical set of $\Phi$, i.e. $\varphi(I)$ is
contained in $\cM_0$. It is easy to see that a cosmological curve
$\varphi:I\rightarrow \cM$ is constant iff its image coincides with a
critical point of $\Phi$, which in turn happens iff there exists some
$t\in I_\sing\cap I_\crit$.  Hence for any non-constant cosmological
curve we have:
\be
I_\sing\cap I_\crit=\emptyset~~.
\ee

\subsection{The cosmological dynamical system}

\noindent The cosmological equation \eqref{eomsingle} is {\em
  geometric} in the sense that it can be written as the condition
$j^2(\varphi)=\fs(j^1(\varphi))$, where $\fs$ is a section of the
fiber bundle $J^2(\cM)\stackrel{\pi_{21}}{\rightarrow} J^1(\cM)$ (see
Subsection \ref{subsec:obs}). This statement follows from the fact
that \eqref{eomsingle} is expressed in terms of
diffeomorphism-covariant operators and from the fact that the double
derivative term in \eqref{eomsingle} separates from the lower order
terms in any local system of coordinates.

The geometric character of \eqref{eomsingle} implies that the
cosmological equation is invariant under changes of local coordinates
on $\cM$ and hence its first order form defines a geometric dynamical
system on the total space of $T\cM$ in the sense of
\cite{Palis,Katok}, which we call the {\em cosmological dynamical
  system} defined by the scalar triple $(\cM,\cG,\Phi)$. As explained
for example in \cite{SLK}, the defining vector field $S\in \cX(T\cM)$
of this dynamical system is a second-order vector field which we call
the {\em cosmological semispray}. We have (see \cite{ren}):
\be
S=\mathbb{S}-Q~~,
\ee
where $\mathbb{S}$ is the geodesic spray of the Riemannian manifold
$(\cM,\cG)$ and $Q\in \cX(T\cM)$ is the {\em cosmological correction},
which is the vertical vector field:
\be
Q=\cH C+(\grad \Phi)^v~~.
\ee
Here $C\in \cX(T\cM)$ is the Euler vector field of $T\cM$ and
$(\grad\Phi)^v\in T\cM$ is the vertical lift of the vector field
$\grad\Phi\in \cX(\cM)$ to a vector field defined on $T\cM$. The
cosmological equation \eqref{eomsingle} is equivalent with the
integral curve equation of $S$ in the following sense (see
\cite{ren}):

\begin{prop}
Let $\varphi:I\rightarrow \cM$ be a curve in $\cM$. Then the following
statements are equivalent:
\begin{enumerate}[(a)]
\item $\varphi$ satisfies the cosmological equation \eqref{eomsingle}
\item The canonical lift $\gamma\eqdef \dot{\varphi}:I\rightarrow T\cM$ of
  $\varphi$ is an integral curve of the vector field $S$, i.e. it
  satisfies the equation:
\ben
\label{Seq}  
\dot{\gamma}(t)=S(\gamma(t))~~\forall t\in I~~.
\een
\end{enumerate}
\end{prop}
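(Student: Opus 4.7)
The plan is to reduce the equivalence to the local coordinate form of both sides. Since $S=\mathbb{S}-Q$ with $Q=\cH\, C+(\grad\Phi)^v$ is a vertical perturbation of the geodesic spray $\mathbb{S}$, the statement is a direct consequence of the standard fact that integral curves of $\mathbb{S}$ are precisely the canonical lifts of geodesics, once one identifies the vertical deviation with the covariant acceleration. Accordingly, I would split the proof into (i) a characterization of the canonical lift in terms of the horizontal/vertical splitting of $TT\cM$ induced by $\nabla$, and (ii) a matching of the vertical component against $Q$.

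First, I would recall that for any $u\in T\cM$, the Levi-Civita connection provides a decomposition $T_uT\cM=H_u\oplus V_u$ with $V_u=\ker(\dd\pi)_u\simeq T_{\pi(u)}\cM$ through the vertical lift at $u$, and that the geodesic spray $\mathbb{S}$ is the unique second-order vector field on $T\cM$ whose value at every $u$ lies in $H_u$ and projects to $u$ under $\dd\pi$. Applied to the canonical lift $\gamma=\dot\varphi$, this gives the well-known identity
\begin{equation*}
\dot\gamma(t)=\mathbb{S}(\dot\varphi(t))+\bigl(\nabla_t\dot\varphi(t)\bigr)^v_{\dot\varphi(t)},
\end{equation*}
where $(\cdot)^v_u$ denotes the vertical lift at $u\in T\cM$. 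The horizontal part recovers that $\gamma$ is a second order curve in $T\cM$, and the vertical part measures the covariant acceleration of $\varphi$.

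Next, I would expand $S(\gamma(t))=\mathbb{S}(\dot\varphi(t))-\cH(\dot\varphi(t))\,C(\dot\varphi(t))-(\grad\Phi)^v(\dot\varphi(t))$ and use that the Euler vector field satisfies $C(u)=u^v_u$ to rewrite $-\cH(u)C(u)=(-\cH(u)u)^v_u$. Hence $S(\gamma(t))$ decomposes as the same horizontal part $\mathbb{S}(\dot\varphi(t))$ plus the vertical vector $\bigl(-\cH(\dot\varphi(t))\,\dot\varphi(t)-(\grad\Phi)(\varphi(t))\bigr)^v_{\dot\varphi(t)}$. Comparing horizontal and vertical components of $\dot\gamma(t)$ and $S(\gamma(t))$, the horizontal parts coincide automatically (both equal $\mathbb{S}(\dot\varphi(t))$), and equality of the vertical parts is equivalent, via injectivity of the vertical lift, to the cosmological equation \eqref{eomsingle}. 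This proves $(a)\Leftrightarrow(b)$.

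The only genuine subtlety is the verification of the $\nabla$-splitting formula for $\dot\gamma=\dot{\dot\varphi}$, i.e.\ the identification of the connection-induced vertical component of $\dot\gamma(t)$ with $(\nabla_t\dot\varphi(t))^v$; this is essentially the defining property of the connection map $K\colon TT\cM\to T\cM$ associated with $\nabla$, and can be checked in normal coordinates at $\varphi(t)$ in one line. Everything else is algebraic manipulation with the decomposition $S=\mathbb{S}-Q$ and the formula $C(u)=u^v_u$. Thus I do not expect any real obstacle; the main point is simply to be careful about where the vertical lift is based (at $\dot\varphi(t)$), so that the pointwise identity in $T_{\dot\varphi(t)}T\cM$ translates correctly into the tensorial identity \eqref{eomsingle} on $\cM$.
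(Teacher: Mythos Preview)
Your argument is correct. The paper itself does not supply a proof of this proposition: it simply states the result with a reference to \cite{ren} (``in the following sense (see \cite{ren})''), so there is nothing to compare against directly.

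Your approach---using the horizontal/vertical splitting of $TT\cM$ induced by the Levi-Civita connection, together with the identity $\dot\gamma(t)=\mathbb{S}(\dot\varphi(t))+(\nabla_t\dot\varphi(t))^v_{\dot\varphi(t)}$ for the canonical lift and the formula $C(u)=u^v_u$ for the Euler field---is the standard intrinsic proof of this fact and is exactly the argument one would expect from the way the paper sets things up (writing $S=\mathbb{S}-Q$ with $Q$ vertical). The horizontal parts match automatically, and injectivity of the vertical lift reduces equality of vertical parts to \eqref{eomsingle}. No gaps.
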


\noindent The flow of $S$ on the total space of $T\cM$ is called the
          {\em cosmological flow} of the model. Following dynamical
          system terminology, we will refer to $T\cM$ as the {\em
            phase space} of the cosmological dynamical system. The
          integral curves of $S$ are called {\em cosmological flow
            curves} while their images in $T\cM$ are called {\em
            cosmological flow orbits}. It can be shown that any
          cosmological {\em flow} orbit is aperiodic and without
          self-intersections. This follows from the fact that the
          cosmological dynamical system is dissipative (see
          Proposition \ref{prop:dissip} below).

\subsection{Local cosmological observables}
\label{subsec:obs}

\noindent For any $k\in \Z_{\geq 0}$, let $E^k\eqdef J^k(\cM)$ be the
$k$-th jet bundle of curves in $\cM$, whose projection to $\cM$ we
denote by $\pi_k$. By definition, the fiber of $E^k$ at $m\in \cM$ is
the $k$-th jet space $E_m^k=J^k_m(\cM)$ of curves in $\cM$ which pass
through the point $m$ at time $t=0$. Let $F^k\eqdef \pi_k^\ast(T\cM)$
be the {\em $k$-th Finsler bundle} of $\cM$, whose projection to $E^k$
we denote by $p_k$. We have $E^0=\cM$, $E^1=T\cM$ and $F^0=T\cM$,
$F^1=F$, where $F=\pi^\ast(T\cM)$ is the ordinary Finsler bundle of
$\cM$, i.e.  the tautological bundle of $T\cM$. For any $k> l\geq 0$,
let $\pi_{kl}:E^k\rightarrow E^l$ be the natural projection, which is
a fiber bundle. We have $\pi_{k,0}=\pi_k$ and $\pi_{l,m}\circ
\pi_{k,l}=\pi_{l,m}$ for $k>l>m$. Notice that
$\pi_{k,k-1}:E^k\rightarrow E^{k-1}$ is an affine bundle of rank $2$
which is modeled on the vector bundle $F^{k-1}$ (see
\cite{Saunders}). For any smooth curve $\varphi:I\rightarrow \cM$, let
$j^k(\varphi):I\rightarrow J^k(\cM)$ denote the $k$-th jet
prolongation of $\varphi$. We have $j^0(\varphi)=\varphi$ and
$j^1(\varphi)=\dot{\varphi}$.

\begin{remark}
The product $\R\times E^k\rightarrow \R$ identifies with the jet
bundle of sections of the trivial bundle $\R\times \cM\rightarrow \R$,
where the projection to $\R$ is the first projection of the Cartesian
product.
\end{remark}

\noindent Let $\cE$ be a vector bundle over $\cM$ and set $\cE^k\eqdef
\pi_k^\ast(\cE)$ for all $k\in \Z_{\geq 0}$. 

\begin{definition}
A (continuous and off-shell) {\em $\cE$-valued local cosmological
  observable} of order $k\in \Z_{\geq 0}$ is a continuous section
$s\in \cC^0(U,\cE^k)$, where $U$ is a non-empty open subset of
$E^k=J^k(\cM)$. A local cosmological observable of order one is called
{\em basic}; in this case, $s$ is a section of $\pi^\ast(\cE)$ defined
on an open subset of $J^1(\cM)=T\cM$.
\end{definition}

\noindent Local cosmological observables valued in the trivial real
line bundle $\R_\cM$ of $\cM$ are called {\em scalar local
  observables}, while $T\cM$-valued local observables are called {\em
  vector local observables}. In the first case, $\cE^k$ is isomorphic
with the trivial line bundle $\R_{E^k}$ over $E^k$, so a scalar local
observable is a function $f:U\rightarrow \R$, where $U$ is an open
subset of $E^k$. In the second case, we have $\cE^k=F^k$, so a local
vector observable is a section $s\in \Gamma(U,F^k)$, where $U$ is an
open subset of $E^k$. Notice that $\cE$-valued local observables of
order zero are local sections of the vector bundle $\cE$.

\begin{eg}
A simple example of scalar basic cosmological observable is the norm
function $||~||:T\cM\rightarrow \R_{\geq 0}$ of $(\cM,\cG)$, which is
continuous everywhere but smooth only on the slit tangent bundle
${\dot T}\cM$. Another simple example of such is the rescaled Hubble
function $\cH=\frac{1}{M_0}\sqrt{\cN^2+2\Phi\circ \pi}$, which is
smooth on $T\cM$.  Here $\cN:T\cM\rightarrow \R_{\geq 0}$ is the norm
function induced by $\cG$ (see Subsection \ref{subsec:notations}).
\end{eg}

\begin{definition}
\label{def:normred}
Let $U\subset T\cM$ be a non-empty open set and $U_0\eqdef \pi(U)$. A
basic scalar local observable $f:U\rightarrow \R$ is
called {\em norm reducible} if there exists a map $g:U_0\times
\R_{\geq 0}\rightarrow \R$ such that: $f=g\circ (\pi\times\cN)$, i.e.:
\be
f(u)=g(\pi(u),||u||)~~\forall u\in U~~.
\ee
In this case, we have:
\be
f(u)=g_m(||u||)~~\forall m\in U_0~~\forall u\in U\cap T_m\cM~~,
\ee
where $g_m:\R_{\geq 0}\rightarrow \R$ is the $m$-section of $g$:
\be
g_m(u)\eqdef g(m,||u||)~~\forall u\in U\cap T_m\cM~~\forall m\in U_0~~.
\ee
\end{definition}

\begin{eg}
The rescaled Hubble function $\cH$ is norm reducible.
\end{eg}

\begin{definition}
Given an $\cE$-valued local cosmological observable $s\in
\cC^0(U,\cE^k)$ with $U\subset E^k$ and a curve $\varphi:I\rightarrow
\cM$ such that $j^k(\varphi)(I)\subset U$, the {\em evaluation} of $s$
along $\varphi$ is defined though:
\be
s_\varphi \eqdef j^k(\varphi)^\ast(s)\in \cC^0(I,\varphi^\ast(\cE))~~.
\ee
\end{definition}

\noindent In the definition above, we used the isomorphism
$(j^k(\varphi))^\ast(\cE^k)=(j^k(\varphi))^\ast(\pi_k^\ast(\cE))\simeq
\varphi^\ast(\cE)$, which follows from the relation $\pi_k\circ
j^k(\varphi)=\varphi$. Notice that $s_\varphi$ can be viewed as a map
$s_\varphi:I\rightarrow \cE$ which satisfies $s_\varphi(t)\in
\cE_{\varphi(t)}$ for all $t\in I$.

\subsubsection{The cosmological section, cosmological shell and dynamical reduction of local observables.}

The cosmological equation \eqref{eomsingle} defines the {\em
  cosmological section} $\fs:E^1\rightarrow E^2$ of the fiber bundle
$\pi_{21}:E^2\rightarrow E ^1$, which has the following expression in
the canonical local coordinates
$(x_0^i=x^i,x_1^i=\dot{x}^i,x_2^i=\ddot{x}^i)_{i=1,\ldots, d}$ on
$E^2$ induced by local coordinates $x=(x^i)_{i=1,2}$ on $\cM$ (here
the dots denote formal derivatives).
\beqan
\label{fs}
&& \fs^{x_2^i}(x,x_1)=-\Gamma^i_{jk}(x)x_1^jx_1^k-\frac{1}{M_0} \left[\cG_{kl}(x)x_1^k x_1^l+2\Phi(x)\right]^{1/2}x_1^i- \cG^{ij}(x)(\pd_j \Phi)(x)~~\nn\\
&& \fs^{x_1^i}(x,x_1)=x_1^i\\
&& \fs^{x_0^i}(x,x_1)=x_0^i~~,\nn
\eeqan
where $\fs^{x^i_k}\eqdef x^i_k\circ \fs$ for $k=0,1,2$ and $i=1,2$. Here $\Gamma^i_{jk}$ are the
Christoffel symbols of $\cG$ in the local coordinates $x^i$ on
$\cM$. The image of this section is the closed submanifold $\fS$ (called
the {\em cosmological shell}) of the total space of $E^2=J^2(\cM)$ defined
by the cosmological equation. The cosmological shell has equations:
\ben
\label{shelleq}
x_2^i= {\hat \fs}_1^i(x,x_1)~~,
\een
where we set ${\hat \fs}_1^i\eqdef \fs^{x_2^i}$. Formal differentiation of \eqref{shelleq} with respect to time gives:
\be
x_3^i=(\pd_{x_0^j}{\hat \fs}_1^i)(x_0,x_1)x_1^j+(\pd_{x_1^j} {\hat \fs}_1^i)(x_0,x_1){\hat \fs}_1^j(x_0,x_1)\eqdef {\hat \fs}_2^i(x_0,x_1)
\ee
and generally:
\ben
\label{eomsk}
x_{k+1}^i={\hat \fs}_k^i(x_0,x_1)~~\forall k\geq 1~~,
\een
for certain maps ${\hat \fs}_k^i:E^1\rightarrow \R$. The map $\fs_k:E_1\rightarrow E^{k+1}$ with components:
\beqa
&& \fs_k^{x_0^i}(x_0,x_1)=x_0^i~~,~~ \fs_k^{x_1^i}(x_0,x_1)=x_1^i\\
&&\fs_k^{x_{l+1}^i}(x_0,x_1)={\hat \fs}_l^i(x_0,x_1)~~\forall l\in \{1,\ldots, k\}
\eeqa
defines a section of the fiber bundle $E^{k+1}\rightarrow
E^1$ and we have $\fs_1=\fs$. Notice that $\fs_k$ depends on the $k$-th jets of $\cG$ and
$\Phi$.

For any $k\geq 1$, the image of $\fs_k$ is a closed submanifold $\fS_k$
of $E^{k+1}$ which gives the $(k-1)$-th prolongation of the
cosmological shell and we have $\fS_1=\fS$. The equations:
\be
x_l^i=\fs_k^{x_l^i}(x_0,x_1)~~(l=0,\ldots, k+1, i=1,2)
\ee
give the $(k-1)$-th prolongation of the cosmological equation. A curve
$\varphi:I\rightarrow \cM$ satisfies the cosmological equation iff
$j^2(\varphi)(I)\subset \fS$, which amounts to the condition
$j^2(\varphi)(t)=\fs(j^1(\varphi)(t))$ for all $i\in I$.  In this
case, we also have $j^{k+1}(\varphi)(I)\subset \fS_k$ for all $k\geq
1$, which amounts to
\ben
\label{jkphi}
j^{k+1}(\varphi)(t)=\fs_k(j^1(\varphi)(t))~~\forall t\in I~~.
\een

\begin{definition}
The {\em dynamical reduction} (or {\em on-shell reduction}) of an
$\cE$-valued cosmological observable $s\in \cC^0(U,\cE^k)$ of order
$k\geq 2$ (where $U$ is an open subset of $E^k$) is the basic
cosmological observable:
\be
s^\red\eqdef \fs_{k-1}^\ast(s)\in \cC^0(\fs_{k-1}^{-1}(U),\cE^1)~~.
\ee
\end{definition}

\noindent In the definition above, we used the relation
$\fs_k^\ast(\cE^k)=\fs_k^\ast(\pi_k^\ast(\cE))\simeq (\pi_k\circ
\fs_k)^\ast(\cE)=\pi^\ast(\cE)=\cE^1$, which follows from the relation
$\pi_k=\pi\circ \pi_{k,1}$ and from the fact that $\fs_k$ is a section
of the fiber bundle $\pi_{k,1}:E^k\rightarrow E^1$. 

\begin{prop}
Consider a $\cE$-valued local cosmological observable $\fs\in
\cC^0(U,\cE^k)$ of order $k\geq 2$, where $U$ is a non-empty open
subset of $E^k$ which satisfies $(\fs_{k-1}\circ \pi_{k,1})(U)\subset
U$. Then for any cosmological curve $\varphi:I\rightarrow \cM$ such
that $j^k(\varphi)(I)\subset U$, we have $\dot{\varphi}(I)\subset
\fs_{k-1}^{-1}(U)$ and:
\ben
\label{sred}
s_\varphi=s^\red_\varphi=\dot{\varphi}^\ast(s^\red)\in \cC^0(I,\varphi^\ast(\cE)) ~~.
\een
\end{prop}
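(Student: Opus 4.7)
The plan is to reduce everything to the on-shell identity \eqref{jkphi} combined with functoriality of pullback. The point of the hypothesis $(\fs_{k-1}\circ \pi_{k,1})(U)\subset U$ is to ensure that the two sides of \eqref{sred} are defined on the same domain, while the identity \eqref{jkphi} is precisely what makes them equal when $\varphi$ is on-shell.

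First, I would verify the inclusion $\dot{\varphi}(I)\subset \fs_{k-1}^{-1}(U)$. Since $\pi_{k,1}\circ j^k(\varphi)=j^1(\varphi)=\dot{\varphi}$ holds for any smooth curve (a standard property of the jet projections), the assumption $j^k(\varphi)(I)\subset U$ gives $\dot{\varphi}(I)\subset \pi_{k,1}(U)$. By the standing hypothesis $(\fs_{k-1}\circ \pi_{k,1})(U)\subset U$, this in turn yields $\fs_{k-1}(\dot{\varphi}(I))\subset U$, i.e.\ $\dot{\varphi}(I)\subset \fs_{k-1}^{-1}(U)$. Hence $s^\red_\varphi=\dot{\varphi}^\ast(s^\red)\in \cC^0(I,\varphi^\ast(\cE))$ is well defined.

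Next, I would invoke the on-shell identity. Since $\varphi$ satisfies the cosmological equation \eqref{eomsingle}, equation \eqref{jkphi} (applied with $k$ replaced by $k-1$) gives
\[
j^k(\varphi)(t)=\fs_{k-1}\bigl(j^1(\varphi)(t)\bigr)=\fs_{k-1}\bigl(\dot{\varphi}(t)\bigr)\quad\forall t\in I,
\]
so $j^k(\varphi)=\fs_{k-1}\circ \dot{\varphi}$ as maps $I\to E^k$. Applying functoriality of pullback to $s\in \cC^0(U,\cE^k)$, I obtain
\[
s_\varphi=j^k(\varphi)^\ast(s)=\bigl(\fs_{k-1}\circ \dot{\varphi}\bigr)^\ast(s)=\dot{\varphi}^\ast\bigl(\fs_{k-1}^\ast(s)\bigr)=\dot{\varphi}^\ast(s^\red)=s^\red_\varphi,
\]
which is the desired equality in \eqref{sred}.

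The only real bookkeeping is to keep track of the bundle identifications. Under $\pi_k=\pi\circ\pi_{k,1}$ we have $\cE^k=\pi_k^\ast(\cE)=\pi_{k,1}^\ast(\pi^\ast \cE)=\pi_{k,1}^\ast(\cE^1)$, so $\fs_{k-1}^\ast(\cE^k)\simeq \cE^1$ because $\fs_{k-1}$ is a section of $\pi_{k,1}$; this is how $s^\red$ lands in $\cC^0(\fs_{k-1}^{-1}(U),\cE^1)$. Similarly, $j^1(\varphi)^\ast(\cE^1)\simeq \varphi^\ast(\cE)$ via $\pi\circ j^1(\varphi)=\varphi$, and $j^k(\varphi)^\ast(\cE^k)\simeq \varphi^\ast(\cE)$ via $\pi_k\circ j^k(\varphi)=\varphi$. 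Checking that these canonical isomorphisms are compatible with the pullback along $\fs_{k-1}\circ \dot{\varphi}=j^k(\varphi)$ is the main (purely formal) step; I expect no obstacle beyond ensuring the square of pullbacks commutes, which follows from naturality of $\pi_{k,1}\circ \fs_{k-1}=\id_{E^1}$ and $\pi\circ \dot{\varphi}=\varphi$.
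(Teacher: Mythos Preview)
Your proof is correct and follows essentially the same approach as the paper's: both establish the inclusion $\dot{\varphi}(I)\subset \fs_{k-1}^{-1}(U)$ from $\dot{\varphi}=\pi_{k,1}\circ j^k(\varphi)$ together with the hypothesis $(\fs_{k-1}\circ\pi_{k,1})(U)\subset U$, and then use the on-shell identity \eqref{jkphi} (i.e.\ $j^k(\varphi)=\fs_{k-1}\circ\dot{\varphi}$) and functoriality of pullback to obtain \eqref{sred}. Your additional paragraph on the compatibility of the bundle identifications is a welcome elaboration but not a different idea.
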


\begin{proof}
  Since $\dot{\varphi}=\pi_{k,1}\circ j^k(\varphi)$, we have
  $\fs_{k-1}(\dot{\varphi}(t))=(\fs_{k-1}\circ
  \pi_{k,1})(j^k(\varphi)(t))\in (\fs_{k-1}\circ \pi_{k,1})(U)\subset
  U$, which implies $\dot{\varphi}(I)\subset \fs_{k-1}^{-1}(U)$. Thus
  $\dot{\varphi}^\ast(s^\red)$ is defined on the entire interval $I$.
  By the definition of $s_\varphi$, we have:
  \be
  s_\varphi=j^k(\varphi)^\ast(s)=j_1^\ast(\varphi)(\fs_{k-1}^\ast(s))=j_1^\ast(\varphi)(s^\red)=\dot{\varphi}^\ast(s^\red)~~,
  \ee
  where we used \eqref{jkphi} and the fact that $j_1(\varphi)=\dot{\varphi}$. 
\end{proof}

\noindent The condition that $s$ be continuous (though natural for
most observables of physical interest) is inessential in the
discussion above. It is easy to see that the results of this
subsection also apply to discontinuous sections of $\cE^k$, which
define ``discontinuous local observables of order $k$''. We will
encounter some examples of such in latter sections.

\begin{remark}
\label{rem:jets}
Since $E_k=J^k(\cM)$ consists of $k$-jets of pointed curves at $t=0$,
any $\cE$-valued observable $s\in \cC^0(U,\cE^k)$ of order $k$ is
completely determined by specifying its evaluation $s_\varphi$ on {\em
  arbitrary} curves $\varphi:I\rightarrow U$. We will use this fact in
latter sections to describe second order observables through their
evaluation on arbitrary curves instead of giving their jet bundle
description.
\end{remark}

\subsubsection{The cosmological energy function.}

\begin{definition} The {\em cosmological energy function}
is the scalar basic local cosmological observable $E:T\cM\rightarrow
\R_{>0}$ defined through:
\be
E(u)\eqdef \frac{1}{2}||u||^2+\Phi(\pi(u))~~\forall u\in T\cM~~.
\ee
The {\em cosmological kinetic and potential energy functions} and the
{\em cosmological potential energy function} are the basic scalar
local cosmological observables $E_\kin,E_\pot:T\cM\rightarrow \R_{\geq
  0}$ defined through:
\be
E_\kin(u)\eqdef \frac{1}{2}||u||^2~~,~~E_\pot(u)\eqdef \Phi(\pi(u))~~\forall u\in T\cM~~.
\ee
\end{definition}

\noindent With these definitions, we have:
\be
E=E_\kin+E_\pot~~.
\ee
Notice that $E_\pot$ coincides with the natural lift of $\Phi$ to $T\cM$:
\be
E_\pot=\Phi^\v\eqdef \Phi\circ \pi~~.
\ee
\noindent Also notice the relation:
\be
\cH=\frac{1}{M_0}\sqrt{2E}~~.
\ee

\begin{prop}
\label{prop:dissip}
The evaluation of the cosmological energy along any cosmological curve
$\varphi:I\rightarrow \cM$ satisfies the {\em cosmological dissipation
  equation}:
\ben
\label{DissipEq}
\frac{\dd
E_\varphi(t)}{\dd t}=-\frac{\sqrt{2E_\varphi(t)}}{M_0}||\dot{\varphi}(t)||^2\Longleftrightarrow \frac{\dd \cH_\varphi(t)}{\dd t}=-\frac{||\dot{\varphi}(t)||^2}{M_0^2}~~\forall t\in I~~.
\een
In particular, $E_\varphi$ is a strictly decreasing function of $t$ if $\varphi$ is not constant.
\end{prop}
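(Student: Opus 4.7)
The plan is to differentiate $E_\varphi(t) = \frac{1}{2}\|\dot\varphi(t)\|^2 + \Phi(\varphi(t))$ directly in $t$ and then substitute the cosmological equation \eqref{eomsingle} to eliminate $\nabla_t\dot\varphi$. The two forms of the dissipation identity are then linked through the algebraic relation $\cH = \frac{1}{M_0}\sqrt{2E}$ already noted just before the statement.

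First I would compute the two pieces of $\frac{\dd}{\dd t}E_\varphi(t)$ separately. Using the metric-compatibility of the Levi-Civita connection,
\be
\frac{\dd}{\dd t}\!\left[\tfrac{1}{2}\cG(\dot\varphi(t),\dot\varphi(t))\right] = \cG\bigl(\nabla_t\dot\varphi(t),\dot\varphi(t)\bigr),
\ee
while the chain rule together with the definition of the gradient yields
\be
\frac{\dd}{\dd t}\Phi(\varphi(t)) = (\dd\Phi)_{\varphi(t)}(\dot\varphi(t)) = \cG\bigl((\grad\Phi)(\varphi(t)),\dot\varphi(t)\bigr).
\ee
Adding these and factoring gives $\frac{\dd E_\varphi}{\dd t} = \cG\bigl(\nabla_t\dot\varphi + \grad\Phi\circ\varphi,\ \dot\varphi\bigr)$.

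Next I would substitute the cosmological equation, which states precisely that $\nabla_t\dot\varphi + \grad\Phi\circ\varphi = -\cH(\dot\varphi)\,\dot\varphi$. This reduces the preceding expression to
\be
\frac{\dd E_\varphi(t)}{\dd t} = -\cH(\dot\varphi(t))\,\|\dot\varphi(t)\|^2,
\ee
and since $\cH(\dot\varphi(t)) = \frac{1}{M_0}\sqrt{2E_\varphi(t)}$ by the definition \eqref{cHdef}, this is exactly the first form of \eqref{DissipEq}. For the equivalence with the Hubble form I would square the relation $M_0\cH = \sqrt{2E}$ to obtain $M_0^2\cH^2 = 2E$ pointwise along $\dot\varphi$; differentiating in $t$ and dividing by $2M_0^2\cH_\varphi(t) > 0$ converts the first form into $\frac{\dd\cH_\varphi}{\dd t} = -\|\dot\varphi\|^2/M_0^2$ and vice versa.

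The final assertion that $E_\varphi$ is strictly decreasing when $\varphi$ is non-constant follows because the right-hand side of \eqref{DissipEq} is nonpositive everywhere and vanishes only at $t\in I_\sing$; since a non-constant cosmological curve has $I_\sing$ discrete (as recalled in Subsection on the cosmological equation), the derivative $\frac{\dd E_\varphi}{\dd t}$ is strictly negative off a discrete set, forcing strict monotonicity of the continuous function $E_\varphi$. No step here looks genuinely hard; the only thing to be mildly careful about is invoking metric-compatibility correctly and citing the discreteness of $I_\sing$ for the last clause.
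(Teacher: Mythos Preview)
Your proof is correct and is exactly the computation the paper has in mind; the paper's own proof reads simply ``Follows immediately by using the cosmological equation \eqref{eomsingle},'' and you have spelled out precisely that immediate computation, including the justification of the strict-monotonicity clause via the discreteness of $I_\sing$.
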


\begin{proof}
Follows immediately by using the cosmological equation \eqref{eomsingle}.
\end{proof}

\noindent The cosmological dissipation equation is equivalent with:
\ben
\label{dcH}
\frac{\dd \cH_\varphi(t)}{\dd t}=-||\dot{\varphi}(t)||^2~~.
\een
Integrating \eqref{dcH} from $t_1$ to $t_2$ along the cosmological curve $\varphi$ gives:
\be
\cH_\varphi(t_1)-\cH_\varphi(t_2)=\int_{t_1}^{t_2}\dd t ||\dot{\varphi}(t)||^2
\ee
Since the integral in the right hand side is minimized by the smallest length geodesic
which connects the points $\varphi(t_1)$ to $\varphi(t_2)$ in $\cM$, we have:
\be
\int_{t_0}^{t_1}\dd t ||\dot{\varphi}(t)||^2\geq \mathrm{d}(\varphi(t_1),\varphi(t_2))^2~~,
\ee
where $\mathrm{d}$ is the distance function \cite{Petersen} of
$(\cM,\cG)$. This gives the {\em Hubble inequality}:
\ben
\label{cHineq}
\cH_\varphi(t_1)-\cH_\varphi(t_2)\geq \mathrm{d}(\varphi(t_1),\varphi(t_2))^2~~.
\een

\section{The adapted frame of the non-critical submanifold}
\label{sec:adapted}

In this section, we discuss a certain frame of $T\cM_0$ which is induced by $\cG$ and $\Phi$. 
Let: 
\ben
n\eqdef \frac{\grad \Phi}{||\grad \Phi||}\in \fX(\cM_0)
\een
be the normalized gradient vector field of $\Phi$, which is
well-defined on the non-critical submanifold $\cM_0$. Let $\tau\in
\fX(\cM_0)$ be the normalized vector field which is orthogonal to $n$
and chosen such that the frame $(n,\tau)$ of $T\cM_0$ is positively
oriented at every point of $\cM_0$. In the notations of Appendix
\ref{app:signed}, we have $\tau=N_n$. Thus (see equation \eqref{Nv}):
\ben
\label{taudef}
\tau=(\iota_n\vol\vert_{\cM_0})^\sharp~~,
\een
where $\vol$ is the volume form of $(\cM,\cG)$.

\begin{definition}
The oriented orthonormal frame $(n,\tau)$ of $T\cM_0$ is called the
{\em adapted frame} of the oriented scalar triple $(\cM,\cG,\Phi)$.
\end{definition}

\noindent Recall that the gradient flow and level set curves of $\Phi$
are mutually orthogonal on the noncritical locus $\cM_0$. Hence these
curves determine an orthogonal mesh on the Riemannian manifold
$(\cM_0,\cG\vert_{\cM_0})$.

\begin{definition}
\label{def:mesh}
The {\em potential mesh} of the oriented scalar triple
$(\cM,\cG,\Phi)$ is the oriented mesh determined by the gradient flow
and level set curves of $\Phi$ on the noncritical set $\cM_0$, where
we orient gradient flow lines towards increasing values of $\Phi$
and level set curves such that their normalized tangent
vectors generate the vector field $\tau$.
\end{definition}

\subsection{The characteristic functions of a scalar triple}

\noindent With Definition \ref{def:mesh}, the frame $(n,\tau)$ is the
oriented orthonormal frame defined by the potential mesh. For any
point $m\in \cM_0$, let $\mu(m)$ and $\lambda(m)$ respectively be the
{\em signed} curvatures (see Definition \ref{def:xi} in Appendix
\ref{app:signed}) at $m$ of the unique gradient and level set curves
of $\Phi$ which pass through $m$. This gives smooth functions
$\mu,\lambda\in \cC^\infty(\cM_0)$.

\begin{definition}
The functions $\mu,\lambda\in \cC^\infty(\cM_0)$ are called the {\em first and second
characteristic functions} of the oriented scalar triple
$(\cM,\cG,\Phi)$.
\end{definition}

\noindent Since $N_n=\tau$ and $N_\tau=-n$, the Frenet-Serret
relations \eqref{KN}, \eqref{NK} for the gradient flow and level set
curves give the {\em characteristic equations} of the oriented scalar
triple $(\cM,\cG,\Phi)$:
\beqan
\label{FSadapted}
&&\nabla_n n=\mu \tau~~,~~\nabla_n\tau=-\mu n\nn\\
&& \nabla_\tau \tau=-\lambda n~~,~~\nabla_\tau n=\lambda \tau~~.
\eeqan
For any vector field $X\in \fX(\cM_0)$, we denote by:
\be
\Phi_X\eqdef (\dd\Phi)(X)=X(\Phi) \in \cC^\infty(\cM)
\ee
the directional derivative of $\Phi$ with respect to $X$. Then the
gradient of $\Phi$ expands as follows in any orthonormal frame $(X,Y)$
of $T\cM$:
\ben
\label{gradPhiExp}
\grad\Phi=\cG(X,\grad\Phi)X+\cG(Y,\grad\Phi)Y=(\dd\Phi)(X) X+(\dd\Phi)(Y)Y=\Phi_X X+\Phi_Y Y~~.
\een
For the adapted frame, we have:
\beqan
\label{PhinPhitau}
&&\Phi_n=n(\Phi)=(\dd\Phi)(n)=\cG(\grad\Phi,n)=||\grad\Phi||=||\dd\Phi||\nn\\
&&\Phi_\tau=\tau(\Phi)=(\dd\Phi)(\tau)=\cG(\grad\Phi,\tau)=0~~.
\eeqan
For any vector fields $X,Y$ defined on $\cM$, we set:
\ben
\label{HessXY}
\Phi_{XY}\eqdef\Hess(\Phi)(X,Y)=(\nabla_X \dd\Phi)(Y)=X(Y(\Phi))-(\dd\Phi)(\nabla_X Y)=X(Y(\Phi))-(\nabla_XY)(\Phi)~~.
\een
Notice that $\Phi_{XY}=\Phi_{YX}$ since the Levi-Civita connection
$\nabla$ of $(\cM,\cG)$ is torsion-free.

\begin{definition}
The function $\Phi_{nn}\in \cC^\infty(\cM_0)$ is called the {\em third
  characteristic function} of the scalar triple $(\cM,\cG,\Phi)$.
\end{definition}

\begin{prop}
\label{prop:HessPhiAdapted}
The following relations hold on the non-critical submanifold $\cM_0$:
\beqan
\label{HessPhiAdapted}
\Phi_{nn}=n(||\dd\Phi||)~~&,&~~\Phi_{\tau\tau}=\lambda ||\dd\Phi||\nn\\
\Phi_{n\tau}=\mu ||\dd\Phi||&=&\tau(||\dd\Phi||)=\Phi_{\tau n}
\eeqan
In particular, the following relations hold on $\cM_0$:
\ben
\label{lambdamu}
\lambda=\frac{\Phi_{\tau\tau}}{||\dd\Phi||}~~,~~\mu=\frac{\Phi_{n\tau}}{||\dd\Phi||}~~.
\een
\end{prop}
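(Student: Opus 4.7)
The plan is to prove all four identities by direct computation from the definition $\Phi_{XY}=X(Y(\Phi))-(\nabla_X Y)(\Phi)$, feeding in the characteristic equations \eqref{FSadapted} and the values $\Phi_n=\|\dd\Phi\|$, $\Phi_\tau=0$ from \eqref{PhinPhitau}. Division by $\|\dd\Phi\|$ (which is strictly positive on $\cM_0$) then yields \eqref{lambdamu}.

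Concretely, I would start with the two ``diagonal'' identities. For $\Phi_{nn}$, I would compute
\[
\Phi_{nn}=n(n(\Phi))-(\nabla_n n)(\Phi)=n(\Phi_n)-\mu\,\tau(\Phi)=n(\|\dd\Phi\|)-\mu\,\Phi_\tau,
\]
which collapses to $n(\|\dd\Phi\|)$ because $\Phi_\tau=0$. For $\Phi_{\tau\tau}$, the same scheme gives
\[
\Phi_{\tau\tau}=\tau(\Phi_\tau)-(\nabla_\tau\tau)(\Phi)=\tau(0)+\lambda\,\Phi_n=\lambda\,\|\dd\Phi\|,
\]
using $\nabla_\tau\tau=-\lambda n$.

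I would then do the ``off-diagonal'' identities in both orders. The computation
\[
\Phi_{n\tau}=n(\Phi_\tau)-(\nabla_n\tau)(\Phi)=n(0)+\mu\,\Phi_n=\mu\,\|\dd\Phi\|
\]
uses $\nabla_n\tau=-\mu n$, while
\[
\Phi_{\tau n}=\tau(\Phi_n)-(\nabla_\tau n)(\Phi)=\tau(\|\dd\Phi\|)-\lambda\,\Phi_\tau=\tau(\|\dd\Phi\|)
\]
uses $\nabla_\tau n=\lambda\tau$ together with $\Phi_\tau=0$. The identity $\Phi_{n\tau}=\Phi_{\tau n}$, noted in the text right after \eqref{HessXY} as a consequence of the Levi-Civita connection being torsion-free, then forces $\tau(\|\dd\Phi\|)=\mu\,\|\dd\Phi\|$, completing the chain in the second line of \eqref{HessPhiAdapted}. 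Dividing through by $\|\dd\Phi\|$, which is nonvanishing on $\cM_0$ by definition of the non-critical set, yields \eqref{lambdamu}.

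There is no real obstacle here: once one has the Frenet-type relations \eqref{FSadapted} and the orthogonality identities \eqref{PhinPhitau}, every assertion is a one-line unwinding of the definition of $\Phi_{XY}$. The only point requiring mild care is being consistent with the sign conventions in \eqref{FSadapted} (the asymmetry between $\nabla_n\tau=-\mu n$ and $\nabla_\tau n=+\lambda\tau$), which is what makes the two off-diagonal Hessian expressions land on $\mu\|\dd\Phi\|$ and $\tau(\|\dd\Phi\|)$ respectively; their equality is then not tautological but is precisely the content of the symmetry of $\Hess(\Phi)$.
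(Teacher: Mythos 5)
Your proof is correct and follows exactly the paper's route: the paper's own proof simply asserts that the relations "follow immediately from \eqref{FSadapted} and \eqref{HessXY} upon using \eqref{ntauPhi}," and your four explicit computations (together with the symmetry of the Hessian forcing $\tau(\|\dd\Phi\|)=\mu\|\dd\Phi\|$, and division by the nonvanishing $\|\dd\Phi\|$) are precisely the unwinding the paper leaves to the reader. Nothing to add.
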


\begin{remark}
The last two relations in \eqref{HessPhiAdapted} imply:
\ben
\label{mueqPhi}
\mu=\tau(\log||\dd\Phi||)~~.
\een
\end{remark}

\begin{proof}
Equations \eqref{PhinPhitau} give:  
\ben
\label{ntauPhi}
n(\Phi)=||\dd\Phi||~~,~~\tau(\Phi)=0~~.
\een
Relations \eqref{HessPhiAdapted} follow immediately from
\eqref{FSadapted} and \eqref{HessXY} upon using \eqref{ntauPhi}. The
second and third relation in \eqref{HessPhiAdapted} give
\eqref{lambdamu}.
\end{proof}

\subsection{Adapted coordinates on the nondegenerate set}

\begin{definition}
\label{def:cU}
The {\em nondegenerate set} of $(\cM,\cG,\Phi)$ is the following open subset of $\cM_0$:
\ben
\label{cUdef}
\cU\eqdef \{m\in \cM_0~~\vert~~\mu(m)\neq 0\}=\{m\in \cM_0~~\vert~~\Phi_{n\tau}(m)\neq 0\}~~.
\een
\end{definition}

\noindent Notice that $\cU$ is empty iff all gradient lines of $\Phi$
are geodesics. In this
case, we say the scalar triple $(\cM,\cG,\Phi)$ is {\em degenerate}.
For the remainder of this paper, we assume that $(\cM,\cG,\Phi)$ is a
nondegenerate scalar triple. Consider the non-negative function:
\be
\Psi\eqdef ||\dd\Phi||\in \cC^\infty(\cM,\R_{\geq 0})~~,
\ee
whose vanishing locus coincides with $\Crit\Phi$ and let $\beta\eqdef (\Phi,\Psi):\cM\rightarrow \R_{>0}\times \R_{\geq 0}$. 

\begin{prop}
\label{prop:basecoords}
The following relations hold on the non-critical submanifold $\cM_0$:
\beqa
&&\grad \Phi=||\dd\Phi||n\\
&& \grad ||\dd\Phi|| =\Phi_{nn} n+\mu ||\dd \Phi||\tau~~.
\eeqa
\end{prop}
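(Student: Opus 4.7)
The plan is to prove both relations by direct expansion in the adapted orthonormal frame $(n,\tau)$, using the identities already established in Proposition \ref{prop:HessPhiAdapted}.

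For the first equation, I would start from the definition $n\eqdef \grad\Phi/||\grad\Phi||$ on $\cM_0$, which gives $\grad\Phi=||\grad\Phi||\,n$. Since the musical isomorphism induced by $\cG$ is an isometry between $T^\ast\cM$ and $T\cM$, we have $||\grad\Phi||=||\dd\Phi||$ pointwise, so $\grad\Phi=||\dd\Phi||\,n$. This step is essentially by definition and requires no serious computation.

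For the second equation, the key observation is that the expansion formula \eqref{gradPhiExp} (applied to the function $||\dd\Phi||$ rather than $\Phi$) holds for any orthonormal frame. Applied in the adapted frame $(n,\tau)$, it gives
\be
\grad||\dd\Phi||=n(||\dd\Phi||)\,n+\tau(||\dd\Phi||)\,\tau~~.
\ee
Now I would invoke Proposition \ref{prop:HessPhiAdapted}: the first relation there gives $n(||\dd\Phi||)=\Phi_{nn}$, and the second pair of relations gives $\tau(||\dd\Phi||)=\Phi_{n\tau}=\mu||\dd\Phi||$. Substituting yields the claimed identity $\grad||\dd\Phi||=\Phi_{nn}\,n+\mu||\dd\Phi||\,\tau$.

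No step here is a real obstacle, since the heavy lifting (the characteristic equations \eqref{FSadapted} and the Hessian identities \eqref{HessPhiAdapted}) has already been carried out in the previous proposition. The only thing to verify carefully is that the general gradient-expansion formula used for $\Phi$ in \eqref{gradPhiExp} applies verbatim to any smooth function, in particular to $||\dd\Phi||\in\cC^\infty(\cM_0)$, which is immediate from the definition of the gradient with respect to an orthonormal frame.
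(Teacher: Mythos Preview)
Your proposal is correct and follows essentially the same approach as the paper: the first relation is immediate from the definition of $n$, and the second is obtained by expanding $\grad||\dd\Phi||$ in the orthonormal adapted frame and identifying the coefficients $n(||\dd\Phi||)=\Phi_{nn}$ and $\tau(||\dd\Phi||)=\mu||\dd\Phi||$ via the Hessian identities \eqref{HessPhiAdapted}.
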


\begin{proof}
The first relation is obvious. The second follows from:
\be
\cG(\grad||\dd \Phi||,n)=n(||\dd\Phi||)=\Phi_{nn}~~,~~\cG(\grad||\dd\Phi||,\tau)=\tau(||\dd\Phi||)=\mu ||\dd\Phi||~~,
\ee
where we used \eqref{HessPhiAdapted}. 
on $\cU$.
\end{proof}

\begin{cor}
If the map $\beta_\cU$ is proper, then it is a covering map. 
\end{cor}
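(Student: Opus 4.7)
The plan is to reduce the statement to the standard topological fact that a proper local diffeomorphism between manifolds is a covering map onto its image. So the work reduces to checking that $\beta_\cU$ is a local diffeomorphism and then invoking the standard result.

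First I would compute the differential of $\beta$ at a point $m\in \cU$ in the adapted frame $(n,\tau)$. By Proposition \ref{prop:basecoords}, the rows of the Jacobian matrix of $\beta=(\Phi,\Psi)$ in this orthonormal frame are $\grad\Phi=\Psi\, n$ and $\grad \Psi=\Phi_{nn}\, n+\mu\Psi\,\tau$. Hence the Jacobian determinant of $\beta$ (with respect to the volume induced by $\cG$ on $\cM_0$ and the standard volume on $\R_{>0}\times\R_{\geq 0}$) equals
\[
\det\begin{pmatrix} \Psi & 0\\ \Phi_{nn} & \mu\Psi\end{pmatrix}=\mu\,\Psi^{2}.
\]
On $\cU$ we have $\mu\neq 0$ by \eqref{cUdef} and $\Psi=\|\dd\Phi\|\neq 0$ because $\cU\subset \cM_0$. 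Thus $\mathrm{d}\beta_m$ is invertible for every $m\in \cU$, so the inverse function theorem gives that $\beta_\cU$ is a local diffeomorphism onto an open subset of $\R_{>0}\times\R_{>0}\subset\R_{>0}\times\R_{\geq 0}$.

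Next I would invoke the standard fact (see e.g.\ Lee, \emph{Introduction to Smooth Manifolds}) that a proper local diffeomorphism between connected manifolds is a smooth covering map onto its image. Applied to $\beta_\cU$, which is proper by assumption and a local diffeomorphism by the step above, this immediately yields the corollary. For concreteness one may briefly sketch the argument: properness forces every fiber $\beta_\cU^{-1}(p)$ to be finite (it is discrete because $\beta_\cU$ is a local diffeomorphism, and compact by properness); around the points of a fiber one can shrink local inverses to get pairwise disjoint open neighborhoods whose images all contain a common small open neighborhood $V$ of $p$, and properness again allows one to shrink $V$ so that the full preimage $\beta_\cU^{-1}(V)$ is the disjoint union of these neighborhoods, giving an evenly covered neighborhood of $p$.

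The main (and essentially only) obstacle is a small care point: the target should be taken as an open subset of $\R_{>0}\times\R_{\geq 0}$ (or equivalently of $\R_{>0}\times\R_{>0}$, since $\Psi>0$ on $\cU$), and one should verify that $\beta_\cU(\cU)$ is open so that the covering statement is meaningful. But openness of the image is immediate from the local diffeomorphism property, so no further work is required.
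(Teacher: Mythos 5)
Your proof is correct and follows essentially the same route as the paper: the nonvanishing of the Jacobian determinant $\mu\Psi^2$ on $\cU$ (via Proposition \ref{prop:basecoords}) makes $\beta_\cU$ a local diffeomorphism, and then one applies the standard fact that a proper local homeomorphism between locally compact Hausdorff spaces is a covering map onto its (open) image, for which the paper simply cites Lemma 2 of \cite{Ho} rather than sketching the argument as you do.
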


\begin{proof}
Follows immediately from \cite[Lemma 2]{Ho} using the fact that $\R^2$
and $\cM$ are locally compact space when endowed with the manifold
topology.
\end{proof}

\begin{cor}
The map $\beta$ has rank $2$ on $\cU$, rank $1$ on $\cM_0\setminus
\cU$ and is locally constant on $\Crit \Phi=\cM\setminus \cM_0$.  In
particular the restriction $\beta_\cU\eqdef (\Phi,\Psi):\cU\rightarrow
\cB$ is a surjective local diffeomorphism, where $\Psi\eqdef
||\dd\Phi||$ and we defined $\cB\eqdef \beta(\cU)$.
\end{cor}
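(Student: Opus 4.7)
The plan is to compute the differential $\dd\beta = (\dd\Phi,\dd\Psi)$ pointwise using the adapted coframe dual to $(n,\tau)$ and to read off its rank on each of the three regions $\cU$, $\cM_0\setminus\cU$, and $\Crit\Phi$ in turn; the final local-diffeomorphism claim will then follow from the inverse function theorem.

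On the noncritical submanifold $\cM_0$, let $(n^\flat,\tau^\flat)$ be the coframe dual to the adapted frame $(n,\tau)$. Using $\dd f = (\grad f)^\flat$, Proposition \ref{prop:basecoords} translates at once to $\dd\Phi = \Psi\,n^\flat$ and $\dd\Psi = \Phi_{nn}\,n^\flat + \mu\,\Psi\,\tau^\flat$, so the matrix of $\dd\beta$ in the bases $(n,\tau)$ of $T_m\cM$ and the standard basis of $\R^2$ is
\begin{equation*}
J_\beta \;=\; \begin{pmatrix} \Psi & 0 \\ \Phi_{nn} & \mu\,\Psi \end{pmatrix},
\end{equation*}
with determinant $\mu\,\Psi^2$. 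On $\cU$ both factors are nonzero by the very definition of $\cU$, so $\dd\beta$ has rank two there; on $\cM_0\setminus\cU$ one has $\mu = 0$ but $\Psi \neq 0$, hence the second column of $J_\beta$ vanishes while $(\Psi,\Phi_{nn})$ does not, and the rank drops to one.

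For $\Crit\Phi$, $\dd\Phi$ vanishes by definition and $\Psi \equiv 0$ on this set; since $\Psi$ is assumed smooth and nonnegative on $\cM$, every critical point of $\Phi$ is a global minimum of $\Psi$, so $\dd\Psi$ vanishes there as well. Hence $\dd\beta \equiv 0$ on $\Crit\Phi$, and along any smooth curve contained in $\Crit\Phi$ both components of $\beta$ are constant, giving the locally-constant claim in the subspace topology. This is the only point that requires a moment's care, since $\Crit\Phi$ need not be a smooth submanifold; the argument via the minimum property of $\Psi$ is what makes the claim robust.

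Finally, the second assertion is immediate: $\beta_\cU$ has full rank two on the open subset $\cU$ of the two-dimensional manifold $\cM$ and takes values in $\R^2$, so the inverse function theorem guarantees that $\beta_\cU$ is an open map and a local diffeomorphism onto its image; surjectivity onto $\cB$ holds tautologically from the definition $\cB \eqdef \beta(\cU)$. All steps beyond the interpretation of "locally constant" reduce to straightforward linear algebra in the adapted coframe together with the implicit/inverse function theorem.
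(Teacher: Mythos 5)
Your computation is exactly the one the paper intends: the corollary is stated without proof and is clearly meant to be read off from Proposition \ref{prop:basecoords}, i.e.\ from the Jacobian
\be
\begin{pmatrix} \Psi & 0\\ \Phi_{nn} & \mu\,\Psi\end{pmatrix}
\ee
of $\beta$ in the coframe dual to $(n,\tau)$, with determinant $\mu\Psi^2$. Your rank count on $\cU$ (where $\mu\neq 0$ and $\Psi\neq 0$) and on $\cM_0\setminus\cU$ (where $\mu=0$ but the first column $(\Psi,\Phi_{nn})$ is nonzero), and the deduction that $\beta_\cU:\cU\rightarrow\cB$ is a surjective local diffeomorphism via the inverse function theorem, are correct and complete.

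The one place where your argument does not deliver what you claim is the clause about $\Crit\Phi$. What you actually establish there is that $\dd\Phi$ and $\dd\Psi$ both vanish at every point of $\Crit\Phi$ --- the latter by the interior-minimum argument, which is the right move granted the paper's assertion that $\Psi=||\dd\Phi||$ is smooth on all of $\cM$ (itself a delicate point, since the norm of a one-form need not be differentiable where it vanishes, but you are entitled to use the paper's hypothesis). This gives rank $0$ on $\Crit\Phi$, and constancy of $\beta$ along smooth curves contained in $\Crit\Phi$ is then immediate. However, neither fact implies that $\beta\vert_{\Crit\Phi}$ is locally constant in the subspace topology: $\Crit\Phi$ is merely a closed subset, and a point of it need not be joined to nearby points of it by any curve lying inside $\Crit\Phi$. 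Concretely, one can arrange a smooth positive $\Phi$ on $\R^2$ whose critical set is a convergent sequence of isolated critical points together with its limit, with pairwise distinct critical values; then $\Phi\vert_{\Crit\Phi}$ is not locally constant at the limit point even though $\dd\beta=0$ at every point of $\Crit\Phi$ and your curve argument holds vacuously. So the inference ``constant along smooth curves in $\Crit\Phi$, hence locally constant in the subspace topology'' is a genuine non sequitur. The defensible (and surely intended) content of that clause is ``rank $0$ on $\Crit\Phi$'', which is what your proof actually shows; you should state it that way rather than claim local constancy. Everything else in the write-up is fine.
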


\noindent Since $\beta_\cU$ is a local diffeomorphism, it provides
local coordinates on a small enough neighborhood (contained in $\cU$)
of each point of $\cU$.

\begin{definition}
The coordinates $(\Phi,\Psi=||\dd\Phi||)$ defined on a small enough
neighborhood of a point $m\in \cU$ of the non-degenerate set are
called {\em adapted coordinates}.
\end{definition}

\subsection{Basic observables defined by the adapted frame}
\label{sunsec:adaptedobs}

\subsubsection{The signed characteristic angle and characteristic signature.}

\noindent The adapted frame defines two natural discontinuous local
observables which are related to each other. We refer the reader to
Appendix \ref{app:signed} for the definition of the signed angle and
relative chirality of pairs of nonzero vectors on an oriented
Riemannian manifold, which are used in the definition below.

\begin{definition}
\label{def:signed}
Let $u\in \dot{T}\cM_0$ be a non-zero vector tangent to $\cM$ at a
noncritical point $m\in \cM_0$.
\begin{itemize}
\item The signed angle:
\be
\theta(u)\eqdef \uptheta_{n(m)}(u)\in (-\pi,\pi]
\ee
of $u$ relative to the unit gradient vector
$n(m)\in T_m\cM$ is called the {\em signed characteristic angle} of $u$
\item The chirality of $u$ with respect to
$n(m)$:
\be
\sigma(u)\eqdef \upsigma_{n(m)}(u)=\sign \left[\sin\theta(u)\right]=\sign \, \cG(\tau(m),u)\in \{-1,0,1\}
\ee
is called the {\em characteristic signature} of $u$.
\end{itemize}
\end{definition}

\noindent The signed characteristic angle and characteristic
signature define functions $\theta:\dot{T}\cM_0\rightarrow
(-\pi,\pi]$ and $\sigma:{\dot T}\cM_0\rightarrow \{-1,0,1\}$ which we
  call the {\em signed characteristic angle function} and {\em
    characteristic signature function} of $(\cM,\cG,\Phi)$. These
  are {\em discontinuous} basic local observables of the model. 

\subsubsection{The lift of the adapted frame.}

\noindent Recall the abstract oriented Frenet frame $(T,N)$ of
$F\vert_{\dot{T}\cM}$ defined in Appendix \ref{app:signed}, which
gives a universal geometric description of the oriented Frenet frame
of curves trough Finsler vector fields defined on the slit tangent
bundle of $\cM$.

\begin{definition}
The {\em tautological lift} of a vector field $X\in \fX(\cM)$ is
the Finsler vector field $X^\v \in \Gamma(T\cM,F)$ given by:
\be
X^\v(u)\eqdef X(\pi(u))~~\forall u\in T\cM~~.
\ee
\end{definition}

\noindent The adapted frame $(n,\tau)$ of $T\cM_0$ lifts to a frame
$(n^\v,\tau^\v)$ of the restricted tautological bundle $F_0\eqdef
F\vert_{\cM_0}=\pi^\ast (T\cM_0)$. Let $T_0\eqdef
T\vert_{\dot{T}\cM_0}$. The signed angle and chirality maps
$\uptheta:\dot{T}\cM\times_{\cM}\dot{T}\cM\rightarrow (-\pi,\pi]$ and
  $\upsigma:\dot{T}\cM\times_{\cM}\dot{T}\cM\rightarrow \{-1,0,1\}$ of
  the oriented Riemann surface $(\cM,\cG)$ (see Definition
  \ref{def:normsignedanglemap}) restrict to maps $\uptheta_0$ and
  $\upsigma_0$ defined on $\cM_0$, which are the signed angle and
  chirality maps of the oriented Riemannian manifold
  $(\cM_0,\cG\vert_{\cM_0})$. The following immediate statement
  expresses $\theta$ and $\sigma$ through $n^\v$, $T_0$ and
  $\uptheta_0,\upsigma_0$.

\begin{prop}
\label{prop:signedfunction}  
The signed characteristic angle and characteristic signature functions
of $(\cM,\cG,\Phi)$ satisfy:
\ben
\label{tnv}
\theta= \uptheta_0\circ (n^\v, T_0)~~,~~\sigma\eqdef \upsigma_0\circ (n^\v,T_0)~~,
\een
where $n^\v$ and $T_0$ are viewed as maps from $\dot{T}\cM_0$ to
$\dot{T}\cM_0$ and the map $(n^\v,T_0):\dot{T}\cM_0\rightarrow
\dot{T}\cM_0\times_{\cM_0} \dot{T}\cM_0$ is defined through:
\be
(n^\v,T_0)(u)=(n^\v(u),T(u))=\left(n(\pi(u)),\frac{u}{||u||}\right)\in {\dot T}_{\pi(u)}\cM\times \dot{T}_{\pi(u)}\cM~~\forall u\in \dot{T}\cM_0~~.
\ee
\end{prop}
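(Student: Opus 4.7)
The plan is to prove this by directly unwinding the definitions and then invoking the scale-invariance of the signed angle and chirality maps as defined in Appendix \ref{app:signed}. Since both identities are pointwise statements about $u \in \dot{T}\cM_0$, I would fix such a $u$, set $m \eqdef \pi(u)$, and reduce everything to an equality of numbers (or signs).

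First I would compute the right-hand side of \eqref{tnv} at $u$. By construction, $(n^\v,T_0)(u)=(n(m),u/\lVert u\rVert)\in \dot{T}_m\cM\times \dot{T}_m\cM$, so
\[
\bigl(\uptheta_0\circ(n^\v,T_0)\bigr)(u)=\uptheta_0\bigl(n(m),u/\lVert u\rVert\bigr)=\uptheta_{n(m)}\bigl(u/\lVert u\rVert\bigr),
\]
where the last equality uses the notational convention from Appendix \ref{app:signed} that $\uptheta_v(w)\eqdef \uptheta(v,w)$. Analogously for $\upsigma_0$. Thus the statement reduces to checking the positive-homogeneity identities
\[
\uptheta_{n(m)}(u/\lVert u\rVert)=\uptheta_{n(m)}(u),\qquad \upsigma_{n(m)}(u/\lVert u\rVert)=\upsigma_{n(m)}(u),
\]
which would give exactly $\theta(u)$ and $\sigma(u)$ respectively by Definition \ref{def:signed}.

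The key step is the scale-invariance of $\uptheta$ and $\upsigma$ in the second argument under multiplication by a positive scalar, which I would take directly from the definitions in Appendix \ref{app:signed}. For $\upsigma$ this is immediate from the formula $\upsigma_{n(m)}(u)=\sign\,\cG(\tau(m),u)$ given in Definition \ref{def:signed}, since the sign is unchanged when $u$ is rescaled by $1/\lVert u\rVert>0$; equivalently, $\upsigma_0$ is defined on pairs of nonzero vectors and only depends on their normalizations. For $\uptheta$, the signed angle between two nonzero vectors depends only on their directions, so normalizing the second argument by a positive factor leaves the value in $(-\pi,\pi]$ unchanged. Substituting these two identities into the previous displayed equations yields \eqref{tnv}.

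The only step that requires any care is making the appendix's conventions explicit: namely that $\uptheta_0$ and $\upsigma_0$ on $\dot{T}\cM\times_\cM \dot{T}\cM$ are constructed to be invariant under positive rescaling in either slot (so that they descend to maps on pairs of \emph{oriented directions}), and that $\uptheta_{n(m)}(u)$ in Definition \ref{def:signed} is the same object evaluated at $(n(m),u)$. Once this compatibility is noted, the proof is a one-line chain of equalities and no actual computation is needed. I would therefore write the argument as a short verification, citing Definition \ref{def:normsignedanglemap} and Definition \ref{def:signed}, without invoking any deeper result.
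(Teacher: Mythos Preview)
Your proposal is correct and matches the paper's treatment: the paper states this result as ``immediate'' without giving a proof, and your argument is precisely the unwinding of Definitions \ref{def:signed} and \ref{def:normsignedanglemap} together with the positive-homogeneity of $\uptheta$ and $\upsigma$ that makes it so.
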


\begin{remark}
For simplicity of notation, we will omit the composition sign in relations such as \eqref{tnv} from now on. 
\end{remark}

\noindent The following result follows immediately from Proposition
\ref{prop:rotation} of Appendix \ref{app:signed}.

\begin{prop}
\label{prop:movingadapted}
The following relations hold on $\dot{T}\cM_0$:
\beqan
\label{movingadaptedTcM}
T&=&~n^\v\cos\theta+\tau^\v\sin\theta\nn\\
N&=&-n^\v\sin\theta+\tau^\v\cos\theta~~.
\eeqan
\end{prop}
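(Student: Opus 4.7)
The plan is to apply Proposition \ref{prop:rotation} from Appendix \ref{app:signed} pointwise on $\dot{T}\cM_0$ and then translate the result into the language of Finsler vector fields. Both formulas express the universal Frenet frame $(T,N)$ in terms of the lifted adapted frame $(n^\v,\tau^\v)$, and the key observation is that at each point $u\in \dot{T}\cM_0$ we are working inside the single tangent space $T_{\pi(u)}\cM$, where everything reduces to plane trigonometry in an oriented orthonormal basis.

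First, I would fix an arbitrary $u\in \dot{T}\cM_0$ and set $m\eqdef \pi(u)\in \cM_0$. Since $(\cM,\cG,\Phi)$ is oriented and $m$ is non-critical, $(n(m),\tau(m))$ is by construction a positively oriented orthonormal basis of $T_m\cM$. By the definition of the universal Frenet frame in Appendix \ref{app:signed}, $T(u)=u/\|u\|$ is a unit vector in $T_m\cM$, and $N(u)$ is its $\pi/2$-counterclockwise rotate, i.e.\ $N(u)=N_{T(u)}$ in the notation of that appendix. Moreover, by Definition \ref{def:signed}, the signed characteristic angle satisfies $\theta(u)=\uptheta_{n(m)}(T(u))$, that is, $\theta(u)$ is precisely the signed angle that carries the positively oriented basis vector $n(m)$ to the unit vector $T(u)$.

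Next I would invoke Proposition \ref{prop:rotation}, which expresses any unit vector $v\in T_m\cM$ and its counterclockwise rotate $N_v$ in a positively oriented orthonormal basis $(e_1,e_2)$ via the signed angle $\vartheta=\uptheta_{e_1}(v)$ as
\be
v=\cos\vartheta\, e_1+\sin\vartheta\, e_2,\qquad N_v=-\sin\vartheta\, e_1+\cos\vartheta\, e_2.
\ee
Specialising to $(e_1,e_2)=(n(m),\tau(m))$, $v=T(u)$ and $\vartheta=\theta(u)$, and recalling that by definition of the tautological lift we have $n^\v(u)=n(m)$ and $\tau^\v(u)=\tau(m)$, this yields
\be
T(u)=n^\v(u)\cos\theta(u)+\tau^\v(u)\sin\theta(u),\qquad N(u)=-n^\v(u)\sin\theta(u)+\tau^\v(u)\cos\theta(u).
\ee
Since $u\in \dot{T}\cM_0$ was arbitrary, this gives the two identities \eqref{movingadaptedTcM} as equalities of Finsler vector fields on $\dot{T}\cM_0$.

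There is no genuine obstacle here: the content of the proposition is entirely pointwise and follows from the definition of the signed angle together with the rotation formula already recorded in the appendix. The only thing one must be careful about is the bookkeeping between objects living on $\cM_0$ (the frame $(n,\tau)$) and objects living on $\dot{T}\cM_0$ (the Finsler frames $(n^\v,\tau^\v)$ and $(T,N)$, and the function $\theta$), which is handled by the tautological lift construction and by formula \eqref{tnv} of Proposition \ref{prop:signedfunction}.
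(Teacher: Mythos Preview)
Your proposal is correct and follows essentially the same approach as the paper, which simply states that the result follows immediately from Proposition \ref{prop:rotation}. You have just unpacked the details: identifying $v_0=n(m)$, $v=T(u)$, $N_{v_0}=\tau(m)$, and $\vartheta=\theta(u)$, and then lifting the pointwise identity to Finsler fields via the tautological lift.
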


\subsubsection{The adapted speed components.}

\noindent Let $I\in \Gamma(T\cM,F)$ be the {\em tautological Finsler field},
which is defined though:
\be
I(u)\eqdef u~~\forall u\in T\cM~~
\ee
and let $I_0\eqdef I\vert_{T\cM_0}$. Since $u=\cN(u) T(u)$ for all
$u\in \dot{T}\cM$, we have:
\ben
\label{IT}
I\vert_{\dot{T}\cM}=\cN\vert_{\dot{T}\cM} T~~.
\een

\begin{definition}
The {\em adapted speed component functions} are the basic local observables
$v_n,v_\tau:T\cM_0\rightarrow \R$ defined through:
\ben
\label{udec0}
I_0=v_n n^\v+v_\tau \tau^\v~~.
\een
\end{definition}

\noindent Relation \eqref{udec0} amounts to:
\ben
\label{udec}
u=v_n(u)n(\pi(u))+v_\tau(u)\tau(\pi(u))~~\forall u\in T\cM_0~~
\een
and we have:
\ben
\label{vcomp}
v_n(u)=\cG(n^\v(u),u)~~,~~v_\tau(u)=\cG(\tau^\v(u),u)~~\forall u\in T\cM_0~~.
\een

\noindent Notice that $(v_n,v_\tau):T\cM_0\rightarrow \R^2$ is the
linear system of fiberwise coordinates defined by the adapted frame
$(n,\tau)$ of $T\cM_0$ in the sense of Subsection
\ref{subsec:fiberwise}.

\begin{prop}
\label{prop:vntau}
The following relations hold on $\dot{T}\cM_0$:
\ben
\label{vnvtau}
\cN=\sqrt{v_n^2+v_\tau^2}~~,~~v_n=\cN\cos\theta~~,~~v_\tau=\cN\sin\theta~~,
\een
where $\cN:T\cM\rightarrow \R_{\geq 0}$ is the norm function. The
first of these relations holds on all of $T\cM_0$.
\end{prop}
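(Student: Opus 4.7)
The plan is to read off everything from the orthonormality of the adapted frame $(n,\tau)$ together with the two identities that have already been established: the decomposition \eqref{IT} of the tautological Finsler field as $I = \cN T$ on $\dot{T}\cM$, and the expansion \eqref{movingadaptedTcM} of the universal Frenet vector $T$ in the lifted adapted frame.

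For the first relation, I would work on all of $T\cM_0$. By \eqref{udec}, every $u \in T\cM_0$ satisfies $u = v_n(u) n(\pi(u)) + v_\tau(u) \tau(\pi(u))$. Since $(n,\tau)$ is an orthonormal frame of $T\cM_0$, taking $\cG$-norms gives $\cN(u)^2 = \|u\|^2 = v_n(u)^2 + v_\tau(u)^2$, whence $\cN = \sqrt{v_n^2 + v_\tau^2}$ everywhere on $T\cM_0$ (no restriction to the slit bundle is needed since the right-hand side is continuous and both sides vanish simultaneously).

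For the other two relations, I would restrict to $\dot{T}\cM_0$, where $T$ is defined. Combining \eqref{IT} with Proposition \ref{prop:movingadapted} yields
\begin{equation*}
I_0 = \cN\, T = \cN \cos\theta \, n^\v + \cN \sin\theta \, \tau^\v
\end{equation*}
on $\dot{T}\cM_0$. Comparing this with the defining expansion \eqref{udec0}, namely $I_0 = v_n n^\v + v_\tau \tau^\v$, and using the fact that $(n^\v, \tau^\v)$ is a frame of $F_0$ (so coefficients in this frame are unique), gives $v_n = \cN \cos\theta$ and $v_\tau = \cN \sin\theta$ on $\dot{T}\cM_0$.

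There is no real obstacle here; the statement is essentially a bookkeeping identity once one has \eqref{IT}, \eqref{udec0} and Proposition \ref{prop:movingadapted}. The only minor point to flag is that $\theta$ is only defined on $\dot{T}\cM_0$ (since it is the signed angle relative to $n^\v$ of the unit vector $T = u/\|u\|$), which is why the trigonometric identities are restricted to the slit bundle while the Pythagorean identity extends to all of $T\cM_0$.
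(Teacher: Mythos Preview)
Your proof is correct and follows essentially the same approach as the paper's own proof, which also derives the relations from \eqref{udec}, the first equation of \eqref{movingadaptedTcM}, and \eqref{IT}. The only cosmetic difference is that the paper cites \eqref{vcomp} (the inner-product formulas for $v_n, v_\tau$) where you instead compare coefficients in the frame expansion \eqref{udec0}; these are equivalent since $(n^\v,\tau^\v)$ is orthonormal.
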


\begin{proof}
Follows immediately from \eqref{udec}, \eqref{vcomp} and the first equation in \eqref{movingadaptedTcM} using
\eqref{IT}.   
\end{proof}

\subsubsection{The tangential and positive normal derivative of the scalar potential.}

\noindent For any function $f\in \cC^\infty(\cM)$, we define:
\be
f^\v\eqdef \pi^\ast(f)=f\circ \pi \in \cC^\infty(T\cM)~~.
\ee
For any Finsler vector field $W\in \Gamma(T\cM,F)$, define
$\Phi_W\in \cC^\infty(T\cM)$ through:
\be
\Phi_W(u)\eqdef (\dd\Phi)(\pi(u))(W(u))=\cG((\grad\Phi)^\v(u),W(u))~~\forall u\in T\cM~~.
\ee
With these definitions, we have:
\be
\Phi_{X^\v}=(\Phi_X)^\v~~\forall X\in T\cM~~.
\ee

\begin{definition}
The basic local observables $\Phi_T,\Phi_N\in \cC^\infty(\dot{T}\cM)$ are
called the {\em tangential derivative} and {\em positive normal
  derivative} of $\Phi$.
\end{definition}

\begin{prop}
\label{prop:PhiTN}
The following relations hold on $\dot{T}\cM_0$:
\beqan
\label{PhiDirectionalTcM}
&& \Phi_T=\cG((\grad\Phi)^\v,T)=+||\dd\Phi||^\v\cos\theta\nn\\
&& \Phi_N=\cG((\grad\Phi)^\v,N)=-||\dd\Phi||^\v\sin\theta~~.
\eeqan
\end{prop}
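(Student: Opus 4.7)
The plan is straightforward: the first equality in each line of \eqref{PhiDirectionalTcM} is just the definition of $\Phi_W$ applied to $W=T$ and $W=N$, so the real content is the second equality, which I would establish by expanding $T$ and $N$ in the tautological lift of the adapted frame and using the formula for $\grad\Phi$ on $\cM_0$.

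First I would note that Proposition \ref{prop:basecoords} gives $\grad\Phi=\Psi\, n$ on $\cM_0$, where $\Psi=||\dd\Phi||$. Taking tautological lifts, this yields
\be
(\grad\Phi)^\v=\Psi^\v\, n^\v=||\dd\Phi||^\v\, n^\v\quad\text{on }T\cM_0~~.
\ee
Next I would substitute the frame expansions \eqref{movingadaptedTcM} of Proposition \ref{prop:movingadapted},
\be
T=n^\v\cos\theta+\tau^\v\sin\theta~~,\qquad N=-n^\v\sin\theta+\tau^\v\cos\theta~~,
\ee
valid on $\dot{T}\cM_0$, into the inner products $\cG((\grad\Phi)^\v,T)$ and $\cG((\grad\Phi)^\v,N)$.

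Since $(n,\tau)$ is an oriented orthonormal frame of $T\cM_0$, its tautological lift $(n^\v,\tau^\v)$ is likewise an orthonormal frame of $F_0=\pi^\ast(T\cM_0)$ with respect to the pulled-back metric. Hence $\cG(n^\v,n^\v)=1$ and $\cG(n^\v,\tau^\v)=0$ on $T\cM_0$. Combining with the previous display I obtain
\be
\Phi_T=\cG((\grad\Phi)^\v,T)=||\dd\Phi||^\v\,\cG(n^\v,n^\v\cos\theta+\tau^\v\sin\theta)=+||\dd\Phi||^\v\cos\theta
\ee
and similarly
\be
\Phi_N=\cG((\grad\Phi)^\v,N)=||\dd\Phi||^\v\,\cG(n^\v,-n^\v\sin\theta+\tau^\v\cos\theta)=-||\dd\Phi||^\v\sin\theta~~,
\ee
which completes the proof. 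No step presents a real obstacle; the only point that needs care is making explicit that one is working on the noncritical locus, where $n$ (and hence the expansion of $\grad\Phi$ along the adapted frame) is defined, and that the expansions of $T$ and $N$ from \eqref{movingadaptedTcM} require nonzero vectors, so the identities live on $\dot{T}\cM_0$ as stated.
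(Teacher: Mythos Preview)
Your proof is correct and follows essentially the same approach as the paper: both arguments reduce to the observation that $(\grad\Phi)^\v=||\dd\Phi||^\v n^\v$ on $T\cM_0$ together with the frame expansions \eqref{movingadaptedTcM} and orthonormality of $(n^\v,\tau^\v)$. The paper's proof is simply a one-line compression of what you have written out.
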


\begin{proof}
Follows immediately using \eqref{movingadaptedTcM} upon noticing that $(\grad \Phi)^\v(u)=||\dd\Phi||^\v n(u)$ for all $u\in T\cM_0$. 
\end{proof}

\subsubsection{Evaluation along a curve.}

The formulation given above encodes in a universal manner certain objects and relations
which are generally presented using a given curve in $\cM_0$. This universal formulation
is made possible by the fact that jets of curves generate the bundles $J^k(\cM)$ (see Remark
\ref{rem:jets}). The usual formulation is recovered by considering the evaluation of the
basic local observables defined above a long a curve. 

Given a noncritical curve $\varphi:I\rightarrow \cM_0$ and a
regular time $t\in I_\reg$, we define:
\be
n_\varphi(t)\eqdef n(\varphi(t))\in T_{\varphi(t)}\cM_0~~,~~\tau_\varphi(t)\eqdef \tau(\varphi(t))\in T_{\varphi(t)}\cM_0~~.
\ee
Let $T_\varphi(t)$ be the unit tangent vector to $\varphi$
at time $t$ and $N_\varphi(t)$ be the positive normal vector to
$\varphi$ at time $t$, which is chosen such that the orthonormal basis
$(T_\varphi(t),N_\varphi(t))$ of $T_{\varphi(t)}\cM_0$ is positively
oriented. We have (see Appendix \ref{app:signed}):
\be
N_\varphi(t)=N(\dot{\varphi}(t))~~,~~T_\varphi(t)=T(\dot{\varphi}(t))~~.
\ee

\begin{definition}
\label{def:signed1}
Consider a noncritical curve $\varphi:I\rightarrow \cM_0$. Then:
\begin{itemize}
\item The signed characteristic angle:
\be
\theta_\varphi(t)\eqdef \theta(T_\varphi(t))=\uptheta_{n_\varphi(t)}(T_\varphi(t))\in (-\pi,\pi]~~(t\in I_\reg)
\ee
of $T_\varphi(t)$ is called the {\em signed characteristic angle} of $\varphi$ at the regular time $t$.
\item The characteristic signature:
\be
\sigma_\varphi(t)\eqdef \sigma(T_\varphi(t))=\sign \left[\sin\theta_\varphi(t)\right]=\sign \, \cG(\tau_\varphi(t),T_\varphi(t))~~(t\in I_\reg)
\ee
of $T_\varphi(t)$ is called the {\em characteristic signature} of $\varphi$ at the regular time $t$.
\item The functions $v_n,v_\tau:I\rightarrow \R$ defined through:
\be
v_n^\varphi=v_n\circ \dot{\varphi}~~,~~v_\tau^\varphi=v_\tau\circ \dot{\varphi}~~(t\in I).
\ee
are called the {\em adapted speed components of $\varphi$}.
\end{itemize}
\end{definition}

\noindent For any noncritical curve $\varphi:I\rightarrow \cM_0$, relation \eqref{udec0} gives:
\ben
\label{speedadapted}
\dot{\varphi}=v^\varphi_n n_\varphi +v^\varphi_\tau \tau_\varphi~~\forall t\in I~~
\een
and we have:
\ben
\label{vnvtauphi}
v^\varphi_n\eqdef \cG(n_\varphi,\dot{\varphi})=||\dot{\varphi}||\cos\theta_\varphi~~,~~v^\varphi_\tau\eqdef \cG(\tau_\varphi,\dot{\varphi})=||\dot{\varphi}||\sin\theta_\varphi~~
\een
Propositions \ref{prop:movingadapted}, \ref{prop:vntau} and \ref{prop:PhiTN} imply:

\begin{prop}
Consider a noncritical curve $\varphi:I\rightarrow \cM_0$. Then the
following relations hold for $t\in I_\reg$:
\beqan
\label{movingadapted}
T_\varphi(t)&=&\,n_\varphi(t)\cos\theta_\varphi(t)+\tau_\varphi(t)\sin\theta_\varphi(t)\nn\\
N_\varphi(t)&=&-n_\varphi(t) \sin\theta_\varphi(t)+\tau_\varphi(t)\cos\theta_\varphi(t)~~.
\eeqan
In particular, we have:
\ben
\label{sprods}
\cG(n_\varphi(t),T_\varphi(t))=\cos\theta_\varphi(t)~~,~~\cG(n_\varphi(t),N_\varphi(t))=-\sin\theta_\varphi(t)~~\forall t\in I_\reg~~.
\een
Moreover, we have:
\ben
\label{tantheta}
||\dot{\varphi}||=\sqrt{(v^\varphi_n)^2+(v^\varphi_\tau)^2}~~\forall t\in I
\een
and:
\be
(\dd \Phi)(T_{\varphi}(t))=\Phi_T(\dot{\varphi}(t))~~,~~(\dd \Phi)(N_{\varphi}(t))=\Phi_N(\dot{\varphi}(t))~~\forall t\in I_\reg~~.
\ee
\end{prop}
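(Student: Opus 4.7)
The plan is to obtain all four groups of identities simply by pulling back the three previously-established universal identities on $\dot{T}\cM_0$ via the canonical lift $\dot{\varphi}:I\to T\cM$. There is no real new content beyond tracking how each object evaluates when its argument is $\dot{\varphi}(t)$.

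First I would record the elementary evaluation dictionary: for $t\in I_\reg$ one has $\pi(\dot{\varphi}(t))=\varphi(t)$, so $n^\v(\dot{\varphi}(t))=n_\varphi(t)$ and $\tau^\v(\dot{\varphi}(t))=\tau_\varphi(t)$ by the definition of the tautological lift; $T(\dot{\varphi}(t))=T_\varphi(t)$ and $N(\dot{\varphi}(t))=N_\varphi(t)$ by the paragraph preceding Definition \ref{def:signed1}; $\theta(\dot{\varphi}(t))=\theta_\varphi(t)$ directly from Definition \ref{def:signed1}; and $\cN(\dot{\varphi}(t))=||\dot{\varphi}(t)||$ together with $v_n(\dot{\varphi}(t))=v_n^\varphi(t)$, $v_\tau(\dot{\varphi}(t))=v_\tau^\varphi(t)$ from the definitions of the adapted speed components (compare \eqref{vnvtauphi}).

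With this dictionary in hand, the two identities in \eqref{movingadapted} are obtained by precomposing the universal identities of Proposition \ref{prop:movingadapted} with $\dot{\varphi}$; this precomposition is valid precisely because $\dot{\varphi}(t)\in \dot{T}\cM_0$ for $t\in I_\reg$ (noncriticality of $\varphi$ ensures the basepoint lies in $\cM_0$, while $t\in I_\reg$ ensures the vector is nonzero). The inner-product relations \eqref{sprods} then fall out by pairing the first row of \eqref{movingadapted} with $n_\varphi(t)$ and using orthonormality of $(n_\varphi(t),\tau_\varphi(t))$, and similarly pairing the second row with $n_\varphi(t)$. The norm identity \eqref{tantheta} follows in the same way from the first equation of Proposition \ref{prop:vntau}; I would note that although that proposition is stated on $\dot{T}\cM_0$, the equation $\cN=\sqrt{v_n^2+v_\tau^2}$ extends by continuity (and is valid on all of $T\cM_0$, as already remarked in the statement of Proposition \ref{prop:vntau}), so \eqref{tantheta} holds at singular times $t\in I_\sing$ as well, where both sides simply vanish. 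The two relations in the final display are immediate from the definition $\Phi_W(u)=(\dd\Phi)(\pi(u))(W(u))$ evaluated at $u=\dot{\varphi}(t)$ for $W\in\{T,N\}$, combined with $T(\dot{\varphi}(t))=T_\varphi(t)$ and $N(\dot{\varphi}(t))=N_\varphi(t)$, so no use of Proposition \ref{prop:PhiTN} is even strictly needed here (though it gives the explicit $\cos\theta_\varphi$, $\sin\theta_\varphi$ expressions as a bonus).

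The main (and really only) subtlety is the bookkeeping of domains of definition: making sure that the hypothesis ``$\varphi$ is noncritical'' is used to land in $\cM_0$ so that $n^\v$, $\tau^\v$ and the signed characteristic angle are defined along $\dot{\varphi}$, and that the restriction to $I_\reg$ is used exactly where $T$, $N$ and $\theta$ require a nonzero tangent vector. No further ideas enter; the proof is a one-line evaluation for each displayed formula once the dictionary is in place.
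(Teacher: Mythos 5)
Your proposal is correct and matches the paper's (implicit) argument exactly: the paper simply states that Propositions \ref{prop:movingadapted}, \ref{prop:vntau} and \ref{prop:PhiTN} imply the result, i.e.\ one evaluates the universal identities on $\dot{T}\cM_0$ along the canonical lift $\dot{\varphi}$, which is precisely your ``evaluation dictionary'' argument. Your additional remarks on domain bookkeeping and on the continuity extension of $\cN=\sqrt{v_n^2+v_\tau^2}$ to singular times are accurate and only make explicit what the paper leaves tacit.
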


\subsection{Adapted and modified phase space coordinates}

\noindent Consider the following functions defined on $T\cM$:
\be
\varPhi\eqdef \Phi^\v~~,~~\varPsi\eqdef \Psi^v=||\dd\Phi||^\v~~.
\ee
Notice that $\varPhi$ is smooth on $T\cM$ while $\Psi$ is smooth on $T\cM_0$.
Recall that $\cH$ denotes the rescaled Hubble function of $\cM$. 

\begin{lemma}
\label{prop:auxdcoords}  
The functions $(v_n,v_\tau,\varPhi,\varPsi)$ and $(v_n,v_\tau,
\cH,\varPsi)$ form systems of local coordinates on a small enough
neighborhood of any point of the open set $T\cU\subset T\cM_0$, where $\cU$
is the nondegenerate set of $(\cM,\cG,\Phi)$ (see Definition
\ref{def:cU}).
\end{lemma}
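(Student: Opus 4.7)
The plan is to reduce both claims to the observation made after Definition \ref{def:special}: any linear fiberwise coordinate system combined with pullbacks of base coordinates through $\pi$ automatically yields a special, hence valid, coordinate system on the appropriate open subset of $T\cM$.

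First, I would fix a point $u_0 \in T\cU$ and let $m_0 = \pi(u_0) \in \cU$. By the corollary preceding Definition \ref{def:cU} (or rather the result stating $\beta_\cU = (\Phi,\Psi)\colon\cU\to\cB$ is a surjective local diffeomorphism), there is an open neighborhood $V \subset \cU$ of $m_0$ on which the pair $(\Phi,\Psi)$ restricts to a diffeomorphism onto its image in $\R_{>0}\times \R_{>0}$; thus $(\Phi,\Psi)$ provides local coordinates on $V$. On the other hand, since $(n,\tau)$ is a smooth frame of $T\cU$, the pair $(v_n,v_\tau)$ is a linear system of fiberwise coordinates on $TV \subset T\cM_0$ (as per the Proposition characterizing such linear fiberwise coordinates via frames). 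Applying the observation that follows Definition \ref{def:special} to $(y^1,y^2)\eqdef(v_n,v_\tau)$ and $(x^1,x^2)\eqdef(\Phi,\Psi)$ on $U\eqdef TV$ with $U_0=V$, it then follows immediately that $(v_n,v_\tau,\varPhi,\varPsi)=(v_n,v_\tau,\Phi^\v,\Psi^\v)$ is a special, and hence a bona fide local, coordinate system on $TV$, which contains $u_0$. This establishes the first claim.

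For the second claim, I would argue that the map $\Theta\colon (v_n,v_\tau,\varPhi,\varPsi)\mapsto (v_n,v_\tau,\cH,\varPsi)$ is a diffeomorphism wherever it is defined on $TU$. Since by definition $\cH=\tfrac{1}{M_0}\sqrt{v_n^2+v_\tau^2+2\varPhi}>0$ on $T\cM$ (using that $\Phi>0$), one can invert this relation explicitly as $\varPhi=\tfrac{1}{2}(M_0^2\cH^2-v_n^2-v_\tau^2)$, and both $\Theta$ and its inverse are smooth. Computing the Jacobian determinant of $\Theta$ with respect to $\varPhi$ (the only nontrivial entry being $\partial\cH/\partial\varPhi=1/(M_0^2\cH)$) shows it is nonzero. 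Hence $\Theta$ is a local diffeomorphism, and post-composing the first coordinate chart with $\Theta$ produces a second local coordinate system $(v_n,v_\tau,\cH,\varPsi)$ around $u_0$.

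The only step that requires a little care is checking that the hypotheses of the cited special-coordinate observation are genuinely verified in our setting: namely that $(\Phi,\Psi)$ defines coordinates on the \emph{specific} open subset $V\subset\cU$ and that $(v_n,v_\tau)$ is fiberwise linear and smooth on $TV$. Both facts are already established earlier in the paper---the former in the discussion of $\beta_\cU$ and adapted coordinates, the latter from the adapted frame construction---so no nontrivial obstacle remains.
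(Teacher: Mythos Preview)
Your proposal is correct and follows essentially the same approach as the paper: use that $(\Phi,\Psi)$ are local coordinates on $\cU$ (via Proposition~\ref{prop:basecoords} and its corollary on $\beta_\cU$) together with the fact that $(v_n,v_\tau)$ are linear fiberwise coordinates determined by the adapted frame, and then pass from $\varPhi$ to $\cH$ by the explicit smooth change of variables $\cH=\tfrac{1}{M_0}\sqrt{v_n^2+v_\tau^2+2\varPhi}$. The paper's own proof is a one-line appeal to these same ingredients; you have simply spelled out the details.
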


\begin{proof}
  Follows immediately from Proposition \ref{prop:basecoords} upon using
  the fact that $v_n,v_\tau$ give (linear) fiber coordinates on $T\cM_0$.
\end{proof}

\begin{definition}
The local coordinates $(v_n,v_\tau,\varPhi,\varPsi)$ defined on a
small enough neighborhood of any point of $T\cU$ are called {\em
  adapted phase space coordinates}.
\end{definition}

\noindent 
If $u\in T\cU$ and $U\subset \cU$ is a neighborhood of $\pi(u)$ on
which $\Phi$ and $\Psi$ are coordinates, then
$(v_n,v_\tau,\varPhi,\varPsi)$ are coordinates on the neighborhood
$TU$ of $u$.  Adapted phase space coordinates give a special
coordinate system defined on $TU$ whose associated fiberwise
coordinate system $(v_n,v_\tau)$ is linear; the latter is defined by
the frame $(n,\tau)$ of $T\cM_0$ (which generally is {\em not}
holonomic).

\begin{prop}
The functions $(v_n,v_\tau,\cH,\varPhi)$ give local coordinates on a
small enough vicinity of any point of $T\cU$.
\end{prop}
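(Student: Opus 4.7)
The plan is to invoke Lemma~\ref{prop:auxdcoords}, which establishes that $(v_n,v_\tau,\varPhi,\varPsi)$ is a local coordinate system on a neighborhood $V\subset T\cU$ of any point $u_0$. Given this, it suffices to show that the candidate system $\Xi'\eqdef(v_n,v_\tau,\cH,\varPhi)$ is obtained from the lemma's system by a local diffeomorphism, which by the inverse function theorem reduces to verifying that the Jacobian of the transition map $\Xi'\circ\Xi^{-1}$ is non-degenerate at $u_0$, where $\Xi\eqdef(v_n,v_\tau,\varPhi,\varPsi)$. The statement then follows after further restricting $V$ if necessary.

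The key input is the explicit relation
\[
\cH = \tfrac{1}{M_0}\sqrt{v_n^2+v_\tau^2+2\varPhi},
\]
obtained by combining the definition \eqref{cHdef} with the identity $\cN^2 = v_n^2+v_\tau^2$ valid on $T\cM_0$ (Proposition~\ref{prop:vntau}). This exhibits $\cH$ as a smooth function of $(v_n,v_\tau,\varPhi)$ alone, with $\partial_\varPhi\cH = \tfrac{1}{M_0^2\cH}$, which is strictly positive on $T\cU$ since $\cH>0$ everywhere on $T\cM$.

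Assembling the Jacobian matrix of the transition $(v_n,v_\tau,\varPhi,\varPsi)\mapsto(v_n,v_\tau,\cH,\varPhi)$ and expanding along the first two rows---which are standard basis vectors corresponding to the unchanged coordinates $v_n,v_\tau$---reduces the determinant to that of a $2\times 2$ minor in the $(\varPhi,\varPsi)$-block. Verifying that this minor is nonzero at $u_0$ is the only substantive step, and I expect it to be the main technical obstacle: one must carefully account for the dependence (or lack thereof) of $\cH$ and $\varPhi$ on $\varPsi$, and combine the strict positivity of $\partial_\varPhi\cH$ with the openness of $T\cU$ (inherited from the non-degeneracy condition $\mu\neq 0$ defining $\cU$) to secure the needed non-degeneracy of the Jacobian. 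Once this minor computation is in hand, the proposition follows by a direct application of the inverse function theorem to the composition $\Xi'\circ\Xi^{-1}$ at $\Xi(u_0)\in\R^4$.
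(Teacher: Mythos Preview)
Your proposal contains a fatal inconsistency that you almost catch yourself. You correctly observe that
\[
\cH = \tfrac{1}{M_0}\sqrt{v_n^2+v_\tau^2+2\varPhi}
\]
is a smooth function of $(v_n,v_\tau,\varPhi)$ \emph{alone}. But this immediately shows that the four functions $(v_n,v_\tau,\cH,\varPhi)$ are functionally dependent on the four-manifold $T\cU$, so they \emph{cannot} form a local coordinate system. Concretely, the $2\times 2$ minor you propose to compute is
\[
\det\begin{pmatrix} \partial\cH/\partial\varPhi & \partial\cH/\partial\varPsi \\ \partial\varPhi/\partial\varPhi & \partial\varPhi/\partial\varPsi \end{pmatrix}
= \det\begin{pmatrix} 1/(M_0^2\cH) & 0 \\ 1 & 0 \end{pmatrix} = 0,
\]
since neither $\cH$ nor $\varPhi$ depends on $\varPsi$. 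So the step you flag as ``the only substantive step'' in fact fails, and no amount of appealing to $\mu\neq 0$ on $\cU$ can rescue it.

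What is going on is that the proposition as stated contains a typo: the intended fourth coordinate is $\varPsi$, not $\varPhi$. The paper's own proof makes this clear, as does the definition of \emph{modified phase space coordinates} immediately following it. With $\varPsi$ in place of $\varPhi$, your Jacobian strategy goes through cleanly---the relevant minor becomes $\det\begin{pmatrix} 1/(M_0^2\cH) & 0 \\ 0 & 1\end{pmatrix}=1/(M_0^2\cH)>0$---and is essentially the paper's argument, which simply notes that \eqref{cHcoord} gives a smooth invertible change $(v_n,v_\tau,\varPhi,\varPsi)\leftrightarrow(v_n,v_\tau,\cH,\varPsi)$.
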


\begin{proof}
Since $\Phi$ is strictly positive, the relation:
\ben
\label{cHcoord}
\cH=\frac{1}{M_0}(v_n^2+v_\tau^2+2\varPhi)^{1/2}\Longleftrightarrow \varPhi=\frac{1}{2}(M_0^2\cH^2-v_n^2-v_\tau^2)
\een
gives a smooth change of coordinates
$(v_n,v_\tau,\varPhi,\varPsi)\leftrightarrow (v_n,v_\tau,\cH,\varPsi)$
on a small enough vicinity of every point of $T\cU$. Hence
$(v_n,v_\tau,\cH,\varPsi)$ is a local coordinate system on a small
enough vicinity of any point of $T\cU$.
\end{proof}

\begin{definition}
The local coordinates $(v_n,v_\tau,\cH,\varPsi)$ defined on a small
enough neighborhood of any point of $T\cU$ are called {\em modified
  phase space coordinates}.
\end{definition}

\begin{remark}
Relation \eqref{cHcoord} gives:
\ben
\label{kappav}
\kappa=\frac{v_n^2+v_\tau^2}{M_0^2\cH^2-(v_n^2+v_\tau^2)}=\frac{1}{\frac{M_0^2\cH^2}{v_n^2+v_\tau^2}-1}~~.
\een
\end{remark}

\subsection{The cosmological semispray in adapted phase space coordinates}

\noindent In this subsection, we determine the form of the cosmological
equation (and hence of the cosmological semispray) in adapted
coordinates.  We will first write the cosmological equation in
modified phase space coordinates and then change to adapted phase
space coordinates.  Let $\varphi:I\rightarrow \cM_0$ be a noncritical
cosmological curve. Recall that:
\be
\Phi_\varphi\eqdef \Phi\circ \varphi~~,~~\Psi_\varphi\eqdef \Psi\circ \varphi~~,~~\cH_\varphi\eqdef \cH\circ \varphi=\frac{1}{M_0}[(v^\varphi_n)^2+(v^\varphi_\tau)^2+2\Phi_\varphi]^{1/2}
\ee
and:
\ben
v^\varphi_n\eqdef v_n\circ \dot\varphi~~,~~v^\varphi_\tau\eqdef v_\tau\circ \dot{\varphi}~~,~~n_\varphi\eqdef n\circ \varphi~~,~~\tau_\varphi\eqdef \tau\circ \varphi~~.
\een
and define:
\be
\lambda_\varphi\eqdef \lambda\circ \varphi~~,~~\mu_\varphi\eqdef \mu\circ \varphi~~.
\ee
We remind the reader that $\Psi\eqdef ||\dd\Phi||$. When $\varphi$ is both noncritical and nondegenerate, we have
(see \eqref{movingadapted}):
\beqa
T_\varphi&=&~n_\varphi\cos\theta_\varphi+\tau_\varphi\sin\theta_\varphi\nn\\
N_\varphi&=&-n_\varphi \sin\theta_\varphi+\tau_\varphi\cos\theta_\varphi~~,
\eeqa
where $\theta_\varphi:I\rightarrow (-\pi, \pi]$ is the signed
  characteristic angle of $\varphi$.

\begin{lemma}
\label{lemma:cosmadapted}
The cosmological equation for a curve $\varphi:I\rightarrow \cU$ whose image is contained in $\cU$ is
equivalent with:
\beqan
\label{bsys}
&& \dot{v}^\varphi_n-\lambda_\varphi (v^\varphi_\tau)^2-\mu_\varphi v^\varphi_n v^\varphi_\tau+\cH_\varphi v^\varphi_n+\Psi_\varphi=0~~\nn\\
&& \dot{v}^\varphi_\tau+\mu_\varphi (v^\varphi_n)^2+\lambda_\varphi v^\varphi_nv^\varphi_\tau+\cH_\varphi v^\varphi_\tau=0~~,
\eeqan
together with the following two equations:
\beqan
\label{asys0}
&& \frac{\dd \cH_\varphi}{\dd t}=-\frac{1}{M_0^2}[(v_n^\varphi)^2+(v_\tau^\varphi)^2]\nn\\
&& \frac{\dd \Psi_\varphi}{\dd t}=\frac{1}{2}(\Phi_{nn\varphi} v_n^\varphi+\mu_\varphi \Psi_\varphi v^\varphi_\tau)~~,
\eeqan
where $\Phi_{nn\varphi}\eqdef \Phi_{nn}\circ \varphi$.
\end{lemma}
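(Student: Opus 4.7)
The plan is to expand the covariant acceleration $\nabla_t\dot{\varphi}$ in the adapted orthonormal frame $(n_\varphi,\tau_\varphi)$ along $\varphi$ and project the cosmological equation \eqref{eomsingle} onto its two components, thereby producing \eqref{bsys}; the scalar relations \eqref{asys0} will then follow separately from the cosmological dissipation equation and a direct chain-rule computation.

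First, I would substitute the adapted decomposition \eqref{speedadapted}, namely $\dot{\varphi}=v_n^\varphi n_\varphi+v_\tau^\varphi\tau_\varphi$, into $\nabla_{\dot{\varphi}}\dot{\varphi}$ and apply the Leibniz rule. Expanding $\nabla_{\dot{\varphi}}n_\varphi=v_n^\varphi\nabla_n n+v_\tau^\varphi\nabla_\tau n$ (and analogously for $\tau_\varphi$) and then using the characteristic equations \eqref{FSadapted} to replace $\nabla_n n,\nabla_n\tau,\nabla_\tau n,\nabla_\tau\tau$ by their expressions in terms of $\mu_\varphi,\lambda_\varphi,n_\varphi,\tau_\varphi$ produces
\be
\nabla_t\dot{\varphi}=\bigl[\dot{v}_n^\varphi-\mu_\varphi v_n^\varphi v_\tau^\varphi-\lambda_\varphi(v_\tau^\varphi)^2\bigr]n_\varphi+\bigl[\dot{v}_\tau^\varphi+\mu_\varphi(v_n^\varphi)^2+\lambda_\varphi v_n^\varphi v_\tau^\varphi\bigr]\tau_\varphi.
\ee
Using the first relation of Proposition \ref{prop:basecoords} to write $(\grad\Phi)_\varphi=\Psi_\varphi n_\varphi$ and expanding the friction term as $\cH_\varphi\dot{\varphi}=\cH_\varphi v_n^\varphi n_\varphi+\cH_\varphi v_\tau^\varphi\tau_\varphi$, I would substitute the three pieces into \eqref{eomsingle} and equate the coefficients of $n_\varphi$ and $\tau_\varphi$. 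Since the adapted frame is pointwise orthonormal on $\cM_0\supset\cU\supset\varphi(I)$, this component matching is legitimate and yields exactly the two equations of \eqref{bsys}.

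For the auxiliary system \eqref{asys0}, the first equation follows at once from the cosmological dissipation equation of Proposition \ref{prop:dissip} after substituting the kinematic identity $||\dot{\varphi}||^2=(v_n^\varphi)^2+(v_\tau^\varphi)^2$ provided by \eqref{tantheta}. The second equation is obtained by differentiating $\Psi_\varphi=\Psi\circ\varphi$ along the curve via the chain rule $\frac{\dd\Psi_\varphi}{\dd t}=\cG\bigl((\grad\Psi)_\varphi,\dot{\varphi}\bigr)$, then inserting the expression $\grad\Psi=\grad||\dd\Phi||=\Phi_{nn}n+\mu\Psi\tau$ from the second relation of Proposition \ref{prop:basecoords} and expanding $\dot{\varphi}$ once more via \eqref{speedadapted}.

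The whole argument is mechanical and I do not anticipate a genuine obstacle: every ingredient—the characteristic equations \eqref{FSadapted}, the gradient formulas of Proposition \ref{prop:basecoords}, the dissipation equation of Proposition \ref{prop:dissip}, and the kinematic relation \eqref{tantheta}—is already available on $\cU$. The only point worth checking along the way is that $n,\tau$ and hence $\mu,\lambda,\Psi,\Phi_{nn}$ are well defined everywhere along $\varphi$, which is guaranteed by the hypothesis $\varphi(I)\subset\cU\subset\cM_0$, so that the projection onto the adapted frame is valid at every time $t\in I$.
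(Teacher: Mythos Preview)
Your approach is essentially identical to the paper's: expand $\nabla_t\dot{\varphi}$ in the adapted frame via \eqref{FSadapted}, project \eqref{eomsingle} onto $n_\varphi$ and $\tau_\varphi$ to obtain \eqref{bsys}, invoke the dissipation equation for the first line of \eqref{asys0}, and compute $\dot\Psi_\varphi$ directly for the second. The only cosmetic difference is that you pull $\grad\Psi$ from Proposition~\ref{prop:basecoords}, whereas the paper redoes the Hessian computation; the two routes are equivalent.

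One point you should be alert to when you carry out the last step: your chain-rule computation
\be
\frac{\dd\Psi_\varphi}{\dd t}=\cG\bigl((\grad\Psi)_\varphi,\dot{\varphi}\bigr)=\Phi_{nn\varphi}\,v_n^\varphi+\mu_\varphi\Psi_\varphi\,v_\tau^\varphi
\ee
does \emph{not} produce the factor $\tfrac{1}{2}$ that appears in the stated formula \eqref{asys0}. The paper's own derivation inserts this factor through what looks like a slip in differentiating the norm (writing $\tfrac{\dd}{\dd t}\|X\|=\cG(\nabla_t X,X)/(2\|X\|)$ instead of $\cG(\nabla_t X,X)/\|X\|$). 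Your method is correct; just be prepared for the discrepancy with the displayed statement rather than assuming you have made an arithmetic error.

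Finally, the lemma asserts an \emph{equivalence}, not merely an implication. You establish that the cosmological equation yields \eqref{bsys} and \eqref{asys0}; for the converse the paper appeals to Lemma~\ref{prop:auxdcoords}, which says $(v_n,v_\tau,\cH,\varPsi)$ are local coordinates on $T\cU$, so that the four first-order equations determine the integral curve of $S$ and hence the cosmological curve. You should add a sentence to this effect.
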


\begin{proof}
The speed
  $\dot{\varphi}:I\rightarrow T\cU$ expands in the adapted frame as:
\ben
\dot{\varphi}=v^\varphi_n n_\varphi +v^\varphi_\tau \tau_\varphi~~.
\een
We have:
\be
\nabla_t  n_\varphi   =(\mu_\varphi v^\varphi_n+\lambda_\varphi v^\varphi_\tau) \tau_\varphi\nn~~,~~\nabla_t  \tau_\varphi =-(\mu_\varphi v^\varphi_n+\lambda_\varphi v^\varphi_\tau)n_\varphi~~,
\ee
where we used \eqref{FSadapted}. Thus:
\be
\nabla_t \dot{\varphi}=[\dot{v}^\varphi_n-\lambda_\varphi (v^\varphi_\tau)^2-\mu_\varphi v^\varphi_n v^\varphi_\tau] n_\varphi+
[\dot{v}^\varphi_\tau+\mu_\varphi (v^\varphi_n)^2+\lambda_\varphi v^\varphi_n v^\varphi_\tau]\tau_\varphi~~.
\ee
Hence the cosmological equation \eqref{eomsingle} implies \eqref{bsys}.
The first equation in \eqref{asys0} is the dissipation
equation \eqref{DissipEq}. To derive the second equation in \eqref{asys0}, we compute:
\beqa
&&\frac{\dd ||(\dd\Phi)(\varphi(t))||}{\dd t}=\frac{\cG(\nabla_t[(\grad\Phi)(\varphi(t))],(\grad\Phi)(\varphi(t)))}{2||(\dd\Phi)(\varphi(t))||}\nn\\
&=&\frac{\Hess(\Phi)(\varphi(t))(\dot{\varphi}(t),(\grad\Phi)(\varphi(t)))}{2||(\dd\Phi)(\varphi(t))||}=\frac{1}{2}\Hess(\Phi)(\varphi(t))(\dot{\varphi}(t),n_\varphi(t))\nn\\
&=&\frac{1}{2}\left[\Phi_{nn}(\varphi(t)) v_n^\varphi(t)+\Phi_{n\tau}(\varphi(t)) v_\tau^\varphi(t)\right]~~,
\eeqa
where we used \eqref{speedadapted}. The second equation in
\eqref{asys0} now follows from the relation above upon using
\eqref{HessPhiAdapted}. Hence the cosmological equation for $\varphi$
implies \eqref{bsys} and \eqref{asys0}. The system formed by these
four first order ODEs is equivalent with the cosmological equation on
$T\cU$ since $(v_n,v_\tau,\cH,\Psi)$ is a local coordinate system around every
point of $T\cU$.
\end{proof}

\noindent The previous Lemma can be reformulated as follows.

\begin{prop}
The cosmological equation for curves whose image is contained in $\cU$
is equivalent with the following system written in local coordinates
$(v_n,v_\tau,\cH,\varPsi)$ in a small enough vicinity of any point $u\in T\cU$:
\beqan
\label{adsys}
&& \dot{v}_n-\lambda^\v v_\tau^2-\mu^\v v_n v_\tau+\cH v_n+\varPsi=0~~\nn\\
&& \dot{v}_\tau+\mu^\v v_n^2+\lambda^\v v_nv_\tau+\cH v_\tau=0~~\nn\\
&& \dot{\cH}=-\frac{1}{M_0^2}(v_n^2+v_\tau^2)\\
&& {\dot \varPsi}=\frac{1}{2}(\Phi^\v_{nn} v_n+\mu^\v \varPsi v_\tau)~~,\nn
\eeqan
where $\Phi_{nn},\lambda$ and $\mu$ are expressed locally as functions of
$v_n,v_\tau,\cH$ and $\varPsi$ using the fact that $(\Phi,\Psi)$
are local coordinates on a vicinity of $\pi(u)\in \cU$ and relations \eqref{cHcoord}.
\end{prop}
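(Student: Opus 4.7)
The plan is to reduce this proposition to Lemma \ref{lemma:cosmadapted} through a straightforward translation between curve-evaluated quantities and basic observables on $T\cU$. The key observation is that for any noncritical curve $\varphi:I\rightarrow \cU$, the functions appearing in \eqref{bsys}--\eqref{asys0} are precisely the pullbacks along $\dot{\varphi}$ of the corresponding local functions on $T\cU$. Indeed, by definition we have $v_n^\varphi=v_n\circ \dot{\varphi}$ and $v_\tau^\varphi=v_\tau\circ \dot{\varphi}$, while $\cH_\varphi=\cH\circ \dot{\varphi}$ and $\Psi_\varphi=\Psi\circ \varphi=\varPsi\circ \dot{\varphi}$, since $\varPsi=\Psi\circ \pi$ and $\pi\circ \dot{\varphi}=\varphi$. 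The same identity holds for the coefficient functions: $\lambda_\varphi=\lambda^\v\circ \dot{\varphi}$, $\mu_\varphi=\mu^\v\circ \dot{\varphi}$ and $\Phi_{nn\varphi}=\Phi^\v_{nn}\circ \dot{\varphi}$, because each tautological lift $(-)^\v$ is defined as precomposition with $\pi$.

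With this dictionary in place, I would rewrite each equation of \eqref{bsys}--\eqref{asys0} at a parameter $t\in I$ as the assertion that the curve $\dot{\varphi}$ in $T\cU$ satisfies, at the point $\dot{\varphi}(t)$, the relation obtained by substituting $v_n\circ \dot{\varphi}$, $v_\tau\circ \dot{\varphi}$, $\cH\circ \dot{\varphi}$, $\varPsi\circ \dot{\varphi}$ in place of the corresponding ``$\varphi$-evaluated'' quantities. This is exactly the assertion that the coordinate evaluation of $\dot{\varphi}$ in the chart $(v_n,v_\tau,\cH,\varPsi)$ solves \eqref{adsys}. The converse implication — that any solution of \eqref{adsys} arises this way — uses the fact that $(v_n,v_\tau,\cH,\varPsi)$ is a genuine local coordinate system on a small enough vicinity of any point $u\in T\cU$ (by the Proposition immediately preceding) together with the semispray property of $S$: an integral curve of $S$ is automatically the canonical lift of its base projection, so the first order system on $T\cU$ recovers a unique curve $\varphi$ in $\cU$ satisfying the cosmological equation.

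The final bookkeeping step is to justify the closing remark of the statement, namely that $\Phi_{nn}^\v,\lambda^\v$ and $\mu^\v$ can be written as smooth functions of $(v_n,v_\tau,\cH,\varPsi)$ on a neighborhood of $u$. Since these are tautological lifts of functions on $\cU$, they depend only on the base point $\pi(u)$, so it suffices to express $\Phi_{nn},\lambda,\mu$ as smooth functions of $(\Phi,\Psi)$ on a neighborhood of $\pi(u)\in \cU$; this is possible because $(\Phi,\Psi)=\beta_\cU$ is a local diffeomorphism on $\cU$ by the corollary following Proposition \ref{prop:basecoords}. Composing with the smooth change of coordinates \eqref{cHcoord} that relates $\varPhi$ to $\cH$ then yields the desired expressions in the modified coordinate chart. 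The main obstacle in this proof is essentially bookkeeping rather than substance: one must be careful that the conversion between the curve-wise formulation of Lemma \ref{lemma:cosmadapted} and the point-wise formulation in \eqref{adsys} is an equivalence (not merely an implication), which is ensured by the semispray nature of $S$ and the fact that the four functions chosen genuinely form coordinates on $T\cU$ near $u$.
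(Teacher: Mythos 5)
Your proposal is correct and follows essentially the same route as the paper, which simply states that Lemma \ref{lemma:cosmadapted} ``can be reformulated'' as this Proposition; your dictionary between curve-evaluated quantities and their pullbacks along $\dot{\varphi}$, together with the appeal to $(v_n,v_\tau,\cH,\varPsi)$ being a genuine chart and to $\beta_\cU=(\Phi,\Psi)$ being a local diffeomorphism, supplies exactly the bookkeeping the paper leaves implicit.
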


\noindent More precisely, the quantities $\Phi_{nn},\lambda$ and $\mu$
can be expressed locally around $\pi(u)$ as functions of $\Phi$ and
$\Psi$ and hence locally around $u$ as functions of $\varPhi$ and
$\varPsi$.  Using relations \eqref{cHcoord}, their lifts to $\cU$
locally become functions of $\varPhi,\cH$ and $v_n,v_\tau$.

\begin{prop}
\label{prop:cosmadapted}
The cosmological equation for curves whose image is contained in $\cU$
is equivalent with the following system written in adapted phase space
coordinates $(v_n,v_\tau,\varPhi,\varPsi)$ on a small enough vicinity
of any point $u\in T\cU$:
\beqan
\label{asys}
&& \dot{v}_n=\lambda^\v v_\tau^2+\mu^\v v_n v_\tau-\frac{1}{M_0}(v_n^2+v_\tau^2+2\varPhi)^{1/2} v_n-\varPsi~~\nn\\
&& \dot{v}_\tau=-\mu^\v v_n^2-\lambda^\v v_nv_\tau-\frac{1}{M_0}(v_n^2+v_\tau^2+2\varPhi)^{1/2}v_\tau~~\nn\\
&& \dot{\varPhi}=\varPsi v_n~~\\
&& {\dot \varPsi}=\frac{1}{2}(\Phi_{nn}^\v v_n+\mu^\v \varPsi v_\tau)~~,\nn
\eeqan
where $\Phi_{nn}$, $\lambda$ and $\mu$ are expressed locally around
$\pi(u)$ in terms of $\Phi$ and $\Psi$ (and hence $\Phi_{nn}^\v$,
$\lambda^\v$ and $\mu^\v$ are expressed locally around $u$ in terms of
$\varPhi$ and $\varPsi$).
\end{prop}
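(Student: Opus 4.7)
The proof is essentially a coordinate change from the modified phase space coordinates $(v_n,v_\tau,\cH,\varPsi)$ used in the preceding system \eqref{adsys} to the adapted phase space coordinates $(v_n,v_\tau,\varPhi,\varPsi)$. By Lemma \ref{prop:auxdcoords}, both form local coordinate systems on a vicinity of any point $u\in T\cU$, and by \eqref{cHcoord} the change of coordinates is smooth and invertible, being governed by
\be
\cH=\frac{1}{M_0}\bigl(v_n^2+v_\tau^2+2\varPhi\bigr)^{1/2}.
\ee
So the plan is to use this change of variables to translate system \eqref{adsys} into a system in the new coordinates, keeping the first two equations and the fourth essentially as they stand (merely substituting the expression for $\cH$ in the first two), and replacing the evolution equation for $\cH$ by an evolution equation for $\varPhi$.

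First I would record the first two equations of \eqref{asys} by substituting the expression above for $\cH$ into the first two equations of \eqref{adsys} and moving the non-derivative terms to the right-hand side; this is purely algebraic. The fourth equation is simply copied from \eqref{adsys}, as $\varPsi$ and $v_n, v_\tau$ are common to both coordinate systems.

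The only genuine computation is the derivation of the third equation $\dot{\varPhi}=\varPsi v_n$. I would obtain it directly rather than by eliminating $\cH$: along a cosmological curve $\varphi:I\to \cU$, one has $\varPhi\circ \dot{\varphi}=\Phi\circ\varphi$, so
\be
\frac{\dd}{\dd t}(\varPhi\circ \dot{\varphi})(t)=(\dd\Phi)(\varphi(t))(\dot{\varphi}(t))=\cG\bigl((\grad\Phi)(\varphi(t)),\dot{\varphi}(t)\bigr).
\ee
Expanding $\grad\Phi=\Psi\, n$ on $\cM_0$ (Proposition \ref{prop:basecoords}) and $\dot{\varphi}=v_n^\varphi n_\varphi+v_\tau^\varphi \tau_\varphi$ (equation \eqref{speedadapted}), and using orthonormality of $(n,\tau)$, yields $\dot{\varPhi}=\varPsi\, v_n$ along any cosmological curve contained in $\cU$. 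As a consistency check one can verify that differentiating $M_0^2\cH^2=v_n^2+v_\tau^2+2\varPhi$ and using the first three equations of \eqref{adsys} reproduces the same relation.

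Finally, I would remark that since $\Phi,\Psi$ are local coordinates on a vicinity of $\pi(u)\in \cU$ (by the discussion following Proposition \ref{prop:basecoords}), the functions $\Phi_{nn},\lambda,\mu$ on $\cU$ are locally smooth functions of $(\Phi,\Psi)$, so their lifts $\Phi_{nn}^\v,\lambda^\v,\mu^\v$ are locally smooth functions of $(\varPhi,\varPsi)$, making \eqref{asys} a closed first-order system in the stated coordinates. There is no serious obstacle here; the only point requiring minor care is ensuring the algebraic substitution of $\cH$ in the first two equations of \eqref{adsys} is carried out consistently and that the direct computation of $\dot{\varPhi}$ agrees with what one would obtain by implicit differentiation of the $\cH$-$\varPhi$ relation.
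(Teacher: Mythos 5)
Your proof is correct and follows the same overall strategy as the paper's: pass from the modified system \eqref{adsys} to adapted coordinates via \eqref{cHcoord}, keeping the first, second and fourth equations and trading the $\dot{\cH}$ equation for the $\dot{\varPhi}$ equation. The one place you diverge is in how the third equation is obtained. The paper gets $\dot{\varPhi}=\varPsi v_n$ by differentiating $\varPhi=\frac{1}{2}(M_0^2\cH^2-v_n^2-v_\tau^2)$ and combining the dissipation equation with the identity $v_n\dot{v}_n+v_\tau\dot{v}_\tau=-\cH(v_n^2+v_\tau^2)-\varPsi v_n$ extracted from the first two equations; you instead derive it off-shell as the chain-rule identity $(\dd\Phi)(\dot{\varphi})=\cG(\Psi n,\dot{\varphi})=\Psi v_n$, valid along the canonical lift of \emph{any} curve in $\cU$. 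Both routes are fine, and yours makes transparent that the $\dot{\varPhi}$ equation carries no dynamical content by itself. Just note that what you call a ``consistency check'' --- recovering $\dot{\cH}=-\frac{1}{M_0^2}(v_n^2+v_\tau^2)$ from the first two equations of \eqref{asys} together with $\dot{\varPhi}=\varPsi v_n$ and the differentiated relation \eqref{cHcoord} --- is not optional: it is exactly the step that yields the implication from \eqref{asys} back to \eqref{adsys}, and hence the claimed \emph{equivalence} rather than a one-way implication. Since you do carry it out, the argument is complete and matches the paper's in substance.
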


\begin{proof}
The first two equations in \eqref{adsys} imply:
\ben
\label{vdotv}
v_n\dot{v}_n+v_\tau\dot{v}_\tau=-\cH (v_n^2+v_\tau^2)-\varPsi v_n~.
\een
Relation \eqref{cHcoord} implies:
\be
\dot{\varPhi}=M_0^2\cH \dot{\cH}-(v_n\dot{v}_n+v_\tau\dot{v}_\tau)=-\cH(v_n^2+v_\tau^2)-(v_n\dot{v}_n+v_\tau\dot{v}_\tau)=\varPsi v_n~~,
\ee
where we used Proposition \ref{prop:dissip} and \eqref{vdotv}. This shows
that the third equations of \eqref{adsys} and \eqref{asys} are
equivalent modulo the first two equations. 
\end{proof}

\begin{cor}
\label{cor:Sadapted}
The cosmological semispray has the following form in adapted phase
space coordinates around every point $u\in T\cU$:
\be
S=S^{v_n}\frac{\pd}{\pd v^n}+S^{v_\tau}\frac{\pd}{\pd v^\tau}+S^{\varPhi}\frac{\pd}{\pd \varPhi}+S^\varPsi\frac{\pd}{\pd\varPsi}~~,
\ee
where:
\beqan
\label{Sad}
&& S^{v_n}=\lambda^\v v_\tau^2+\mu^\v v_n v_\tau-\frac{1}{M_0}(v_n^2+v_\tau^2+2\varPhi)^{1/2} v_n-\varPsi\nn\\
&& S^{v_\tau}=-\mu^\v v_n^2-\lambda^\v v_nv_\tau-\frac{1}{M_0}(v_n^2+v_\tau^2+2\varPhi)^{1/2}v_\tau\nn\\
&& S^\varPhi=\varPsi v_n\\
&& S^\varPsi=\frac{1}{2}(\Phi_{nn}^\v v_n+\mu^\v \varPsi v_\tau)~~,\nn
\eeqan
where $\Phi_{nn}$, $\lambda$ and $\mu$ are expressed locally around
$\pi(u)$ in terms of $\Phi$ and $\Psi$ (and hence $\Phi_{nn}^\v$,
$\lambda^\v$ and $\mu^\v$ are expressed locally around $u$ in terms of
$\varPhi$ and $\varPsi$).
\end{cor}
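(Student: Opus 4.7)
\textbf{Proof plan for Corollary \ref{cor:Sadapted}.} The statement is essentially a translation of Proposition \ref{prop:cosmadapted} from ODE form into the coordinate expansion of a vector field. The plan is to read off the components of $S$ directly from the first-order ODE system \eqref{asys} satisfied by the adapted phase space coordinates along any integral curve of $S$.

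First, I would recall that in any local coordinate system $(x^1,x^2,x^3,x^4)$ on an open subset of $T\cM$, a vector field $S \in \cX(T\cM)$ has a unique expansion $S = S^{x^\alpha}\frac{\pd}{\pd x^\alpha}$ with $S^{x^\alpha}\in \cC^\infty$, and its integral curves $\gamma(t)$ are characterized by $\frac{\dd}{\dd t}(x^\alpha\circ\gamma)(t) = S^{x^\alpha}(\gamma(t))$. Conversely, the component functions $S^{x^\alpha}$ are uniquely determined by this property, since through every point of the coordinate domain there passes a unique integral curve of $S$.

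Next, I would fix $u\in T\cU$ and choose a neighborhood $U$ of $u$ on which $(v_n,v_\tau,\varPhi,\varPsi)$ forms a coordinate system, as guaranteed by Lemma \ref{prop:auxdcoords}. The cosmological semispray $S$ is the second order vector field whose integral curves are precisely the canonical lifts $\gamma = \dot{\varphi}$ of solution curves of the cosmological equation \eqref{eomsingle}. By Proposition \ref{prop:cosmadapted}, for any noncritical cosmological curve $\varphi$ whose image lies in $\cU$ and whose canonical lift passes through $U$, the functions $v_n\circ\dot{\varphi}$, $v_\tau\circ\dot{\varphi}$, $\varPhi\circ\dot{\varphi}$, $\varPsi\circ\dot{\varphi}$ satisfy the system \eqref{asys}. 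Reading off the right-hand sides of \eqref{asys} and matching them against $\dot{x}^\alpha(t) = S^{x^\alpha}(\gamma(t))$ for $x^\alpha \in \{v_n,v_\tau,\varPhi,\varPsi\}$ produces exactly the formulas \eqref{Sad}.

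Finally, I would note that since every point of $T\cU$ admits a neighborhood of the type above, and since the local expressions \eqref{Sad} glue along overlaps (both sides being the components of the globally defined semispray $S$ in whatever adapted chart is used), the expansion is valid around every point $u\in T\cU$. There is no genuine obstacle: the content of the Corollary is already inside Proposition \ref{prop:cosmadapted}, and the only thing to verify is the standard fact that the first order ODE system satisfied by the coordinates along integral curves yields the components of the generating vector field. Any through-every-point coverage issue is handled by the existence-uniqueness theorem for the ODE \eqref{Seq}, which guarantees that through $u$ there passes a unique flow curve of $S$, making the identification of $S^{x^\alpha}(u)$ from the derivative of $x^\alpha$ along that curve unambiguous.
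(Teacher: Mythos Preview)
Your proposal is correct and mirrors the paper's own treatment: the Corollary is stated immediately after Proposition \ref{prop:cosmadapted} without a separate proof, precisely because the components \eqref{Sad} are read off directly from the right-hand sides of the first-order system \eqref{asys} via the standard identification $\dot{x}^\alpha = S^{x^\alpha}\circ\gamma$ along integral curves.
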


\noindent These results show that the functions $\lambda(\Phi,\Psi)$,
$\mu(\Phi,\Psi)$ and $\Phi_{nn}(\Phi,\Psi)$ (which can be defined on a
small enough vicinity of $\pi(u)$) characterize the cosmological
semispray on a small enough vicinity of any point of $u\in T\cU$. Hence cosmological
dynamics on $T\cU$ is locally determined by these three functions. This
opens the possibility of studying the local cosmological dynamics of two-field
models using the ``universal form'' \eqref{asys} of their first order equations.
In the next section, we rewrite \eqref{asys} in certain special local coordinate
systems which are induced by the so-called ``fundamental basic observables''
of the model, which have more direct physical interpretation.

\section{Fundamental basic observables}
\label{sec:fobs}

\subsection{The fundamental parameters of a smooth curve in $\cM$}

\noindent Let $\varphi:I\rightarrow\cM$ be a smooth curve and
$T_\varphi(t)$, $N_\varphi(t)$ be the unit tangent and positive unit
normal vectors to $\varphi$ at a regular time $t\in I_\reg$ (see
Appendix \ref{app:signed}).

\subsubsection{Fundamental first order parameters.}
We start by considering first order parameters, whose value at $t\in
I$ depends only only the tangent vector $\dot{\varphi}(t)$.

\begin{definition}
\label{def:Fundamental1Parameters}
The {\em fundamental first order parameters} of a curve
$\varphi:I\rightarrow \cM$ are defined as follows:
\begin{itemize}
\item The {\em first IR parameter} of $\varphi$ at time $t\in I$ is
  the non-negative real number:
\ben
\label{kappapardef}  
\kappa_\varphi(t)\eqdef \frac{||\dot{\varphi}(t)||^2}{2\Phi(\varphi(t))}\in \R_{\geq 0}~~.
\een
\item The {\em rescaled conservative parameter} of $\varphi$ at non-critical time $t\in
  I_\noncrit$ is the non-negative real number:
\ben
\label{hcpardef}
\hc_\varphi(t)\eqdef \frac{\cH_\varphi(t) ||\dot{\varphi}(t)||}{||(\dd\Phi)(\varphi(t))||}\in \R_{\geq 0}~~.
\een
\item The {\em signed characteristic angle} of
  $\varphi$ at a regular and noncritical time $t\in I_\reg\cap
  I_\noncrit $ is the characteristic angle of $T_{\varphi(t)}$, i.e. the signed angle of the tangent vector
  $T_\varphi(t)$ relative to the unit gradient vector $n_\varphi(t)$
  (see Definition \ref{def:signed}):
\be
\theta_\varphi(t)\eqdef \theta(T_\varphi(t))=\uptheta_{n_\varphi(t)}(T_\varphi(t))\in (-\pi,\pi]~~.
\ee
\end{itemize}
\end{definition}

\noindent The first IR parameter equals the ratio between the kinetic
and potential energy of $\varphi$ and plays a crucial role in
reference \cite{ren}. As we recall in Appendix \ref{app:param}, this
parameter carries the same information as the ordinary first slow roll
parameter used in the physics literature. The conservative parameter
is the ratio of the norms of the friction and gradient terms of
the cosmological equation and controls what we call the
quasi-conservative and strongly dissipative approximations (see
Section \ref{sec:cons}).

\subsubsection{Fundamental second order parameters.}
We next consider parameters which also depend on the covariant
acceleration of $\varphi$. For this, we start by introducing a vector
which compares the acceleration term of the cosmological equation to
the norm of the friction term at a regular time $t\in I_\reg$. The
fundamental second order parameters will be defined as the components
of this vector along $T_\varphi(t)$ and $N_\varphi(t)$.

\begin{definition}
\label{def:hveta}
The {\em rescaled opposite relative acceleration} of a curve
$\varphi:I\rightarrow \cM$ at the regular time $t\in I_\reg$ is the vector:
\ben
\label{hvetapardef}
\hveta_\varphi(t)\eqdef -\frac{\nabla_t \dot{\varphi}(t)}{\cH_\varphi(t)
  ||\dot{\varphi}(t)||}\in T_{\varphi(t)}\cM~~.
\een
\end{definition}

\noindent Let
\be
\hOmega_\varphi(t)\eqdef -\frac{\cG(N_\varphi(t),\nabla_t\dot{\varphi}(t))}{||\dot{\varphi}(t)||}=-\cG(N_\varphi(t),\nabla_t T_\varphi(t))~~\forall t\in I_\reg
\ee
be the {\em signed turning rate} of $\varphi$ at the regular time $t$ (see Appendix \ref{app:signedcurv}).

\begin{remark}
\label{rem:secondorder}
Consider the following open submanifold of $E^2=J^2(\cM)$:
\be
J^2_\reg(\cM)\eqdef \{\phi\in J^2(\cM)~\vert~\pi_{21}(\phi)\neq 0\}~~,
\ee
where $\pi_{21}:J^2(\cM)\rightarrow J^1(\cM)$ is the natural
projection. Since the curve $\varphi$ is arbitrary, we have:
\be
\hveta_\varphi(t)=\underline{\hveta}(j^2(\varphi)(t))~~\forall t\in I_\reg~~,
\ee
where $\underline{\hveta}\in \Gamma(J^2_\reg(\cM),F^2)$ is a
$T\cM$-valued second order local observable in the sense of Subsection
\ref{subsec:obs}. When $\varphi$ is a regular curve, we have
$j^2(\varphi)(I)\subset J^2_\reg(\cM)$ and $\hveta_\varphi$ is the
evaluation of $\underline{\hveta}$ along $\varphi$. In this case,
$\hOmega_\varphi$ is the evaluation along $\varphi$ of a second order
scalar observable $\underline{\hOmega}\in \cC^\infty(J^2_\reg(\cM))$.
\end{remark}

\begin{prop}
\label{prop:hvetapar}
For any curve $\varphi:I\rightarrow \cM$ and any $t\in I_\reg$, we have:
\ben
\label{hvetapar}
\hveta_\varphi(t)=-\frac{1}{\cH_\varphi(t)}\frac{\dd\log ||\dot{\varphi}(t)||}{\dd t} T_\varphi(t)+\frac{\hOmega_\varphi(t)}{\cH_\varphi(t)} N_\varphi(t)\in T_{\varphi(t)}\cM~~.
\een
\end{prop}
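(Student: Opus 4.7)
The plan is to decompose $\nabla_t\dot\varphi(t)$ in the orthonormal frame $(T_\varphi(t),N_\varphi(t))$ of $T_{\varphi(t)}\cM$ and then divide by $-\cH_\varphi(t)\,\|\dot\varphi(t)\|$. All the ingredients are immediate: the unit tangent expansion $\dot\varphi(t)=\|\dot\varphi(t)\|\,T_\varphi(t)$, the fact that $\nabla_t T_\varphi(t)\perp T_\varphi(t)$ because $T_\varphi$ is unit, and the given formula $\hOmega_\varphi(t)=-\cG(N_\varphi(t),\nabla_t T_\varphi(t))$.

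First, I would differentiate the product $\dot\varphi(t)=\|\dot\varphi(t)\|\,T_\varphi(t)$ using the Levi-Civita connection to obtain
\[
\nabla_t\dot\varphi(t)=\frac{\dd\|\dot\varphi(t)\|}{\dd t}\,T_\varphi(t)+\|\dot\varphi(t)\|\,\nabla_t T_\varphi(t).
\]
Second, since $\cG(T_\varphi,T_\varphi)\equiv 1$ implies $\cG(T_\varphi,\nabla_t T_\varphi)=0$, the vector $\nabla_t T_\varphi(t)$ lies in the line spanned by $N_\varphi(t)$, so
\[
\nabla_t T_\varphi(t)=\cG(N_\varphi(t),\nabla_t T_\varphi(t))\,N_\varphi(t)=-\hOmega_\varphi(t)\,N_\varphi(t),
\]
by the definition of $\hOmega_\varphi$ recalled just before the statement.

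Substituting back gives
\[
\nabla_t\dot\varphi(t)=\frac{\dd\|\dot\varphi(t)\|}{\dd t}\,T_\varphi(t)-\|\dot\varphi(t)\|\,\hOmega_\varphi(t)\,N_\varphi(t).
\]
Dividing by $-\cH_\varphi(t)\,\|\dot\varphi(t)\|$ (which is nonzero at $t\in I_\reg$ because $\cH>0$ and $\dot\varphi(t)\neq 0$) and using $\frac{1}{\|\dot\varphi(t)\|}\frac{\dd\|\dot\varphi(t)\|}{\dd t}=\frac{\dd\log\|\dot\varphi(t)\|}{\dd t}$ yields \eqref{hvetapar}.

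There is no real obstacle here: the proof is a one-line computation once the Frenet decomposition is written down. The only point worth flagging is that everything takes place in $T_{\varphi(t)}\cM$ and uses nothing about $\Phi$ or the cosmological equation, so the formula holds for an arbitrary smooth curve at any regular time, consistently with Remark \ref{rem:secondorder} which packages $\hveta_\varphi$ as the evaluation along $\varphi$ of a second order local observable $\underline{\hveta}\in\Gamma(J^2_\reg(\cM),F^2)$.
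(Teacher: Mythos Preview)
Your proof is correct and follows essentially the same approach as the paper, which simply says ``Follows by projecting the left hand side on the vectors $T_\varphi(t)$ and $N_\varphi(t)$.'' Your expansion of $\nabla_t\dot\varphi(t)$ in the Frenet frame is exactly the computation underlying that projection (and in fact coincides with relation \eqref{covacc2} in Appendix~\ref{app:signed}), so the two arguments are the same.
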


\begin{proof}
Follows by projecting the left hand side on the vectors $T_\varphi(t)$
and $N_\varphi(t)$.
\end{proof}

\begin{definition}
\label{def:Fundamental2Parameters}
The {\em fundamental second order parameters} of a curve
$\varphi:I\rightarrow \cM$ are defined as follows:
\begin{itemize}
\item The {\em rescaled second slow roll parameter} of $\varphi$ at
  a regular time $t\in I_\reg$ is the real number:
\ben
\label{hetapnormspeed}
\hetap_\varphi(t)\eqdef \cG(T_\varphi(t),\hveta_\varphi(t))=-\frac{\dd\log ||\dot{\varphi}(t)||}{\cH_\varphi(t)\dd t } \in \R~~.
\een
\item The {\em rescaled signed relative turning rate} of $\varphi$ at a regular time $t\in
I_\reg$ is the real number:
\ben
\label{homegapardef}
\homega_\varphi(t)\eqdef \cG(N_\varphi(t),\hveta_\varphi(t))=\frac{\hOmega_\varphi(t)}{\cH_\varphi(t)}\in \R~~.
\een
\end{itemize}
The {\em fundamental second order local observables} are the corresponding
functions $\underline{\hetap},\underline{\homega}\in
\cC^\infty(J^2_\reg(\cM))$ (see Remark \ref{rem:secondorder}).
\end{definition}

\noindent We have:
\be
\hveta_\varphi(t)=\hetap_\varphi(t) T_\varphi(t)+\homega_\varphi(t) N_\varphi(t)~~.
\ee
Relation \eqref{OmegaDef} of Appendix \ref{app:signed} gives:
\be
\homega_\varphi(t)=-\frac{||\dot{\varphi}(t)||}{\cH_\varphi(t)}\xi_\varphi(t)~~\forall t\in I_\reg~~,
\ee
where $\xi_\varphi$ is the {\em signed} curvature of $\varphi$ defined
in Appendix \ref{app:signedcurv}.

Let us extract the on-shell reduction of $\underline{\hetap}$ and $\underline{\homega}$ using the
cosmological equation for $\varphi$. Projecting the cosmological equation \eqref{eomsingle} for a
curve $\varphi:I\rightarrow \cM$ on the vectors $T_\varphi(t)$ and
$N_\varphi(t)$ at a regular time $t\in I_\reg$ gives respectively the {\em
  adiabatic equation}:
\be
\cG(T_\varphi(t), \nabla_t \dot{\varphi}(t))+\cH(\dot{\varphi}(t)) ||\dot{\varphi}(t)||+ \Phi_T(\dot{\varphi}(t))=0~~\forall t\in I_\reg
\ee
and the {\em entropic equation}:
\be
\cG(N_\varphi(t),\nabla_t \dot{\varphi}(t))+\Phi_N(\dot{\varphi}(t))=0~~\forall t\in I_\reg~~.
\ee
These equations can be written as:
\ben
\label{adent}
\hetap(t)=1+\frac{\Phi_T(\dot{\varphi}(t))}{\cH(\dot{\varphi}(t))||\dot{\varphi}(t)||}~~,~~\homega_\varphi(t)=\frac{\Phi_N(\dot{\varphi}(t))}{\cH(\dot{\varphi}(t))||\dot{\varphi}(t)||}~~~~
\forall t\in I_\reg~~.
\een
These relations show that the dynamical reductions of $\underline{\hetap}$ and $\underline{\homega}$
are given by:
\ben
\label{homegahetapred}
\underline{\hetap}^\red=1+\frac{\Phi_T}{\cH\cN}~~,~~\underline{\homega}^\red=\frac{\Phi_N}{\cH\cN}~~.
\een
Together with Definition \ref{def:Fundamental1Parameters}, this leads to the following.

\begin{definition}
\label{def:FundamentalObservables}
The {\em fundamental basic observables} of the oriented scalar triple
$(\cM,\cG,\Phi)$ are the basic local scalar observables defined as follows:
\begin{itemize} 
\item The {\em first IR function} of $(\cM,\cG,\Phi)$ is
  the map $\kappa:T\cM\rightarrow \R_{\geq 0}$ defined through:
\ben
\label{kappadef}
\kappa\eqdef \frac{\cN^2}{2\varPhi}~~.
\een
\item The {\em rescaled conservative function} of $(\cM,\cG,\Phi)$ is the
map $\hc:T\cM_0\rightarrow \R_{\geq 0}$ defined through:
\ben
\label{cdef}
\hc\eqdef \frac{\cN \cH}{\varPsi}=\frac{\cN\sqrt{\cN^2+2\varPhi}}{M_0\varPsi}~~.
\een
\item The {\em signed characteristic angle function} of
  $(\cM,\cG,\Phi)$ is the map $\theta:\dot{T}\cM_0 \rightarrow (-\pi,\pi]$
  defined through (see Definition \ref{def:signed} and Proposition \ref{prop:signedfunction}):
\ben
\label{thetadef}  
\theta\eqdef \uptheta_0(n^\v,T_0)~~.
\een
\item The {\em rescaled second slow roll function} of $(\cM,\cG,\Phi)$ is
  the map $\hetap:\dot{T}\cM\rightarrow \R$ defined through:
\ben
\label{etadef}  
\hetap\eqdef 1+\frac{\Phi_T}{\cH \cN}~~.
\een
\item The {\em rescaled turn rate function} is the map $\homega:{\dot
  T}\cM\rightarrow \R$ defined through:
\ben
\label{homegadef}
\homega=\frac{\Phi_N}{\cH \cN}~~.
\een
\end{itemize}
\end{definition}

\begin{remark}
Since $\Phi$ is strictly positive, the function $\hc$
vanishes only on the zero section of $T\cM_0$ and hence is strictly
positive on $\dot{T}\cM_0$. This function tends to $+\infty$ on the
topological frontier of $T\cM_0$ in $T\cM$, which coincides with the
$\pi$-preimage of the critical locus:
\be
\mathrm{Fr}(T\cM_0)=\overline{T\cM_0}\setminus T\cM_0=\pi^{-1}(\Crit\Phi)=\{u\in T\cM~\vert~\pi(u)\in \Crit\Phi\}~~.
\ee Also notice that the limits of $\hetap$ and $\homega$ at a point
which lies on the zero section of $T\cM_0$ depend on the direction
from which one approaches that point inside $T\cM_0$ and the
directional limits equal plus or minus infinity when the directional
derivatives of $\Phi$ at that point with in that direction
(respectively in its positive normal direction) are nonzero.
\end{remark}

\begin{prop}
\label{prop:etaomegaobs}
Suppose that $\varphi:I\rightarrow \cM$ is a cosmological curve. Then we have:
\ben
\label{kappacosm}
\kappa_\varphi(t)=\kappa(\dot{\varphi}(t))~~\forall t\in I
\een
and:
\ben
\label{ccosm}
\hc_\varphi(t)=\hc(\dot{\varphi}(t))~~\forall t\in I_\noncrit~~
\een
as well as:
\beqan
\label{etacosm}
&& \hetap_\varphi(t)=\heta(\dot{\varphi}(t))~~,~~\homega_\varphi(t)=\homega(\dot{\varphi}(t))~~\forall t\in I_\reg.
\eeqan
\end{prop}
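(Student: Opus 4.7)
The proof will proceed by unpacking each equality directly from the definitions, using the projection of the cosmological equation onto the Frenet frame for the second-order parameters. The two first-order equalities are essentially tautological, while the second-order ones are the content of the already-derived adiabatic and entropic equations.

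For equality \eqref{kappacosm}, I would simply note that $\pi(\dot{\varphi}(t)) = \varphi(t)$, so that $\varPhi(\dot{\varphi}(t)) = \Phi(\varphi(t))$ and $\cN(\dot{\varphi}(t)) = \|\dot{\varphi}(t)\|$. Substituting these into \eqref{kappadef} gives exactly \eqref{kappapardef}. No use of the cosmological equation is needed; this is just the statement that $\kappa$ is the basic observable whose evaluation along $\varphi$ produces $\kappa_\varphi$. Equality \eqref{ccosm} is handled identically: from $\varPsi(\dot{\varphi}(t)) = \|(\dd\Phi)(\varphi(t))\|$ and $\cH(\dot{\varphi}(t)) = \cH_\varphi(t)$, the definition \eqref{cdef} evaluated at $\dot{\varphi}(t)$ reproduces \eqref{hcpardef}, provided $\varphi(t) \notin \Crit\Phi$ so that the denominator is nonzero.

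For the second-order equalities in \eqref{etacosm}, the plan is to combine Proposition \ref{prop:hvetapar} with the Frenet-frame projections of the cosmological equation. Projecting \eqref{eomsingle} onto the unit tangent $T_\varphi(t)$ and positive normal $N_\varphi(t)$ yields the adiabatic and entropic equations recorded in the excerpt, which rearrange to \eqref{adent}. The right-hand sides of these two equations are, by definition, the values of the basic observables $1 + \Phi_T/(\cH\cN)$ and $\Phi_N/(\cH\cN)$ at $\dot{\varphi}(t)$ — that is, $\hetap(\dot{\varphi}(t))$ and $\homega(\dot{\varphi}(t))$. Meanwhile the left-hand sides, by Definition \ref{def:Fundamental2Parameters} together with Proposition \ref{prop:hvetapar}, equal $\hetap_\varphi(t)$ and $\homega_\varphi(t)$. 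Comparing gives both halves of \eqref{etacosm}.

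This is really the on-shell reduction phenomenon of Subsection \ref{subsec:obs}: the second-order observables $\underline{\hetap}, \underline{\homega} \in \cC^\infty(J^2_\reg(\cM))$ of Remark \ref{rem:secondorder} have dynamical reductions $\underline{\hetap}^{\red}, \underline{\homega}^{\red}$ which coincide with the basic observables $\hetap$ and $\homega$ by \eqref{homegahetapred}, and Proposition \ref{sred} then gives $\underline{\hetap}_\varphi = (\underline{\hetap}^{\red})_\varphi = \hetap \circ \dot{\varphi}$ and similarly for $\homega$. The only mild care needed is tracking the domains: $t \in I_\noncrit$ ensures the denominator $\varPsi$ in $\hc$ is nonzero, while $t \in I_\reg$ ensures $\dot{\varphi}(t) \in \dot{T}\cM$ so that $T_\varphi(t)$ and $N_\varphi(t)$ are defined and the denominator $\cN$ in $\hetap, \homega$ does not vanish. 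No substantial obstacle arises; the statement is essentially the assertion that the general jet-bundle dynamical reduction formalism of Subsection \ref{subsec:obs} matches the hand-computation of the adiabatic and entropic decompositions.
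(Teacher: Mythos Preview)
Your proof is correct and follows essentially the same approach as the paper: the paper's proof simply states that \eqref{kappacosm} and \eqref{ccosm} are obvious while \eqref{etacosm} follows from \eqref{homegahetapred} together with the general dynamical-reduction results of Subsection~\ref{subsec:obs}, and your write-up is a faithful unpacking of exactly that argument. The only cosmetic issue is that you cite ``Proposition~\ref{sred}'' when \texttt{sred} is an equation label; you mean the proposition containing equation~\eqref{sred}.
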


\begin{proof}
Relations \eqref{kappacosm} and \eqref{ccosm} are obvious, while
\eqref{etacosm} follows from \eqref{homegahetapred} and the results of
Subsection \ref{subsec:obs}.
\end{proof}

\noindent The basic fundamental observables introduced above are not
independent. Below, we discuss some obvious relations between
them. More interesting relations will arise in the next section.

\subsection{Relations between $\cN$, $\cH$, $\hc$ and $\kappa$}

\noindent The basic observables $\kappa$ and $\hc$ are norm reducible
in the sense of Definition \ref{def:normred}. We can express $\cN$ in
terms of $\kappa$ as:
\ben
\label{Nkappa}
\cN=\sqrt{2\varPhi \kappa}~~,
\een
which gives: 
\ben
\label{kapparels}
\cH=\frac{\sqrt{2\varPhi}}{M_0}(1+\kappa)^{1/2}~~.
\een
To express $\hc$ in terms of $\kappa$, we define:

\begin{definition}
\label{def:Xi}
The {\em characteristic one-form} of the model $(M_0,\cM,\cG,\Phi)$ is
the following one-form defined on $\cM$:
\ben
\label{Xidef}
\Xi\eqdef \frac{M_0\dd\Phi}{2\Phi}\in \Omega^1(\cM)~~.
\een
\end{definition}

\begin{prop}
\label{prop:ckappa}
The following relations hold on $T\cM_0$:
\ben
\label{NH}
\cN\cH=\varPsi \hc
\een
\ben
\label{ckappa}
\hc=\frac{\left[\kappa(1+\kappa)\right]^{1/2}}{||\Xi||^\v}~~
\een
\ben
\label{kappac}
\kappa=\frac{1}{2}\left[-1+\sqrt{1+4(||\Xi||^\v)^2 \hc^2}\right]~~.
\een

\end{prop}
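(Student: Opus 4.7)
The plan is to derive the three relations in sequence, each from the previous together with the explicit definitions of the basic observables. None of the steps looks genuinely hard: the whole proposition is essentially an algebraic consequence of packaging $\cN$ and $\cH$ via $\kappa$ and of the definition of $\Xi$.

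First, relation \eqref{NH} is simply a rewriting of the definition \eqref{cdef} of $\hc$: since $\hc = \frac{\cN\cH}{\varPsi}$ by definition on $T\cM_0$, multiplying through by $\varPsi$ gives $\cN\cH=\varPsi\hc$. So there is nothing to prove beyond citing \eqref{cdef}.

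Next, to obtain \eqref{ckappa} I would substitute the expressions for $\cN$ and $\cH$ in terms of $\kappa$ already provided. From \eqref{Nkappa} and \eqref{kapparels} we have $\cN = \sqrt{2\varPhi\,\kappa}$ and $\cH = \frac{\sqrt{2\varPhi}}{M_0}(1+\kappa)^{1/2}$, hence
\begin{equation*}
\cN\cH = \frac{2\varPhi}{M_0}\bigl[\kappa(1+\kappa)\bigr]^{1/2}.
\end{equation*}
Combined with \eqref{NH}, this gives $\hc = \frac{2\varPhi}{M_0\varPsi}\bigl[\kappa(1+\kappa)\bigr]^{1/2}$. It remains to recognize the prefactor as $1/||\Xi||^\v$: from $\Xi=\frac{M_0\,\dd\Phi}{2\Phi}$ (Definition \ref{def:Xi}), we have $||\Xi||=\frac{M_0\,||\dd\Phi||}{2\Phi}=\frac{M_0\Psi}{2\Phi}$ on $\cM_0$ (since $\Phi>0$), and pulling back by $\pi$ yields $||\Xi||^\v = \frac{M_0\varPsi}{2\varPhi}$ on $T\cM_0$. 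Substituting gives \eqref{ckappa}.

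Finally, \eqref{kappac} is the inversion of \eqref{ckappa}. Squaring \eqref{ckappa} produces the quadratic equation
\begin{equation*}
\kappa^2+\kappa-(||\Xi||^\v)^2\hc^2=0
\end{equation*}
in $\kappa$. Since $\kappa\geq 0$ on $T\cM_0$ (by its definition \eqref{kappadef}), only the root with the $+$ sign of the discriminant is admissible, and the quadratic formula delivers exactly \eqref{kappac}. The only mild point of care — not really an obstacle — is to keep the domain straight: $\Psi = ||\dd\Phi||$ is strictly positive on $\cM_0$ (so $||\Xi||$ is well-defined and nonzero there, making the division in \eqref{ckappa} legal on $T\cM_0$) and $\varPhi>0$ everywhere, so $\sqrt{2\varPhi}$ causes no issue. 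Thus the whole proposition reduces to an elementary substitution and a quadratic inversion.
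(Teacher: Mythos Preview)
Your proof is correct and follows essentially the same approach as the paper: derive \eqref{NH} directly from the definition \eqref{cdef}, obtain \eqref{ckappa} by substituting \eqref{Nkappa} and \eqref{kapparels} into \eqref{NH} and recognizing the prefactor via \eqref{Xidef}, and recover \eqref{kappac} by solving the resulting quadratic for $\kappa$ and selecting the nonnegative root. Your version merely spells out the algebra and domain checks more explicitly than the paper's terse proof.
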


\begin{proof}
The first relation follows from the definition \eqref{cdef} of $\hc$,
while the second relation follows from the first using \eqref{Nkappa},
\eqref{kapparels} and the definition \eqref{Xidef} of $\Xi$.  The last
relation follows by solving \eqref{ckappa} for $\kappa$.
\end{proof}

\section{Natural fiberwise coordinates}
\label{sec:fibcoords}

The natural basic observables introduced in the previous section allow
us to define fiberwise coordinate systems on certain open subsets of
$T\cM_0$.

\subsubsection{Fiberwise norm-angle coordinates.}
Let $L_n^\pm\rightarrow \cM_0$ be the closed half line
bundles generated by the vector fields $\pm n$:
\beqan
\label{Ldef}
&& L^+\eqdef \{u\in T\cM_0~\vert~\theta(u)=0\}=\sqcup_{m\in \cM_0} \R_{\geq 0} n(m)~~\\
&&L^-\eqdef \{u\in T\cM_0~\vert~\theta(u)=\pi\}=\sqcup_{m\in \cM_0} \R_{\geq 0} (- n(m))~~
\eeqan
and notice that $(\cN,\theta)$ is a fiberwise coordinate system on
$T\cM_0\setminus L^-$ in the sense of Definition
\ref{def:fiberwise}.

\begin{definition}
The pair $(\cN,\theta)$ is called the system of {\em fiberwise norm-angle
  coordinates} on $T\cM_0\setminus L^- \subset \dot{T}\cM_0$.
\end{definition}

\noindent These fiberwise coordinates define a polar coordinate system
in each fiber of $\dot{T}\cM_0$.

\subsubsection{Fiberwise conservative coordinates.}

\begin{lemma}
\label{lemma:NHc}
The following relations hold on $T\cM_0$:
\ben
\label{Nc}
\cN=\sqrt{-\varPhi+\sqrt{\varPhi^2+M_0^2 \varPsi^2 \hc^2}}~~.
\een
\ben
\label{Hc}
\cH=\frac{1}{M_0}\left[\varPhi+\sqrt{\varPhi^2+M_0^2 \varPsi^2 \hc^2}\right]~~.
\een
\end{lemma}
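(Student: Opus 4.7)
The plan is to eliminate $\cN$ and $\cH$ from a pair of algebraic identities and solve the resulting quadratic. First, I would combine the two relations I already have at my disposal:
\begin{itemize}
\item equation \eqref{NH}, namely $\cN\cH=\varPsi\,\hc$, which rearranges after squaring to $\cN^2\cH^2=\varPsi^2\hc^2$;
\item the definition \eqref{cHdef} of the rescaled Hubble function, which in the present notation reads $M_0^2\cH^2=\cN^2+2\varPhi$.
\end{itemize}
Multiplying the first by $M_0^2$ and substituting the second gives
\be
M_0^2\varPsi^2\hc^2 \;=\; \cN^2\bigl(\cN^2+2\varPhi\bigr),
\ee
a single scalar identity on $T\cM_0$ relating the four quantities $\cN,\varPhi,\varPsi,\hc$.

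Next I would read this as a quadratic equation in the unknown $x\eqdef\cN^2\geq 0$:
\be
x^2+2\varPhi\,x-M_0^2\varPsi^2\hc^2=0.
\ee
Since $\varPhi>0$, the product of the two roots is $-M_0^2\varPsi^2\hc^2\leq 0$, so exactly one of the roots is nonnegative and one nonpositive. The nonnegative root is
\be
x=-\varPhi+\sqrt{\varPhi^2+M_0^2\varPsi^2\hc^2},
\ee
and taking the (positive) square root of $x=\cN^2$ yields formula \eqref{Nc}.

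Finally, formula \eqref{Hc} follows by substituting this expression for $\cN^2$ back into $M_0^2\cH^2=\cN^2+2\varPhi$, which gives $M_0^2\cH^2=\varPhi+\sqrt{\varPhi^2+M_0^2\varPsi^2\hc^2}$, and then taking the positive square root (using that $\cH>0$ by definition). There is no substantial obstacle here: the only subtle point is the sign selection when inverting the quadratic, which is forced by the non-negativity of $\cN^2$ together with $\varPhi>0$, and the analogous positivity statement for $\cH$ that follows from $\Phi>0$. The whole argument is purely algebraic and uses nothing beyond the earlier identities \eqref{NH} and \eqref{cHdef}.
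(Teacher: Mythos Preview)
Your argument is correct and is essentially the same algebra as the paper's, just organized slightly differently: the paper routes the computation through the first IR function $\kappa$ by combining \eqref{kappac} with \eqref{Nkappa} and \eqref{kapparels}, whereas you bypass $\kappa$ and eliminate $\cH$ directly using \eqref{NH} and \eqref{cHdef}. Both amount to solving the same quadratic in $\cN^2$, and your explicit sign discussion (nonnegativity of $\cN^2$, positivity of $\cH$) makes the root selection transparent; note also that your final step correctly produces $\cH=\frac{1}{M_0}\sqrt{\varPhi+\sqrt{\varPhi^2+M_0^2\varPsi^2\hc^2}}$, matching the formula used later in the paper (e.g.\ in \eqref{cHomegaeta}).
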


\begin{proof}
Follows by combining \eqref{kappac} with with \eqref{Nkappa} and \eqref{kapparels}.
\end{proof}

\begin{prop}
\label{prop:conscoords}
The pair of functions $(\hc,\theta)$ is a fiberwise coordinate system on
$T\cM_0\setminus L^-$, which we call {\em fiberwise conservative
  coordinates}.
\end{prop}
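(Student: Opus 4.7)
The plan is to reduce the claim to the corresponding statement for the norm–angle coordinates $(\cN,\theta)$, by showing that on each fiber the substitution $\cN \rightsquigarrow \hc$ is a diffeomorphism. Since $\theta$ is already part of a fiberwise coordinate system on $T\cM_0\setminus L^-$, it suffices to understand the map $\cN \mapsto \hc$ on a single fiber $T_m\cM \cap (T\cM_0\setminus L^-)$.

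First I would fix $m\in \cM_0$. On the fiber over $m$, the lifted functions $\varPhi$ and $\varPsi$ take the constant values $\Phi(m)>0$ and $\|\dd\Phi\|(m)>0$ (the latter is positive because $m\in\cM_0$). Plugging into definition \eqref{cdef} one obtains
\be
\hc(u)\;=\;\frac{\cN(u)\,\sqrt{\cN(u)^{2}+2\Phi(m)}}{M_{0}\,\|\dd\Phi\|(m)}\qquad\forall u\in T_m\cM\setminus\{0\}~~,
\ee
so on the fiber $\hc$ is a smooth function of $\cN$ alone. Differentiating in $\cN$ shows the derivative is strictly positive for $\cN>0$, hence $\cN\mapsto\hc$ is a strictly increasing smooth map from $(0,\infty)$ onto $(0,\infty)$. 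Its smooth inverse is given explicitly by formula \eqref{Nc} of Lemma \ref{lemma:NHc}, namely $\cN=\sqrt{-\Phi(m)+\sqrt{\Phi(m)^{2}+M_0^{2}\|\dd\Phi\|(m)^{2}\hc^{2}}}$, which is manifestly smooth in $\hc$. So the restriction of $\cN\mapsto \hc$ to each fiber is a diffeomorphism of $(0,\infty)$ onto itself.

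Next I would combine this with the fact that $(\cN,\theta)$ restricts to a diffeomorphism on each non-empty intersection of $T\cM_0\setminus L^-$ with a fiber of $T\cM$ (obvious from polar-coordinate considerations, since removing $L^-$ removes exactly the locus $\theta=\pi$ together with the zero vector). The fiberwise change of variables $(\cN,\theta)\mapsto(\hc,\theta)$ preserves $\theta$ and substitutes $\cN$ by the fiberwise diffeomorphism above, hence is a diffeomorphism of the fiber images, showing that $(\hc,\theta)$ also restricts to a diffeomorphism onto its image on every such fiber.

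Finally, for the submersion condition on $T\cM_0\setminus L^-$: both $\hc$ and $\theta$ are smooth on this open set (smoothness of $\hc$ follows from definition \eqref{cdef} together with $\varPsi>0$ on $T\cM_0$ and the smoothness of $\cN$ on $\dot T\cM$; smoothness of $\theta$ away from $L^-$ is standard). At any point $u\in T\cM_0\setminus L^-$ the differential of $(\hc,\theta)$ restricted to the fiber through $u$ already has rank $2$ by the previous paragraph, so the total differential has rank $2$, and $(\hc,\theta)$ is a submersion. The only subtlety I anticipate is book-keeping about what is meant by ``on every non-empty intersection with a fiber'' when the fiber over $m\notin \cM_0$ does not meet the open set $T\cM_0\setminus L^-$; this is handled automatically since Definition \ref{def:fiberwise} only requires the condition on non-empty intersections.
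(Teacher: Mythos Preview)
Your proof is correct and follows essentially the same approach as the paper: both arguments rest on the observation that, on each fiber, $\theta$ determines the direction of $u$ while $\hc$ determines $\cN(u)$ via the explicit inverse \eqref{Nc} from Lemma \ref{lemma:NHc}. You have simply spelled out in more detail the verification of Definition \ref{def:fiberwise} (the fiberwise diffeomorphism and the submersion condition), whereas the paper compresses this into a single sentence.
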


\begin{proof}
The values of $\theta(u)$ and $\hc(u)$, determine the direction and
norm of the non-zero tangent vector $u\in \dot{T}\cM_0$ in each fiber
of $\dot{T}\cM_0$, where the second statement follows from equation
\eqref{Nc}.
\end{proof}

\noindent Notice that $\hc$ can be expressed in terms of $\cN$ and $\varPhi$, $\varPsi$ as:
\be
\hc=\frac{1}{M_0\varPsi}\sqrt{(\cN^2+\varPhi)^2-\varPhi^2}=\frac{1}{M_0\Psi}\sqrt{\cN^2(\cN^2+2\varPhi)}~~.
\ee

\begin{cor}
\label{cor:q}
The functions: 
\ben
\label{qdef}
q_1\eqdef \hc \cos\theta~~,~~q_2\eqdef \hc \sin\theta~~.
\een
are fiberwise coordinates on $\dot{T}\cU$, which we call {\em
  fiberwise Cartesian conservative coordinates}.
\end{cor}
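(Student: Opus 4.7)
The plan is to reduce the statement to a polar-to-Cartesian type change of variables performed fiberwise, exploiting the explicit formula for $\hc$ in terms of $\cN$ and the base point to show that the resulting map extends smoothly across the half-line $L^-$ where the chart of Proposition \ref{prop:conscoords} fails.

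First, I would use Proposition \ref{prop:vntau} to write $\cos\theta = v_n/\cN$ and $\sin\theta = v_\tau/\cN$ on $\dot{T}\cM_0$, so that on $\dot T\cU$ one has
\be
q_1 = \frac{\hc}{\cN}\,v_n\,,\qquad q_2 = \frac{\hc}{\cN}\,v_\tau\,.
\ee
The key observation is that by \eqref{cdef} the ratio $\hc/\cN$ equals $\cH/\varPsi = \sqrt{\cN^2+2\varPhi}/(M_0\varPsi)$, which is a smooth, strictly positive function on all of $T\cU$ (including the zero section) because $\varPhi>0$ and $\varPsi>0$ on $\cU$. Consequently $(q_1,q_2)$ is smooth on $T\cU$ as a whole, and in particular on $\dot T\cU$, even though $\theta$ itself is discontinuous across $L^-$.

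Next, I would fix $m\in\cU$ and analyse the restriction to the fiber $T_m\cU$. Using the linear fiberwise coordinates $(v_n,v_\tau)$ from the adapted frame, the restriction takes the form
\be
(v_n,v_\tau)\longmapsto f_m\!\bigl(\sqrt{v_n^2+v_\tau^2}\bigr)\,(v_n,v_\tau)\,,\qquad f_m(r)\eqdef \frac{\sqrt{r^2+2\Phi(m)}}{M_0\,\Psi(m)}\,.
\ee
This is a radial rescaling with smooth positive factor $f_m$. To see it is a diffeomorphism of $\R^2\setminus\{0\}$ onto its image, I would check two things: (i) on each radial ray the scalar map $r\mapsto r f_m(r)=\hc$ is smooth and strictly increasing from $0$ to $+\infty$ (its derivative $f_m(r)+rf_m'(r)$ is strictly positive for $r>0$), and (ii) the Jacobian determinant, which equals $f_m(r)\bigl(f_m(r)+rf_m'(r)\bigr)$, is strictly positive on the punctured plane. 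Together with the direction-preserving nature of the map, (i) yields bijectivity onto $\R^2\setminus\{0\}$ and (ii) yields the local diffeomorphism property, so the fiberwise restriction is a global diffeomorphism to its image.

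Combining these two steps, $(q_1,q_2)\colon \dot T\cU\to\R^2$ is a smooth map whose fiberwise restrictions are diffeomorphisms onto their images, which is exactly the content of Definition \ref{def:fiberwise}; submersivity follows automatically because the fiber Jacobian already has full rank $2$. The main delicate point — and really the whole reason the corollary needs a separate statement rather than being immediate from Proposition \ref{prop:conscoords} — is the behavior across $L^-$: the polar chart $(\hc,\theta)$ has a discontinuity there due to the branch of $\theta\in(-\pi,\pi]$, and one must verify that the Cartesian combinations $\hc\cos\theta$ and $\hc\sin\theta$ genuinely extend smoothly across $L^-$ and remain a local diffeomorphism in a neighborhood of $L^-\cap\dot T\cU$. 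The explicit radial-rescaling representation above handles this neatly, since $\hc/\cN$ is smooth without any angular dependence.
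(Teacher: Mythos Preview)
Your argument is correct. The paper states the corollary without proof, treating it as an immediate consequence of the polar-to-Cartesian passage from Proposition~\ref{prop:conscoords}; your writeup supplies exactly the detail the paper omits, and your identification of the extension across $L^-$ as the one nontrivial point---handled cleanly by rewriting $(q_1,q_2)=(\cH/\varPsi)\,(v_n,v_\tau)$ as a smooth positive radial rescaling---is spot on.

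One minor remark: your argument actually shows that $(q_1,q_2)$ is a fiberwise coordinate system on all of $\dot T\cM_0$, not just $\dot T\cU$; the restriction to $\cU$ in the paper's statement is incidental to the corollary itself and only matters later when these are combined with the base coordinates $(\varPhi,\varPsi)$.
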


\subsubsection{Fiberwise roll-turn coordinates.}

\begin{lemma}
\label{lemma:rtcons}
The following relations hold on $\dot{T}\cM_0$:
\ben
\label{etaomegactheta}
\hetap=1+\frac{\cos\theta}{\hc}~~,~~\homega=-\frac{\sin\theta}{\hc}~~
\een
\ben
\label{pos}
(1-\hetap)^2+\homega^2=\frac{1}{\hc^2}~~
\een
\ben
\label{comegaeta}
\hc=\frac{1}{\sqrt{(1-\hetap)^2+\homega^2}}~~,~~\sin\theta=-\frac{\homega}{\sqrt{(1-\hetap)^2+\homega^2}}~~,~~\cos\theta=-\frac{1-\hetap}{\sqrt{(1-\hetap)^2+\homega^2}}~~.
\een
Moreover, we we have $\sign(\homega)=-\sigma$ and:
\ben
\label{NHoe}
\cN\cH=\frac{\varPsi}{\sqrt{(1-\hetap)^2+\homega^2}}~~.
\een
as well as:
\ben
\label{omegaetatheta}
\homega\cos\theta=(1-\hetap)\sin\theta~~.
\een
\end{lemma}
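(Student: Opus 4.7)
\textbf{Proof plan for Lemma \ref{lemma:rtcons}.} The approach is a direct unpacking of definitions: all the asserted identities follow by combining the expressions for $\hetap$, $\homega$ in Definition \ref{def:FundamentalObservables} with the formulas for $\Phi_T$ and $\Phi_N$ from Proposition \ref{prop:PhiTN} and the identity $\cN\cH = \varPsi\hc$ of Proposition \ref{prop:ckappa}. The only point that requires care is sign bookkeeping when square roots appear, which is handled using that $\hc > 0$ on $\dot T\cM_0$ (a consequence of $\Phi > 0$).

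First, I would substitute $\Phi_T = \varPsi\cos\theta$ and $\Phi_N = -\varPsi\sin\theta$ (Proposition \ref{prop:PhiTN}) into \eqref{etadef} and \eqref{homegadef}, and use $\varPsi/(\cH\cN) = 1/\hc$ coming from \eqref{NH}; this immediately yields \eqref{etaomegactheta}. Then \eqref{pos} is obtained by squaring the two relations in \eqref{etaomegactheta} and adding, using $\cos^2\theta+\sin^2\theta=1$.

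Next, I would invert \eqref{etaomegactheta}. Since $\Phi$ is strictly positive, the function $\hc$ is strictly positive on $\dot T\cM_0$, so taking the positive square root of \eqref{pos} gives the first equation of \eqref{comegaeta}; rewriting $1-\hetap = -\cos\theta/\hc$ and $\homega = -\sin\theta/\hc$ and solving for $\cos\theta$, $\sin\theta$ in terms of $\hetap$, $\homega$ and $\hc$ then gives the remaining two formulas of \eqref{comegaeta}. The sign assertion $\sign(\homega) = -\sigma$ is then read off: since $\hc > 0$ and $\homega = -\sin\theta/\hc$, we have $\sign(\homega) = -\sign(\sin\theta) = -\sigma$ by Definition \ref{def:signed}.

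Finally, \eqref{NHoe} follows by combining the identity $\cN\cH = \varPsi\hc$ of Proposition \ref{prop:ckappa} with the first formula of \eqref{comegaeta}, and \eqref{omegaetatheta} follows by multiplying $1-\hetap = -\cos\theta/\hc$ by $\sin\theta$ and $\homega = -\sin\theta/\hc$ by $\cos\theta$, both giving $-\sin\theta\cos\theta/\hc$. The whole argument is therefore essentially algebraic, and no genuine obstacle arises; if anything, the only thing worth emphasizing is that the strict positivity of $\Phi$ (equivalently, $\hc > 0$ on $\dot T\cM_0$) is precisely what lets one take the square root in \eqref{comegaeta} with a canonical sign.
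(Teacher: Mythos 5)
Your proposal is correct and follows essentially the same route as the paper: derive \eqref{etaomegactheta} from Proposition \ref{prop:PhiTN}, the definitions of $\hetap$ and $\homega$, and the identity $\cN\cH=\varPsi\hc$; square and add to get \eqref{pos}; invert using $\hc>0$ on $\dot{T}\cM_0$ to obtain \eqref{comegaeta}; and read off the sign statement, \eqref{NHoe} and \eqref{omegaetatheta}. No gaps.
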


\begin{remark}
Notice that $\hetap(u)=1$ iff $|\theta(u)|=\frac{\pi}{2}$ i.e. iff the
non-zero vector $u\in \dot{T}\cM_0$ is orthogonal to the unit gradient
vector $n(\pi(u))$. In this case, we have $\homega(u)=\pm
\frac{1}{\hc(u)}\neq 0$.
\end{remark}

\begin{proof}
Relations \eqref{etaomegactheta} follow from Proposition \ref{prop:PhiTN}
and the definitions of $\hetap$ and $\homega$.  These relations
give:
\ben
\label{thetac}
\cos\theta=-(1-\hetap)\hc~~,~~\sin\theta=-\homega \hc~~. 
\een
Using these equations, the identity $1=\cos^2\theta+\sin^2\theta$
implies \eqref{pos} and \eqref{comegaeta}. The second relation in
\eqref{etaomegactheta} gives $\sign(\homega)=-\sigma$, where we remind
the reader that $\sigma=\sign(\sin\theta)$. Relation \eqref{NH} and
the first relation in \eqref{comegaeta} imply \eqref{NHoe}. The last
two relations in \eqref{comegaeta} imply \eqref{omegaetatheta}.
\end{proof}

\begin{prop}
\label{prop:rtcoords}
The pair of functions $(\hetap,\homega)$ is a fiberwise coordinate
system on $\dot{T}\cM_0$ in the sense of Definition
\ref{def:fiberwise}, which we call {\em fiberwise roll-turn
  coordinates}.
\end{prop}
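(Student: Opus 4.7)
The plan is to reduce the claim to the elementary fact that complex inversion $z\mapsto 1/z$ is a diffeomorphism of $\C\setminus\{0\}$ onto itself, using Lemma \ref{lemma:rtcons} as the key algebraic input.

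First I would check smoothness. From \eqref{etadef} and \eqref{homegadef}, $\hetap$ and $\homega$ are rational in the smooth basic observables $\Phi_T,\Phi_N,\cH,\cN$ with denominator $\cH\cN$ which is strictly positive on $\dot{T}\cM_0$ (since $\cH>0$ on $T\cM$ and $\cN>0$ on $\dot{T}\cM$); hence $(\hetap,\homega):\dot{T}\cM_0\to\R^2$ is smooth.

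Next I would verify the fiberwise diffeomorphism condition of Definition \ref{def:fiberwise}. Substituting $q_1 = \hc\cos\theta$ and $q_2 = \hc\sin\theta$ into the identities $\hetap - 1 = \cos\theta/\hc$ and $\homega = -\sin\theta/\hc$ of Lemma \ref{lemma:rtcons} yields
\be
\hetap - 1 \;=\; \frac{q_1}{q_1^2+q_2^2}~~,\qquad \homega \;=\; -\,\frac{q_2}{q_1^2+q_2^2}~~,
\ee
which in the complex variable $z = q_1+iq_2$ becomes the clean identity $(\hetap-1)+i\,\homega = 1/z$, i.e.\ complex inversion. The same radial-rescaling argument that establishes Corollary \ref{cor:q} actually shows that $(q_1, q_2)$ restricts on each fiber of $\dot{T}\cM_0$ to a diffeomorphism onto $\R^2\setminus\{0\}$ --- not merely over $\cU$ --- since the smooth expressions $q_1 = \cN\cH\,\Phi_T/\varPsi^2$ and $q_2 = -\cN\cH\,\Phi_N/\varPsi^2$ obtained from Proposition \ref{prop:PhiTN} are defined and smooth throughout $\dot{T}\cM_0$. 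Composing the fiberwise diffeomorphism $(q_1,q_2)$ with the diffeomorphism $z\mapsto 1/z$ of $\R^2\setminus\{0\}$ then produces a fiberwise diffeomorphism of $\dot{T}_m\cM$ onto $\R^2\setminus\{(1,0)\}$, proving the required property. Submersivity follows automatically from fiberwise surjectivity of the differential.

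The only real subtlety, and the main conceptual point, is sidestepping the discontinuity of $\theta$ along the line subbundle $L^-\subset\dot{T}\cM_0$. The na\"ive route through the polar pair $(\hc,\theta)$ would force a case distinction at $L^-$; by instead passing through the Cartesian pair $(q_1,q_2)$, which is manifestly smooth on all of $\dot{T}\cM_0$, invertibility, smoothness, and the submersion property all reduce transparently to the elementary properties of $z\mapsto 1/z$.
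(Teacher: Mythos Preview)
Your argument is correct and follows the same core idea as the paper: both reduce via Lemma \ref{lemma:rtcons} to the conservative coordinates. The paper's proof is a two-line appeal to Proposition \ref{prop:conscoords}, asserting that $(\hetap,\homega)$ and $(\hc,\theta)$ determine each other; your version is more careful, and in one respect genuinely cleaner. Proposition \ref{prop:conscoords} only establishes $(\hc,\theta)$ as fiberwise coordinates on $T\cM_0\setminus L^-$, so the paper's argument as written leaves the locus $L^-$ implicitly to the reader. By passing instead through the Cartesian pair $(q_1,q_2)$ --- which you correctly observe is smooth on all of $\dot{T}\cM_0$ via $q_1=\cH v_n/\varPsi$, $q_2=\cH v_\tau/\varPsi$ --- and recognizing the transition $(\hetap-1)+i\homega=1/(q_1+iq_2)$ as complex inversion, you handle $L^-$ transparently. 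That inversion identity is exactly what the paper records separately as Proposition \ref{prop:inversion}, so your proof effectively merges the two statements.
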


\begin{proof}
The values of $\hetap(u)$ and $\homega(u)$ determine and are
determined by those of $\theta(u)$ and $\hc(u)$ by Lemma
\ref{lemma:rtcons}, so the conclusion follows from Proposition
\ref{prop:conscoords}.
\end{proof}

\begin{cor}
\label{cor:zeta}
The functions $\zeta\eqdef 1-\hetap$ and $\homega$ give fiberwise coordinates
defined on $\dot{T}\cM_0$, which we call {\em modified roll-turn coordinates}.  
\end{cor}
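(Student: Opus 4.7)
The plan is to deduce this directly from Proposition \ref{prop:rtcoords}, which has already established that $(\hetap,\homega):\dot{T}\cM_0\rightarrow \R^2$ is a fiberwise coordinate system in the sense of Definition \ref{def:fiberwise}. Since $\zeta = 1-\hetap$ and $\homega$ is unchanged, the map $(\zeta,\homega)$ is obtained from $(\hetap,\homega)$ by post-composition with the global diffeomorphism $\Psi:\R^2\rightarrow \R^2$, $\Psi(x,y) \eqdef (1-x, y)$, whose inverse is itself. The composition of a fiberwise coordinate system with a global diffeomorphism of $\R^2$ is again a fiberwise coordinate system, so there is essentially nothing to prove beyond this remark.

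More concretely, I would verify the two conditions of Definition \ref{def:fiberwise}. First, $(\zeta,\homega) = \Psi\circ (\hetap,\homega)$ is smooth since $\Psi$ is smooth and $(\hetap,\homega)$ is smooth on $\dot{T}\cM_0$; it is a submersion since $\Psi$ has everywhere invertible Jacobian (the differential is $\mathrm{diag}(-1,1)$) and $(\hetap,\homega)$ is a submersion by Proposition \ref{prop:rtcoords}. Second, for each $m\in \pi(\dot{T}\cM_0)$, the restriction of $(\hetap,\homega)$ to the fiber $\pi^{-1}(m)\cap \dot{T}\cM_0$ is a diffeomorphism onto its image, and composing with $\Psi$ preserves this property since $\Psi$ is a diffeomorphism of $\R^2$.

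There is no real obstacle here; the only thing worth noting is that the name ``modified roll-turn coordinates'' is justified by the geometric content of the shift: the locus $\zeta=0$ corresponds to $\hetap=1$, which by the remark following Lemma \ref{lemma:rtcons} describes precisely those non-zero tangent vectors orthogonal to the unit gradient $n$, i.e.\ characteristic angle $|\theta|=\frac{\pi}{2}$. Thus $\zeta$ measures the deviation of $\hetap$ from this orthogonality locus, which is a natural reference value in the study of slow-roll and turning dynamics.
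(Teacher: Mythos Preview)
Your proposal is correct and matches the paper's approach: the corollary is stated without proof in the paper precisely because it follows immediately from Proposition~\ref{prop:rtcoords} via the affine change of variable $\zeta=1-\hetap$, which is exactly the observation you make. Your explicit verification of the two conditions in Definition~\ref{def:fiberwise} is more detailed than the paper (which omits the proof entirely), but the underlying reasoning is the same.
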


\noindent The relation between fiberwise roll-turn and conservative
coordinates can be described geometrically as follows.
Recall the Cartesian fiberwise conservative coordinates
of Corollary \ref{cor:q}.

\begin{prop}
\label{prop:inversion}
In any fiber of $\dot{T}\cM_0$, the point with Cartesian conservative
coordinates $(q_1,q_2)$ is the opposite of the inversion of the point
with modified roll-turn coordinates $(\zeta,\homega)$ with respect to the
origin.
\end{prop}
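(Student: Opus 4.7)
The plan is to compute the Cartesian conservative coordinates $(q_1,q_2)$ directly in terms of the modified roll-turn coordinates $(\zeta,\homega)$ using the identities already established in Lemma \ref{lemma:rtcons}, and then to recognize the resulting formula as the composition of circle inversion about the origin with reflection through the origin.

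First I would invoke equations \eqref{thetac} of Lemma \ref{lemma:rtcons}, which after substituting $\zeta=1-\hetap$ read $\cos\theta=-\zeta\hc$ and $\sin\theta=-\homega\hc$. Inserting these into the definition \eqref{qdef} of $(q_1,q_2)$ gives immediately
$$q_1 = -\zeta\,\hc^{2},\qquad q_2 = -\homega\,\hc^{2}.$$
Next I would apply relation \eqref{pos}, rewritten as $\hc^{-2}=\zeta^{2}+\homega^{2}$, to eliminate $\hc$ and conclude
$$q_1 = -\frac{\zeta}{\zeta^{2}+\homega^{2}},\qquad q_2 = -\frac{\homega}{\zeta^{2}+\homega^{2}}.$$
The right-hand sides are precisely $-1$ times the Euclidean inversion $(\zeta,\homega)\mapsto (\zeta,\homega)/(\zeta^{2}+\homega^{2})$ of the point $(\zeta,\homega)$ with respect to the unit circle centered at the origin of the fiber, which is the claimed geometric description. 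The inversion is well defined because $\hc>0$ on $\dot{T}\cM_0$ forces $\zeta^{2}+\homega^{2}=\hc^{-2}>0$, so $(\zeta,\homega)\neq(0,0)$.

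There is no serious obstacle here: the content is essentially a geometric reformulation of Lemma \ref{lemma:rtcons}, and the substantive work has already been done in that lemma. The only point worth making explicit in a clean writeup is that the phrase \emph{inversion with respect to the origin} refers to the Euclidean structure on the fiber for which both $(\zeta,\homega)$ and $(q_1,q_2)$ are orthonormal Cartesian coordinates; this is an unambiguous choice since both coordinate systems on the fiber are intrinsically attached to the adapted frame $(n,\tau)$ of $T\cM_0$, so no additional data is needed to make sense of the inversion.
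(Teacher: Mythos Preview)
Your proof is correct and follows essentially the same approach as the paper: both derive the relations $q_1=-\zeta/(\zeta^2+\homega^2)$ and $q_2=-\homega/(\zeta^2+\homega^2)$ directly from the identities of Lemma~\ref{lemma:rtcons} and then read off the geometric interpretation. The only difference is cosmetic: the paper cites \eqref{comegaeta} in one step, while you split this into \eqref{thetac} followed by \eqref{pos}, which amounts to the same computation.
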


\begin{proof}
The quantities $\hc=\sqrt{q_1^2+q_2^2}$ and $\theta$ are the radial
distance from the origin and polar angle in the $(q_1,q_2)$ plane
while $\sqrt{\zeta^2+\homega^2}$ is the radial distance from the
origin in the $(\zeta,\homega)$ plane.  Relations \eqref{comegaeta}
are equivalent with:
\ben
\label{q12}
q_1=-\frac{\zeta}{\zeta^2+\homega^2}~~,~~q_2=-\frac{\homega}{\zeta^2+\homega^2}~~.
\een
\end{proof}

\begin{remark}
In every fiber of $T\cM_0$, the zero tangent vector corresponds to the
origin of the $(q_1,q_2)$ plane and to the point at infinity in the
$(\zeta,\homega)$ and $(\hetap,\homega)$ planes.
\end{remark}

\begin{prop}
\label{prop:NHkappaomegaeta}
The following relations hold on $\dot{T}\cM_0$:
\ben
\label{speednormomegaeta}
\cN=\sqrt{-\varPhi+\sqrt{\varPhi^2+\frac{M_0^2\varPsi^2}{(1-\hetap)^2+\homega^2}}}~~
\een
\ben
\label{cHomegaeta}
\cH=\frac{1}{M_0}\sqrt{\varPhi+\sqrt{\varPhi^2+\frac{M_0^2\varPsi^2}{(1-\hetap)^2+\homega^2}}}~~.
\een
\ben
\label{kappaomegaeta}
\kappa=\frac{1}{2}\left[-1+\sqrt{1+\frac{4(||\Xi||^\v)^2}{(1-\hetap)^2+\homega^2}}\right]~~.
\een
Moreover, $\kappa,\homega$ and $\hetap$ satisfy:
\ben
\label{Xikappaetaomega}
(1-\hetap)^2+\homega^2=\frac{(||\Xi||^\v)^2}{\kappa(1+\kappa)}~~\mathrm{on}~~\dot{T}\cM_0~~.
\een
\end{prop}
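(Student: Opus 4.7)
The proposition is a direct substitution exercise combining three earlier results: Lemma \ref{lemma:NHc} (which expresses $\cN$ and $\cH$ in terms of $\varPhi$, $\varPsi$ and $\hc$), Lemma \ref{lemma:rtcons} (which expresses $\hc$ through $\hetap$ and $\homega$ via \eqref{pos}), and Proposition \ref{prop:ckappa} (which relates $\hc$ and $\kappa$ through $\Xi$). My plan is simply to chain these identities together; there is no real obstacle.

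First, I would handle \eqref{speednormomegaeta} and \eqref{cHomegaeta}. Equation \eqref{pos} of Lemma \ref{lemma:rtcons} gives
\begin{equation*}
\hc^2 = \frac{1}{(1-\hetap)^2+\homega^2}
\end{equation*}
on $\dot{T}\cM_0$. Substituting this into the formulas \eqref{Nc} and \eqref{Hc} of Lemma \ref{lemma:NHc} yields \eqref{speednormomegaeta} and \eqref{cHomegaeta} immediately.

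Next, for \eqref{kappaomegaeta}, I would use relation \eqref{kappac} of Proposition \ref{prop:ckappa}, namely
\begin{equation*}
\kappa = \tfrac{1}{2}\Bigl[-1+\sqrt{1+4(||\Xi||^\v)^2\hc^2}\Bigr]~~,
\end{equation*}
and substitute the same expression for $\hc^2$ to obtain the claimed formula.

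Finally, for \eqref{Xikappaetaomega}, I would combine \eqref{ckappa} (which gives $\hc^2=\frac{\kappa(1+\kappa)}{(||\Xi||^\v)^2}$) with \eqref{pos} (which gives $(1-\hetap)^2+\homega^2 = \hc^{-2}$). Equating the two expressions for $\hc^{-2}$ gives \eqref{Xikappaetaomega}. Throughout, one uses that $\varPhi > 0$ and $\varPsi > 0$ on $\dot T\cM_0$ (since $\Phi$ is strictly positive and $\cM_0$ excludes $\Crit\Phi$), so all square roots and denominators are well defined.
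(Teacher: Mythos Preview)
Your proposal is correct and follows essentially the same route as the paper: the paper invokes Lemma \ref{lemma:NHc}, equation \eqref{kappac}, and the first identity in \eqref{comegaeta} (equivalent to your use of \eqref{pos}) for the first three relations, and \eqref{ckappa} together with \eqref{pos} for \eqref{Xikappaetaomega}. Your added remark on positivity of $\varPhi$ and $\varPsi$ is a welcome clarification but the argument is otherwise identical.
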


\begin{proof}
The first three relations follow from Lemma \ref{lemma:NHc} and
equation \eqref{kappac} using the first identity in \eqref{comegaeta}
of Lemma \ref{lemma:rtcons}. Relation \eqref{Xikappaetaomega} follows
from \eqref{ckappa} and \eqref{pos}.
\end{proof}

\begin{remark}
Relation \eqref{Xikappaetaomega} shows that fixing $\kappa,\hetap$ and
$\homega$ determines $||\Xi||=\frac{M_0}{2}||\dd\log \Phi||$.  This
gives an Eikonal equation for $\log\Phi$ (with index of refraction
$\frac{2}{M_0}||\Xi||$) on the connected Riemannian manifold
$(\cM,\cG)$, which can be used to determine the homothety class of $\Phi$
when the critical locus is known. 
\end{remark}

\subsubsection{Fiberwise slow roll coordinates.}

Consider the following open subsets of $\dot{T}\cM_0$:
\be
T^\pm\cM_0\eqdef \{u\in \dot{T}\cM_0~\vert~\sigma(u)=\pm 1 \}=\{u\in \dot{T}\cM_0~\vert~\sign\theta(u)=\pm 1\}~~,
\ee
where $\sigma(u)$ is the characteristic signature of $u$ (see
Definition \ref{def:signed}). We have:
\be
T^\pm \cM_0=\{u\in \dot{T}\cM_0~\vert~\sign(q_1(u))=\pm 1\}~~,
\ee
where $(q_1,q_2)$ are the Cartesian conservative coordinates of
Corollary \ref{cor:q}. Moreover:
\be
T^+\cM_0\cup T^-\cM_0=T\cM_0\setminus L~~,
\ee
where $L=L^+\cup L^-\subset T\cM_0$ is the line bundle generated by
the vector field $n$.  The sets $T^+\cM_0$ and $T^-\cM_0$ are the
total spaces of fiber sub-bundle of $T\cM_0$ whose typical fiber is an
open half-plane.

\begin{lemma}
\label{lemma:srcons}
The following relations hold on $T^\lambda\cM_0$ for each $\lambda\in \{-1,1\}$:
\ben
\label{msr1}
\hetap=1+\frac{\cos\theta}{\hc}~~,~~\kappa=\frac{1}{2}\left[-1+\sqrt{1+\frac{M_0^2\varPsi^2}{\varPhi^2} \hc^2}\right]
\een
\ben
\label{msr2}
\hc=\frac{2\varPhi}{M_0\varPsi} \sqrt{\kappa(1+\kappa)}~~,~~\theta=\lambda \arccos\left[\frac{2\varPhi}{M_0\varPsi}(\hetap-1)\sqrt{\kappa(1+\kappa)}\right]~~.
\een
\end{lemma}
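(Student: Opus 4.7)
The plan is to derive all four formulas by assembling previously established identities and then carefully handling the sign in the final $\arccos$ inversion.

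First, I observe that the first identity in \eqref{msr1}, $\hetap = 1 + \frac{\cos\theta}{\hc}$, is not new: it is exactly the first equation of \eqref{etaomegactheta} in Lemma \ref{lemma:rtcons}, which holds on all of $\dot{T}\cM_0$ and in particular on each $T^\lambda\cM_0$. Next, from Definition \ref{def:Xi} we have $\Xi = \frac{M_0\,\dd\Phi}{2\Phi}$, so $\|\Xi\| = \frac{M_0 \Psi}{2\Phi}$ and hence $(\|\Xi\|^\v)^2 = \frac{M_0^2 \varPsi^2}{4\varPhi^2}$ on $T\cM_0$. Substituting this into equation \eqref{kappac} of Proposition \ref{prop:ckappa} immediately produces the second identity of \eqref{msr1}, and substituting the same expression into \eqref{ckappa} of that proposition yields the first identity of \eqref{msr2}, namely $\hc = \frac{2\varPhi}{M_0\varPsi}\sqrt{\kappa(1+\kappa)}$.

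For the last identity in \eqref{msr2}, I would combine the two pieces already obtained. Solving the first relation in \eqref{msr1} for $\cos\theta$ gives $\cos\theta = (\hetap - 1)\hc$, and plugging in the value of $\hc$ just derived yields
\[
\cos\theta \;=\; \frac{2\varPhi}{M_0\varPsi}(\hetap - 1)\sqrt{\kappa(1+\kappa)}.
\]
It remains to invert $\cos$ to recover $\theta$. Since $\arccos$ returns a value in $[0,\pi]$ and $\theta \in (-\pi,\pi]$, we have $\arccos(\cos\theta) = |\theta|$ whenever $\theta \neq \pi$. The defining condition for $T^\lambda\cM_0$ is $\sigma = \lambda \in \{-1,+1\}$, i.e.\ $\mathrm{sign}(\sin\theta) = \lambda$, which excludes $\theta \in \{0,\pi\}$ and forces $\mathrm{sign}(\theta) = \lambda$. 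Therefore $\theta = \lambda|\theta| = \lambda \arccos(\cos\theta)$, which gives the claimed formula.

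The only delicate point is the $\arccos$ inversion: the formula would fail on $L^+ \cup L^-$ (where $\sigma = 0$), and this is precisely why the statement restricts to $T^\lambda\cM_0$ with $\lambda = \pm 1$. Everything else is algebraic substitution, so no additional analytic input beyond Lemmas \ref{lemma:rtcons} and Proposition \ref{prop:ckappa} is required.
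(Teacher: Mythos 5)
Your proof is correct and follows essentially the same route as the paper's: the first relation in \eqref{msr1} is quoted from \eqref{etaomegactheta}, the second and the first relation in \eqref{msr2} come from substituting $\|\Xi\|^\v=\frac{M_0\varPsi}{2\varPhi}$ into \eqref{kappac} and \eqref{ckappa}, and the final formula follows from $\cos\theta=(\hetap-1)\hc$ together with $\sign\theta=\lambda$ on $T^\lambda\cM_0$. Your explicit justification of the $\arccos$ inversion is slightly more detailed than the paper's one-line remark, but the argument is the same.
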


\begin{proof}
The first relation in \eqref{msr1} is the first equation in
\eqref{etaomegactheta}, while the second follows immediately from
\eqref{kappac} and the definition \eqref{Xidef} of $\Xi$.  The first
relation in \eqref{msr2} follows from \eqref{ckappa}. The second relation in
\eqref{msr2} follows from the first relations in \eqref{msr1} and
\eqref{msr2} and the fact that $\sign \theta=\lambda$ on $T^\lambda
\cM_0$.
\end{proof}

\begin{prop}
\label{prop:srcoords}
For any $\lambda\in \{-1,1\}$, the pair of functions $(\kappa,
\hetap)$ is a fiberwise coordinate system on $T^\lambda \cM_0$ in the
sense of Definition \ref{def:fiberwise}, which we call {\em fiberwise
  slow roll coordinates}.
\end{prop}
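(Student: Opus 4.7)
The plan is to derive Proposition \ref{prop:srcoords} from Proposition \ref{prop:conscoords} by showing that on each fiber of $T\cM_0$ intersected with $T^\lambda\cM_0$, the pair $(\kappa,\hetap)$ is obtained from the fiberwise conservative coordinates $(\hc,\theta)$ by a smooth diffeomorphism. Since every $u\in T^\lambda\cM_0$ has $\sigma(u)=\lambda\neq 0$, i.e.\ $\theta(u)\in\lambda\cdot(0,\pi)$, in particular $\theta(u)\neq \pi$, so $T^\lambda\cM_0\subset T\cM_0\setminus L^-$. Thus Proposition \ref{prop:conscoords} already gives $(\hc,\theta)$ as a fiberwise coordinate system on $T^\lambda\cM_0$, and it suffices to exhibit a fiberwise smooth bijection with smooth inverse from $(\hc,\theta)$ to $(\kappa,\hetap)$.

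The key step is to invoke Lemma \ref{lemma:srcons}. Fix $m\in\cM_0$ and restrict to $T_m\cM_0\cap T^\lambda\cM_0$, on which $\varPhi$ and $\varPsi$ take constant positive values. The formula
\be
\hc=\frac{2\varPhi(m)}{M_0\varPsi(m)}\sqrt{\kappa(1+\kappa)}
\ee
is a smooth, strictly increasing bijection between $\kappa\in(0,\infty)$ and $\hc\in(0,\infty)$, so $\hc$ and $\kappa$ determine each other smoothly on the fiber. At fixed $\hc$ (equivalently fixed $\kappa$), the relation $\hetap=1+\cos\theta/\hc$ shows $\hetap$ depends only on $\cos\theta$. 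Since $\theta$ varies over the open interval $\lambda\cdot(0,\pi)$ on $T^\lambda\cM_0$, the restriction of $\cos\theta$ to this interval is a smooth bijection onto $(-1,1)$, with smooth inverse given by $\theta=\lambda\arccos((\hetap-1)\hc)$; this inverse is smooth precisely because its argument lies in the open interval $(-1,1)$ where $\arccos$ is smooth. Composing these two steps produces a smooth bijection $(\hc,\theta)\leftrightarrow(\kappa,\hetap)$ with smooth inverse on the fiber.

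The submersion condition of Definition \ref{def:fiberwise} is then automatic: the fiberwise diffeomorphism property forces the differential of $(\kappa,\hetap)$ restricted to the vertical subspace $V_u$ at each $u\in T^\lambda\cM_0$ to be a linear isomorphism onto $\R^2$, so the full differential at $u$ is \emph{a fortiori} surjective. The only mildly delicate point, which I expect to be the main thing to check carefully, is ensuring that the branch of $\arccos$ gives back the correct $\theta$; this is exactly where the sign $\lambda\in\{-1,+1\}$ in the definition of $T^\lambda\cM_0$ enters and why the statement is phrased for each $\lambda$ separately rather than on $T\cM_0\setminus L$. Once this branch choice is made, the image of the fiber in $\R^2$ is the open set $\{(\kappa,\hetap):\kappa>0,\ (\hetap-1)^2\hc(\kappa)^2<1\}$, and the restriction is a diffeomorphism onto this image, proving the proposition.
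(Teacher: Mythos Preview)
Your proof is correct and follows exactly the approach indicated in the paper, which simply says the result ``follows immediately from Lemma \ref{lemma:srcons} and Proposition \ref{prop:conscoords}.'' You have filled in the details of that immediate step---namely, that $(\hc,\theta)\mapsto(\kappa,\hetap)$ is a fiberwise diffeomorphism on $T^\lambda\cM_0$---more carefully than the paper does, including the verification that the $\arccos$ branch is well-defined and smooth on the relevant range.
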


\begin{proof}
Follows immediately from Lemma \ref{lemma:srcons} and Proposition \ref{prop:conscoords}.
\end{proof}

\begin{remark}
Recall the Cartesian conservative coordinates of Corollary
\ref{cor:q}. Relations \eqref{msr1} can be written as:
\be
\hetap=1+\frac{q_1}{q_1^2+q_2^2}~~,~~\kappa=\frac{1}{2}\left[-1+\sqrt{1+\frac{M_0^2\varPsi^2}{\varPhi^2} (q_1^2+q_2^2)}\right]
\ee
and define a branched double covering map $\R^2\rightarrow \R^2$ which is
invariant under the transformation $(q_1,q_2)\rightarrow (q_1,-q_2)$.
The ramification set is the line with equation $q_2=0$, while the
branching locus is the line with equation $\eta=1$. In each fiber of $T\cM_0$,
the ramification set corresponds to the intersection of $L$ with that fiber, whose
complement in the fiber is the union of the corresponding fibers of $T^\pm \cM_0$. 
\end{remark}

\begin{cor}
\label{cor:msr}
Let
\ben
\label{hdef}
h\eqdef 2\sqrt{\kappa(1+\kappa)}:T\cM\rightarrow \R_{\geq 0}~~.
\een  
For all $\lambda\in \{-1,1\}$, the pair $(h,\hetap)$ is a system
of fiberwise coordinates on $T^\lambda\cM_0$, which we call {\em
  modified slow roll coordinates}.
\end{cor}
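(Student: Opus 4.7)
The plan is to deduce this corollary directly from Proposition \ref{prop:srcoords} by checking that the change of variable $\kappa \mapsto h = 2\sqrt{\kappa(1+\kappa)}$ is a smooth diffeomorphism on the appropriate range. Since $(\kappa,\hetap)$ is already a fiberwise coordinate system on $T^\lambda\cM_0$, any map $(\kappa,\hetap)\mapsto (h(\kappa),\hetap)$ where $h$ is a fiber-preserving diffeomorphism of the image will again be a fiberwise coordinate system.

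First, I would fix a point $m\in \cM_0$ and study the restriction of $\kappa$ to the fiber of $T^\lambda\cM_0$ over $m$. Since $T^\lambda\cM_0 \subset \dot{T}\cM_0$ excludes the zero section and $\kappa(u) = \frac{\|u\|^2}{2\Phi(\pi(u))}$ is strictly positive whenever $u\neq 0$ (recall $\Phi>0$), the function $\kappa$ takes values in $\R_{>0}$ on every non-empty intersection of $T^\lambda\cM_0$ with a fiber of $T\cM_0$. So by Proposition \ref{prop:srcoords}, the image of the fiberwise coordinate $(\kappa,\hetap)$ restricted to any such fiber is a subset of $\R_{>0}\times \R$.

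Next, I would analyze the auxiliary scalar function $f:\R_{>0}\to \R_{>0}$ given by $f(k)\eqdef 2\sqrt{k(1+k)}$. It is smooth on $\R_{>0}$, with derivative $f'(k) = \frac{1+2k}{\sqrt{k(1+k)}}>0$, so $f$ is strictly increasing. Since $\lim_{k\to 0^+} f(k) = 0$ and $\lim_{k\to \infty} f(k) = +\infty$, the map $f$ is a smooth diffeomorphism from $\R_{>0}$ onto $\R_{>0}$, with smooth inverse $f^{-1}(h) = \frac{1}{2}\bigl(-1 + \sqrt{1+h^2}\bigr)$.

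It follows that the map $\Psi:(k,e)\mapsto (f(k),e)$ is a diffeomorphism from $\R_{>0}\times \R$ onto itself. Composing the fiberwise coordinate system $(\kappa,\hetap):T^\lambda\cM_0\to \R_{>0}\times\R$ with $\Psi$ yields $(h,\hetap)$, which therefore restricts to a diffeomorphism from $T^\lambda\cM_0 \cap \pi^{-1}(m)$ onto its image on every fiber. This is exactly the defining property of a fiberwise coordinate system (Definition \ref{def:fiberwise}), which completes the proof. There is no real obstacle; the only subtle point is recalling that the exclusion of the zero section in $\dot{T}\cM_0$ is precisely what makes $f$ a diffeomorphism on the relevant range (the square root would fail to be smooth at $k=0$).
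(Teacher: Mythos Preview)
Your proof is correct and follows essentially the same approach as the paper: the paper simply notes that $\kappa>0$ on $\dot{T}\cM_0$, writes down the inverse $\kappa=\frac{1}{2}\bigl(-1+\sqrt{1+h^2}\bigr)$, and invokes Proposition \ref{prop:srcoords}. You have supplied the details (monotonicity, limits, smoothness of the inverse) that the paper leaves implicit. One cosmetic remark: the symbol $\Psi$ is already used in the paper for $\|\dd\Phi\|$, so you may want to rename the auxiliary diffeomorphism $(k,e)\mapsto(f(k),e)$.
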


\begin{proof}
Since $\kappa$ is positive on $\dot{T}\cM_0$, relation \eqref{hdef}
determines $\kappa$ in terms of $h$ as:
\be
\kappa=\frac{1}{2}\left[-1+\sqrt{1+h^2}\right]~~.
\ee
The conclusion follows using Proposition \ref{prop:srcoords}.
\end{proof}

\section{Natural phase space coordinates}
\label{sec:phase}

Recall the notion of {\em special local coordinate system} on $T\cM$
introduced in Definition \ref{def:special}. In Sections
\ref{sec:adapted} and \ref{sec:fibcoords}, we showed that each of the
pairs $(v_n,v_\tau)$, $(\cN,\theta)$, $(\hc,\theta)$,
$(\hetap,\homega)$, $(\kappa,\hetap)$ and $(h,\hetap)$ is a system
fiberwise coordinates on an open submanifold of $T\cM_0$.

\begin{definition}
A special phase space coordinate system $(x^1,x^2,x^3,x^4)$ defined on
an open subset of $T\cU$ is called {\em natural} if
$(x^3,x^4)=(\varPhi,\varPsi)$.
\end{definition}

\noindent A natural coordinate system is completely specified by its
two fiberwise coordinates. In the following, we consider the
following natural phase space local coordinate systems:

\begin{itemize}
\item The {\em phase space adapted coordinates}
  $(v_n,v_\tau,\varPhi,\varPsi)$ on neighborhoods of $T\cU$, which
  already appeared in Section \ref{sec:adapted}. These coordinates are
  fiberwise linear.
\item The {\em norm-angle coordinates} $(\cN,\theta,\varPhi,\varPsi)$
  on neighborhoods of $\dot{T}\cU$
\item The {\em conservative coordinates} $(c,\theta,\varPhi,\varPsi)$
  on neighborhoods of $\dot{T}\cU$
\item The {\em roll-turn coordinates}
  $(\hetap,\homega,\varPhi,\varPsi)$ on neighborhoods of $\dot{T}\cU$
\item The {\em slow roll coordinates}
  $(\kappa,\hetap,\varPhi,\varPsi)$ on neighborhoods of $T^\pm\cU$
\item The {\em modified slow roll coordinates}
  $(h,\hetap,\varPhi,\varPsi)$ on neighborhoods of $T^\pm\cU$.
\end{itemize}

\noindent The cosmological semispray in adapted phase space
coordinates was given in Corollary \ref{cor:Sadapted}. Below, we give
the components of $S$ in norm-angle, conservative, roll-turn and
modified slow roll coordinates. These follow by applying the corresponding
changes of variables to components of $S$ in adapted coordinates. 

\subsection{The cosmological semispray in norm-angle coordinates}

\begin{prop}
The cosmological semispray has the following form in norm-angle coordinates:
\be
S=S^{\cN}\frac{\pd}{\pd \cN}+S^{\theta}\frac{\pd}{\pd \theta}+S^{\varPhi}\frac{\pd}{\pd \varPhi}+S^\varPsi\frac{\pd}{\pd\varPsi}~~,
\ee
where:
\beqan
&& S^\cN=-\varPsi  \cos\theta-\frac{\cN \sqrt{\cN^2+2 \varPhi }}{M_0}\nn\\
&& S^\theta=\frac{\varPsi  \sin\theta}{\cN}-\mu \cN \cos \theta -\lambda \cN \sin\theta\nn\\  
&& S^\varPhi=\cN \varPsi  \cos\theta\\
&& S^\varPsi= \frac{1}{2} \cN (\mu \varPsi  \sin\theta +\Phi_{nn}^\v \cos\theta )~~.\nn
\eeqan
\end{prop}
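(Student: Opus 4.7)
The plan is to obtain the expression for $S$ in norm-angle coordinates $(\cN,\theta,\varPhi,\varPsi)$ by a direct change of variables from the adapted phase space coordinates $(v_n,v_\tau,\varPhi,\varPsi)$, for which the components of $S$ were computed in Corollary~\ref{cor:Sadapted}. The fiberwise part of the change of variables is the polar change given by Proposition~\ref{prop:vntau}, namely $v_n=\cN\cos\theta$ and $v_\tau=\cN\sin\theta$, while the base coordinates $\varPhi,\varPsi$ are unchanged. Thus only $S^{\cN}$ and $S^\theta$ require computation: $S^\varPhi$ and $S^\varPsi$ are obtained immediately from Corollary~\ref{cor:Sadapted} by substituting $v_n=\cN\cos\theta$, $v_\tau=\cN\sin\theta$, producing $S^\varPhi=\cN\varPsi\cos\theta$ and $S^\varPsi=\tfrac{1}{2}\cN(\mu^\v\varPsi\sin\theta+\Phi_{nn}^\v\cos\theta)$ as claimed.

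For the transformed fiberwise components I will apply the chain rule. Since $\cN=\sqrt{v_n^2+v_\tau^2}$ and $\theta=\arctan(v_\tau/v_n)$ (on $T\cM_0\setminus L^-$), one has
\[
\frac{\partial \cN}{\partial v_n}=\cos\theta,\quad \frac{\partial \cN}{\partial v_\tau}=\sin\theta,\quad \frac{\partial \theta}{\partial v_n}=-\frac{\sin\theta}{\cN},\quad \frac{\partial \theta}{\partial v_\tau}=\frac{\cos\theta}{\cN},
\]
so that $S^{\cN}=\cos\theta\, S^{v_n}+\sin\theta\, S^{v_\tau}$ and $S^{\theta}=-\tfrac{\sin\theta}{\cN}S^{v_n}+\tfrac{\cos\theta}{\cN}S^{v_\tau}$. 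Substituting the expressions from Corollary~\ref{cor:Sadapted} and using $v_n^2+v_\tau^2=\cN^2$, the terms proportional to $\lambda^\v$ and $\mu^\v$ in $S^\cN$ cancel via $\sin^2\theta+\cos^2\theta=1$, while those proportional to $\tfrac{1}{M_0}\sqrt{\cN^2+2\varPhi}\,\cN$ combine into a single term, and the contribution of $-\varPsi$ to $S^{v_n}$ produces the term $-\varPsi\cos\theta$. A parallel calculation for $S^\theta$ yields the remaining cross-terms: the friction term contributions cancel, while those from the gradient force produce $\varPsi\sin\theta/\cN$, and the geometric terms collapse to $-\mu^\v\cN\cos\theta-\lambda^\v\cN\sin\theta$.

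There is no real obstacle: the argument reduces to a trigonometric bookkeeping exercise. The main thing to be careful about is the sign conventions coming from $\tau^\v=N_{n^\v}$ and from the orientation of $\theta$ as the signed angle from $n^\v$ to $T$ (Definition~\ref{def:signed}), which ensures that the decomposition $v_n=\cN\cos\theta$, $v_\tau=\cN\sin\theta$ holds with the signs used above. Once this is fixed, the four displayed equations for $S^{\cN}$, $S^\theta$, $S^\varPhi$, $S^\varPsi$ follow by direct substitution and simplification.
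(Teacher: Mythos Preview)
Your proposal is correct and follows essentially the same route as the paper: the paper states just before this subsection that all these formulas ``follow by applying the corresponding changes of variables to components of $S$ in adapted coordinates,'' and your polar change $v_n=\cN\cos\theta$, $v_\tau=\cN\sin\theta$ applied to Corollary~\ref{cor:Sadapted} is exactly that computation. (The paper's one-line proof cites ``Proposition~\ref{prop:Scan},'' which appears to be a mislabeled reference to Corollary~\ref{cor:Sadapted}.)
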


\begin{proof}
Follows from Proposition \ref{prop:Scan} using the relations above.
\end{proof}

\subsection{The cosmological semispray in conservative coordinates}

\begin{prop}
\label{prop:Scons}
The restriction of the cosmological semispray to $\dot{T}\cU$ has the following form in conservative coordinates:
\be
S=S^{c}\frac{\pd}{\pd c}+S^{\theta}\frac{\pd}{\pd \theta}+S^{\varPhi}\frac{\pd}{\pd \varPhi}+S^\varPsi\frac{\pd}{\pd\varPsi}~~,
\ee
where:
{\scriptsize \beqan
&& S^c=-\frac{2}{M_0^2\varPsi} \sqrt{\left(\varPhi^2+M_0^2 \varPsi^2 \hc^2\right)
    \left(-\varPhi +\sqrt{\varPhi^2+ M_0^2 \varPsi^2\hc^2 }\right)}\nn\\
&&- \frac{1}{2M_0\varPsi^2}\sqrt{\varPhi +\sqrt{\varPhi^2+M_0^2 \varPsi^2 \hc^2
      }} \left[2 \varPsi ^2 \cos\theta +\left(-\varPhi +\sqrt{\varPhi ^2+M_0^2\varPsi^2 \hc^2 }\right) (\mu \varPsi
  \sin\theta +\Phi_{nn}^\v\cos\theta )\right]\nn\\
&& S^\theta=\frac{\varPsi \sin\theta}{\sqrt{-\varPhi
    +\sqrt{\varPhi^2+M_0^2 \varPsi^2 \hc^2}}}-(\lambda \sin\theta +\mu\cos\theta)\sqrt{-\varPhi +\sqrt{\varPhi^2+M_0^2 \varPsi^2 \hc^2}} \nn\\
&& S^\varPhi=\varPsi  \cos\theta \sqrt{-\varPhi +\sqrt{\varPhi^2+M_0^2 \varPsi ^2 \hc^2}}\\
&& S^\varPsi=\frac{1}{2}(\mu \varPsi  \sin\theta +\Phi_{nn}^\v \cos\theta) \sqrt{-\varPhi +\sqrt{\varPhi^2+M_0^2 \varPsi ^2 \hc^2}} \nn~~.
\eeqan}
\end{prop}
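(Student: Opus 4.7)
The plan is to derive Proposition \ref{prop:Scons} by a straightforward change of coordinates from the norm--angle form of $S$ established in the previous proposition. Since $(\hc,\theta,\varPhi,\varPsi)$ and $(\cN,\theta,\varPhi,\varPsi)$ share three of their four coordinates, only the radial component changes and the three other components of $S$ in conservative coordinates are obtained by substituting $\cN=\cN(\hc,\varPhi,\varPsi)$ from Lemma \ref{lemma:NHc} into the corresponding norm--angle expressions.

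First I would recall that
\be
\cN=\sqrt{-\varPhi+\sqrt{\varPhi^2+M_0^2\varPsi^2\hc^2}}
\ee
by \eqref{Nc}, so that $S^\theta$, $S^\varPhi$ and $S^\varPsi$ are obtained from the norm--angle formulas by literal substitution of this expression for $\cN$; this yields exactly the last three lines of the stated formula. Next, to compute $S^\hc$ I would view $\hc$ as the composition
\be
\hc=\frac{\cN\sqrt{\cN^2+2\varPhi}}{M_0\varPsi}
\ee
(equivalently $\hc=\cN\cH/\varPsi$ from \eqref{NH}) and apply $S$ as a derivation using the chain rule
\be
S^\hc=\frac{\partial\hc}{\partial\cN}S^\cN+\frac{\partial\hc}{\partial\varPhi}S^\varPhi+\frac{\partial\hc}{\partial\varPsi}S^\varPsi,
\ee
where the four partials and the three right-hand components are all known explicitly in norm--angle coordinates.

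The key routine step is then simplification. I would compute $\partial\hc/\partial\cN=(2\cN^2+2\varPhi)/(M_0\varPsi\sqrt{\cN^2+2\varPhi})$, $\partial\hc/\partial\varPhi=\cN/(M_0\varPsi\sqrt{\cN^2+2\varPhi})$ and $\partial\hc/\partial\varPsi=-\hc/\varPsi$, plug in the norm--angle formulas for $S^\cN$, $S^\varPhi$, $S^\varPsi$, and collect terms. The dissipative contribution $-\cN\sqrt{\cN^2+2\varPhi}/M_0$ inside $S^\cN$ combines with the $\cN^2+\varPhi$ factor from $\partial\hc/\partial\cN$ to produce the first summand of the stated $S^\hc$, while the $\varPsi\cos\theta$ contributions from the first two partial derivatives combine to give the $2\varPsi^2\cos\theta$ term. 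The remaining $(\mu\varPsi\sin\theta+\Phi_{nn}^\v\cos\theta)$ piece, coming from $S^\varPsi$, is carried along with the coefficient $-\hc/\varPsi$ and then rewritten in terms of $\varPhi,\varPsi,\hc$ using $\cN^2=-\varPhi+\sqrt{\varPhi^2+M_0^2\varPsi^2\hc^2}$. Finally one substitutes this same expression for $\cN^2$ and for $\cN\sqrt{\cN^2+2\varPhi}=M_0\varPsi\hc$ throughout to obtain the closed form displayed in the statement.

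The main obstacle is purely computational bookkeeping: the formula for $S^\hc$ involves two nested square roots, and the cancellations leading to the compact form displayed in the proposition (in particular the grouping of the first term as $-(2/M_0^2\varPsi)$ times $\sqrt{(\varPhi^2+M_0^2\varPsi^2\hc^2)(-\varPhi+\sqrt{\varPhi^2+M_0^2\varPsi^2\hc^2})}$) must be carried out carefully. No new geometric input is needed; the content is entirely a chain-rule computation from the previous proposition together with Lemma \ref{lemma:NHc}.
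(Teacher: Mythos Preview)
Your proposal is correct and follows essentially the same approach as the paper, which states at the start of Section~\ref{sec:phase} that all these formulas ``follow by applying the corresponding changes of variables to components of $S$ in adapted coordinates.'' Your choice to pass through the norm--angle form rather than directly from adapted coordinates is a minor (and sensible) shortcut, since $(\hc,\theta,\varPhi,\varPsi)$ and $(\cN,\theta,\varPhi,\varPsi)$ share three coordinates, but the content is the same chain-rule computation with no new geometric input.
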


\subsection{The cosmological semispray in roll-turn coordinates}

\begin{prop}
\label{prop:Srt}
The restriction of the cosmological semispray to $\dot{T}\cU$ has the following form in roll-turn coordinates:
\be
S=S^{\hetap}\frac{\pd}{\pd \hetap}+S^{\homega}\frac{\pd}{\pd \homega}+S^{\varPhi}\frac{\pd}{\pd \varPhi}+S^\varPsi\frac{\pd}{\pd\varPsi}~~,
\ee
where:
\!\!\!\!{\scriptsize \beqan
&&\!\!\!\!\!\!\!\!\!\!\!\!\!\!S^{\hetap}\!=\!-\frac{\varPsi\homega^2}{\sqrt{(1-\hetap)^2+\homega^2}\sqrt{-\varPhi+\sqrt{\varPhi^2+\frac{M_0^2\varPsi^2}{(1-\hetap)^2+\homega^2}}}}+
\homega[\lambda\homega+(1-\hetap)\mu]\sqrt{\frac{-\varPhi+\sqrt{\varPhi^2+\frac{M_0^2\varPsi^2}{(1-\hetap)^2+\homega^2}}}{(1-\hetap)^2+\homega^2}}
\nn\\
&&+4|1-\hetap|\varPsi \sqrt{(1-\hetap)^2+\homega^2} \sqrt{\varPhi^2+\frac{M_0^2\varPsi^2}{(1-\hetap)^2+\homega^2}}\sqrt{-\varPhi+\sqrt{\varPhi^2+\frac{M_0^2\varPsi^2}{(1-\hetap)^2+\homega^2}}}\nn\\
&&+M_0\sqrt{\varPhi+\sqrt{\varPhi^2+\frac{M_0^2\varPsi^2}{(1-\hetap)^2+\homega^2}}} \left[2 (1-\hetap)^2\varPsi^2-(1-\hetap)(\mu\homega\varPsi +(1-\hetap)\Phi_{nn}^\v)\left(
  \varPhi-\sqrt{\varPhi^2+\frac{M_0^2\varPsi^2}{(1-\hetap)^2+\homega^2}}\right)\right]\nn\\
&&\!\!\!\!\!\!\!\!\!\!\!\!\!\!S^{\homega}\!=\!-2\sign(1-\hetap) \frac{\homega}{M_0^2\varPsi}\sqrt{(1-\hetap)^2+\homega^2}\sqrt{\varPhi^2+\frac{M_0^2\varPsi^2}{(1-\hetap)^2+\homega^2}}
\sqrt{-\varPhi+\sqrt{\varPhi^2+\frac{M_0^2\varPsi^2}{(1-\hetap)^2+\homega^2}}}\nn\\
&& -\frac{(1-\hetap)\homega\varPsi}{\sqrt{(1-\hetap)^2+\homega^2}\sqrt{-\varPhi+\sqrt{\varPhi^2+\frac{M_0^2\varPsi^2}{(1-\hetap)^2+\homega^2}}}}+(1-\hetap)(\lambda\homega+(1-\hetap)\mu)
\sqrt{\frac{-\varPhi+\sqrt{\varPhi^2+\frac{M_0^2\varPsi^2}{(1-\hetap)^2+\homega^2}}}{(1-\hetap)^2+\homega^2}}\nn\\
&&+\frac{\homega(\mu\homega\varPsi+(1-\hetap)\Phi_{nn}^\v)}{2M_0\varPsi^2}\sqrt{\varPhi+\sqrt{\varPhi^2+\frac{M_0^2\varPsi^2}{(1-\hetap)^2+\homega^2}}}
\left(\sqrt{\varPhi^2+\frac{M_0^2\varPsi^2}{(1-\hetap)^2+\homega^2}}-\varPhi-2(1-\hetap)\varPsi^2\right)\nn\\
&&\!\!\!\!\!\!\!\!\!\!\!\!\!\!S^\varPhi\!=\!|1-\hetap|\varPsi\sqrt{\frac{-\varPhi+\sqrt{\varPhi^2+\frac{M_0^2\varPsi^2}{(1-\hetap)^2+\homega^2}}}{(1-\hetap)^2+\homega^2}} \nn\\
&&\!\!\!\!\!\!\!\!\!\!\!\!\!\!S^\varPsi\!=\!\frac{1}{2}\sign(1-\hetap)[\mu\homega\varPsi+(1-\hetap)\Phi_{nn}^\v]\sqrt{\frac{-\varPhi+\sqrt{\varPhi^2+\frac{M_0^2\varPsi^2}{(1-\hetap)^2+\homega^2}}}{(1-\hetap)^2+\homega^2}}~~.
\eeqan}
\end{prop}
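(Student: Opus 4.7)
My plan is to derive the formulas by a routine but bookkeeping-heavy change of variables from conservative coordinates $(\hc,\theta,\varPhi,\varPsi)$ to roll-turn coordinates $(\hetap,\homega,\varPhi,\varPsi)$, starting from Proposition \ref{prop:Scons}. Since the last two coordinates are common to both systems, no chain rule is required for $S^\varPhi$ and $S^\varPsi$: it suffices to re-express the factors $\cos\theta$, $\sin\theta$, and the auxiliary $\cN$-factor using Lemma \ref{lemma:rtcons} and Proposition \ref{prop:NHkappaomegaeta}. The first two components are obtained by the chain rule applied to the inverse transition $\hetap = 1+\cos\theta/\hc$, $\homega = -\sin\theta/\hc$.

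Concretely, computing the Jacobian of $(\hc,\theta)\mapsto(\hetap,\homega)$ gives
\begin{equation*}
\frac{\partial\hetap}{\partial\hc}=-\frac{\cos\theta}{\hc^{2}},\quad \frac{\partial\hetap}{\partial\theta}=-\frac{\sin\theta}{\hc},\quad \frac{\partial\homega}{\partial\hc}=\frac{\sin\theta}{\hc^{2}},\quad \frac{\partial\homega}{\partial\theta}=-\frac{\cos\theta}{\hc},
\end{equation*}
so that
\begin{equation*}
S^{\hetap}=-\frac{\cos\theta}{\hc^{2}}\,S^{\hc}-\frac{\sin\theta}{\hc}\,S^{\theta},\qquad S^{\homega}=\frac{\sin\theta}{\hc^{2}}\,S^{\hc}-\frac{\cos\theta}{\hc}\,S^{\theta}.
\end{equation*}
I would then substitute the expressions for $S^{\hc}$ and $S^{\theta}$ from Proposition \ref{prop:Scons}, replace $\cos\theta,\sin\theta,\hc$ using the identities $\cos\theta=-(1-\hetap)/\sqrt{(1-\hetap)^{2}+\homega^{2}}$, $\sin\theta=-\homega/\sqrt{(1-\hetap)^{2}+\homega^{2}}$ and $\hc^{2}=1/[(1-\hetap)^{2}+\homega^{2}]$ from Lemma \ref{lemma:rtcons}, and finally simplify using the norm identity $\cN=\sqrt{-\varPhi+\sqrt{\varPhi^{2}+M_{0}^{2}\varPsi^{2}/[(1-\hetap)^{2}+\homega^{2}]}}$ from Proposition \ref{prop:NHkappaomegaeta}. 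For $S^{\varPhi}$ and $S^{\varPsi}$ the same substitutions produce the stated expressions directly.

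The only conceptual subtlety I anticipate is sign management: the relation $\cos\theta=-(1-\hetap)\hc$ combined with $\hc>0$ forces $\cos\theta$ to carry the sign $-\sign(1-\hetap)$, which is the origin of the factors $|1-\hetap|=-(1-\hetap)\sign(1-\hetap)$ and $\sign(1-\hetap)$ in the stated formulas. In particular, whenever a term involves $\cos\theta$ multiplied by an overall $\sqrt{(1-\hetap)^{2}+\homega^{2}}$ that has been absorbed into the denominator, one must write $\cos\theta \cdot \sqrt{(1-\hetap)^{2}+\homega^{2}} = -(1-\hetap) = -\sign(1-\hetap)\,|1-\hetap|$ to match the form of the claim. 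The terms $S^{\hc}$ of Proposition \ref{prop:Scons} contribute mixed combinations of $\cos\theta$ and $\sin\theta$, each of which must be tracked through this sign convention.

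The main obstacle is therefore not conceptual but computational: one must carefully collect the many radical expressions, factor out common $\sqrt{-\varPhi+\sqrt{\varPhi^{2}+M_{0}^{2}\varPsi^{2}/[(1-\hetap)^{2}+\homega^{2}]}}$ factors, and verify that the terms involving $\lambda^{\v}$, $\mu^{\v}$, $\Phi_{nn}^{\v}$ and $\varPsi^{2}$ arrange themselves into the groupings displayed in the proposition. Since everything reduces to finite rational-algebraic manipulations of the previously established formulas, the derivation is mechanical once the change-of-variables formulas are in place; no new geometric input is required beyond the identities already proved in Sections \ref{sec:adapted} and \ref{sec:fibcoords}.
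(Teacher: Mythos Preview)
Your proposal is correct and uses essentially the same approach as the paper, which states that all of the semispray formulas in Section~\ref{sec:phase} ``follow by applying the corresponding changes of variables to components of $S$ in adapted coordinates.'' The only difference is that you choose the conservative coordinates of Proposition~\ref{prop:Scons} as your starting point rather than the adapted coordinates of Corollary~\ref{cor:Sadapted}; since the former were themselves obtained from the latter by a change of variables, this amounts to factoring the same transformation through an intermediate step, and the explicit Jacobian you write down together with the identities from Lemma~\ref{lemma:rtcons} is exactly what is needed.
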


\subsection{The cosmological semispray in modified slow roll coordinates}

\begin{prop}
\label{prop:Scan}
The restriction of the cosmological semispray to $T^+\cU$ has the following form in modified slow roll coordinates:
\be
S=S^{\hetap}\frac{\pd}{\pd \hetap}+S^{h}\frac{\pd}{\pd h}+S^{\varPhi}\frac{\pd}{\pd \varPhi}+S^\varPsi\frac{\pd}{\pd\varPsi}~~,
\ee
where:
{\scriptsize \beqan
&&\!\!\!\!\!\!\!\!\!\!\!\!\!\!S^{\hetap}=-\frac{1}{2 M_0 h \varPhi^4\varPsi^2 \sqrt{-1+\sqrt{1+h^2 \varPsi^4}}}\Big[2 M_0^2\varPhi^{5/2}\varPsi^2 +2\varPhi^{3/2}\varPsi\left(-1+\sqrt{1+h^2 \varPsi ^4}\right)
    \left(h^2(1-\hetap)^2 \varPsi^2-M_0^2 \varPhi^2\right) \lambda- \nn\\
&& -(1-\hetap)\left(-4\varPhi^{5/2}-4h^2\sqrt{\varPhi}(-1+\hetap +\varPhi^2)\varPsi^4+4 \sqrt{\varPhi^5(1+h^2 \varPsi^4)}\right)+\nn\\
&& +3(1-\hetap) h\varPsi^2\left(-\varPhi^{3/2} \sqrt{-h^2(-1+\hetap)^2\varPsi^2+M_0^2\varPhi^2}+\sqrt{\varPhi^3 (1+h^2 \varPsi^4)(-h^2 (-1+\hetap)^2 \varPsi^2+M_0^2 \varPhi^2)}\right) \mu -\nn\\
&& -h^2\varPsi^2(-1+\hetap)^2 \left(-\varPhi^{3/2}+\sqrt{\varPhi^3(1+h^2\varPsi^4)}\right) \Phi_{nn}^\v \Big]\nn
\eeqan
\beqan  
&&\!\!\!\!\!\!\!\!\!\!\!\!\!\!\!S^h\!=\!\frac{\varPsi^2}{2M_0\sqrt{\frac{ \varPhi^3 (1+\sqrt{1+h^2})(1+h^2\varPsi^4)}{1+h^2}}}
\Big[-4\varPhi ^2\sqrt{\left(1+\sqrt{1+h^2}\right) \left(1+h^2 \varPsi ^4\right)\left(-1+\sqrt{1+h^2 \varPsi ^4}\right)}-\nn\\
  && -2 (-1+\hetap ) h^2 \varPsi^4 \sqrt{\left(1+\sqrt{1+h^2}\right) \left(-1+\sqrt{1+h^2 \varPsi^4}\right)}+\nn\\
&&+ 2(-1+\hetap)\varPsi^2 \left(\sqrt{\left(-1+\sqrt{1+h^2}\right)\left(1+\sqrt{1+h^2 \varPsi ^4}\right)} +\sqrt{\left(1+h^2\right)
    \left(-1+\sqrt{1+h^2}\right) \left(1+\sqrt{1+h^2 \varPsi
      ^4}\right)}\right)+\nn\\
&&+ \varPhi \Big(\sqrt{-\left(1+\sqrt{1+h^2}\right)\left(1+\sqrt{1+h^2 \varPsi ^4}\right)\left(h^2(-1+\hetap)^2\varPsi^2-M_0^2\varPhi ^2\right)}-\nn\\
 &&-\sqrt{-\left(1+\sqrt{1+h^2}\right)\left(1+h^2 \varPsi^4\right)\left(1+\sqrt{1+h^2\varPsi^4}\right)\left(h^2(-1+\hetap )^2 \varPsi^2-M_0^2\varPhi^2\right)}+\nn\\
&& +\varPsi^2 \sqrt{-\left(-1+\sqrt{1+h^2}\right) \left(-1+\sqrt{1+h^2 \varPsi
      ^4}\right) \left(h^2 (-1+\hetap )^2 \varPsi ^2-\varPhi ^2
    M_0^2\right)}+\nn\\
&&+\varPsi^2\sqrt{-\left(1+h^2\right)\left(-1+\sqrt{1+h^2}\right)\left(-1+\sqrt{1+h^2 \varPsi ^4}\right)\left(h^2(-1+\hetap )^2 \varPsi^2-\varPhi^2 M_0^2\right)}\Big) \mu +\nn\\
&& -(1-\hetap ) \varPhi \Big(h^2 \varPsi^2\sqrt{\left(1+\sqrt{1+h^2}\right) \left(-1+\sqrt{1+h^2 \varPsi^4}\right)}+\sqrt{\left(-1+\sqrt{1+h^2}\right)\left(1+\sqrt{1+h^2 \varPsi ^4}\right)}+\nn\\
  &&+\sqrt{\left(1+h^2\right)\left(-1+\sqrt{1+h^2}\right) \left(1+\sqrt{1+h^2 \varPsi^4}\right)} -\sqrt{\left(-1+\sqrt{1+h^2}\right) \left(1+h^2 \varPsi^4\right) \left(1+\sqrt{1+h^2 \varPsi^4}
    \right)}-\nn\\
  && - \sqrt{\left(1+h^2\right) \left(-1+\sqrt{1+h^2}\right) \left(1+h^2 \varPsi ^4\right)\left(1+\sqrt{1+h^2 \varPsi ^4}\right)}\Big) \Phi_{nn}^\v\Big]\nn\\
&&\!\!\!\!\!\!\!\!\!\!\!\!\!\!S^\varPhi\!=\!-\frac{h (1-\hetap ) \varPsi^2}{M_0} \sqrt{\frac{-1+\sqrt{1+h^2 \varPsi ^4}}{\varPhi }}\nn\\
&&\!\!\!\!\!\!\!\!\!\!\!\!\!\!S^\varPsi\!=\!\frac{\varPsi}{2M_0}  \sqrt{\mu \frac{-1+\sqrt{1+h^2 \varPsi ^4}}{\varPhi }} \left(\sqrt{M_0^2\varPhi ^2 -h^2 (1-\hetap )^2 \varPsi ^2} - h ( 1-\hetap ) \Phi_{nn}^\v \right)~~.
\eeqan}
\end{prop}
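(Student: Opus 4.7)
The plan is to derive the stated formulas by changing variables from adapted phase-space coordinates $(v_n,v_\tau,\varPhi,\varPsi)$ to modified slow-roll coordinates $(\hetap,h,\varPhi,\varPsi)$ on $T^+\cU$, starting from the explicit formulas of Corollary~\ref{cor:Sadapted}. Since $\varPhi$ and $\varPsi$ are common to both coordinate systems, $S^\varPhi$ and $S^\varPsi$ will be produced simply by substituting $v_n,v_\tau$ as functions of $(\hetap,h,\varPhi,\varPsi)$ into the last two lines of \eqref{Sad}. For $S^{\hetap}$ and $S^h$, I would differentiate $\hetap$ and $h$ along an integral curve of $S$ and apply the chain rule, substituting $\dot v_n,\dot v_\tau,\dot\varPhi,\dot\varPsi$ from Corollary~\ref{cor:Sadapted}.

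The first concrete step is to write $v_n$ and $v_\tau$ as explicit functions of $(\hetap,h,\varPhi,\varPsi)$. On $T^+\cU$ one has $\sin\theta>0$, so $v_n=\cN\cos\theta$ and $v_\tau=\cN\sqrt{1-\cos^2\theta}$. Corollary~\ref{cor:msr} inverts $h=2\sqrt{\kappa(1+\kappa)}$ to $\kappa=\tfrac12(-1+\sqrt{1+h^2})$, which together with \eqref{Nkappa} yields $\cN=\sqrt{\varPhi(-1+\sqrt{1+h^2})}$. Then Lemma~\ref{lemma:srcons} (with $\lambda=+1$) provides $\cos\theta$ as an explicit function of $(\hetap,h,\varPhi,\varPsi)$ via $\hc=\tfrac{h\varPhi}{M_0\varPsi}$ and $\cos\theta=-(1-\hetap)\hc$. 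In practice, the cleanest route is to chain the already-recorded coordinate changes: adapted $\to$ norm--angle (Proposition on $S$ in norm-angle coordinates) $\to$ conservative (Proposition~\ref{prop:Scons}) $\to$ roll--turn (Proposition~\ref{prop:Srt}) $\to$ modified slow-roll, applying Lemmas~\ref{lemma:rtcons} and~\ref{lemma:srcons} at each step. This decomposes the overall Jacobian into a composition of simpler Jacobians already implicit in the earlier results.

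The main obstacle is purely algebraic: the composition produces nested radicals, and one must carefully track signs on $T^+\cU$ (in particular, $\sin\theta>0$ yields the factor $|1-\hetap|$ or $\sign(1-\hetap)$ appearing throughout the statement) together with the correct powers of $\varPhi$ and $\varPsi$. No new geometric input beyond Sections~\ref{sec:adapted}--\ref{sec:fibcoords} is required, so once the substitutions are carried out and simplified the identities are obtained by direct inspection.
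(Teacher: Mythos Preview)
Your proposal is correct and follows the same approach as the paper: the paper states (just before the sequence of coordinate-system propositions in Section~\ref{sec:phase}) that all of these expressions ``follow by applying the corresponding changes of variables to [the] components of $S$ in adapted coordinates,'' i.e.\ from Corollary~\ref{cor:Sadapted}, and gives no further argument. Your plan---substitute $v_n,v_\tau$ in terms of $(\hetap,h,\varPhi,\varPsi)$ for the shared coordinates $S^\varPhi,S^\varPsi$ and apply the chain rule for $S^{\hetap},S^h$, tracking signs via $\sin\theta>0$ on $T^+\cU$---is exactly this, spelled out in more detail than the paper provides.
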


\noindent The formidable expressions given above in roll-turn
and modified slow roll coordinates can be expanded to first order in
various regions to extract much simpler expressions which can be used
to perform, for example, a general study of the second order slow roll
regime or of the slow roll-rapid turn regime.

\section{Constant roll conditions and constant roll manifolds}
\label{sec:SR}

\noindent The basic observable $\kappa$ and $\hetap$ allow one to
describe various loci in $T\cM$ which are relevant to the slow roll
and constant roll approximations used in cosmology. Below, we give a
geometric description of some of these regions.

\begin{definition}
The {\em first slow roll condition} is the condition $\kappa(u)\ll 1$
on the tangent vector $u\in T\cM$.
\end{definition}

\noindent Since $\kappa(u)=0$ iff $u=0$ (see eq. \eqref{Nkappa}), the
first slow roll condition is satisfied iff $u$ lies close to the image
$Z$ of the zero section of $T\cM$. Hence $Z$ plays the role of {\em
  first slow roll manifold}.  Notice that a cosmological curve
$\varphi:I\rightarrow \cM$ satisfies $\dot{\varphi}(t)\in Z$ iff $t\in
I_\sing$.

\begin{definition}
The {\em (adiabatic) constant roll condition} with parameter $C\in \R$
is the condition $|\hetap(u)-C|\ll 1$ on the nonzero tangent vector
$u\in \dot{T}\cM$.
\end{definition}

\noindent The constant roll condition with parameter zero is called the
{\em second slow roll condition}, while the constant roll condition
with parameter one is called the {\em ultra slow roll
  condition}. The combination of the first and second slow roll
conditions is called the {\em second order slow roll condition}.
By \eqref{thetac}, the second slow roll condition
$|\hetap(u)|\ll 1$ holds for $u\in \dot{T}\cM_0$ iff
$\cos\theta(u)\approx -\hc(u)$, which in particular requires that $u$
points towards decreasing values of $\Phi$.

\begin{definition}
We say that a cosmological curve $\varphi:I\rightarrow \cM$ {\em
  satisfies the first slow (resp. second slow, ultra,
  constant with parameter $C$) roll condition} at time $t\in I$
if the tangent vector $\dot{\varphi}(t)\in T_{\varphi(t)}\cM$
satisfies the corresponding condition.
\end{definition}

\begin{prop}
\label{prop:SRcond}
The first slow roll condition $\kappa\ll 1$ is equivalent on $\dot{T}\cM_0$ with:
\ben
\label{SR1cond}
\sqrt{(1-\hetap)^2+\homega^2}\gg 2||\Xi||^\v~~.
\een
Moreover, the second order slow roll conditions are equivalent on $\dot{T}\cM_0$ with:
\ben
\label{SRcond}
|\hetap|\ll 1~~\mathrm{and}~~\sqrt{1+\homega^2}\gg 2||\Xi||^\v~~.
\een
\end{prop}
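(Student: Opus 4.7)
The plan is to derive both equivalences as direct algebraic consequences of the exact identity
\be
(1-\hetap)^2+\homega^2=\frac{(||\Xi||^\v)^2}{\kappa(1+\kappa)}
\ee
on $\dot{T}\cM_0$, which is relation \eqref{Xikappaetaomega} established in Proposition \ref{prop:NHkappaomegaeta}. This identity pins down the precise algebraic relationship between the basic observables appearing in both sides of the claimed equivalences, so the proof reduces to rearranging it and interpreting the resulting asymptotic conditions.

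For the first equivalence, I would take square roots in the identity above and rewrite it, using the definition $h\eqdef 2\sqrt{\kappa(1+\kappa)}$ from Corollary \ref{cor:msr}, as
\be
\sqrt{(1-\hetap)^2+\homega^2}=\frac{2||\Xi||^\v}{h}~~.
\ee
Hence $\sqrt{(1-\hetap)^2+\homega^2}\gg 2||\Xi||^\v$ holds iff $h\ll 1$, i.e. iff $2\sqrt{\kappa(1+\kappa)}\ll 1$. Since $\kappa\geq 0$ on $T\cM$ (by \eqref{kappadef}), this last condition is equivalent to $\kappa\ll 1$: indeed, solving $\kappa(1+\kappa)\ll \tfrac14$ for nonnegative $\kappa$ via the quadratic formula gives $\kappa=\tfrac{-1+\sqrt{1+4\epsilon}}{2}$ which is of the same order as $\epsilon$ when $\epsilon\ll 1$. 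This establishes \eqref{SR1cond}.

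For the second equivalence, I would simply specialize the first result under the additional assumption $|\hetap|\ll 1$. In this regime $(1-\hetap)^2=1-2\hetap+\hetap^2\approx 1$, so
\be
\sqrt{(1-\hetap)^2+\homega^2}\approx \sqrt{1+\homega^2}~~,
\ee
and the condition $\sqrt{(1-\hetap)^2+\homega^2}\gg 2||\Xi||^\v$ becomes $\sqrt{1+\homega^2}\gg 2||\Xi||^\v$. Conjoining with $|\hetap|\ll 1$ and invoking the first part of the proposition yields \eqref{SRcond}.

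The only genuine subtlety is to make the symbol $\ll$ unambiguous, since both claimed equivalences are asymptotic statements rather than strict identities; however, because the underlying relation \eqref{Xikappaetaomega} is exact, any reasonable quantitative reading (e.g. $a\ll b$ meaning $a/b\to 0$ along a family, or $a\leq \epsilon b$ for a small fixed $\epsilon$) propagates through the monotone rearrangements above without loss. I do not anticipate any serious obstacle: the proof is essentially a one-line manipulation of \eqref{Xikappaetaomega} followed by the observation that $\kappa\mapsto \kappa(1+\kappa)$ is a strictly increasing homeomorphism of $\R_{\geq 0}$ onto itself.
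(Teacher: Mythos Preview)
Your proof is correct and follows essentially the same approach as the paper, which simply says the result ``follows immediately from relation \eqref{kappaomegaeta} of Proposition \ref{prop:NHkappaomegaeta}.'' You invoke the equivalent identity \eqref{Xikappaetaomega} instead and route through the auxiliary variable $h=2\sqrt{\kappa(1+\kappa)}$, but this is the same one-line manipulation: both \eqref{kappaomegaeta} and \eqref{Xikappaetaomega} express the exact algebraic link between $\kappa$ and $(1-\hetap)^2+\homega^2$, and the asymptotic equivalences drop out by monotonicity.
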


\begin{proof}
Follows immediately from relation \eqref{kappaomegaeta} of Proposition \ref{prop:NHkappaomegaeta}.
\end{proof}

\subsection{Constant roll manifolds}

\begin{definition}
The (adiabatic) {\em constant roll manifold} of $(\cM,\cG,\Phi)$ at
parameter $C$ is the following set:
\be
\cZ(C)\eqdef \overline{\{u\in \dot{T}\cM_0~\vert~\hetap(u)=C\}}~~,
\ee
where the overline denotes closure in $T\cM_0$.
\end{definition}

\noindent The constant roll manifolds at parameters $\cZ(0)$ and
$\cZ(1)$ are called the {\em second slow roll} and {\em ultra slow
  roll} manifolds, respectively.  The intersection of $\cZ(C)$ with a
fiber of $T\cM_0$ is called the {\em constant roll curve} at parameter
$C$ of that fiber. For $C=0,1$ this is called respectively the {\em
  slow roll curve} or {\em ultra slow roll curve}.

\begin{lemma}
\label{lemma:croll}
The following relations hold on $\dot{T}\cM_0$, where in the second
equality we assume $\hetap\neq 1$:
\ben
\label{unorm}
\cN=\sqrt{\varPhi\left[-1+\sqrt{1+4\hc^2 (||\Xi||^\v)^2}\right]}=
\sqrt{\varPhi\left[-1+\sqrt{1+\frac{4 (||\Xi||^\v)^2}{(1-\hetap)^2} \cos^2\theta}\right]}~~.
\een
\end{lemma}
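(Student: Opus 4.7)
The plan is to obtain both equalities by direct substitution into already-established identities, with no genuinely new input. The key ingredients are equation \eqref{Nkappa}, which gives $\cN=\sqrt{2\varPhi\kappa}$ on $T\cM$, equation \eqref{kappac} of Proposition \ref{prop:ckappa}, which expresses $\kappa$ in terms of $\hc$ and $\|\Xi\|^{\mathrm{v}}$, and the relation \eqref{thetac} of Lemma \ref{lemma:rtcons}, namely $\cos\theta=-(1-\hetap)\hc$, which on the locus where $\hetap\neq 1$ can be rearranged as $\hc^2=\frac{\cos^2\theta}{(1-\hetap)^2}$.

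For the first equality, I would simply insert \eqref{kappac} into \eqref{Nkappa}:
\begin{equation*}
\cN=\sqrt{2\varPhi\kappa}=\sqrt{2\varPhi\cdot\tfrac{1}{2}\bigl[-1+\sqrt{1+4(\|\Xi\|^{\mathrm{v}})^2\hc^2}\bigr]}=\sqrt{\varPhi\bigl[-1+\sqrt{1+4\hc^2(\|\Xi\|^{\mathrm{v}})^2}\bigr]},
\end{equation*}
valid on all of $\dot{T}\cM_0$ since $\hc$, $\|\Xi\|^{\mathrm{v}}$ and the square roots are all well defined there.

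For the second equality, which is only asserted when $\hetap\neq 1$, I would substitute $\hc^2=\cos^2\theta/(1-\hetap)^2$ (from squaring \eqref{thetac}) into the expression just obtained. This yields exactly
\begin{equation*}
\cN=\sqrt{\varPhi\Bigl[-1+\sqrt{1+\tfrac{4(\|\Xi\|^{\mathrm{v}})^2}{(1-\hetap)^2}\cos^2\theta}\Bigr]}.
\end{equation*}

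Since the proof is purely algebraic, there is no real obstacle; the only point worth flagging is the domain issue. The first identity makes sense everywhere on $\dot{T}\cM_0$, but the rewriting in the second form requires $\hetap\neq 1$, which the statement already assumes. One should also note that $\varPhi>0$ (by the standing positivity assumption on $\Phi$), so the outer square root is unambiguous, and the inner expression $-1+\sqrt{1+\cdots}$ is non-negative because the radicand under the inner root is $\geq 1$; this ensures that the right-hand sides are real and match $\cN\geq 0$ without sign ambiguity.
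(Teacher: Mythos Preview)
Your proof is correct and follows essentially the same route as the paper. The paper derives the first equality from \eqref{Nc} and the definition \eqref{Xidef} of $\Xi$, and the second from \eqref{etaomegactheta}; your use of \eqref{Nkappa} combined with \eqref{kappac}, and of \eqref{thetac}, amounts to the same chain of substitutions (indeed \eqref{Nc} itself was obtained from \eqref{Nkappa} and \eqref{kappac} in Lemma~\ref{lemma:NHc}).
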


\begin{proof}
The first equality follows from \eqref{Nc} and \eqref{Xidef}. The
second follows the first and \eqref{etaomegactheta}.
\end{proof}

\noindent Recall the first relation in \eqref{etaomegactheta}, which can be written as:
\ben
\label{costheta}
\cos\theta=\hc(\hetap-1)~~.
\een

\begin{prop}
For any $C\neq 1$, the constant roll curve of parameter $C$ in a fiber
of $T\cM_0$ is closed simple curve which contains the origin of that
fiber. Hence $\cZ(C)$ is a 3-manifold which contains the image of the
zero section of $T\cM_0$.
\end{prop}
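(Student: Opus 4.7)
The plan is to recognize $\cZ(C)$ as the zero set of a single explicit smooth function on $T\cM_0$, and then to analyze it first fiberwise (for the Jordan-curve statement) and then globally (for the $3$-manifold statement). In the linear fiberwise coordinates $(v_n, v_\tau)$ on $T_m\cM_0$ induced by the adapted frame at a point $m \in \cM_0$, the relation $\hetap = C$ reads (using \eqref{etaomegactheta} together with $\cos\theta = v_n/\cN$ and $\hc = \cN\sqrt{\cN^2 + 2\Phi_m}/(M_0\Psi_m)$, writing $\Phi_m \eqdef \Phi(m)$ and $\Psi_m \eqdef ||\dd\Phi||(m)$)
\[
G_m(v_n, v_\tau) \eqdef M_0\Psi_m v_n - (C-1)(v_n^2 + v_\tau^2)\sqrt{v_n^2 + v_\tau^2 + 2\Phi_m} = 0,
\]
an equation which is also satisfied at $(0,0)$. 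This extends globally: the function $v_n(u) \eqdef \cG(n(\pi(u)), u)$ is smooth on $T\cM_0$, and
\[
G \eqdef M_0\varPsi v_n - (C-1)\cN^2\sqrt{\cN^2 + 2\varPhi}
\]
is a smooth function on $T\cM_0$ (smoothness uses $\varPhi > 0$) that satisfies $G = \cN\hc\varPsi(\hetap - C)$ on $\dot{T}\cM_0$.

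For the fiberwise assertion I would parameterize $\{G_m = 0\} \subset T_m\cM_0$ by the characteristic angle $\theta$. Formula \eqref{unorm} expresses the norm along this set as the smooth function
\[
\cN(\theta) = \sqrt{\Phi_m\left[-1 + \sqrt{1 + \tfrac{4||\Xi||(m)^2}{(1-C)^2}\cos^2\theta}\right]},
\]
defined on the closed arc $J \subset (-\pi, \pi]$ of length $\pi$ on which $(C-1)\cos\theta \geq 0$; on the interior of $J$ one has $\cN(\theta) > 0$, while $\cN$ vanishes at the two endpoints of $J$ (those with $\cos\theta = 0$). The map $\theta \mapsto \cN(\theta)[\cos\theta\, n(m) + \sin\theta\, \tau(m)]$ is continuous on $J$, smooth on its interior, sends both endpoints to the origin, and is injective on the interior since distinct angles give distinct unit directions in $T_m\cM_0$. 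Passing to the quotient of $J$ obtained by identifying its two endpoints (which is homeomorphic to a circle) thus yields a continuous injection $S^1 \hookrightarrow T_m\cM_0$, and by compactness it is a topological embedding; its image is $\{G_m = 0\}$ and is the required simple closed curve through the origin.

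For the global statement, one has $\cZ(C) = \{G = 0\}$: the set $\{G = 0\}$ is closed in $T\cM_0$, coincides with $\{\hetap = C\}$ off the zero section, and (by the fiberwise analysis) has each origin $0_m$ as a limit point of $\{\hetap = C\}$. It remains to check that $\dd G \neq 0$ along $\{G = 0\}$. A direct computation in any adapted chart gives $\partial_{v_\tau} G = -(C-1)v_\tau(3\cN^2 + 4\varPhi)/\sqrt{\cN^2 + 2\varPhi}$, which vanishes only when $v_\tau = 0$; on the locus $\{G = 0,\ v_\tau = 0\}$ a short substitution shows that $\partial_{v_n} G$ equals $M_0\varPsi > 0$ at the origin of the fiber and equals $-2(C-1)v_n(v_n^2 + \varPhi)/\sqrt{v_n^2 + 2\varPhi} \neq 0$ elsewhere. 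The main obstacle I expect is precisely this regularity check at the origin of each fiber: the fiber curve approaches the origin tangent to the $\tau$-direction from both sides, so a priori the loop could close with a cusp rather than smoothly; the nonvanishing $\partial_{v_n} G|_{0_m} = M_0\Psi_m \neq 0$ dissolves this via the implicit function theorem, forcing the zero set near the origin to be a smooth graph $v_n = f(v_\tau)$ and hence the two fiberwise branches to glue into a single smooth embedded circle.
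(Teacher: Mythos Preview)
Your proof is correct and, for the fiberwise Jordan-curve claim, follows essentially the same polar parameterization as the paper: both write $\cN$ as a function of $\theta$ via \eqref{unorm}, restricted to the half-circle where $\sign\cos\theta=\sign(C-1)$ (the paper derives that sign constraint from $\cos\theta=\hc(C-1)$ and a symmetry argument on the full locus $\Gamma'_C(m)$, whereas you impose it directly). A small cosmetic slip: your identity $G=\cN\hc\varPsi(\hetap-C)$ is off by the constant factor $M_0$, which of course does not affect the zero set.

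Where your argument genuinely differs from the paper is in the global $3$-manifold claim. The paper simply asserts ``Hence $\cZ(C)$ is a 3-manifold'' from the fiberwise picture, without checking transversality or addressing regularity at the zero section. You instead produce an explicit smooth defining function $G$ on all of $T\cM_0$, verify that $0$ is a regular value (with the careful case analysis at $v_\tau=0$ and at the origin), and conclude via the regular value theorem. This is strictly more than the paper provides: it not only justifies the $3$-manifold statement rigorously but also, via the implicit function theorem at the origin, upgrades the fiber curve from a topological simple closed curve to a smoothly embedded circle---resolving the potential cusp you correctly flagged. The cost is a short derivative computation; the gain is that smoothness of $\cZ(C)$ and of the fiber loops near the zero section is actually proved rather than left implicit.
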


\begin{proof}
Setting $\hetap=C$ in Lemma \ref{lemma:croll} shows that the constant
roll curve $\Gamma_C(m)$ of parameter $C$ in the fiber $T_m\cM$ (where
$m\in \cM_0$) satisfies the implicit equation:
\ben
\label{unorm1}
||u||^2=\Phi(m)\left[-1+\sqrt{1+\frac{4 ||\Xi(m)||^2}{(C-1)^2} \cos^2\theta(u)}\right]~~(u\in T_m\cM)
\een
and hence is a portion of the curve $\Gamma'_C(m)\subset T_m\cM$ which
consists of all solutions of this equation. Since
$||u||^2=v_n(u)^2+v_\tau(u)^2$ and $\theta$ is the signed angle
between $u$ and $n(m)$, relation \eqref{unorm1} is the equation of
$\Gamma'_C(m)$ in the polar coordinates associated to the Cartesian
coordinate system defined on $T_m\cM$ by $v_n$ and $v_\tau$, whose
coordinate axes are the lines generated by the vectors $n(m)$ and
$\tau(m)$. The curve $\Gamma'_C(m)$ passes through the origin of $T_m\cM$
at $\theta=\pm \frac{\pi}{2}$. The equation is invariant under the
transformations $\theta\rightarrow -\theta$ and $\theta\rightarrow
\pi-\theta$ (with $u$ unchanged). The second of these symmetries shows
that $\Gamma'_C(m)\setminus\{0\}$ has two connected components which lie
on the two sides of the $\tau$-axis. The closure of each connected
component (which is obtained by adding the origin) is a closed simple
curve. Relation \eqref{costheta} gives:
\be
\cos\theta(u)=\hc(u)(C-1)~~,
\ee
which implies:
\be
\sign(\frac{\pi}{2}-|\theta(u)|)=\sign[\cos\theta(u)]=\sign(C-1)
\ee
since $c(u)>0$. This shows that $\Gamma_C(m)\setminus\{0\}$ coincides
with one of the connected components of $\Gamma'_C(m)\setminus\{0\}$,
which implies that $\Gamma_C(m)$ is a closed simple curve.
\end{proof}

\begin{remark}
The constant roll curve $\Gamma_C(m)$ degenerates to the line $\R\tau(m)$
orthogonal to $n(m)$ inside $T_m\cM$ in the ultra slow roll limit
$C\rightarrow 1$.  In this case, equation \eqref{speednormomegaeta}
gives:
\be
||u||=\sqrt{\Phi(m)\left[-1+\sqrt{1+\frac{4||\Xi(m)||^2}{\homega(u)^2}}\right]}~~,
\ee
where $\homega(u)\in \R^\times$. When $\hetap(u)\rightarrow \pm \infty$, the
curve degenerates to the origin of $T_m\cM$. 
\end{remark}

\noindent Consider now the special case of slow roll and
ultra-slow roll curves. Recall the fiberwise Cartesian conservative
coordinates $q_1\eqdef \hc\cos\theta, q_2\eqdef
\hc\sin\theta:\dot{T}\cM_0\rightarrow \R$ of Corollary \ref{cor:q}.

\begin{lemma}
\label{lemma:usr}
We have $\hetap=1$ in $\dot{T}\cM_0$ iff $q_1=0$. Moreover, we have
$\hetap=0$ in $\dot{T}\cM_0$ iff the value of $(q_1,q_2)$ lies on the
circle of radius $1/2$ centered at the point $(-1/2,0)$ in the
$(q_1,q_2)$ plane. In this case, the position of $(q_1,q_2)$ on this
circle determines the value of $\homega$.
\end{lemma}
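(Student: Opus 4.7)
The plan is to reduce both claims to algebraic manipulations using the coordinate inversion established in Proposition \ref{prop:inversion}. Recall from \eqref{q12} that in each fiber of $\dot{T}\cM_0$ the Cartesian conservative coordinates $(q_1,q_2)$ are related to the modified roll-turn coordinates $(\zeta,\homega)$ with $\zeta\eqdef 1-\hetap$ by $q_1=-\zeta/(\zeta^2+\homega^2)$ and $q_2=-\homega/(\zeta^2+\homega^2)$. Squaring and summing gives $q_1^2+q_2^2=1/(\zeta^2+\homega^2)$, so this map is invertible off the origin by $\zeta=-q_1/(q_1^2+q_2^2)$ and $\homega=-q_2/(q_1^2+q_2^2)$. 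Both claims will be read off from these explicit inversion formulas.

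For the first equivalence, I would use the first relation in \eqref{etaomegactheta}, namely $\hetap-1=\cos\theta/\hc$, together with the strict positivity of $\hc$ on $\dot{T}\cM_0$ and the definition $q_1=\hc\cos\theta$, to conclude that $\hetap=1$ iff $\cos\theta=0$ iff $q_1=0$.

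For the second equivalence, substituting the inversion formula into $\hetap=1-\zeta$ gives $\hetap=1+q_1/(q_1^2+q_2^2)$; setting $\hetap=0$ and clearing the (nonzero) denominator yields $q_1^2+q_2^2+q_1=0$, which after completing the square is the equation $(q_1+1/2)^2+q_2^2=1/4$ of the asserted circle. Each step is reversible, so the converse follows automatically, with the caveat that the origin of the $(q_1,q_2)$-plane (which lies on this circle) is excluded since $\hc>0$ on $\dot{T}\cM_0$. Finally, the formula $\homega=-q_2/(q_1^2+q_2^2)$ exhibits $\homega$ as a single-valued function of $(q_1,q_2)$; on the slow-roll circle one may moreover simplify using $q_1^2+q_2^2=-q_1$ to get $\homega=q_2/q_1$ wherever $q_1\neq 0$.

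I expect no serious obstacle: the result is essentially an algebraic corollary of Proposition \ref{prop:inversion}. The only bookkeeping point is to remember that the origin in the $(q_1,q_2)$-plane corresponds to the point at infinity in the $(\zeta,\homega)$-plane and is not in the image of $\dot{T}\cM_0$, which is why the equivalence in the second claim is phrased modulo this single excluded point on the circle.
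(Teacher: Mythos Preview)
Your proof is correct and follows essentially the same approach as the paper: both arguments reduce the claims to the inversion relations \eqref{q12} from Proposition~\ref{prop:inversion}, set $\zeta=1-\hetap$ equal to $1$ or $0$, and read off the resulting locus in the $(q_1,q_2)$ plane (the paper parametrizes the circle by $\homega$ and then eliminates, while you invert first and complete the square, but this is the same computation). Your explicit remark that the origin of the $(q_1,q_2)$ plane is excluded from $\dot{T}\cM_0$ is a nice bookkeeping point that the paper leaves implicit.
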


\begin{proof}
The first statement follows immediately from eqs. \eqref{q12} (where we remind
the reader that $\zeta\eqdef 1-\hetap$). The same relations show that
$\hetap=0$ is equivalent with:  
\be
q_1=-\frac{1}{1+\homega^2}~~,~~q_2=-\frac{\homega}{1+\homega^2}~~
\ee
with $\omega\in \R$. This implies $\homega=\frac{q_2}{q_1}$ and:
\be
(q_1+\frac{1}{2})^2+q_2^2=\frac{1}{4}~~.
\ee
\end{proof}

\begin{figure}[H]
\centering
\begin{minipage}{.48\textwidth}
\vskip 0.3cm 
\centering \includegraphics[width=.95\linewidth]{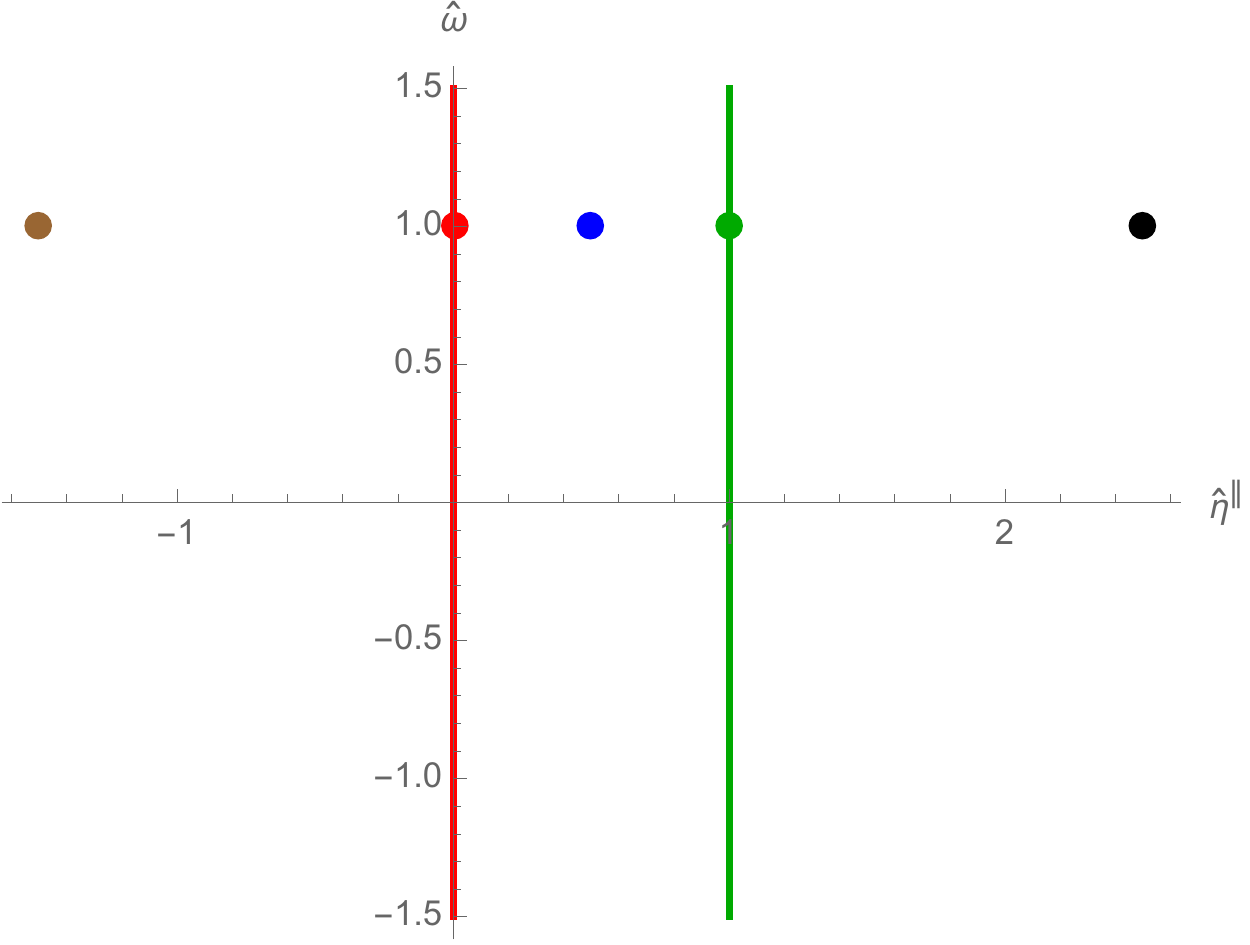}
\vskip 1.2cm
\subcaption{Five points in the $(\hetap,\homega)$ plane. The green
  and red points lie on the slow roll and ultra slow roll curves (shown
  respectively in red and green).}
\end{minipage}\hfill
\begin{minipage}{.48\textwidth}
\vskip -3mm
\centering \includegraphics[width=.8\linewidth]{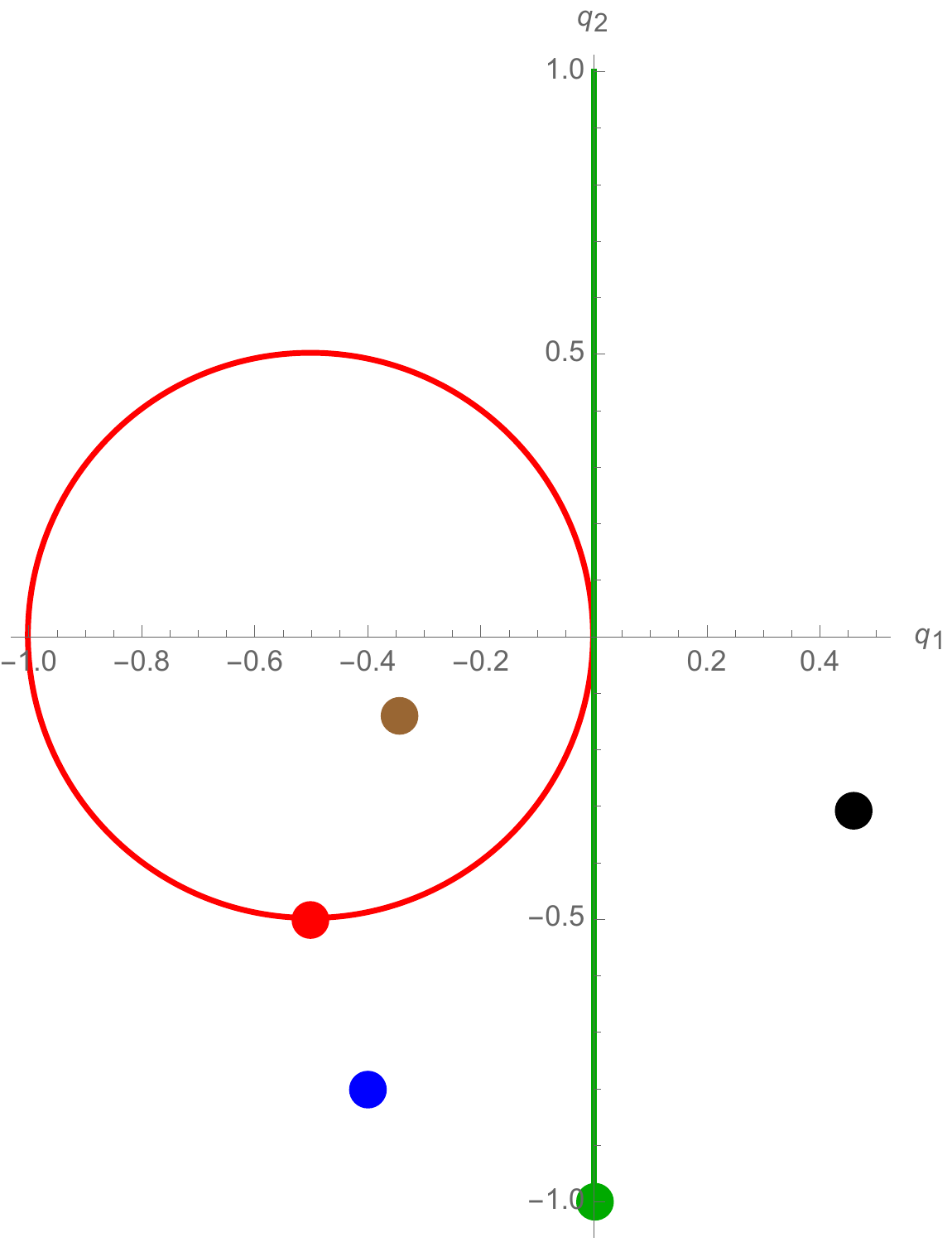}  
\subcaption{The image of the same five points in the $(q_1,q_2)$
  plane. In these coordinates, the slow roll curve (shown in red)
  becomes a circle while the ultra slow
  roll curve (shown in green) coincides with the vertical axis.}
\end{minipage}
\caption{The slow roll and ultra slow roll curves in the
  $(\hetap,\homega)$ and $(q_1,q_2)$ planes.}
\label{fig:natural}
\end{figure}

\begin{cor}
The second slow roll manifold $\cZ(0)$ is a closed three-dimensional
submanifold of $T\cM_0$ which intersects each fiber of $T\cM_0$ in a
simple closed curve. This submanifold contains the image $Z_0\eqdef
Z\cap T\cM_0$ of the zero section of $T\cM_0$. The ultra slow roll
manifold $\cZ(1)$ is a closed three-dimensional manifold of $T\cM$ which
intersects each fiber along a line and is tangent to $\cZ(0)$ along $Z_0$.
\end{cor}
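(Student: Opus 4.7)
The plan is to identify $\cZ(0)$ and $\cZ(1)$ explicitly as zero sets of simple smooth defining functions on $T\cM_0$, expressed in the adapted linear fiberwise coordinates $(v_n,v_\tau)$ of Section~\ref{sec:adapted}, and then to read off all assertions of the corollary by a Jacobian computation at the zero section.

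First I would rewrite the two constant roll conditions explicitly. By Proposition~\ref{prop:PhiTN} and Proposition~\ref{prop:vntau} we have $\Phi_T=||\dd\Phi||^\v\cos\theta=\varPsi\,v_n/\cN$ on $\dot T\cM_0$, so from the definition $\hetap=1+\Phi_T/(\cH\cN)$ one finds
\[
\hetap=1 \iff v_n=0,\qquad \hetap=0 \iff F\eqdef v_n\varPsi+\cN^2\cH=0,
\]
both equivalences holding on $\dot T\cM_0$. The functions $v_n$ and $F=v_n\varPsi+(v_n^2+v_\tau^2)\cH$ (with $\cH=\sqrt{v_n^2+v_\tau^2+2\varPhi}/M_0$) are smooth on all of $T\cM_0$, and both vanish on $Z_0$. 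Each of the sets $\{v_n=0\}$ and $\{F=0\}$ is closed in $T\cM_0$, and in each fiber every $0_m\in Z_0$ is a limit of points of the respective locus inside $\dot T\cM_0$ (for $\cZ(0)$, approach $0_m$ along the slow roll curve, which passes through $0_m$ by the analysis of Lemma~\ref{lemma:usr}). Hence $\cZ(1)=\{v_n=0\}$ and $\cZ(0)=\{F=0\}$ as subsets of $T\cM_0$, and $Z_0\subset\cZ(0)$ is immediate.

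Next I would establish the submanifold structure. For $\cZ(1)$ the claim is immediate: $\{v_n=0\}$ is the total space of the rank-one subbundle of $T\cM_0$ spanned by $\tau$, a smooth $3$-manifold intersecting each fiber $T_m\cM$ ($m\in\cM_0$) in the line $\R\,\tau(m)$. For $\cZ(0)$ I would verify $\dd F\ne 0$ along $\{F=0\}$ by direct computation in a local chart $(x^1,x^2,v_n,v_\tau)$ on $T\cM_0$. Using $\pd\cH/\pd v_i=v_i/(M_0^2\cH)$ for $i\in\{n,\tau\}$, one gets
\[
\frac{\pd F}{\pd v_n}=\varPsi+v_n\Big(2\cH+\tfrac{\cN^2}{M_0^2\cH}\Big),\qquad \frac{\pd F}{\pd v_\tau}=v_\tau\Big(2\cH+\tfrac{\cN^2}{M_0^2\cH}\Big).
\]
Along $Z_0$ this reduces to $\dd F|_{Z_0}=\varPsi\,\dd v_n$, nonzero since $\varPsi|_{\cM_0}=||\dd\Phi||>0$. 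Off $Z_0$ either $v_\tau\ne 0$ (so $\pd_{v_\tau}F\ne 0$), or $v_\tau=0$ with $v_n\ne 0$; in the latter case $F=0$ forces $v_n\cH=-\varPsi$, whence $\pd_{v_n}F=-\varPsi+v_n^3/(M_0^2\cH)<0$ because $v_n=-\varPsi/\cH<0$. Hence $F$ is a submersion along $\{F=0\}$, and $\cZ(0)$ is a closed smooth hypersurface of $T\cM_0$. For the fiberwise picture, Lemma~\ref{lemma:usr} expresses $\cZ(0)\cap T_m\cM$ in the Cartesian conservative coordinates $(q_1,q_2)$ of Corollary~\ref{cor:q} as the circle $(q_1+\tfrac12)^2+q_2^2=\tfrac14$, a simple closed curve through the origin; the change $(q_1,q_2)\mapsto(v_n,v_\tau)$ extends smoothly across the origin with nonsingular Jacobian $(M_0\varPsi/\sqrt{2\varPhi})\,\mathrm{Id}$, read off from the expansion $\cN^2=M_0^2\varPsi^2(q_1^2+q_2^2)/(2\varPhi)+O((q_1^2+q_2^2)^2)$ provided by Lemma~\ref{lemma:NHc}, so the image in $T_m\cM$ is again a smooth simple closed curve passing through $0_m$. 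Finally, at every $p\in Z_0$ one has $T_p\cZ(0)=\ker(\dd F|_p)=\ker(\varPsi(\pi(p))\,\dd v_n|_p)=\ker(\dd v_n|_p)=T_p\cZ(1)$, yielding the tangency along $Z_0$.

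The main technical obstacle I anticipate lies in the fiberwise geometry across the zero section, where the polar-type coordinates $(\hc,\theta)$, the roll-turn coordinates $(\hetap,\homega)$ and the slow roll coordinates all degenerate simultaneously; the remedy is to commit throughout to the linear fiberwise coordinates $(v_n,v_\tau)$ and to the single smooth defining function $F$, which manifestly extends across $Z_0$ and whose linearization there coincides with the defining differential of $\cZ(1)$, controlling both manifolds and their tangency in one stroke.
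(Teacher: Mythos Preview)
Your proof is correct and in fact more complete than what the paper supplies. The paper does not give a separate argument for this corollary: it relies on the preceding Proposition, which for general $C\neq 1$ works in the polar-type fiberwise description \eqref{unorm1}, together with Lemma~\ref{lemma:usr} and its accompanying Figure, where in $(q_1,q_2)$ coordinates the slow roll locus is the circle $(q_1+\tfrac12)^2+q_2^2=\tfrac14$ and the ultra slow roll locus is the axis $q_1=0$, manifestly tangent at the origin. The closedness and tangency along $Z_0$ are thus left implicit.

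Your route is genuinely different: you pass to the linear fiberwise coordinates $(v_n,v_\tau)$ and identify $\cZ(1)=\{v_n=0\}$ and $\cZ(0)=\{F=0\}$ with $F=v_n\varPsi+\cN^2\cH$ smooth on all of $T\cM_0$, then apply the regular value theorem. This buys you two things the paper's approach does not deliver cleanly: (i) the smooth submanifold and closedness statements follow directly from a single submersion check, without having to glue a fiberwise polar description across the zero section where $(\hc,\theta)$, $(q_1,q_2)$ and $(\hetap,\homega)$ all degenerate; and (ii) the tangency along $Z_0$ becomes the one-line linearization $\dd F|_{Z_0}=\varPsi\,\dd v_n$, whereas the paper's picture-level tangency of circle and line at the origin of the $(q_1,q_2)$ plane is only meaningful after one checks (as you do) that the change $(v_n,v_\tau)\leftrightarrow(q_1,q_2)$ extends across the origin with nonsingular Jacobian. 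Your argument that the fiberwise slice is a simple closed curve could be streamlined: since $F|_{T_m\cM}$ is a submersion along its zero set, that set is a smooth $1$-manifold; it is compact because $F\sim\cN^3/M_0\to\infty$ as $\cN\to\infty$, and connected because $\{F=0\}\cap\dot T_m\cM$ is a punctured circle by Lemma~\ref{lemma:usr}; hence it is a circle. This avoids invoking the Jacobian of the coordinate change. Finally, note that the statement's ``$T\cM$'' for $\cZ(1)$ appears to be a slip for $T\cM_0$; your treatment in $T\cM_0$ is the correct reading.
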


\subsection{Simplifications under slow roll}

\subsubsection{Simplifications in the second slow roll regime.}

\begin{prop}
\label{prop:rtSR}
 Suppose that $|\hetap|\ll 1$. Then the following relations hold on ${\dot T}\cM_0$:
\ben
\label{tanthetadappr}
\homega\approx \tan\theta\approx -\sigma\sqrt{\frac{1}{\hc^2}-1}
\een
and:
\ben
\label{csthetaomega}
\cos\theta\approx -\frac{1}{\sqrt{1+\homega^2}}\approx -\hc~~,~~\sin\theta\approx -\frac{\homega}{\sqrt{1+\homega^2}}\approx -\homega \hc\approx \sigma\sqrt{1-\hc^2}~~.
\een
In particular, we have $\cos\theta< 0$ and hence a vector $u\in \dot{T}\cM_0$
which satisfies $|\hetap(u)|\ll 1$ points towards decreasing values of
$\Phi$. Moreover, we have $|\theta(u)|\rightarrow \frac{\pi}{2}$ in
the rapid turn limit $|\homega|\rightarrow \infty$.
\end{prop}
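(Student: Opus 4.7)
The plan is to simply plug the hypothesis $|\hetap|\ll 1$ into the exact identities already established in Lemma \ref{lemma:rtcons}, and then rewrite the resulting approximations in the three equivalent forms (in terms of $\theta$, $\homega$, and $\hc$) demanded by the statement.

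First I would start from the three exact formulas
\[
\cos\theta=-\frac{1-\hetap}{\sqrt{(1-\hetap)^2+\homega^2}},\qquad
\sin\theta=-\frac{\homega}{\sqrt{(1-\hetap)^2+\homega^2}},\qquad
\hc=\frac{1}{\sqrt{(1-\hetap)^2+\homega^2}},
\]
valid on $\dot T\cM_0$. Setting $1-\hetap\approx 1$ immediately yields
\[
\cos\theta\approx -\frac{1}{\sqrt{1+\homega^2}},\qquad \sin\theta\approx -\frac{\homega}{\sqrt{1+\homega^2}},\qquad \hc\approx \frac{1}{\sqrt{1+\homega^2}},
\]
which proves the first halves of \eqref{csthetaomega} and also gives $\cos\theta\approx -\hc$ and $\sin\theta\approx -\homega\hc$ (these are just the exact identities from Lemma \ref{lemma:rtcons} after dropping the $\hetap$ correction).

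Next I would derive \eqref{tanthetadappr}. Taking the ratio of the $\sin\theta$ and $\cos\theta$ approximations gives $\tan\theta\approx\homega$. To pass to the $\hc$-version, I solve $\hc^2(1+\homega^2)\approx 1$ for $\homega^2$, obtaining $\homega^2\approx \tfrac{1}{\hc^2}-1$. To fix the sign, I invoke the identity $\sign(\homega)=-\sigma$ from Lemma \ref{lemma:rtcons}, which yields $\homega\approx -\sigma\sqrt{\tfrac{1}{\hc^2}-1}$. Substituting this into $\sin\theta\approx -\homega\hc$ gives $\sin\theta\approx \sigma\hc\sqrt{\tfrac{1}{\hc^2}-1}=\sigma\sqrt{1-\hc^2}$, completing \eqref{csthetaomega}.

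Finally I would collect the qualitative consequences. The relation $\cos\theta\approx -\hc$ together with the positivity of $\hc$ on $\dot T\cM_0$ (noted just after Definition \ref{def:FundamentalObservables}) forces $\cos\theta<0$; since by Proposition \ref{prop:PhiTN} we have $\Phi_T=\|\dd\Phi\|^{\mathrm v}\cos\theta$, this means the tangential derivative of $\Phi$ at $u$ is negative, i.e.\ $u$ points towards decreasing values of $\Phi$. For the rapid-turn limit, letting $|\homega|\to\infty$ in the approximation $\cos\theta\approx -1/\sqrt{1+\homega^2}$ gives $\cos\theta\to 0$, while $|\sin\theta|\to 1$, forcing $|\theta|\to\pi/2$. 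I expect no substantive obstacle: every step reduces to the exact identities of Lemma \ref{lemma:rtcons} with $1-\hetap$ replaced by $1$; the only mild subtlety is bookkeeping the sign via $\sign(\homega)=-\sigma$ when passing from $\homega^2$ to $\homega$.
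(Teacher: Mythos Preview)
Your proposal is correct and follows essentially the same approach as the paper, which simply states that the result ``follows immediately from Lemma \ref{lemma:rtcons}.'' You have merely made explicit the substitutions $1-\hetap\approx 1$ in the exact identities \eqref{comegaeta}, \eqref{pos}, and $\sign(\homega)=-\sigma$ of that lemma, together with the sign bookkeeping and the limit computation; this is exactly what the paper intends by ``immediately.''
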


\begin{proof}
Follows immediately from Lemma \ref{lemma:rtcons}.
\end{proof}

\noindent Hence in the second slow roll regime the reduced turning
rate $\homega_\varphi(t)$ of a non-critical cosmological curve
$\varphi:I\rightarrow \cM_0$ at a regular time $t\in I_\reg$
determines and is determined by the signed characteristic angle
$\theta_\varphi(t)$ and in this regime we
have $|\theta_\varphi(t)|> \frac{\pi}{2}$, i.e. the tangent vector to
the curve at time $t$ points towards decreasing values of $\Phi$.

\begin{prop}
\label{prop:vnvtauSR}
Suppose that $|\hetap|\ll 1$. Then the following relations hold on $\dot{T}\cM_0$:
\beqan
\label{dotphicostheta}
&&\cN\approx \sqrt{-\varPhi +\sqrt{\varPhi^2+\frac{M_0^2\varPsi^2}{1+\homega^2}}}=\sqrt{-\varPhi +\sqrt{\varPhi^2+M_0^2\varPsi^2\cos^2\theta}}\\
&& \cH\approx \frac{1}{M_0}\sqrt{\varPhi +\sqrt{\varPhi^2+\frac{M_0^2\varPsi^2}{1+\homega^2}}}=\frac{1}{M_0}\sqrt{\varPhi +\sqrt{\varPhi^2+M_0^2\varPsi^2\cos^2\theta}}~~\nn
\eeqan
and:
\ben
\label{HN}
\cN\cH\approx \frac{\varPsi}{\sqrt{1+\homega^2}}\approx -\varPsi \cos\theta~~.
\een
Moreover, we have:
\beqan
\label{vSR1}
&&v_n\approx -\frac{1}{\sqrt{1+\homega^2}}\sqrt{-\varPhi +\sqrt{\varPhi^2+\frac{M_0^2\varPsi^2}{1+\homega^2}}}=\sqrt{-\varPhi +\sqrt{\varPhi^2+M_0^2\varPsi^2\cos^2\theta}}
\cos\theta\nn\\
&&v_\tau\approx -\frac{\homega}{\sqrt{1+\homega^2}}\sqrt{-\varPhi +\sqrt{\varPhi^2+\frac{M_0^2\varPsi^2}{1+\homega^2}}} =\sqrt{-\varPhi +\sqrt{\varPhi^2+M_0^2\varPsi^2\cos^2\theta}}
\sin\theta~~.\nn
\eeqan
\end{prop}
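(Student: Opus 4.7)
The plan is to derive all five approximations directly from the exact identities already established in the preceding sections, applying the hypothesis $|\hetap|\ll 1$ (so that $(1-\hetap)^2\approx 1$) and invoking Proposition \ref{prop:rtSR} to convert between $\homega$ and $\theta$. No new geometric input is needed; the statement is essentially a compilation of consequences of setting $\hetap=0$ in exact formulas.

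First I would treat the norm and Hubble approximations \eqref{dotphicostheta}. The exact expressions \eqref{speednormomegaeta} and \eqref{cHomegaeta} of Proposition \ref{prop:NHkappaomegaeta} give $\cN$ and $\cH$ as functions of $\varPhi,\varPsi,\hetap,\homega$ containing the factor $\frac{M_0^2\varPsi^2}{(1-\hetap)^2+\homega^2}$. Replacing $(1-\hetap)^2$ by $1$ under the hypothesis yields the first equality in each line. The second equality in each line then follows from the approximation $\cos^2\theta\approx \frac{1}{1+\homega^2}$ recorded in \eqref{csthetaomega} of Proposition \ref{prop:rtSR}.

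For the product approximation \eqref{HN}, I would bypass multiplying the two square roots and instead apply the identity \eqref{NHoe}, namely $\cN\cH=\frac{\varPsi}{\sqrt{(1-\hetap)^2+\homega^2}}$, which is exact on $\dot{T}\cM_0$. Setting $\hetap\approx 0$ gives the first equivalence, and the second follows again from \eqref{csthetaomega} together with the sign determination $\cos\theta<0$ noted in Proposition \ref{prop:rtSR}, which accounts for the minus sign in $-\varPsi\cos\theta$.

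Finally, for the components \eqref{vSR1}, I would use the decomposition $v_n=\cN\cos\theta$ and $v_\tau=\cN\sin\theta$ from Proposition \ref{prop:vntau}. Inserting the approximation already derived for $\cN$ and then applying \eqref{csthetaomega} (which gives $\cos\theta\approx -\frac{1}{\sqrt{1+\homega^2}}$ and $\sin\theta\approx -\frac{\homega}{\sqrt{1+\homega^2}}$) produces the two displayed forms: one in terms of $\homega$, one in terms of $\theta$. There is no serious obstacle; the only bookkeeping point requiring care is tracking the sign of $\cos\theta$ through the square roots so that the minus signs in $v_n$ and $v_\tau$ match those in $-\varPsi\cos\theta$, which is why I would use Proposition \ref{prop:rtSR} rather than attempting to extract them from \eqref{speednormomegaeta} directly.
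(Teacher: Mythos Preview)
Your proposal is correct and follows essentially the same route as the paper: the paper also derives \eqref{dotphicostheta} from Proposition~\ref{prop:NHkappaomegaeta} together with Proposition~\ref{prop:rtSR}, obtains \eqref{HN} from \eqref{NHoe}, and deduces \eqref{vSR1} from the first line of \eqref{dotphicostheta} via \eqref{vnvtau} and \eqref{csthetaomega}. Your additional remark about tracking the sign of $\cos\theta$ through Proposition~\ref{prop:rtSR} is a helpful clarification that the paper leaves implicit.
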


\begin{remark}
Notice that in the second slow roll regime we have $\cos\theta\approx
-\hc$ and $\sin\theta=\sigma \sqrt{1-\hc^2}$. This allows us to
express the quantities in Proposition \ref{prop:vnvtauSR} in terms of
$\hc$.
\end{remark}

\begin{proof}
Propositions \ref{prop:NHkappaomegaeta} and \ref{prop:rtSR} imply
relations \eqref{dotphicostheta}, while \eqref{NHoe} implies
\eqref{HN}.  The last two relations in the statement
follow from the first equation in \eqref{dotphicostheta}
by using \eqref{vnvtau} and \eqref{csthetaomega}.
\end{proof}

\subsubsection{Simplifications in the second order slow roll regime.}

\noindent The second equation in \eqref{dotphicostheta} of Proposition
\ref{prop:vnvtauSR} shows that, in the second slow roll regime, the
first slow roll condition $\kappa\ll 1$ (which is equivalent with
$\cH\approx \frac{\sqrt{2\varPhi}}{M_0}$) amounts to the requirement:
\be
-M_0\frac{\varPsi}{\varPhi}\cos\theta\ll 1~~\Longleftrightarrow -2||\Xi||^\v \cos\theta\ll 1~~.
\ee

\noindent Let:
\be
V=M_0\sqrt{2\Phi}
\ee
be the {\em classical effective potential} of the model
$(M_0,\cM,\cG,\Phi)$ (see reference \cite{ren}).

\begin{prop}
In the second order slow roll regime $\kappa\ll 1$ and
$|\hetap|\ll 1$, we have:
\ben
\label{normspeedSR}
\cN \approx -||\dd V||^\v\cos\theta\approx ||\dd V||^\v \hc ~~,
\een
and:
\ben
\label{vnvtauSR}
v_n\approx -||\dd V||^\v\cos^2\theta~~,~~v_\tau\approx -||\dd V||^\v\cos\theta\sin\theta~~.
\een
\end{prop}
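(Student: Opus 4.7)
The plan is to refine the second-slow-roll formulas in Proposition \ref{prop:vnvtauSR} by feeding in the additional first slow roll assumption $\kappa \ll 1$, then identify the resulting leading terms with $||\dd V||^\v$ using the definition $V = M_0\sqrt{2\Phi}$. The argument is essentially a single Taylor expansion combined with bookkeeping of signs, so no new geometric input is needed beyond what has already been established in Sections \ref{sec:adapted}--\ref{sec:fibcoords}.

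First I would recall from Proposition \ref{prop:vnvtauSR} that under $|\hetap|\ll 1$ one has
\begin{equation*}
\cN \approx \sqrt{-\varPhi + \sqrt{\varPhi^2 + M_0^2\varPsi^2\cos^2\theta}}.
\end{equation*}
The paragraph immediately preceding the proposition rewrites the first slow roll condition, in this regime, as $-2||\Xi||^\v\cos\theta \ll 1$, i.e.\ $M_0\varPsi|\cos\theta|/\varPhi \ll 1$. This is precisely the small parameter in which to expand the inner square root:
\begin{equation*}
\sqrt{\varPhi^2 + M_0^2\varPsi^2\cos^2\theta}
= \varPhi\sqrt{1 + \tfrac{M_0^2\varPsi^2\cos^2\theta}{\varPhi^2}}
\approx \varPhi + \tfrac{M_0^2\varPsi^2\cos^2\theta}{2\varPhi},
\end{equation*}
so that $-\varPhi + \sqrt{\cdots} \approx M_0^2\varPsi^2\cos^2\theta/(2\varPhi)$ and hence $\cN \approx (M_0\varPsi/\sqrt{2\varPhi})|\cos\theta|$.

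Next I would identify the prefactor: since $V = M_0\sqrt{2\Phi}$ gives $\dd V = M_0\,\dd\Phi/\sqrt{2\Phi}$, one has $||\dd V|| = M_0\Psi/\sqrt{2\Phi}$, so the approximation above reads $\cN \approx ||\dd V||^\v|\cos\theta|$. The sign is fixed by the remark after Proposition \ref{prop:rtSR}, which records $\cos\theta < 0$ in the second slow roll regime; therefore $|\cos\theta| = -\cos\theta$, and \eqref{normspeedSR} becomes $\cN \approx -||\dd V||^\v\cos\theta$. The second form $\cN \approx ||\dd V||^\v \hc$ follows at once from $-\cos\theta \approx \hc$ (the first relation in \eqref{csthetaomega} of Proposition \ref{prop:rtSR}). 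Finally, \eqref{vnvtauSR} is obtained by multiplying the approximation for $\cN$ by $\cos\theta$ and $\sin\theta$ and invoking $v_n = \cN\cos\theta$, $v_\tau = \cN\sin\theta$ from Proposition \ref{prop:vntau}.

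There is no substantial obstacle: the whole argument reduces to a first-order Taylor expansion of $\sqrt{1+x}$ in the small parameter $x = M_0^2\varPsi^2\cos^2\theta/\varPhi^2$, combined with algebraic identifications already recorded. The only delicate point worth emphasising in the write-up is the sign tracking — making sure that the nonnegative function $\cN$ matches $-||\dd V||^\v\cos\theta$ rather than $+||\dd V||^\v\cos\theta$ — which is precisely what the inequality $\cos\theta<0$ in the second slow roll regime secures.
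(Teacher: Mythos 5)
Your proposal is correct and follows essentially the same route as the paper: both arguments start from Proposition \ref{prop:vnvtauSR} and then impose the first slow roll condition to identify the prefactor with $||\dd V||^\v=M_0\varPsi/\sqrt{2\varPhi}$. The only cosmetic difference is that the paper substitutes $\cH\approx\sqrt{2\varPhi}/M_0$ directly into the product relation $\cN\cH\approx-\varPsi\cos\theta$ of \eqref{HN}, whereas you Taylor-expand the nested square root in \eqref{dotphicostheta} — the two computations are equivalent, and your explicit sign tracking via $\cos\theta<0$ is a correct (and slightly more careful) rendering of the same step.
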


\begin{proof}
Relation \eqref{HN} gives:
\ben
\label{normspeedSR2}
\cN \approx -\frac{\varPsi}{\cH}\cos\theta~~.
\een
Since we also impose the first slow roll condition, we have $\cH\approx
\frac{\sqrt{2\varPhi}}{M_0}$ and \eqref{normspeedSR2} becomes \eqref{normspeedSR}.
Combining this with \eqref{vnvtau} gives \eqref{vnvtauSR}.
\end{proof}

\section{The quasi-conservative and strongly dissipative regimes}
\label{sec:cons}

\begin{definition}
The {\em quasi-conservative condition} on a vector $u\in T\cM_0$ is
the condition $c(u)\ll 1$.  The {\em strong dissipation condition} on
$u$ is the condition $c(u)\gg 1$.
\end{definition}

\begin{definition}
We say that a noncritical cosmological curve $\varphi:I\rightarrow
\cM_0$ satisfies the quasi-conservative or strong dissipation
condition at time $t\in I$ when $c_\varphi(t)=c(\dot{\varphi}(t))$ is
much smaller (respectively much larger) than one.
\end{definition}

\noindent The cosmological equation \eqref{eomsingle} can be written as:
\be
\nabla_t\dot{\varphi}(t)+||\dd\Phi||\left[c_\varphi(t) T_\varphi(t) +n_\varphi(t)\right]=0~~.
\ee
When $\varphi$ satisfies the quasi-conservative condition at time $t_0$,
this can be approximated by the {\em conservative equation}:
\be
\nabla_t\dot{\varphi}(t)+(\grad\Phi)(\varphi(t))=0
\ee
for $t$ close to $t_0$. The conservative equation describes motion of
a particle in the Riemannian manifold $(\cM,\cG)$ under the influence
of the potential $\Phi$ (which corresponds to a conservative dynamical
system). When $\varphi$ satisfies the strong dissipation condition at time
$t_0$, the cosmological equation can be approximated by the {\em
  modified geodesic equation}:
\be
\nabla_t\dot{\varphi}(t)+\cH_\varphi(t)\dot{\varphi}(t)=0
\ee
for $t$ close to $t_0$. The solutions of this equation are
reparameterized geodesics of $(\cM,\cG)$. Some more detail about the
quasi-conservative and strongly dissipative approximations can be
found in \cite{cons}.

\begin{remark}
Recall equation \eqref{pos}:
\ben
\label{mrel}
(\hetap-1)^2+\homega^2=\frac{1}{\hc^2}~~.
\een
This shows that the {\em rapid turn condition} $|\homega|\gg 1$ is
equivalent with $\frac{1}{\hc}\gg \sqrt{1+(\hetap-1)^2}$, which
reduces to the quasi-conservative condition $\hc\ll 1$ in the second
slow roll regime $|\hetap|\ll 1$. Hence the rapid turn and
quasi-conservative limits are equivalent within the second slow roll
regime. Recall that the condition $\hetap \approx 1$ defines the
adiabatic ultra slow roll regime. In this case, we have
$\homega\approx -\frac{\sigma}{\hc} \neq 0$ and $\theta\approx\sigma
\frac{\pi}{2}$.
\end{remark}

The first relation in \eqref{etaomegactheta} requires:
\be
\hetap\in \left[1-\frac{1}{\hc},1+\frac{1}{\hc}\right]
\ee
and can be written as \eqref{costheta}. In particular, the interval
within which $\hetap$ can take values is centered at $1$ and
constrained by the value of $\hc$. This interval is very large in the
quasi-conservative regime $\hc\ll 1$ and becomes narrow in the
strongly dissipative regime $\hc\gg 1$, when $\hetap$ is forced to be
close to one. In particular, the adiabatic ultra slow roll
approximation $\hetap\approx 1$ is accurate in the strongly
dissipative regime.

\section{The horizontal approximation in adapted coordinates}
\label{sec:horadapted}

In this section, we study the horizontal approximation defined by the
adapted local coordinates on a neighborhood of a point of $T\cU$,
where $\cU$ is the non-degenerate set of $(\cM,\cG,\Phi)$.

\subsection{The adapted speed component functions}

\begin{lemma}
\label{lemma:homega}
The following relations hold on $\dot{T}\cM_0$:
\ben
\label{vetaomega}
v_n= -\frac{\varPsi}{\cH}\frac{1-\hetap}{(1-\hetap)^2+\homega^2}~~,~~v_\tau= -\frac{\varPsi}{\cH}\frac{\homega}{(1-\hetap)^2+\homega^2}~~.
\een
and:
\ben
\label{kappaetaomega}
\kappa=\frac{1}{2\varPhi}\left(\frac{\varPsi}{\cH}\right)^2\frac{1}{(1-\hetap)^2+\homega^2}~~.
\een
\end{lemma}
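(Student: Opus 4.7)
The plan is to derive the three identities by directly combining previously established relations; no new analytic input is required. The whole lemma is essentially a substitution exercise translating between the fiberwise coordinate systems $(v_n, v_\tau)$, $(\cN, \theta)$, $(\hc, \theta)$, and $(\hetap, \homega)$ introduced earlier.

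First, for $v_n$ and $v_\tau$, I would start from Proposition \ref{prop:vntau}, which gives $v_n = \cN \cos\theta$ and $v_\tau = \cN \sin\theta$ on $\dot{T}\cM_0$. The key step is then to eliminate $\theta$ in favor of $(\hetap, \homega)$ via equation \eqref{thetac} of Lemma \ref{lemma:rtcons}:
\[
\cos\theta = -(1-\hetap)\hc, \qquad \sin\theta = -\homega\, \hc.
\]
Substituting gives $v_n = -\cN \hc (1-\hetap)$ and $v_\tau = -\cN \hc\, \homega$. It remains to re-express the factor $\cN \hc$ in terms of $(\varPhi,\varPsi,\cH,\hetap,\homega)$. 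For this I use relation \eqref{NH}, which yields $\cN \hc = \frac{\varPsi\, \hc^2}{\cH}$, and then the first identity of \eqref{comegaeta}, namely $\hc^2 = \frac{1}{(1-\hetap)^2+\homega^2}$. Combining,
\[
\cN \hc = \frac{\varPsi}{\cH}\cdot\frac{1}{(1-\hetap)^2+\homega^2},
\]
and plugging back into the expressions for $v_n$ and $v_\tau$ yields the first two claimed formulas.

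For the third identity, I would start from the definition $\kappa = \frac{\cN^2}{2\varPhi}$ of the first IR function (equation \eqref{kappadef}). The only step is to compute $\cN^2$ using equation \eqref{NHoe} of Lemma \ref{lemma:rtcons}, which reads $\cN\cH = \frac{\varPsi}{\sqrt{(1-\hetap)^2+\homega^2}}$, so that
\[
\cN^2 = \left(\frac{\varPsi}{\cH}\right)^2 \frac{1}{(1-\hetap)^2+\homega^2}.
\]
Dividing by $2\varPhi$ produces the stated formula for $\kappa$.

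There is no real obstacle in this proof: it is purely algebraic and relies only on facts already proved, specifically \eqref{vnvtau}, \eqref{thetac}, \eqref{comegaeta}, \eqref{NH}, \eqref{NHoe}, and \eqref{kappadef}. The only care needed is to stay on $\dot{T}\cM_0$, where all these relations simultaneously hold (since $\hc$, $\theta$, and the roll-turn coordinates require the vector to be nonzero and based at a noncritical point).
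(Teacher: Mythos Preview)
Your proof is correct and follows essentially the same approach as the paper, which also combines \eqref{vnvtau} with \eqref{comegaeta} and \eqref{NHoe} to obtain \eqref{vetaomega}, and then derives \eqref{kappaetaomega} from $\kappa=\frac{\cN^2}{2\varPhi}$. The only difference is cosmetic: you cite \eqref{thetac} and \eqref{NH} where the paper cites the equivalent identities \eqref{comegaeta} and \eqref{NHoe} from the same lemma.
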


\begin{proof}
Combining \eqref{vnvtau} with \eqref{comegaeta} and
\eqref{NHoe} gives \eqref{vetaomega}, which implies
\eqref{kappaetaomega} upon using $\kappa=\frac{\cN^2}{2\varPhi}=\frac{\v_n^2+v_\tau^2}{2\varPhi}$.
\end{proof}

\begin{prop}
The following relations hold on $\dot{T}\cM_0$:
\beqan
\label{vetaomega2}
&&v_n= -\frac{M_0\varPsi}{\sqrt{\varPhi+\sqrt{\varPhi^2+\frac{M_0^2\varPsi^2}{(1-\hetap)^2+\homega^2}}}} \frac{1-\hetap}{(1-\hetap)^2+\homega^2}\nn\\
&&v_\tau= - \frac{M_0\varPsi}{\sqrt{\varPhi+\sqrt{\varPhi^2+\frac{M_0^2\varPsi^2}{(1-\hetap)^2+\homega^2}}}}\frac{\homega}{(1-\hetap)^2+\homega^2}~~.
\eeqan
\end{prop}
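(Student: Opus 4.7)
The proof is a direct substitution combining two results already established in the paper. The plan is to start from Lemma \ref{lemma:homega}, which gives the compact expressions
\[
v_n = -\frac{\varPsi}{\cH}\,\frac{1-\hetap}{(1-\hetap)^2+\homega^2}, \qquad v_\tau = -\frac{\varPsi}{\cH}\,\frac{\homega}{(1-\hetap)^2+\homega^2}
\]
on $\dot{T}\cM_0$, and then eliminate $\cH$ using the expression for the rescaled Hubble function in fiberwise roll-turn coordinates established in equation \eqref{cHomegaeta} of Proposition \ref{prop:NHkappaomegaeta}:
\[
\cH = \frac{1}{M_0}\sqrt{\varPhi + \sqrt{\varPhi^2 + \frac{M_0^2\varPsi^2}{(1-\hetap)^2+\homega^2}}}\,.
\]

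First I would invert this expression to obtain
\[
\frac{1}{\cH} = \frac{M_0}{\sqrt{\varPhi + \sqrt{\varPhi^2 + \frac{M_0^2\varPsi^2}{(1-\hetap)^2+\homega^2}}}}\,,
\]
which is legitimate because $\Phi$ is strictly positive by assumption and hence $\cH$ is strictly positive on $T\cM$. Substituting this into the formulas for $v_n$ and $v_\tau$ from Lemma \ref{lemma:homega} yields the factor $\frac{M_0\varPsi}{\sqrt{\varPhi+\sqrt{\varPhi^2+M_0^2\varPsi^2/((1-\hetap)^2+\homega^2)}}}$ multiplying the respective fractions $\frac{1-\hetap}{(1-\hetap)^2+\homega^2}$ and $\frac{\homega}{(1-\hetap)^2+\homega^2}$, which are exactly the claimed identities \eqref{vetaomega2}.

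There is no serious obstacle: the only thing to verify carefully is that both identities being combined are valid on all of $\dot{T}\cM_0$ (rather than only on some smaller open subset). Lemma \ref{lemma:homega} is stated on $\dot{T}\cM_0$, and equation \eqref{cHomegaeta} is obtained in Proposition \ref{prop:NHkappaomegaeta} also on $\dot{T}\cM_0$ (using that $(\hetap,\homega)$ is a fiberwise coordinate system on $\dot{T}\cM_0$ by Proposition \ref{prop:rtcoords}). Hence the composed identity holds on $\dot{T}\cM_0$ as required, and no case analysis by the sign of $1-\hetap$ or by the characteristic signature is needed.
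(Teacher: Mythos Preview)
Your proof is correct and follows essentially the same route as the paper: the paper's own argument is the one-line remark that equations \eqref{vetaomega} give \eqref{vetaomega2} upon using Proposition \ref{prop:NHkappaomegaeta}, which is precisely your substitution of \eqref{cHomegaeta} into Lemma \ref{lemma:homega}. Your added check on the domain of validity is a harmless elaboration.
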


\begin{proof}
Equations \eqref{vetaomega} give \eqref{vetaomega2} upon using
Proposition \ref{prop:NHkappaomegaeta}.
\end{proof}

\subsection{The adapted parameters and adapted functions}

\noindent Let:
\be
\!Q\!\eqdef\!\{\!u\in T\cM_0~\vert~v_n(u)v_\tau(u)\!=\!0\!\}\!=\!\{u\in T\cM_0~\vert~\theta(u)\not\in \!\{-\frac{\pi}{2},\!0,\!\frac{\pi}{2},\!\pi\!\}\}\!=\!T\cM_0\setminus (L_n\cup L_\tau)
\ee
and notice that $Z\subset Q$. Here $L_n$ and $L_\tau$ are the line bundles generated by the vector fields $n$ and $\tau$ on $\cM_0$.

\begin{definition}
A noncritical curve $\varphi:I\rightarrow \cM_0$ is called {\em
  admissible} if its canonical lift $\dot{\varphi}:I\rightarrow
T\cM_0$ does not intersect $Q$.
\end{definition}

\noindent Notice that any admissible curve is nondegenerate. 

\begin{definition}
The {\em adapted parameters} of an admissible curve
$\varphi:I\rightarrow \cM_0$ are defined through:
\ben
\label{adaptedparam}
\f^n_\varphi\eqdef -\frac{\dot{v}_n^\varphi}{\cH_\varphi v^\varphi_n}~~\mathrm{and}~~\f^r_\varphi\eqdef -\frac{\dot{v}^\varphi_\tau}{\cH_\varphi v^\varphi_\tau}~~.
\een
\end{definition}

\begin{definition}
\label{def:fnftau}
The {\em adapted functions} $\f^n,\f^\tau:T\cM_0\setminus Q\rightarrow \R$ of
$(\cM,\cG,\Phi)$ are the basic local scalar observables defined through:
\beqan
\label{fnftau}
\!\!\!&&\!\!\!\!\!\!\!\!\!\f^n\!\eqdef\! -\frac{S^{v_n}}{\cH v_n}\!=\!1+\frac{\varPsi-\lambda^\v v_\tau^2-\mu^\v v_n v_\tau}{\cH v_n}\!=\!1+
\frac{\varPsi}{\cH\cN\cos\theta}\!-\!\frac{\cN (\lambda^\v \sin\theta+\mu^\v\cos\theta)}{\cH}\tan\theta\nn\\
\!\!\!&&\!\!\!\!\!\!\!\!\!\f^\tau\!\eqdef\!-\frac{S^{v_\tau}}{\cH v_\tau}=1+\frac{\mu^\v v_n^2+\lambda^\v v_nv_\tau}{\cH v_\tau}=1+\frac{\cN (\lambda^\v\sin\theta+\mu^\v \cos\theta)}{\cH \tan\theta}~~,
\eeqan
where $S^{v_n}$ and $S^{v_\tau}$ are the vertical components of the
cosmological semispray in adapted phase space coordinates (see
Corollary \ref{cor:Sadapted}).
\end{definition}

\begin{prop}
\label{prop:fobs}
Suppose that $\varphi$ is an admissible cosmological curve. Then we have:
\be
\f^n_\varphi=\f^n\circ \dot{\varphi}~~\mathrm{and}~~\f^\tau_\varphi=\f^\tau\circ \dot{\varphi}~~.
\ee
\end{prop}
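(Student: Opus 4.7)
The plan is to exploit the fact that the canonical lift $\gamma \eqdef \dot{\varphi}\colon I\to T\cM_0$ of a cosmological curve is an integral curve of the cosmological semispray $S$, as stated in the equivalence between the cosmological equation and equation \eqref{Seq}. Once this is used, the statement reduces to unraveling the definitions of $\f^n_\varphi, \f^\tau_\varphi$ and comparing them with the expressions \eqref{fnftau} defining $\f^n$ and $\f^\tau$.

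More precisely, I would proceed as follows. First, I would observe that admissibility of $\varphi$ guarantees that $\dot{\varphi}(I)\subset T\cM_0\setminus Q$, so that the basic observables $\f^n$ and $\f^\tau$ are defined along $\dot{\varphi}$ and, in particular, the quantities $v^\varphi_n = v_n\circ \dot{\varphi}$ and $v^\varphi_\tau = v_\tau\circ \dot{\varphi}$ are nowhere zero; this ensures that all denominators appearing in \eqref{adaptedparam} and \eqref{fnftau} are nonvanishing along the curve. Second, since $\gamma = \dot{\varphi}$ is an integral curve of $S$, for any smooth local coordinate $y$ on $T\cM$ defined in a neighborhood of $\gamma(t)$ one has
\be
\frac{\dd}{\dd t}(y\circ \dot{\varphi})(t) = S^{y}(\dot{\varphi}(t)) = (S^{y}\circ \dot{\varphi})(t).
\ee
Applying this to the adapted fiberwise coordinates $y=v_n$ and $y=v_\tau$ yields
\be
\dot{v}^\varphi_n = S^{v_n}\circ \dot{\varphi}, \qquad \dot{v}^\varphi_\tau = S^{v_\tau}\circ \dot{\varphi}.
\ee
Third, I would use the elementary identity $\cH_\varphi = \cH\circ \dot{\varphi}$, which follows immediately from the definition \eqref{cHdef} of $\cH$ and the formula for $\cH_\varphi$ recalled just before Lemma \ref{lemma:cosmadapted}.

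Combining these three observations, I substitute into the defining relations \eqref{adaptedparam}:
\be
\f^n_\varphi = -\frac{\dot{v}^\varphi_n}{\cH_\varphi\, v^\varphi_n} = -\frac{S^{v_n}\circ \dot{\varphi}}{(\cH\circ \dot{\varphi})(v_n\circ \dot{\varphi})} = \left(-\frac{S^{v_n}}{\cH\, v_n}\right)\circ \dot{\varphi} = \f^n\circ \dot{\varphi},
\ee
and analogously $\f^\tau_\varphi = \f^\tau\circ\dot{\varphi}$, which is the desired identity. The two explicit rewritings of $\f^n$ and $\f^\tau$ in \eqref{fnftau} in terms of $(v_n,v_\tau,\cH,\varPsi)$ and in terms of $(\cN,\theta,\cH,\varPsi)$ need not be re-derived here; they are straightforward consequences of the formulas \eqref{Sad} for the components of $S$ together with the relations $v_n = \cN\cos\theta$, $v_\tau = \cN\sin\theta$ from Proposition \ref{prop:vntau}.

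There is no real obstacle: the only substantive content is the chain rule applied to coordinates along an integral curve of $S$, together with the admissibility hypothesis used to ensure well-definedness. The mildly delicate point, which I would mention explicitly, is that admissibility is needed both to guarantee $v_n^\varphi\neq 0\neq v_\tau^\varphi$ (so that $\f^n_\varphi,\f^\tau_\varphi$ make sense) and to ensure $\dot{\varphi}(I)\subset T\cM_0\setminus Q$ (the natural domain of $\f^n,\f^\tau$); both requirements are encoded in the single condition $\dot{\varphi}(I)\cap Q=\emptyset$ of the definition.
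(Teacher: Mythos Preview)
Your proof is correct and follows essentially the same route as the paper's one-line argument, which simply cites Lemma \ref{lemma:cosmadapted}. That lemma gives the explicit form of $\dot v^\varphi_n$ and $\dot v^\varphi_\tau$ for a cosmological curve, which is precisely the content of your identity $\dot v^\varphi_n = S^{v_n}\circ\dot\varphi$ (and similarly for $v_\tau$) once one compares with the expressions for $S^{v_n}, S^{v_\tau}$ in Corollary \ref{cor:Sadapted}; the rest is, as you say, just substituting into \eqref{adaptedparam} and \eqref{fnftau}. One small remark: your chain-rule identity $\frac{\dd}{\dd t}(y\circ\dot\varphi) = S^y\circ\dot\varphi$ holds for any smooth function $y$ on $T\cM$, not only for local coordinates, so you need not worry about whether $(v_n,v_\tau,\varPhi,\varPsi)$ actually form a coordinate chart near $\dot\varphi(t)$ (which would require $\varphi(t)\in\cU$); this makes your argument work on all of $T\cM_0\setminus Q$, matching the stated domain of $\f^n,\f^\tau$.
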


\begin{proof}
Follows immediately from Lemma \ref{lemma:cosmadapted}.
\end{proof}

\subsection{Relation to roll-turn fiberwise coordinates}

\begin{prop}
\label{prop:etaf}
The following relations hold on $T\cM_0\setminus Q$:
\ben
\label{hetatheta}
\f^n \cos^2\theta  + \f^\tau \sin^2\theta =1+\frac{\cos\theta}{\hc}=\hetap~~.
\een
and:
\ben
\label{hetapf}
\hetap=\frac{[1+2(\f^n-\f^\tau)\hc^2]\pm\sqrt{[1+2(\f^n-\f^\tau)\hc^2]^2-4(\f^n-\f^\tau)\hc^2
[\f^\tau+(\f^n-\f^\tau)\hc^2]}}{2(\f^n-\f^\tau)\hc^2}~~.
\een
\end{prop}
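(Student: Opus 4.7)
The plan is to establish the first identity by direct algebraic substitution, and then derive the second formula by recognising that the first identity, together with \eqref{etaomegactheta}, gives a quadratic equation in $\hetap$.

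\textbf{First identity.} Starting from the explicit expressions of $\f^n$ and $\f^\tau$ given in \eqref{fnftau}, I would compute $\f^n\cos^2\theta+\f^\tau\sin^2\theta$ term by term. The constant $1$'s combine via $\cos^2\theta+\sin^2\theta=1$. The two terms containing $\lambda^\v\sin\theta+\mu^\v\cos\theta$ carry prefactors $-\frac{\cN}{\cH}\tan\theta\cdot\cos^2\theta$ and $+\frac{\cN}{\cH\tan\theta}\cdot\sin^2\theta$, both of which equal $\pm\frac{\cN}{\cH}\sin\theta\cos\theta$, so they cancel exactly. What survives is
\[
\f^n\cos^2\theta+\f^\tau\sin^2\theta=1+\frac{\varPsi\cos\theta}{\cH\cN}.
\]
Invoking the relation $\cN\cH=\varPsi\hc$ from \eqref{NH} gives $\frac{\varPsi}{\cH\cN}=\frac{1}{\hc}$, and comparison with the first equation of \eqref{etaomegactheta} identifies the right-hand side with $\hetap$. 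This proves \eqref{hetatheta}.

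\textbf{Second identity.} Using \eqref{hetatheta}, I would write $\cos^2\theta=1-\sin^2\theta$ to solve for $\cos^2\theta$ in terms of $\hetap,\f^n,\f^\tau$:
\[
\cos^2\theta=\frac{\hetap-\f^\tau}{\f^n-\f^\tau}\qquad(\text{assuming }\f^n\neq\f^\tau).
\]
On the other hand, the relation $\cos\theta=\hc(\hetap-1)$ obtained from \eqref{etaomegactheta} yields $\cos^2\theta=\hc^2(\hetap-1)^2$. Equating the two expressions and clearing denominators gives, with the shorthand $a\eqdef(\f^n-\f^\tau)\hc^2$,
\[
a\hetap^2-(1+2a)\hetap+(a+\f^\tau)=0.
\]
The quadratic formula then yields
\[
\hetap=\frac{(1+2a)\pm\sqrt{(1+2a)^2-4a(\f^\tau+a)}}{2a},
\]
which is exactly \eqref{hetapf} after substituting back $a=(\f^n-\f^\tau)\hc^2$.

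\textbf{Anticipated difficulties.} There is no conceptual obstacle: the argument is a short algebraic manipulation relying only on \eqref{fnftau}, \eqref{NH} and \eqref{etaomegactheta}. The only minor care point is the degenerate case $\f^n=\f^\tau$ (for which the quadratic degenerates to a linear equation and \eqref{hetapf} must be read as a limit); since the statement of the proposition places us on $T\cM_0\setminus Q$, where $\sin\theta\cos\theta\neq 0$, I would briefly remark that this case is not excluded and corresponds to $\cos\theta/\hc$ being determined linearly rather than quadratically, and that both branches of \eqref{hetapf} collapse to the same value. The sign ambiguity $\pm$ in \eqref{hetapf} is intrinsic and reflects the two values of $\theta$ in $(-\pi,\pi]$ compatible with a given $\cos^2\theta$, i.e.\ the two fibres $T^+\cM_0$ and $T^-\cM_0$.
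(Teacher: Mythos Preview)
Your proposal is correct and follows essentially the same approach as the paper: the first identity is obtained from \eqref{fnftau} using \eqref{NH} and \eqref{etaomegactheta} (you spell out the cancellation of the $\lambda^\v,\mu^\v$ terms, which the paper leaves implicit), and the second follows by substituting $\cos^2\theta=\hc^2(\hetap-1)^2$ into \eqref{hetatheta} to obtain the same quadratic in $\hetap$. The only cosmetic difference is that the paper substitutes $\sin^2\theta=1-\hc^2(\hetap-1)^2$ directly into \eqref{hetatheta}, whereas you first solve \eqref{hetatheta} for $\cos^2\theta$ and then equate; these are the same computation.
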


\begin{proof}
Relation \eqref{hetatheta} follows immediately from \eqref{fnftau} upon using \eqref{NH} and the first relation in \eqref{etaomegactheta}.
Equation \eqref{costheta} implies
$\sin^2\theta=1-\cos^2\theta=1-\hc^2(\hetap-1)^2$. Thus
\eqref{hetatheta} is equivalent with:
\ben
\label{eqintmd}
\hetap=\f^\tau+(\f^n-\f^\tau)(\hetap-1)^2 \hc^2~~,
\een
which amounts to a quadratic equation for $\hetap$:
\be
(\f^n-\f^\tau) \hc^2 (\hetap)^2-[1+2(\f^n-\f^\tau)\hc^2]\hetap+\f^\tau+(\f^n-\f^\tau)\hc^2=0~~.
\ee
The solutions of this equation are \eqref{hetapf}.
\end{proof}

\begin{remark}
When $|\f^n-\f^\tau|\cos^2\theta \ll |\f^\tau|$ , relation \eqref{hetapf} reduces to:
\be
\hetap\approx \f^\tau~~
\ee
as follows from \eqref{eqintmd}.
\end{remark}

\begin{prop}
\label{prop:f}
The following relations hold on $T\cM_0\setminus Q$:
\beqan
\label{fnftauomega}
&&\f^n= \hetap-\frac{\homega^2}{1-\hetap}+\frac{1}{\cH^2}\frac{\homega}{1-\hetap}
\frac{\Phi_{\tau\tau}^\v \omega+\Phi_{n\tau}^\v(1-\hetap)}{(1-\hetap)^2+\homega^2} \nn\\
&&\f^\tau=1-\frac{1}{\cH^2}\frac{1-\hetap}{\homega} \frac{\Phi_{\tau\tau}^\v \omega+\Phi_{n\tau}^\v(1-\hetap)}{(1-\hetap)^2+\homega^2}~~,
\eeqan
where:
\ben
\label{cHetaomega}
\cH=\frac{1}{M_0}\sqrt{\varPhi+\sqrt{\varPhi^2+\frac{M_0^2\varPsi^2}{(1-\hetap)^2+\homega^2}}}~~.
\een
Moreover, we have:
\ben
\label{hetathetaom}
\hetap=\frac{(1-\hetap)^2\f^n +\homega^2 \f^\tau}{(1-\hetap)^2+\homega^2}~~.
\een
\end{prop}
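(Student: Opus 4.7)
The proof is essentially a direct calculation, and the plan is to assemble the ingredients already derived elsewhere in the paper rather than discover anything new. I would organize it around three short steps.

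First, I would recall the key substitutions. From Proposition~\ref{prop:HessPhiAdapted} (equation \eqref{lambdamu}) we have $\lambda^\v \varPsi = \Phi_{\tau\tau}^\v$ and $\mu^\v \varPsi = \Phi_{n\tau}^\v$, and from Lemma~\ref{lemma:homega} (equation \eqref{vetaomega}) we have the expressions for $v_n$ and $v_\tau$ in terms of $\hetap$, $\homega$, $\cH$, $\varPsi$. Setting $D \eqdef (1-\hetap)^2 + \homega^2$ for brevity, these give $v_\tau^2 = (\varPsi/\cH)^2 \homega^2/D^2$ and $v_n v_\tau = (\varPsi/\cH)^2 (1-\hetap)\homega/D^2$, so
\[
\varPsi - \lambda^\v v_\tau^2 - \mu^\v v_n v_\tau = \varPsi\left[1 - \frac{\homega}{\cH^2 D^2}\bigl(\Phi_{\tau\tau}^\v \homega + \Phi_{n\tau}^\v (1-\hetap)\bigr)\right].
\]
Dividing by $\cH v_n = -\varPsi (1-\hetap)/D$ produces, after simplifying $1 - D/(1-\hetap) = \hetap - \homega^2/(1-\hetap)$, the first line of \eqref{fnftauomega}. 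The computation for $\f^\tau$ is even shorter: $\mu^\v v_n^2 + \lambda^\v v_n v_\tau$ factors as $\frac{\varPsi(1-\hetap)}{\cH^2 D^2}[\Phi_{\tau\tau}^\v\homega + \Phi_{n\tau}^\v(1-\hetap)]$, and division by $\cH v_\tau = -\varPsi \homega/D$ yields the second line of \eqref{fnftauomega} immediately.

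Second, the formula \eqref{cHetaomega} for $\cH$ is not new at all: it is simply relation \eqref{cHomegaeta} of Proposition~\ref{prop:NHkappaomegaeta} repeated in context, so I would cite it rather than rederive it.

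Third, for \eqref{hetathetaom} I would combine equation \eqref{hetatheta} of Proposition~\ref{prop:etaf}, namely $\hetap = \f^n \cos^2\theta + \f^\tau \sin^2\theta$, with the squares of the trigonometric identities in \eqref{comegaeta}, which give $\cos^2\theta = (1-\hetap)^2/D$ and $\sin^2\theta = \homega^2/D$. Substituting produces \eqref{hetathetaom} on the nose.

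There is no real obstacle: every ingredient has been established earlier (Lemma~\ref{lemma:homega}, Proposition~\ref{prop:NHkappaomegaeta}, Proposition~\ref{prop:etaf}, equation \eqref{lambdamu}). The only point requiring modest care is keeping track of signs and of the factor $D$ appearing both in the denominators of $v_n, v_\tau$ and in the definition of $\cos^2\theta$, $\sin^2\theta$; this is a bookkeeping issue rather than a conceptual one.
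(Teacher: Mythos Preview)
Your proposal is correct and follows essentially the same route as the paper: both proofs are direct substitutions into Definition~\ref{def:fnftau}, both invoke \eqref{lambdamu} to trade $\lambda^\v,\mu^\v$ for $\Phi_{\tau\tau}^\v,\Phi_{n\tau}^\v$, both cite \eqref{cHomegaeta} for \eqref{cHetaomega}, and both derive \eqref{hetathetaom} from \eqref{hetatheta} together with Lemma~\ref{lemma:rtcons}. The only organizational difference is that you start from the $v_n,v_\tau$ form of $\f^n,\f^\tau$ and plug in Lemma~\ref{lemma:homega}, whereas the paper starts from the $\cN,\theta$ form and plugs in \eqref{NH} and the trigonometric relations of Lemma~\ref{lemma:rtcons}; these two routes amount to the same algebra packaged slightly differently.
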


\begin{proof}
Using relation \eqref{NH} and Lemma \ref{lemma:rtcons}, we compute:
\ben
\label{eq1}
1+\frac{\varPsi}{\cH\cN\cos\theta}=\frac{1}{\hc \cos\theta}=1+\frac{(1-\hetap)^2+\homega^2}{\hetap-1}=\hetap+\frac{\homega^2}{\hetap-1}
\een
and:
\ben
\label{eq2}
-\frac{\cN (\lambda^\v \sin\theta+\mu^\v\cos\theta)}{\cH}=\frac{\cN}{\cH} \frac{\lambda^\v \homega+\mu^\v(1-\hetap)}{\sqrt{(1-\hetap)^2+\homega^2}}=
\frac{1}{\cH^2}\frac{\Phi_{\tau\tau}^\v\homega+\Phi_{n\tau}^\v (1-\hetap)}{(1-\hetap)^2+\homega^2}
\een
where we used \eqref{lambdamu} and the relation:
\be
\frac{\cN}{\cH}=\frac{\varPsi}{\cH^2}\hc=\frac{\varPsi}{\cH^2} \frac{1}{\sqrt{(1-\hetap)^2+\homega^2}}~~,
\ee
which follows from \eqref{NH} and Lemma
\ref{lemma:rtcons}. The latter also gives:
\ben
\label{tan}
\tan\theta=\frac{\homega}{1-\hetap}~~.
\een
Using this, \eqref{eq1} and \eqref{eq2} in the definition of $\f^n$
gives the first equation in \eqref{fnftauomega}. The second equation
in \eqref{fnftauomega} follows from \eqref{eq2}, \eqref{tan} and the
definition of $\f^\tau$.  Relation \eqref{cHetaomega} was proved
before (see \eqref{cHomegaeta}). Relation \eqref{hetathetaom} follows
from \eqref{hetatheta} by using Lemma \ref{lemma:rtcons}.
\end{proof}

\begin{remark}
Suppose that $\f^n$ and $\f^\tau$ are fixed and view
\eqref{fnftauomega} as a system of equations for $\hetap$ and
$\homega$, where $\cH$ is expressed in terms of the latter using
\eqref{cHetaomega}. Then this system is equivalent with the equation:
\ben
\label{eq3}
\f^n= \hetap-\frac{\homega^2}{1-\hetap}+M_0^2\frac{\homega}{1-\hetap}\frac{\Phi_{\tau\tau}^\v \omega+\Phi_{n\tau}^\v(1-\hetap)}{
  [(1-\hetap)^2+\homega^2]\left[\varPhi+\sqrt{\varPhi^2+\frac{M_0^2\varPsi^2}{(1-\hetap)^2+\homega^2}}\right]}~~.
\een
(which is obtained using \eqref{cHetaomega} from the first equation in
\eqref{fnftauomega}) together with equation \eqref{hetathetaom}, which
can be written as:
\ben
\label{omegaf}
\homega^2=-(1-\hetap)^2\frac{\f^n-\hetap}{\f^\tau-\hetap}~~
\een
and can be used to eliminate $\homega$ in terms of $\hetap$.
Substituting this in \eqref{eq3} gives an algebraic equation for
$\hetap$ whose coefficients depend on $\f^n$ and
$\f^\tau$. Picking a solution of this algebraic equation, relation
\eqref{omegaf} determines $\homega$ as a
function of the adapted parameters.
\end{remark}

\begin{cor}
The intersection of the second slow roll manifold $\cZ(0)$ with $T\cM_0\setminus Q$ has implicit equations:
\ben
\f^n=-\homega^2 \f^\tau~~,~~\f^\tau=1-\frac{M_0^2}{\homega\left[\varPhi+\sqrt{\varPhi^2+\frac{M_0^2\varPsi^2}{1+\homega^2}}\right]} \frac{\Phi_{\tau\tau}^\v \homega+\Phi_{n\tau}^\v}{1+\homega^2}~~.
\een
where $\homega\in \R^\times$. 
\end{cor}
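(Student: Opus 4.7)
The plan is to specialize Proposition \ref{prop:f} to the locus $\hetap = 0$ and then argue that, intersected with $T\cM_0\setminus Q$, this locus coincides with $\cZ(0)\cap (T\cM_0\setminus Q)$. By Definition, $\cZ(0)$ is the closure in $T\cM_0$ of $\{u\in \dot T\cM_0~\vert~\hetap(u)=0\}$; however, on $T\cM_0\setminus Q$ the functions $\f^n,\f^\tau$ and $\hetap$ are smooth, so $\{\hetap=0\}\cap (T\cM_0\setminus Q)$ is already closed in $T\cM_0\setminus Q$ and no extra limit points need be added. Hence it suffices to compute the system characterizing this set.

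First, I would substitute $\hetap=0$ into the reconstruction formula \eqref{hetathetaom} of Proposition \ref{prop:f}, obtaining
\[
0 \;=\; \frac{\f^n + \homega^2 \f^\tau}{1+\homega^2}.
\]
Since $1+\homega^2>0$ this yields at once the first implicit equation $\f^n = -\homega^2 \f^\tau$. Next I would substitute $\hetap=0$ into the second equation of \eqref{fnftauomega}, which reads
\[
\f^\tau \;=\; 1 \;-\; \frac{1}{\cH^2}\,\frac{1}{\homega}\,\frac{\Phi_{\tau\tau}^\v \homega + \Phi_{n\tau}^\v}{1+\homega^2},
\]
and then eliminate $\cH^2$ using \eqref{cHetaomega} at $\hetap=0$, namely $\cH^{-2}=M_0^2\big[\varPhi+\sqrt{\varPhi^2+M_0^2\varPsi^2/(1+\homega^2)}\big]^{-1}$. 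This directly produces the second displayed equation of the corollary.

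The non-vanishing condition $\homega\in\R^\times$ is forced by the requirement $v_n v_\tau \neq 0$ defining $T\cM_0\setminus Q$: by relation \eqref{vetaomega}, $v_\tau=-\varPsi\homega/[\cH((1-\hetap)^2+\homega^2)]$, so $v_\tau\neq 0$ iff $\homega\neq 0$ (note also $v_n\neq 0$ is automatic at $\hetap=0$, since \eqref{vetaomega} then gives $v_n = -\varPsi/[\cH(1+\homega^2)]\neq 0$). For the converse direction, given $\homega\in\R^\times$ satisfying the two displayed relations, one uses \eqref{hetathetaom} backwards to conclude that $(1-\hetap)^2\f^n+\homega^2\f^\tau$ is proportional to $\hetap$ and, combined with the first equation, forces $\hetap=0$; the second relation is then automatically consistent by \eqref{fnftauomega}.

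There is no genuine obstacle here; the argument is a direct specialization of Proposition \ref{prop:f}. The only points requiring mild care are the division by $\homega$ (justified by $T\cM_0\setminus Q$) and the remark about the closure being trivial on $T\cM_0\setminus Q$, so that the implicit equations do characterize the intersection exactly rather than a dense open subset of it.
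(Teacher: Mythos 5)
Your proof is correct and follows exactly the route the paper intends: the corollary is stated without proof as an immediate specialization of Proposition \ref{prop:f} to $\hetap=0$, using \eqref{hetathetaom} for the first equation and the second line of \eqref{fnftauomega} together with \eqref{cHetaomega} for the second, with $\homega\neq 0$ forced by $v_\tau\neq 0$ on $T\cM_0\setminus Q$ via \eqref{vetaomega}. The only loose point is your converse step: the identity \eqref{hetathetaom} combined with $\f^n=-\homega^2\f^\tau$ yields $\hetap\bigl[(1-\hetap)^2+\homega^2-\homega^2\f^\tau(2-\hetap)\bigr]=0$ rather than $\hetap=0$ outright, so forcing $\hetap=0$ requires also invoking the second equation (or a nonvanishing bracket); since the paper itself only asserts the equations as a description of the locus, this does not affect the verdict.
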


\begin{cor}
\label{cor:fSR2}
The following relations hold on $T\cM_0\setminus Q$ in the slow roll approximation $|\hetap|\ll 1$:
\ben
\label{fnftauomegaSR2}
\f^n\approx -\homega^2+\frac{\homega}{\cH^2}
\frac{\Phi^\v_{\tau\tau} \homega+\Phi_{n\tau}^\v}{1+\homega^2}~~,~~\f^\tau\approx -\frac{\f^n}{\homega^2}=1-\frac{1}{\cH^2 \homega} \frac{\Phi^\v_{\tau\tau} \homega+\Phi_{n\tau}^\v}{1+\homega^2}~~,
\een
where:
\ben
\label{cHetaomegaSR2}
\cH\approx \frac{1}{M_0}\sqrt{\varPhi+\sqrt{\varPhi^2+\frac{M_0^2\varPsi^2}{1+\homega^2}}}~~.
\een
Moreover, we have:
\ben
\label{hetathetaomSR2}
\hetap\approx \frac{\f^n +\homega^2 \f^\tau}{1+\homega^2}~~.
\een
\end{cor}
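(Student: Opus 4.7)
The plan is to obtain Corollary \ref{cor:fSR2} as a direct specialization of Proposition \ref{prop:f} by applying the second slow roll condition $|\hetap|\ll 1$ to the exact identities \eqref{fnftauomega}, \eqref{cHetaomega}, and \eqref{hetathetaom}. There is no new geometric input required; the work consists in checking that replacing $1-\hetap$ by $1$ in each closed-form expression produces the stated approximations and that the two formulas for $\f^\tau$ are consistent.

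First I would substitute $1-\hetap\approx 1$ and $\hetap\approx 0$ into the first equation of \eqref{fnftauomega} to obtain
\begin{equation*}
\f^n\approx -\homega^2+\frac{\homega}{\cH^2}\,\frac{\Phi_{\tau\tau}^\v\homega+\Phi_{n\tau}^\v}{1+\homega^2},
\end{equation*}
which is the first half of \eqref{fnftauomegaSR2}. The same substitution in the second equation of \eqref{fnftauomega} yields
\begin{equation*}
\f^\tau\approx 1-\frac{1}{\cH^2\homega}\,\frac{\Phi_{\tau\tau}^\v\homega+\Phi_{n\tau}^\v}{1+\homega^2},
\end{equation*}
which is the right-hand equality in the second relation of \eqref{fnftauomegaSR2}. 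The middle equality $\f^\tau\approx -\f^n/\homega^2$ then follows by comparing the two approximate expressions: the first expression above rearranges to $\homega^2(1-\f^\tau)=-\f^n-\homega^2$, i.e.\ $\homega^2\f^\tau\approx -\f^n$, as claimed. Alternatively, this identity can be read off directly from \eqref{hetathetaom} by setting $\hetap\approx 0$ in the left-hand side and $(1-\hetap)^2\approx 1$ in the denominator; this simultaneously gives \eqref{hetathetaomSR2}.

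For the Hubble approximation \eqref{cHetaomegaSR2}, I would apply $(1-\hetap)^2\approx 1$ to the denominator $(1-\hetap)^2+\homega^2$ appearing under the nested radicals in \eqref{cHetaomega}. Since the $\hetap$-dependence enters \eqref{cHetaomega} only through this combination, the error is of order $|\hetap|$ relative to the leading term and uniform in $\homega$ on compact subsets, so the substitution is legitimate wherever the slow roll condition is imposed.

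No step here is a genuine obstacle: the corollary is a computational consequence of Proposition \ref{prop:f}. The only point that deserves a brief check is that all four approximate relations in the corollary are mutually consistent (in particular, that \eqref{hetathetaomSR2} combined with the two equations in \eqref{fnftauomegaSR2} reproduces $\hetap\approx 0$ self-consistently); this follows immediately from $\homega^2\f^\tau+\f^n\approx 0$.
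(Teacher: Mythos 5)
Your proof is correct and is essentially the argument the paper intends: the corollary follows by specializing the exact identities of Proposition \ref{prop:f} via $\hetap\approx 0$, $(1-\hetap)^2\approx 1$ (the paper states the corollary without proof precisely because it is this immediate substitution). One small typo: your intermediate rearrangement should read $\homega^2(1-\f^\tau)=\f^n+\homega^2$, not $-\f^n-\homega^2$, but your stated conclusion $\homega^2\f^\tau\approx-\f^n$ is the correct one.
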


\begin{cor}
\label{cor:fSR}
The following relations hold on $T\cM_0\setminus Q$ in the second {\em order} slow roll approximation $\kappa\ll 1$ and $|\hetap|\ll 1$:
\ben
\label{fnftauomegaSR}
\f^n\approx -\homega^2+\frac{M_0^2}{2\varPhi}\homega
\frac{\Phi_{\tau\tau}^\v \omega+\Phi_{n\tau}^\v}{1+\homega^2}~~,~~\f^\tau\approx -\frac{\f^n}{\homega^2}=1-\frac{M_0^2}{2\varPhi \homega} \frac{\Phi^\v_{\tau\tau} \omega+\Phi^\v_{n\tau}}{1+\homega^2}~~.
\een
\end{cor}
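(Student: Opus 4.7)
The plan is to treat Corollary \ref{cor:fSR} as a direct specialization of Corollary \ref{cor:fSR2} obtained by imposing the additional first slow roll condition $\kappa\ll 1$ on top of the second slow roll condition $|\hetap|\ll 1$ already in force there. Thus I would not redo any of the calculations leading to the formulas for $\f^n$ and $\f^\tau$, but simply replace $\cH^2$ by its first-slow-roll approximant in the expressions \eqref{fnftauomegaSR2}.

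First, I would recall relation \eqref{kapparels}, namely $\cH=\frac{\sqrt{2\varPhi}}{M_0}(1+\kappa)^{1/2}$, which comes directly from the definitions of $\cH$ and $\kappa$ on $\dot{T}\cM_0$. Expanding to leading order under $\kappa\ll 1$ yields
\[
\cH\approx \frac{\sqrt{2\varPhi}}{M_0},\qquad \cH^2\approx \frac{2\varPhi}{M_0^2},\qquad \frac{1}{\cH^2}\approx \frac{M_0^2}{2\varPhi}.
\]
Equivalently, this is the statement that in the first slow roll regime the rescaled Hubble function reduces to $V/M_0^2$ where $V=M_0\sqrt{2\Phi}$ is the classical effective potential introduced just before \eqref{normspeedSR}.

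Next I would substitute $\frac{1}{\cH^2}\approx \frac{M_0^2}{2\varPhi}$ into each of the two formulas \eqref{fnftauomegaSR2} of Corollary \ref{cor:fSR2}. This immediately produces the two claimed expressions
\[
\f^n\approx -\homega^2+\frac{M_0^2}{2\varPhi}\,\homega\,\frac{\Phi_{\tau\tau}^\v \homega+\Phi_{n\tau}^\v}{1+\homega^2},\qquad \f^\tau\approx 1-\frac{M_0^2}{2\varPhi\,\homega}\,\frac{\Phi_{\tau\tau}^\v \homega+\Phi_{n\tau}^\v}{1+\homega^2},
\]
while the identity $\f^\tau\approx -\f^n/\homega^2$ carries over unchanged from \eqref{fnftauomegaSR2}, since it is a purely algebraic consequence of the two expressions above and does not involve $\cH$ separately.

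There is no substantial obstacle here: the only thing to check is that the two approximations $|\hetap|\ll 1$ and $\kappa\ll 1$ are being composed consistently, i.e.\ that discarding the corrections of order $\kappa$ in $\cH^{-2}$ does not interfere with the order at which $\hetap$ was already neglected in deriving \eqref{fnftauomegaSR2}. Since $\hetap$ appears in \eqref{fnftauomega} additively in combinations like $(1-\hetap)$ and $(1-\hetap)^2+\homega^2$, both of which were approximated by $1$ and $1+\homega^2$ in passing to \eqref{fnftauomegaSR2}, the two small parameters decouple at leading order and the substitution $\cH^{-2}\mapsto M_0^2/(2\varPhi)$ is legitimate to leading order in $\kappa$. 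This concludes the proof.
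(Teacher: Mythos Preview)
Your proposal is correct and follows essentially the same approach as the paper: the paper's proof simply states that the result follows from Corollary~\ref{cor:fSR2} upon recalling that the first slow roll condition is equivalent with $\cH\approx \frac{\sqrt{2\varPhi}}{M_0}$, which is exactly what you do. Your additional remarks on the decoupling of the two small parameters are more detailed than what the paper provides, but the underlying argument is the same.
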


\begin{proof}
Follows from Corollary \ref{cor:fSR2} upon recalling that
the first slow roll condition is equivalent with $\cH\approx
\frac{\sqrt{2\varPhi}}{M_0}$.
\end{proof}

\subsection{The adapted manifold}

\begin{definition}
The {\em adapted manifold} is the submanifold $\cZ_a$ of
$T\cM_0\setminus Q$ defined by the {\em strict adapted conditions}:
\ben
\label{sadaptedcond}
\f^n=\f^\tau=0~~.
\een
\end{definition}

\noindent Let $W_a$ be the the flat Ehresmann connection defined by
the adapted phase space coordinates in the vicinity of a point of
$T\cU\setminus Q$, which we shall call {\em the adapted connection}.
Definition \ref{def:fnftau} shows that the adapted manifold coincides
with the mean field manifold of $S$ relative to $W_a$, which has
equations $S^{v_n}=S^{v_\tau}=0$. Relation \eqref{hetathetaom} shows that we have
$\cZ_a\subset \cZ(0)$. 

\begin{prop}
\label{prop:sadapted}
The adapted manifold $\cZ_a$ has the following implicit equations in
fiberwise phase space roll-turn coordinates $(\hetap,\homega,\varPhi,\varPsi)$:
\beqan
\label{Za}
&&\hetap=0\\
&&(2 \lambda^\v \varPhi +\varPsi) \omega ^4+2 \mu^\v \varPhi \omega ^3+(2\lambda^\v \varPhi +\varPsi -M_0^2 (\lambda^\v)^2 \varPsi)\omega ^2 \nn\\
&&+2\mu^\v\left(\varPhi -M_0^2 \lambda^\v \varPsi \right)\omega
-M_0^2 (\mu^\v)^2\varPsi =0~~. \nn
\eeqan
In particular, $\cZ_a$ is a multisection of $T\cM_0$ which is contained in the second slow roll manifold $\cZ(0)$. 
\end{prop}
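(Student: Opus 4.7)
\textbf{Proof plan for Proposition \ref{prop:sadapted}.} By Definition \ref{def:fnftau}, the adapted manifold is cut out by the two equations $\f^n=0$ and $\f^\tau=0$. The first step is to extract the condition $\hetap=0$ essentially for free: relation \eqref{hetatheta} of Proposition \ref{prop:etaf} reads $\hetap=\f^n\cos^2\theta+\f^\tau\sin^2\theta$, so setting $\f^n=\f^\tau=0$ immediately gives $\hetap=0$ on $T\cM_0\setminus Q$. This also proves the stated inclusion $\cZ_a\subset\cZ(0)$.

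Next, I would plug $\hetap=0$ into the formulas \eqref{fnftauomega} of Proposition \ref{prop:f}, using the identities $\Phi_{\tau\tau}^\v=\lambda^\v\varPsi$ and $\Phi_{n\tau}^\v=\mu^\v\varPsi$ from Proposition \ref{prop:HessPhiAdapted}. A short computation shows that on $T\cM_0\setminus Q$ (where $\homega\neq 0$, since $\tan\theta=\homega/(1-\hetap)$), both equations $\f^n=0$ and $\f^\tau=0$ reduce to the single scalar relation
\be
\homega\,\cH^2(1+\homega^2)=\varPsi\bigl(\lambda^\v\homega+\mu^\v\bigr).
\ee
Consistency of the two equations at $\hetap=0$ can be verified directly by computing $\f^n-\f^\tau$ at $\hetap=0$ and substituting this relation, so that $\hetap=0$ together with $\f^n=0$ is equivalent to $\f^n=\f^\tau=0$.

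The second step is to eliminate the square root hidden in $\cH^2$. Substituting the expression \eqref{cHetaomega} at $\hetap=0$, namely $M_0^2\cH^2=\varPhi+\sqrt{\varPhi^2+M_0^2\varPsi^2/(1+\homega^2)}$, isolating the square root on one side as
\be
\homega(1+\homega^2)\sqrt{\varPhi^2+\tfrac{M_0^2\varPsi^2}{1+\homega^2}}=M_0^2\varPsi(\lambda^\v\homega+\mu^\v)-\homega(1+\homega^2)\varPhi,
\ee
and squaring produces (after the $\varPhi^2\homega^2(1+\homega^2)^2$ terms cancel and one divides by $M_0^2\varPsi$, which is nonzero on $T\cM_0$) precisely the claimed quartic in $\homega$. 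Collecting powers of $\homega$ is the main bookkeeping step and reproduces the five coefficients in \eqref{Za} one by one.

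The subtle point is the squaring: it could in principle introduce spurious solutions corresponding to the branch $M_0^2\cH^2=\varPhi-\sqrt{\varPhi^2+M_0^2\varPsi^2/(1+\homega^2)}$, but that branch is strictly negative while $\cH^2$ is strictly positive, so no extraneous point in $\dot{T}\cU\setminus Q$ is introduced; this is the technical step I expect to have to argue carefully. Finally, on $T\cU$ the third characteristic function $\mu$ is nowhere vanishing and $\varPsi>0$ on $\cM_0$, so the constant term $-M_0^2(\mu^\v)^2\varPsi$ of \eqref{Za} is nowhere zero. Hence the quartic in $\homega$ has at most four roots at each point of $\cU$, showing that $\cZ_a$ meets each fiber of $\pi:T\cU\to\cU$ in finitely many points and is therefore a multisection of $T\cM_0$, as claimed.
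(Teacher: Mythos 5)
Your proposal is correct and follows essentially the same route as the paper: deduce $\hetap=0$ from $\f^n=\f^\tau=0$ via Proposition \ref{prop:etaf}/\ref{prop:f}, reduce both conditions to the single relation $\varPsi(\lambda^\v\homega+\mu^\v)=\cH^2\homega(1+\homega^2)$, substitute \eqref{cHetaomega}, isolate the radical and square to obtain the quartic, using \eqref{HessPhiAdapted} to replace $\Phi_{\tau\tau}^\v,\Phi_{n\tau}^\v$ by $\lambda^\v\varPsi,\mu^\v\varPsi$. Your extra remarks on spurious roots from squaring and on the finiteness of the fiberwise root set go slightly beyond what the paper records (and are welcome), with only the minor slip that $\mu$ is the \emph{first} characteristic function, not the third.
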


\begin{proof}
When $\f^n=\f^\tau=0$, Proposition \ref{prop:f} gives $\hetap=0$ and:
\beqan
\label{sadaptedeq}
\frac{1}{\cH^2} \frac{\varPhi_{\tau\tau}^\v \homega+\varPhi_{n\tau}^\v}{1+\homega^2}= \homega~~,
\eeqan
where:
\ben
\label{csHadapted}
\cH= \frac{1}{M_0}\sqrt{\varPhi+\sqrt{\varPhi^2+\frac{M_0^2\varPsi^2}{1+\homega^2}}}~~.
\een
Combining these equations gives:
\be
M_0^2 \frac{\Phi_{\tau\tau}^\v \homega+\Phi_{n\tau}^\v}{1+\homega^2}= \homega\left[\varPhi+\sqrt{\varPhi^2+\frac{M_0^2\varPsi^2}{1+\homega^2}}\right]~~,
\ee
which implies:
\be
\left(M_0^2 \frac{\varPhi_{\tau\tau}^\v \homega+\varPhi_{n\tau}^\v}{\homega(1+\homega^2)}-\varPhi\right)^2=\varPhi^2+\frac{M_0^2\varPsi^2}{1+\homega^2}~~.
\ee
The latter amounts to the quartic equation:
\ben
M_0^4 (\varPhi_{\tau\tau}^\v \homega+\varPhi_{n\tau}^\v)^2-2M_0^2\varPhi (\varPhi_{\tau\tau}^\v \homega+\varPhi_{n\tau}^\v)\homega (1+\homega^2)-M_0^2\varPsi^2\homega^2(1+\homega^2)=0~~,
\een
which is equivalent to the second equation in \eqref{Za} upon using \eqref{HessPhiAdapted}.
\end{proof}

\begin{cor}
\label{cor:sadaptedSR1}
When $\hetap=0$, the first slow roll regime occurs when
\ben
\label{omegain}
\sqrt{1+\homega^2}\gg M_0\frac{\varPsi}{\varPhi}~~.
\een
In this regime, the implicit equations of the adapted manifold $Z_a$ reduce to:
\ben
\label{ZaSR1}
\hetap=0~~,~~\homega^3+\left(1-\frac{M_0^2}{2\varPhi}\varPhi_{\tau\tau}^\v \right)\homega-\frac{M_0^2}{2\varPhi}\varPhi^\v_{n\tau}=0~~.
\een
In particular, the first slow roll regime can be realized along $\cZ_a$
iff the cubic equation in \eqref{ZaSR1} admits a real solution
$\homega$ which satisfies \eqref{omegain}.
\end{cor}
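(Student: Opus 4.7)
My plan is to obtain the equations of $\cZ_a$ in the first slow roll regime by starting from the exact implicit equations of Proposition \ref{prop:sadapted} and then inserting the leading order slow roll relation $\cH^2 \approx 2\varPhi/M_0^2$ that follows from \eqref{kapparels} when $\kappa\ll 1$. The corollary thus amounts to a substitution together with the identification of the precise form of the slow roll bound along the locus $\hetap=0$.

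First I would specialize the equivalent form of the first slow roll condition provided by Proposition \ref{prop:SRcond}, namely $\kappa\ll 1 \Longleftrightarrow \sqrt{(1-\hetap)^2+\homega^2}\gg 2\|\Xi\|^\v$. Setting $\hetap=0$ and using the definition $\Xi=M_0\,\dd\Phi/(2\Phi)$, which yields $2\|\Xi\|^\v = M_0\varPsi/\varPhi$, immediately reproduces the inequality \eqref{omegain}. This justifies the statement that \eqref{omegain} is the correct restriction of the first slow roll condition to the slice $\hetap=0$.

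Next I would return to equation \eqref{sadaptedeq} derived in the proof of Proposition \ref{prop:sadapted}, which after imposing $\f^n=\f^\tau=0$ and $\hetap=0$ reads
\begin{equation*}
\frac{1}{\cH^2}\,\frac{\varPhi_{\tau\tau}^\v\,\homega+\varPhi_{n\tau}^\v}{1+\homega^2}=\homega,
\end{equation*}
together with the expression \eqref{csHadapted} for $\cH$. Inserting the slow roll approximation $\cH^2\approx 2\varPhi/M_0^2$ and clearing the denominator $1+\homega^2$ transforms this into
\begin{equation*}
\frac{M_0^2}{2\varPhi}\bigl(\varPhi_{\tau\tau}^\v\,\homega+\varPhi_{n\tau}^\v\bigr)=\homega\,(1+\homega^2),
\end{equation*}
which, upon collecting in powers of $\homega$, is exactly the cubic appearing in \eqref{ZaSR1}. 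The realizability claim is then essentially tautological: a point of $\cZ_a$ lying in the first slow roll regime along $\hetap=0$ corresponds precisely to a real root $\homega$ of this cubic for which \eqref{omegain} holds at the projected base point.

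The only non-mechanical step is to verify the self-consistency of the approximation $\cH^2\approx 2\varPhi/M_0^2$ under \eqref{omegain}. This amounts to checking that the inner radicand in \eqref{csHadapted} satisfies $M_0^2\varPsi^2/(1+\homega^2)\ll \varPhi^2$ in this regime, which is a direct rearrangement of \eqref{omegain}. I expect this to be the mildest obstacle; no further analytic input beyond Propositions \ref{prop:sadapted} and \ref{prop:SRcond} is required, and the argument is entirely algebraic once the approximation is shown to be legitimate.
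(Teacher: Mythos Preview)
Your proposal is correct and follows essentially the same route as the paper: specialize Proposition~\ref{prop:SRcond} at $\hetap=0$ using $2\|\Xi\|^\v=M_0\varPsi/\varPhi$ to obtain \eqref{omegain}, then insert the slow roll relation $\cH^2\approx 2\varPhi/M_0^2$ into \eqref{sadaptedeq} to produce the cubic in \eqref{ZaSR1}. Your explicit self-consistency check of the approximation via \eqref{csHadapted} is a nice addition that the paper leaves implicit.
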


\begin{proof}
When $\hetap=0$, Proposition \ref{prop:SRcond} shows that the first slow roll condition $\kappa\ll 1$ amounts to:
\be
\sqrt{1+\homega^2}\gg 2||\Xi||^\v=M_0\frac{\varPsi}{\varPhi}~~.
\ee
In this case, we have $\cH\approx \frac{\sqrt{2\varPhi}}{M_0}$ and \eqref{sadaptedeq} becomes:
\be
\frac{M_0^2}{2\varPhi}(\varPhi_{\tau\tau}^\v \homega+\varPhi_{n\tau}^\v)\approx \homega(1+\homega^2)~~, 
\ee
which is equivalent with the second equation in \eqref{ZaSR1}.
\end{proof}

The cubic equation in \eqref{ZaSR1} is in depressed form:
\be
\homega^3+p \homega+q=0~~,
\ee
where:
\be
p\eqdef 1-\frac{M_0^2}{2\varPhi}\varPhi_{\tau\tau}^\v~~,~~q\eqdef -\frac{M_0^2}{2\varPhi}\varPhi^\v_{n\tau}<0~~.
\ee
The equation has discriminant $\Delta$ given by:
\be
-\Delta=4 p^3+27 q^2=4\left(1-\frac{M_0^2}{2\varPhi}\Phi_{\tau\tau}^\v\right)^3+\frac{27}{4} \frac{M_0^4}{\varPhi^2}(\Phi^\v_{n\tau})^2~~.
\ee
Distinguish the cases:
\begin{enumerate}[1.]
\item $\Delta<0$. Then the equation has one real root and two
complex roots (which are conjugate to each other). In this case, the
real root is given by Cardano's formula:
\be
\homega=\sqrt[3]{-\frac{q}{2}+\sqrt{\frac{q^2}{4}+\frac{p^3}{27}}}+\sqrt[3]{-\frac{q}{2}-\sqrt{\frac{q^2}{4}+\frac{p^3}{27}}}
\ee
and condition \eqref{omegain} requires this root to satisfy:
\be
\homega^2 \gg  M_0^2\frac{\varPsi^2}{\varPhi^2}-1~~,
\ee
which gives a complicated condition on $\Phi$. 
\item $\Delta\geq 0$. Then the equation has three real roots
  (some of which may coincide) and the condition that one of these roots
  satisfies \eqref{omegain} amounts to the requirement that the
  largest root satisfies that condition. Since $\Delta\geq 0$ and $q<0$, we have
  $p<0$ (i.e. $ \frac{M_0^2}{2\varPhi}\Phi_{\tau\tau}^\v> 1$) and
  $\frac{27 q^2}{4|p|^3}\leq 1$. Let:
\be
A\eqdef 2\sqrt{\frac{|p|}{3}}=\sqrt{\frac{4|p|}{3}}\geq 0~~.
\ee
and:
\be
B\eqdef \frac{3|q|}{A|p|}=\sqrt{\frac{27 q^2}{4|p|^3}} \in [0,1]~~.
\ee
Setting $\phi\eqdef \frac{1}{3}\arccos B\in [0,\frac{\pi}{9}]$, the three real roots are given by Viete's formulas:
\be
\homega_k=A\cos(\phi+\frac{2\pi}{3}k)~~(k=0,1,2)~~
\ee
and correspond to the projection on the $x$ axis of three equidistant
points of the unit circle. It is clear that the largest of the 
roots belongs to the interval
$[A\cos(\frac{\pi}{3}),A]=[\frac{A}{2},A]$. Hence the requirement that a
root satisfies \eqref{omegain} is equivalent with the condition:
\be
1+\frac{A^2}{4}\gg M_0^2 \frac{\Psi^2}{\Phi^2}\Longleftrightarrow 2+M_0^2\frac{\Phi_{\tau\tau}}{2\Phi} \gg 3M_0^2\frac{\Psi^2}{\Phi^2}~~,
\ee
which amounts to:
\be
M_0^2\left(\frac{\Psi^2}{\Phi^2}-\frac{\Phi_{\tau\tau}}{6\Phi}\right)\ll \frac{2}{3}~~.
\ee
\end{enumerate}

\subsection{The horizontal approximation defined by the adapted connection}

\noindent Let $\cU\subset \cM_0$ be the non-critical set of $(\cM,\cG,\Phi)$
(see Definition \ref{def:cU}) and recall that the adapted phase space
coordinates $(v_n,v_\tau, \varPhi=\varPhi,\varPsi=\varPsi)$
give a special coordinate system on $T\cU$. On this open subset of $T\cM_0$, the
horizontal projection of the cosmological semispray has the form:
\be
S_H=S^\varPhi\frac{\pd}{\pd \varPhi}+S^\varPsi\frac{\pd}{\pd \varPsi}~~,
\ee
where (see Corollary \ref{cor:Sadapted}):
\beqan
&& S^\varPhi=\varPsi v_n\\
&& S^\varPsi=\frac{1}{2}(\Phi_{nn}^\v v_n+\mu^\v \varPsi v_\tau)~~.\nn
\eeqan
Hence the horizontal approximation defined by the adapted connection replaces locally the
cosmological equation for curves whose canonical lift lies in $T\cU\setminus Q$ with the system:
\ben
\label{horadapted}
\dot{v}_n=\dot{v}_\tau=0~~,~~\dot{\varPhi}=\varPsi v_n~~,~~\dot{\varPsi}=\frac{1}{2}(\Phi_{nn}^\v v_n+\mu^\v \varPsi v_\tau)~~.
\een
The first two equations show that $v_n$ and $v_\tau$ are constant
while the remaining equations give a system for the time evolution of
$\varPhi$ and $\varPsi$. In these equations, it is understood that the
quantities $\Phi_{nn}$ and $\mu$ are expressed as functions of $\Phi$
and $\Psi$ and hence $\Phi_{nn}^\v$ and $\mu^\v$ are expressed as
functions of $\varPhi$ and $\varPsi$. The solutions of
\eqref{horadapted} are the $W_a$-curves which serve as horizontal
approximants of cosmological curves contained in $\cU$ in the sense of
Section \ref{sec:mf}.  As explained there, the approximation is
accurate for cosmological curves which are close to the adapted
manifold $\cZ_a$.

\subsection{The adapted approximation}

\begin{definition}
The {\em adapted approximation} is defined by the {\em adapted conditions}:
\ben
\label{adaptedcond}
|\f^n|\ll 1~~\mathrm{and}~~|\f^\tau|\ll 1~~.
\een
\end{definition}

\noindent A vector $u\in T\cM_0\setminus Q$ satisfies the adapted
conditions iff it is very close to the adapted manifold $\cZ_a$.
Notice from relation \eqref{hetatheta} that the adapted conditions
imply $|\hetap|\ll 1$, so the adapted approximation implies the second
slow roll approximation; in particular, it implies the relation (see
\eqref{hetathetaomSR2}):
\be
\f^\tau\approx -\frac{\f^n}{\homega^2}~~.
\ee
However, the second slow roll condition $|\hetap|\ll 1$ does {\em not}
imply the adapted conditions (as made manifest by the fact that $\cZ_a$ is a
codimension one submanifold of the second slow roll manifold $\cZ(0)$). 

\section*{Acknowledgments}
\noindent This work was supported by national grant
  PN 19060101/2019-2022 and by a Maria Zambrano Excellency Fellowship
  as well by the COST Action COSMIC WISPers CA21106, supported by COST
  (European Cooperation in Science and Technology).

\appendix

\section{Signed angles, curvatures and turning rates on an oriented Riemann surface}
\label{app:signed}

Let $(\cM,\cG)$ be an oriented, connected and borderless Riemann
surface, whose volume form we denote by $\vol$.  In a
positively-oriented local coordinate system $(x^1,x^2)$ on $\cM$, we
have:
\be
\vol =\sqrt{\det {\hat \cG}}\dd x^1\wedge \dd x^2=\frac{1}{2}\sqrt{\det {\hat {\cG}}}\epsilon_{ij} \dd x^i\wedge \dd x^j~~,
\ee
where $\epsilon_{ij}$ is the Levi-Civita symbol and we use Einstein
summation over repeated indices. We have $\vol(e_j,e_j)=\epsilon_{ij}$
in any orthonormal and positive local frame of $T\cM$.

\subsection{The signed angle between vectors}
\label{app:signedangle}

\noindent Let $v,v_0\in T_m\cM$ be two non-zero tangent vectors to
$\cM$ at a point $m\in \cM$. Let $\Theta(v,v_0)\in [0,\pi]$ be the
ordinary angle between $v$ and $v_0$ in the Euclidean space
$(T_m\cM,\cG_m)$, which is defined through the relation:
\be
\cG(v,v_0)=||v||||v_0|| \cos\Theta(v,v_0)~~.
\ee
Notice that $\Theta(v_0,v)=\Theta(v,v_0)$. 

\begin{definition}
Suppose that $v$ and $v_0$ are linearly independent, i.e. $v$ is not
a multiple of $v_0$. Then the {\em chirality} of $v$ relative to $v_0$
is the signature of the basis $(v_0,v)$ of $T_m\cM$, which is defined
through:
\be
\upsigma_{v_0}(v)\eqdef \sign [\vol(v_0,v)]\in \{-1,1\}~~.
\ee
When $v_0$ and $v$ are linearly dependent, we define $\upsigma_{v_0}(v)=0$. 
\end{definition}

\noindent When $v_0$ and $v$ are linearly independent, we have
$\upsigma_{v_0}(v)=+1$ iff the basis $(v_0,v)$ of $T_m\cM$ is
positively oriented and $\upsigma_{v_0}(v)=-1$ iff this basis is
negatively oriented. Notice that $\sigma_{v_0}(v)=-\sigma_v(v_0)$.

\begin{definition} The {\em signed angle} of $v$ relative to $v_0$ is
the real number $\uptheta_{v_0}(v)\in (-\pi,\pi]$ defined as follows:
\begin{enumerate}
\item If $v$ is not proportional to $v_0$, then we set
$\uptheta_{v_0}(v)\eqdef \upsigma_{v_0}(v)\Theta(v_0,v)\in (-\pi,0)\cup(0,\pi)$.
\item If $v=\alpha v_0$ with $\alpha>0$, then we set
$\uptheta_{v_0}(v)\eqdef 0$.
\item If $v=\alpha v_0$ with $\alpha<0$, then we set
$\uptheta_{v_0}(v)\eqdef \pi$.
\end{enumerate}
\end{definition}

\noindent Notice that $\uptheta_v(v_0)=-\uptheta_{v_0}(v)$ when $v$
and $v_0$ are linearly independent and that
$\uptheta_v(v_0)=\uptheta_{v_0}(v)\in \{0,\pi\}$ iff $v$ and $v_0$ are
linearly dependent. In all cases, we have
$\uptheta_v(v_0)=-\uptheta_{v_0}(v)\! \mod 2\pi$ and hence:
\be
\cos\uptheta_v(v_0)=\cos\uptheta_{v_0}(v)~~,~~\sin\uptheta_v(v_0)=-\sin\uptheta_{v_0}(v)~~.
\ee
We have:
\be
\sign (\sin\uptheta_{v_0}(v))=\upsigma_{v_0}(v)~~,
\ee
with the convention $\sign(0)=0$. 

\begin{definition}
Given any non-zero vector $v\in T_m\cM\setminus \{0\}$, the {\em
  positive unit normal} to $v$ is the unique unit norm vector $N_v\in
T_m\cM$ which is orthogonal to $v$ and satisfies $\upsigma_v(N_v)=+1$,
i.e. the orthonormal basis $(v,N_v)$ of $T_m\cM$ is positively
oriented.  The {\em negative unit normal} to $v$ is the vector $-N_v$.
\end{definition}

\noindent We have:
\ben
\label{Nv}
N_v=\frac{1}{||v||}(\iota_v\vol)^\sharp~~\forall v\in \dot{T}\cM~~.
\een
Notice that:
\be
N_{\alpha v} =\sign(\alpha) N_v~~\forall \alpha\in \R^\times~~.
\ee

\noindent We have $\uptheta_v(N_v)=+\frac{\pi}{2}$ and
$\uptheta_v(-N_v)=-\frac{\pi}{2}$. The following statement is
immediate.

\begin{prop}
\label{prop:rotation}
Let $v_0,v\in T_m\cM$ be two non-vanishing vectors. Then $(v,N_v)$ is
obtained from $(v_0,N_{v_0})$ by a rotation of angle
$\uptheta_{v_0}(v)$, i.e. we have:
\beqa
v &=&~~[\cos \uptheta_{v_0}(v)] v_0+[\sin \uptheta_{v_0}(v)] N_{v_0}\nn\\
N_v&=&\!\!-[\sin \uptheta_{v_0}(v)] v_0+ [\cos \uptheta_{v_0}(v)] N_{v_0}~~.
\eeqa
\end{prop}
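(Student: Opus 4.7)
\medskip

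\noindent\textbf{Proof proposal.} Set $\theta \eqdef \uptheta_{v_0}(v)$. Since $N_{\alpha v} = \sign(\alpha) N_v$ and $\uptheta_{v_0}(v)$ is invariant under multiplication of both $v_0$ and $v$ by positive scalars, I would first reduce to the case where $v_0$ and $v$ are unit vectors; the general statement then follows by rescaling if the intended convention is that $||v_0|| = ||v||$, and in the applications appearing in the paper (e.g.\ Proposition~\ref{prop:movingadapted}) both arguments are already unit Finsler vectors.

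Next I would establish the first equation by expanding $v$ in the positively oriented orthonormal basis $(v_0, N_{v_0})$ of $T_m\cM$, writing $v = a v_0 + b N_{v_0}$ with $a = \cG(v,v_0)$ and $b = \cG(v,N_{v_0})$. The identification $a = \cos\theta$ is immediate from the definition of $\Theta(v_0,v)$, split into the three cases of the definition of the signed angle ($v$ positively proportional to $v_0$, negatively proportional, or linearly independent). The key step is the identification $b = \sin\theta$. Using formula \eqref{Nv} and the defining property of the musical isomorphism, one rewrites $b = \cG(v,(\iota_{v_0}\vol)^\sharp) = \vol(v_0,v)$. When $v$ and $v_0$ are linearly dependent this vanishes and matches $\sin\theta = 0$; when they are linearly independent, the area interpretation of $\vol$ gives $|\vol(v_0,v)| = \sin\Theta(v_0,v)$, and the sign is exactly $\upsigma_{v_0}(v)$ by the definition of chirality, hence $b = \upsigma_{v_0}(v)\sin\Theta(v_0,v) = \sin[\upsigma_{v_0}(v)\Theta(v_0,v)] = \sin\theta$.

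For the second equation I would introduce the candidate vector
\[
w \eqdef -\sin\theta \cdot v_0 + \cos\theta \cdot N_{v_0}
\]
and check that $w$ has the three properties that characterize $N_v$ uniquely: (i) unit norm, which is immediate from $\cos^2\theta + \sin^2\theta = 1$; (ii) orthogonality to $v$, which follows from the first equation by the direct computation $\cG(v,w) = -\sin\theta\cos\theta + \sin\theta\cos\theta = 0$; and (iii) that $(v,w)$ is positively oriented, which I would verify by observing that the $2\times 2$ change-of-basis matrix from $(v_0,N_{v_0})$ to $(v,w)$ is a rotation of determinant $+1$, so $\vol(v,w) = \vol(v_0,N_{v_0}) = 1 > 0$. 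Uniqueness of the positive unit normal then forces $w = N_v$.

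The only genuine subtlety is the computation of $b = \sin\theta$, where one must line up three separate notions of sign — the chirality $\upsigma_{v_0}(v)$, the sign of $\vol(v_0,v)$, and the sign of $\sin$ applied to the signed angle — across the degenerate configurations where $v$ and $v_0$ are collinear. All remaining ingredients are routine orthonormal-frame bookkeeping.
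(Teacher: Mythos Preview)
The paper does not give a proof of this proposition at all: it simply declares ``The following statement is immediate'' and moves on. Your argument is correct and supplies exactly the details one would write if asked to justify that claim, including the key identification $\cG(v,N_{v_0})=\vol(v_0,v)=\upsigma_{v_0}(v)\sin\Theta(v_0,v)=\sin\theta$ via \eqref{Nv}, and the characterization of $N_v$ by norm, orthogonality and orientation. Your observation that the displayed formulas are literally correct only for unit vectors (since $N_{v_0}$ has norm one regardless of $||v_0||$) is well taken; the paper tacitly assumes this, and every application of the proposition in the text --- notably Proposition~\ref{prop:movingadapted}, where $(n^\v,\tau^\v)$ and $(T,N)$ are orthonormal --- uses unit vectors.
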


\begin{remark}
\label{rem:epsTheta}
We have:
\be
\upsigma_{v_0}(v)=\sign(\cos\uptheta_v(N_{v_0}))=\sign (\cG(N_{v_0}, v))~~,
\ee
with the convention $\sign(0)=0$.
\end{remark}

To give a universal geometric description of the notions introduced
above, let $\pi:T\cM\rightarrow \cM$ be the projection of the tangent
bundle of $\cM$ and $F\eqdef \pi^\ast(T\cM)$ be the tautological
vector bundle over $T\cM$ (which is also known \cite{SLK} as the
Finsler bundle of $\cM$). A local section of $F$ defined on an open
set $U\subset T\cM$ can be viewed as a map $f:U\rightarrow T\cM$ which
satisfies $f(u)\in T_{\pi(u)}\cM$ for all $u\in U$. Thus $f$ is a map
which associates to a tangent vector $u\in T\cM$ another vector,
tangent to $\cM$ at the {\em same} point as $u$. Notice that $F$
carries the natural Euclidean pairing $\cG_F:F\times_{T\cM}
F\rightarrow \R_{T\cM}$ given by the bundle pull-back of $\cG$.  It
also carries an orientation described by the bundle pullback of the
volume form of $(\cM,\cG)$, which is a nowhere-vanishing section of
the determinant line bundle of $F$. We will denote $\cG_F$ by
$\cG$ for ease of notation.

\begin{definition}
\label{def:normsignedanglemap}
The {\em normalization map} and {\em positive normal map} of
$(\cM,\cG)$ are the sections $T,N\in \Gamma(\dot{T}\cM,F)$ defined
through:
\be
T(u)\eqdef \frac{u}{||u||}~~,~~N(u)\eqdef N_u~~\forall u\in \dot{T}\cM~~.
\ee
The {\em chirality map} and {\em signed angle map} of $(\cM,\cG)$ are
the maps $\upsigma:{\dot T}\cM\times_\cM {\dot T}\cM\rightarrow
\{-1,0, 1\}$ and $\uptheta:{\dot T}\cM\times_\cM {\dot T}\cM\rightarrow
(-\pi,\pi]$ defined through:
\be
\upsigma(v_0,v)\eqdef \upsigma_{v_0}(v)~~,~~\uptheta(v_0,v)\eqdef \uptheta_{v_0}(v)~~.
\ee
for all $(v_0,v)\in {\dot T}\cM\times_\cM {\dot T}\cM$~~.
\end{definition}

\noindent Notice that the maps $\upsigma$ and $\uptheta$ are
discontinuous. The following notion encodes oriented Frenet frames of
nondegenerate curves in a universal manner.

\begin{definition}
\label{def:Frenet}
The {\em abstract oriented Frenet frame} of $(\cM,\cG)$ is the
orthononormal frame $(T,N)$ of the vector bundle
$F\vert_{\dot{T}\cM}$.
\end{definition}

\noindent As we show in the next subsection, the Finsler vector fields
$T$ and $N$ give a universal geometric description of the oriented
Frenet frame of nondegenerate curves. With the definitions above, we
have (see Remark \ref{rem:epsTheta}):
\ben
\label{upsigma}
\upsigma=\sign(\cG\circ(N\times_\cM T))~~,
\een
where we view $\cG$ as a map $\cG:T\cM\times_\cM T\cM\rightarrow \R$
and $N,T$ as maps from $\dot{T}\cM$ to $\dot{T}\cM$. Here $N\times_\cM
T: \dot{T}\cM\times_\cM \dot{T}\cM \rightarrow \dot{T}\cM\times_\cM
\dot{T}\cM$ denotes the map defined through:
\be
(N\times_\cM T)(u,v)\eqdef (N(u),T(v))\in {\dot T}_m\cM\times \dot{T}_m\cM~~\forall (u,v)\in \dot{T}_m\cM\times \dot{T}_m\cM~~.
\ee

\subsection{The signed curvature and signed turning rate of a curve in $\cM$}
\label{app:signedcurv}

\noindent Let $(\cM,\cG)$ be an oriented Riemann surface and
$\varphi:I\rightarrow \cM$ be a smooth curve in $\cM$. For any $t\in
I_\reg$, let:
\be
T_\varphi(t)\eqdef \frac{\dot{\varphi(t)}}{||\dot{\varphi}(t)||}\in T_{\varphi(t)}\cM
\ee
be the unit tangent vector to $\varphi$ at time $t$.

\begin{definition}
The {\em positive normal vector} to the curve $\varphi$ at time $t\in
I_\reg$ is the positive unit normal $N_\varphi(t)\eqdef
N_{T_\varphi(t)}\in T_{\varphi(t)}\cM$ to $T_\varphi(t)$. The positive
orthonormal basis $(T_\varphi(t),N_\varphi(t))$ of $T_{\varphi(t)}\cM$
is called the {\em positive Frenet basis} at time $t\in I_\reg$. The
{\em oriented Frenet frame} of $\varphi$ is the frame
$(T_\varphi,N_\varphi)$ of the vector bundle
$(\varphi\vert_{I_\reg})^\ast(T\cM)$ defined by $T_\varphi(t)$ and
$N_\varphi(t)$ as $t$ varies in $I_\reg$.
\end{definition}

\noindent For any $t\in I_\reg$, we have:
\be
T_\varphi(t)=T(\dot{\varphi}(t))~~,~~N_\varphi(t)=N(\dot{\varphi}(t))~~,
\ee
where $T$ and $N$ are the normalization and positive normal maps of
$(\cM,\cG)$ defined in the previous subsection. Accordingly, we have:
\be
T_\varphi=(\dot{\varphi}\vert_{I_\reg})^\ast(T)~~,~~N_\varphi=(\dot{\varphi}\vert_{I_\reg})^\ast(N)~~,
\ee
where $\dot{\varphi}=c(\varphi):I\rightarrow T\cM$ is the canonical
lift of $\varphi$ to $T\cM$. Since $\pi\circ \dot{\varphi}=\varphi$,
we have $\dot{\varphi}^\ast(F)=\varphi^\ast(T\cM)$.

Let $s$ be an increasing proper length parameter along $\varphi$ with
image interval $J$ (thus $s$ gives an orientation-preserving
diffeomorphism ${\hat s}$ from $I$ to $J$) and let $J_\reg\eqdef {\hat
  s}(I_\reg)$. We have $\dot{s}(t)=||\dot{\varphi}(t)||$ for all $t\in
I$, which implies:
\be
T_\varphi(s)=\varphi'(s)~~\forall s\in I_\reg~~,
\ee
where the prime denotes derivative with respect to $s$. The vector
$K_\varphi(s)\eqdef \nabla_s T_\varphi(s)$ is the classical {\em
  curvature vector} of $\varphi$. The relation
$\cG(T_\varphi(s),T_\varphi(s))=1$ implies:
\be
K_\varphi(s)\perp T_\varphi(s)~~\forall s\in J_\reg~~.
\ee
Recall that the ordinary curvature of $\varphi$ at $s\in J_\reg$ is
the {\em strictly positive} quantity $\chi_\varphi(s)$ defined
through:
\be
\chi_\varphi(s)\eqdef ||K_\varphi(s)||=||\nabla_s T_\varphi(s)||>0~~\forall s\in I_\reg~~.
\ee

\begin{definition}
The {\em turning sign} (or {\em chirality}) of the curve $\varphi$ at
$s\in J_\reg$ is the chirality of the vector $K_\varphi(s)$ with
respect to $T_\varphi(s)$:
\be
\upsigma_\varphi(s)\eqdef \upsigma_{T_\varphi(s)}(K_\varphi(s))=\sign \, \cG(N_\varphi(s), K_\varphi(s)) \in \{-1,0,1\}~~,
\ee
where we set $\upsigma_\varphi(s)=0$ when $K_\varphi(s)=0$.
\end{definition}

\noindent Notice that $\upsigma_\varphi(s)=0$ iff
$K_\varphi(s)=0$. When $K_\varphi(s)\neq 0$, the oriented angle
$\uptheta_{T_\varphi(s)}(K_\varphi(s))$ equals
$\upsigma_\varphi(s)\frac{\pi}{2}$. In all cases, we have
$K_\varphi(s)=\sigma_\varphi(s)\chi_\varphi(s)N_\varphi(s)$.

\begin{definition}
\label{def:xi}
The {\em signed curvature} of $\varphi$ at $s\in J_\reg$ is defined
through:
\ben
\label{xidef}
\xi_\varphi(s)\eqdef \upsigma_\varphi(s)\chi_\varphi(s) \in \R~~.
\een
\end{definition}

\noindent Thus $K_\varphi(s)=\xi_\varphi(s) N_\varphi(s)$. 
Notice that $\xi_\varphi$ coincides with the ordinary signed
curvature of plane curves when $(\cM,\cG)$ is the Euclidean plane.
The Frenet-Serret relations of $\varphi$ take the form:
\beqan
K_\varphi(s)\eqdef \nabla_s T_\varphi(s) &=&+\xi_\varphi(s) N_\varphi(s)~~\label{KN}\\
\nabla_s N_\varphi(s)&=&-\xi_\varphi(s) T_\varphi(s)~~\forall s \in J_\reg~~.\label{NK}
\eeqan

\begin{definition}
The {\em rescaled signed turning rate} of $\varphi$ at $t\in I_\reg$ is the
quantity:
\ben
\label{OmegaDef}
\hOmega_\varphi(t)\eqdef -\frac{\cG(N_\varphi(t),\nabla_t\dot{\varphi}(t))}{||\dot{\varphi}(t)||}=-||\dot{\varphi}(t)||\xi_\varphi(t)\in \R~~.
\een
\end{definition}

\begin{prop}
The Fenet-Serret relations of $\varphi$ are equivalent with:
\ben
\label{FS2}
\nabla_t T_\varphi(t) =-\hOmega_\varphi(t)N_\varphi(t)~~,~~\nabla_t N_\varphi(t) = +\hOmega_\varphi(t) T_\varphi(t)~~\forall t\in I_\reg~~.
\een
\end{prop}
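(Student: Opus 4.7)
The plan is to derive \eqref{FS2} from the classical Frenet–Serret relations \eqref{KN}–\eqref{NK} (and vice versa) by a straightforward change of parameter from arclength $s$ to the original time parameter $t$, combined with the definition \eqref{OmegaDef} of the rescaled signed turning rate $\hOmega_\varphi$.

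First, I would recall that the increasing proper length parameter $s$ satisfies $\dot{s}(t)=||\dot{\varphi}(t)||$ for all $t\in I$, so on $I_\reg$ the map ${\hat s}:I\to J$ is a diffeomorphism whose inverse is smooth. The chain rule for the Levi-Civita connection along $\varphi$ then gives, for any section $X$ of $\varphi^\ast(T\cM)$,
\be
\nabla_t X(t)=\dot{s}(t)\,\nabla_s X(s)=||\dot{\varphi}(t)||\,\nabla_s X(s)\qquad\forall\,t\in I_\reg~~.
\ee
Applying this to $X=T_\varphi$ and $X=N_\varphi$ converts $s$-derivatives into $t$-derivatives by multiplication with $||\dot{\varphi}(t)||$.

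Next, I would substitute the classical relations \eqref{KN}–\eqref{NK} into the right-hand sides, obtaining
\be
\nabla_t T_\varphi(t)=||\dot{\varphi}(t)||\,\xi_\varphi(t)\,N_\varphi(t)~~,\qquad
\nabla_t N_\varphi(t)=-||\dot{\varphi}(t)||\,\xi_\varphi(t)\,T_\varphi(t)~~.
\ee
By the definition \eqref{OmegaDef}, $\hOmega_\varphi(t)=-||\dot{\varphi}(t)||\,\xi_\varphi(t)$, so these become exactly the two equations in \eqref{FS2}. Conversely, starting from \eqref{FS2} and dividing by $||\dot{\varphi}(t)||>0$ on $I_\reg$, one recovers \eqref{KN}–\eqref{NK} after re-parameterizing by $s$ and using $\xi_\varphi(t)=-\hOmega_\varphi(t)/||\dot{\varphi}(t)||$.

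I do not anticipate any real obstacle here: the content is entirely a change of variables together with the definition of $\hOmega_\varphi$. The only minor point worth stating carefully is the restriction to $I_\reg$, where $||\dot{\varphi}(t)||\neq 0$ ensures both that $s$ is a genuine parameter and that $\xi_\varphi$ and the Frenet frame are defined; outside $I_\reg$ neither formulation makes sense, so the equivalence is naturally stated on $I_\reg$.
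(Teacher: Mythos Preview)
Your proof is correct and is exactly the approach the paper takes: the paper simply states that the result follows by direct computation from \eqref{KN} and \eqref{NK} using \eqref{OmegaDef}, which is precisely your chain-rule conversion $\nabla_t=||\dot{\varphi}||\,\nabla_s$ together with $\hOmega_\varphi=-||\dot{\varphi}||\,\xi_\varphi$. Your remark about restricting to $I_\reg$ is appropriate and matches the statement.
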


\begin{proof}
Follows by direct computation from \eqref{KN} and \eqref{NK} upon
using \eqref{OmegaDef}.
\end{proof}

\noindent The first Frenet-Serret relation can also be
written as:
\ben
\label{covacc2}
\nabla_t\dot{\varphi}(t)=\frac{\dd||\dot{\varphi}(t)||}{\dd t} T_\varphi(t)-||\dot{\varphi}(t)|| \hOmega_\varphi(t) N_\varphi(t)~~\forall t\in I_\reg~~.
\een
Note that $K_\varphi$, $\upsigma_\varphi$, $\chi_\varphi$, $\xi_\varphi$ and
$\hOmega_\varphi$ are second order quantities which cannot be expressed
through objects defined on $T\cM$ unless $\varphi$ satisfies a
geometric second order ODE on $\cM$.

\section{Relation between parameters used in this paper and parameters used in the physics literature}
\label{app:param}

\subsubsection{The first slow roll parameter.}

It is customary to define another second order parameter of a curve as follows:

\begin{definition}
\label{def:epsilon}
The {\em first slow roll parameter} of a curve
$\varphi:I\rightarrow \cM$ is the function
$\bepsilon_\varphi:I\rightarrow \R$ defined through:
\be
\bepsilon_{\varphi}(t)\eqdef -\frac{\dot{H_\varphi}(t)}{H_\varphi(t)^2}=-\frac{1}{H_\varphi(t)}\frac{\dd}{\dd t} \log H_\varphi(t)=3{\hat \bepsilon}_\varphi(t)~~,
\ee
where:
\be
{\hat \bepsilon}_\varphi(t)\eqdef -\frac{1}{\cH_\varphi(t)}\frac{\dd}{\dd t} \log \cH_\varphi(t)
\ee
is the {\em rescaled} first slow roll parameter of $\varphi$.
\end{definition}

\noindent For cosmological curves, the first slow roll parameter
reduces to a function of the first IR parameter and hence does not
give an independent basic observable.

\begin{prop}
\label{prop:epsilon}
Suppose that $\varphi:I\rightarrow \cM$ is a cosmological curve. Then we have:
\be
{\hat \bepsilon}_\varphi(t)=\frac{\kappa_\varphi(t)}{1+\kappa_\varphi(t)}~~\forall t\in I~~.
\ee
\end{prop}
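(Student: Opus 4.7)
The plan is to compute $\hat{\bepsilon}_\varphi$ directly from its definition by substituting the expression for $\cH$ in terms of the norm of the velocity and the potential, and then invoking the cosmological dissipation equation of Proposition \ref{prop:dissip} to rewrite the time derivative.

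First I would rewrite $\hat{\bepsilon}_\varphi(t) = -\frac{1}{\cH_\varphi(t)} \frac{\dd}{\dd t}\log \cH_\varphi(t)$ as $-\frac{\dot{\cH}_\varphi(t)}{\cH_\varphi(t)^2}$. Next I would apply the cosmological dissipation equation in the form
\be
\frac{\dd \cH_\varphi(t)}{\dd t} = -\frac{||\dot{\varphi}(t)||^2}{M_0^2}
\ee
(which is equivalent to \eqref{DissipEq} and is available since $\varphi$ is a cosmological curve), obtaining
\be
\hat{\bepsilon}_\varphi(t) = \frac{||\dot{\varphi}(t)||^2}{M_0^2 \cH_\varphi(t)^2}.
\ee

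Then I would use the defining formula \eqref{cHdef} of the rescaled Hubble function, which yields $M_0^2 \cH_\varphi(t)^2 = ||\dot{\varphi}(t)||^2 + 2\Phi(\varphi(t))$, giving
\be
\hat{\bepsilon}_\varphi(t) = \frac{||\dot{\varphi}(t)||^2}{||\dot{\varphi}(t)||^2 + 2\Phi(\varphi(t))}.
\ee
Finally, dividing both numerator and denominator by $2\Phi(\varphi(t))$ (which is strictly positive by our standing assumption on $\Phi$) and recognizing the first IR parameter \eqref{kappapardef} produces the claimed identity $\hat{\bepsilon}_\varphi(t) = \kappa_\varphi(t)/(1+\kappa_\varphi(t))$.

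There is no real obstacle here: the argument is a two-line calculation once the dissipation equation is invoked, and no additional geometric input (frames, signed angles, fiberwise coordinates) is needed. The only point to verify is that one is allowed to divide by $\cH_\varphi(t)^2$ and by $\Phi(\varphi(t))$, both of which are strictly positive since $\Phi > 0$ and hence $\cH > 0$ everywhere on $T\cM$.
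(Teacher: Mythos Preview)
Your proof is correct and follows essentially the same approach as the paper, which simply remarks that the identity follows immediately from the cosmological equation; you have just made explicit the intermediate step (the dissipation equation of Proposition~\ref{prop:dissip}) that the paper leaves implicit.
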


\begin{proof}
Follows immediately from the cosmological equation \eqref{eomsingle}.
\end{proof}

\begin{remark}
When $\varphi$ is a cosmological curve, we have
$\bepsilon_\varphi(t)=0$ iff $\kappa_\varphi(t)=0$ and
$\bepsilon_\varphi(t)=1$ (i.e. ${\hat \bepsilon}_\varphi(t)=1/3$) iff
$\kappa(t)=1/2$.
\end{remark}

The definition of other parameters (beyond the first slow roll
parameter) used in the physics literature (in the conventions of
reference \cite{Paban}) is as follows, where for ease of notation we
don't indicate the curve $\varphi$ on which $s$ is an increasing
length parameter:
\beqa
&&\delta\eqdef \frac{\ddot{s}}{H\dot{s}}~~,~~\eta\eqdef \frac{\dot{\bepsilon}}{H\bepsilon}=2\delta+2\bepsilon~~,~~\xi\eqdef \frac{\dddot{s}}{H^2\ddot{s}}\\
&& \nu\eqdef \frac{1}{\hOmega}\frac{\dd}{\dd t}\left(\frac{\hOmega}{H}\right)~~,~~\bepsilon_V\eqdef \frac{M^2}{2}\frac{||\dd \Phi||^2}{\Phi^2}~~.
\eeqa
Here $\eta$ is the ordinary second slow roll parameter. We have
$\eta=-\delta$. The parameters used in this paper are:
\be
M_0\eqdef M\sqrt{\frac{2}{3}}~~,~~\cH\eqdef 3H~~,~~ \kappa\eqdef \frac{\dot{s}^2}{2\Phi}~~,~~\hetap\eqdef \frac{\eta}{3}=-\frac{1}{3}\delta~~,~~\Xi\eqdef \frac{M_0}{2}\frac{\dd \Phi}{\Phi}
\ee
and $\hat{\bepsilon}\eqdef -\frac{\bepsilon}{3}$, which  was defined above. We have:
\beqa
&& \cH=\sqrt{2\Phi}(1+\kappa)^{1/2}~~,~~\kappa=\frac{\hepsilon}{1-\hepsilon}~~,~~\kappa(\kappa+1)=\frac{\hepsilon}{(1-\hepsilon)^2}~~,~~||\Xi||^2=\hepsilon_V~~.
\eeqa

\subsubsection{A certain formula found in the physics literature.}

\noindent Relations \eqref{cdef} and \eqref{homegadef} give:
\be
\hOmega=-\cH\frac{\sin\theta}{\hc}=\frac{||\dd \Phi||^\v}{\cN}\sin\theta~~.
\ee
As mentioned above, the parameter $\delta$ of \cite{Paban} is
related to our parameters by $\delta=-3\hetap$, while the
ordinary second slow roll parameter used in loc. cit. is given by
$\eta=\frac{\dot\epsilon}{H\epsilon}=2\delta+2\epsilon$.
Using this and the relation $\bepsilon=3{\hat \bepsilon}$, we compute:
\be
1+\frac{\eta}{2(3-\bepsilon)}=\frac{3+\delta_\varphi}{3-\epsilon}=\frac{1-\hetap}{1-\hat{\bepsilon}}
\ee
and:
\be
\frac{\bepsilon}{3(1-\bepsilon/3)^2}=\frac{\hat{\bepsilon}}{(1-\hat{\bepsilon})^2}=\kappa(1+\kappa)~~.
\ee
Moreover, we have $||\dd\Xi||^2=\frac{1}{3}\bepsilon_V$.  This allows
one to write our relation \eqref{mrel} (which is a consequence
\eqref{etaomegactheta}) in the form:
\ben
\label{HPrel}
\frac{\bepsilon_V}{\bepsilon}=\left(1+\frac{\eta}{2(3-\bepsilon)}\right)^2+\frac{\hOmega^2}{9H^2}\frac{1}{(1-\frac{\bepsilon}{3})^2}~~,
\een
which was originally given in \cite{AP,HP}. We stress that this
formidable looking equation is a rewriting of relations
\eqref{etaomegactheta} (the first of which which can be taken as the
      {\em geometric definition} of $\hetap$), being a complicated
      re-formulation of the latter in terms of the parameters
      $\bepsilon,\bepsilon_V, \hOmega$ and $\cH$. The comparative
      simplicity of \eqref{etaomegactheta} and \eqref{mrel}
      underscores the conceptual advantage of using parameters which
      are geometrically meaningful.

\end{document}